\newif\ifextended\extendedtrue 
\newif\ifcomments\commentsfalse 
\newif\iffinal\finaltrue       
\declaretheorem[name=Theorem]{thm}
\crefname{equation}{}{}
\Crefname{equation}{Statement}{Statements}
\crefname{lemma}{Lemma}{Lemmas}
\Crefname{lemma}{Lemma}{Lemmas}
\crefname{figure}{Figure}{Figures}
\Crefname{figure}{Figure}{Figures}
\crefname{section}{Section}{Section}
\Crefname{section}{Section}{Section}
\crefname{thm}{Theorem}{Theorems}
\Crefname{thm}{Theorem}{Theorems}
\def\currentprefix{proof:}
\newcommand*{\proofContext}[1]{\def\currentprefix{proof:#1}}
\newcommand*{\localeqlabel}[1]{\label[equation]{\currentprefix:#1}}
\newcommand*{\locallabel}[1]{\label{\currentprefix:#1}}
\newcommand{\labelNum}[1]{\refstepcounter{equation}\localeqlabel{#1}\textbf{(\arabic{equation})}}
\newcommand{\mathlabelNum}[1]{\localeqlabel{#1}}
\newcommand{\refNum}[1]{%
  \renewcommand{\do}[1]{\currentprefix:##1,}%
  \def\mylist{\docsvlist{#1}}%
  \cref*{\mylist}%
}
\newcommand{\resetNum}{\setcounter{equation}{0}}
\newcommand{\fresh}{\mathrel{\#}}
\newcommand{\auxref}[1]{Appendix~\ref{#1}}
\newcommand{\Auxref}[1]{Appendix~\ref{#1}}
\newcommand{\auxref}[1]{the extended version of this paper~\cite{extended-version}}
\newcommand{\Auxref}[1]{The extended version of this paper~\cite{extended-version}}
\newcommand{\citeanon}[1]{\cite{#1}\xspace}
\newcommand{\citeanon}[1]{[citation anonymised]\xspace}
\newcommand{\expertname}{Didier R{\'{e}}my\xspace}
\newcommand{\expertname}{an expert on this topic (anonymised for the purpose of this submission)\xspace}
\newcommand{\linkslink}{\url{https://github.com/links-lang/links}}
\newcommand{\linkslink}{Link removed for double blind review.}
\newcommand{\sysname}{Links\xspace}
\newcommand{\sysname}{Chain\xspace}
\newcommand{\sysdesc}{the Links\xspace}
\newcommand{\sysdesc}{an ML-like typed functional\xspace}
\newcommand{\pseudonym}{}
\newcommand{\pseudonym}{(A pseudonym for an
ML-like functional programming language.)}
\newcommand{\note}[1]{{\color{red} {\textsc{NOTE:}} #1}}
\newcommand{\todo}[1]{{\color{red} {\textsc{TODO:}} #1}}
\newcommand{\draft}[1]{{\color{black!50} #1}}
\newcommand{\jrc}[1]{{\color{red} {\textsc{JRC:}} #1}}
\newcommand{\fe}[1]{{\color{red}  {\textsc{FE:}}  #1}}
\newcommand{\sam}[1]{{\color{red} {\textsc{SL:}}  #1}}
\newcommand{\js}[1]{{\color{red}  {\textsc{JS:}}  #1}}
\newcommand{\note}[1]{}
\newcommand{\todo}[1]{}
\newcommand{\draft}[1]{}
\newcommand{\jrc}[1]{}
\newcommand{\fe}[1]{}
\newcommand{\sam}[1]{}
\newcommand{\js}[1]{}
\definecolor{grayhighlight}{RGB}{191,191,191}
\newcommand{\freezeml}{FreezeML\xspace}
\newcommand{\mlf}{$\mathrm{ML}^\mathrm{F}$\xspace} 
\newcommand{\xmark}{\ding{53}} 
\newcommand{\newty}{\ensuremath{\star}\xspace}   
\newcommand{\altty}{\ensuremath{\bullet}\xspace} 
\newcommand{\cons}{\mathbin{::}}
\newcommand{\append}{\mathbin{+\!\!+}}
\renewcommand\footnotetextcopyrightpermission[1]{}
\newcommand{\ba}{\begin{array}}
\newcommand{\ea}{\end{array}}
\newcommand{\bl}{\ba[t]{@{}l@{}}}
\newcommand{\el}{\ea}
\newcommand{\bp}{\begin{center}\begin{tabular}{p{6cm}p{5cm}l}}
\newcommand{\ep}{\end{tabular}\end{center}}
\newcommand{\Ginormouser}{\bBigg@{7.5}}
\newcommand{\Ginormous}{\bBigg@{6}}
\newcommand{\mediumginormous}{\bBigg@{5}}
\newcommand{\ginormous}{\bBigg@{4}}
\newcommand{\mathmodelabel}[1]{\ltx@label{#1}}
\def\aligncr{\math@cr}  
\def\arraycr{\@arraycr} 
\newcommand{\xdashrightarrow}[2][]{\ext@arrow 0359\rightarrowfill@@{#1}{#2}}
\def\rightarrowfill@@{\arrowfill@@\relax\relbar\rightarrow}
\def\arrowfill@@#1#2#3#4{%
  $\m@th\thickmuskip0mu\medmuskip\thickmuskip\thinmuskip\thickmuskip
   \relax#4#1
   \xleaders\hbox{$#4#2$}\hfill
   #3$%
}
\newenvironment{syntax}{\begin{displaymath}\ba{@{}l@{\quad}r@{~}c@{~}l@{}}}{\ea\end{displaymath}\ignorespacesafterend}
\newenvironment{equations}{\begin{displaymath}\ba{@{}r@{~}c@{~}l@{}}}{\ea\end{displaymath}\ignorespacesafterend}
\newenvironment{eqs}{\ba[t]{@{}r@{~}c@{~}l@{}}}{\ea}
\newcommand{\reason}[1]{\quad (\text{#1})}
\newcommand{\poly}{\star}
\newcommand{\mono}{{\mathord{\bullet}}}
\newcommand{\idsubst}{\iota}                  
\newcommand{\typ}[2]{#1 \vdash #2}
\newcommand{\wfctx}[2]{#1 \vdash #2}
\newcommand{\termwf}[2]{#1 \Vdash #2}
\newcommand{\transftofreezeml}[2][]{ \mathcal{E}#1\llbracket #2 #1\rrbracket }
\newcommand{\sem}[1]{\llbracket #1 \rrbracket}
\newcommand{\transfreezemlfof}[2][]{ \mathcal{C}#1\llbracket #2 #1\rrbracket  }
\newcommand{\transmltof}[2][]{ \mathcal{C}#1\llbracket #2 #1\rrbracket }
\newcommand{\J}[1]{\mathcal{J}\sem{#1}}
\newcommand{\slab}[1]{\textrm{#1}}            
\newcommand{\fLab}[1]{\text{\scshape{F-#1}}}  %
\newcommand{\mlLab}[1]{\text{\scshape{ML-#1}}}  %
\newcommand{\freezemlLab}[1]{\text{\scshape{#1}}}  %
\newcommand{\many}[1]{\overline{#1}}
\newcommand{\var}[1]{\mathit{#1}}    
\newcommand{\dec}[1]{\mathsf{#1}}    
\newcommand{\key}[1]{{\rm {\bf #1}}} 
\newcommand{\Int}{\dec{Int}}
\newcommand{\Bool}{\dec{Bool}}
\newcommand{\List}{\dec{List}}
\newcommand{\id}{\dec{id}}
\newcommand{\ids}{\dec{ids}}
\newcommand{\Let}{\key{let}}
\newcommand{\In}{\key{in}}
\newcommand{\gen}{\mathord{\$}}
\newcommand{\inst}{\mathord{@}}
\newcommand{\vinst}{V\!\inst}   
\newcommand{\freeze}[1]{\lceil{#1}\rceil}
\newcommand{\meta}{\mathsf}
\newcommand{\mgen}{\meta{gen}}
\newcommand{\mgend}[3][\Delta]{\mgen(#1, #2, #3)}
\newcommand{\msplit}{\meta{split}}
\newcommand{\Infer}{\meta{infer}}
\newcommand{\unify}{\meta{unify}}
\newcommand{\ftv}{\meta{ftv}}
\newcommand{\arity}{\meta{arity}}
\newcommand{\mlet}{\text{let}}
\newcommand{\mreturn}{\text{return}}
\newcommand{\massert}{\text{assert}}
\newcommand{\mfresh}{\text{assume fresh}}
\newcommand{\rsubst}{\delta}  
\newcommand{\fsubst}{\theta}  
\newcommand{\tc}{D}
\newcommand{\permutation}{\approx}
\newcommand{\conv}{\mathbin{~\simeq~}}
\newcommand{\lst}[1]{\List\;#1}
\newcommand{\lstp}[1]{\List\;(#1)}
\begin{document}

\ifextended
\fancyhead[LE]{}
\fancyhead[RO]{}
\fi

\title{\freezeml}

\subtitle{Complete and Easy Type Inference for First-Class Polymorphism}



\author{Frank Emrich}
\affiliation{\institution{The University of Edinburgh}}
\email{frank.emrich@ed.ac.uk}

\author{Sam Lindley}
\affiliation{\institution{The University of Edinburgh}}
\affiliation{\institution{Imperial College London}}
\email{sam.lindley@ed.ac.uk}

\author{Jan Stolarek}
\affiliation{\institution{The University of Edinburgh}}
\affiliation{\institution{Lodz University of Technology}}
\email{jan.stolarek@ed.ac.uk}

\author{James Cheney}
\affiliation{\institution{The University of Edinburgh}}
\affiliation{\institution{The Alan Turing Institute}}
\email{jcheney@inf.ed.ac.uk}

\author{Jonathan Coates}
\affiliation{\institution{The University of Edinburgh}}
\email{s1627856@sms.ed.ac.uk}

\begin{abstract}
ML is remarkable in providing statically typed polymorphism without the
programmer ever having to write any type annotations.
The cost of this parsimony is that the programmer is limited to a form of
polymorphism in which quantifiers can occur only at the outermost level of a
type and type variables can be instantiated only with monomorphic types.

Type inference for unrestricted System F-style
polymorphism is undecidable in general. Nevertheless, the literature abounds
with a range of proposals to bridge the gap between ML and System F.

We put forth a new proposal, \freezeml, a conservative extension of ML with two
new features.  First, let- and lambda-binders may be annotated with arbitrary
System F types. Second, variable occurrences may be \emph{frozen}, explicitly
disabling instantiation.  \freezeml is equipped with type-preserving
translations back and forth between System F and admits a type inference
algorithm, an extension of algorithm W, that is sound and complete and which
yields principal types.

%
%
%
%
%
%

\end{abstract}

\begin{CCSXML}
<ccs2012>
<concept>
<concept_id>10003752.10010124.10010125.10010130</concept_id>
<concept_desc>Theory of computation~Type structures</concept_desc>
<concept_significance>500</concept_significance>
</concept>
<concept>
<concept_id>10011007.10011006.10011008.10011009.10011012</concept_id>
<concept_desc>Software and its engineering~Functional languages</concept_desc>
<concept_significance>500</concept_significance>
</concept>
</ccs2012>
\end{CCSXML}

\ccsdesc[500]{Theory of computation~Type structures}
\ccsdesc[500]{Software and its engineering~Functional languages}

\keywords{first-class polymorphism, type inference, impredicative types}  

\maketitle
\renewcommand{\shortauthors}{Emrich, Lindley, Stolarek, Cheney, and Coates}

\section{Introduction}
\label{sec:introduction}


%
The design of ML~\citep{MilnerTHM97} hits a sweet spot in providing statically
typed polymorphism without the programmer ever having to write type
annotations. The Hindley-Milner type inference algorithm on which ML relies is
\emph{sound} (it only yields correct types) and \emph{complete} (if a program
has a type then it will be inferred). Moreover, inferred types are
\emph{principal}, that is, most general.
Alas, this sweet spot is rather narrow, depending on a delicate balance of
features; it still appears to be an open question how best to extend ML type
inference to support first-class polymorphism as found in System F.


Nevertheless, ML has unquestionable strengths as the basis for high-level
programming languages. Its implicit polymorphism is extremely convenient for
writing concise programs.
Functional programming languages such as Haskell and OCaml employ algorithms
based on Hindley-Milner type inference and go to great efforts to reduce the
need to write type annotations on programs.
Whereas the plain Hindley-Milner algorithm supports a limited form of
polymorphism in which quantifiers must be top-level and may only be instantiated
with monomorphic types, advanced programming techniques often rely on
first-class polymorphism, where quantifiers may appear anywhere and may be
instantiated with arbitrary polymorphic types, as in System F.
However, working directly in System F is painful due to the need
for \emph{explicit} type abstraction and application.
Alas, type inference, and indeed type checking, is undecidable for System F
without type annotations~\cite{Pfenning93, Wells94}.

%
%




%
The primary difficulty in extending ML to support first-class polymorphism is
with implicit instantiation of polymorphic types: whenever a variable occurrence
is typechecked, any quantified type variables are immediately instantiated with
(monomorphic) types. Whereas with plain ML there is no harm in greedily
instantiating type variables, with first-class polymorphism there is sometimes a
non-trivial choice to be made over whether to instantiate or not.


The basic Hindley-Milner algorithm~\cite{DamasM82} restricts the use of
polymorphism in types to \emph{type schemes} of the form $\forall \vec{a}. A$
where $A$ does not contain any further polymorphism.
This means that, for example, given a function $\dec{single} : \forall
a.a \to \lst{a}$, that constructs a list of one element, and a polymorphic
function choosing its first argument $\dec{choose} : \forall a.a \to a \to a$,
the expression $\dec{single}\ \dec{choose}$ is assigned the type $\lstp{a \to
a \to a}$, for some fixed type $a$ determined by the context in which the
expression is used.
%
%
The type $\lstp{a \to a \to a}$ arises from instantiating the
quantifier of $\dec{single}$ with $a \to a \to a$.
But what if instead of constructing a list of choice functions at a fixed type,
a programmer wishes to construct a list of polymorphic choice functions of type
$\lstp{\forall a.a \to a \to a}$?
This requires instantiating the quantifier of $\dec{single}$ with a polymorphic
type $\forall a.a \to a \to a$, which is forbidden in ML, and indeed the
resulting System F type is not even an ML type scheme.
%
%
%
However, in a richer language such as System F, the expression
$\dec{single}\ \dec{choose}$ could be annotated as appropriate in order to
obtain either the type $\lstp{a \to a \to a}$ or the type $\lstp{\forall a.a \to
a \to a}$.
%

All is not lost. By adding a sprinkling of explicit type annotations, in
combination with other extensions, it is possible to retain much of the
convenience of ML alongside the expressiveness of System F.
Indeed, there is a plethora of techniques bridging the expressiveness gap
between ML and System F without sacrificing desirable type inference properties
of
ML~\citep{BotlanR03,Leijen07,Leijen08,Leijen09,RussoV09,SerranoHVJ18,VytiniotisWJ06,VytiniotisWJ08,GarrigueR99}.

However, there is still not widespread consensus on what constitutes a good
design for a language combining ML-style type inference with System F-style
first-class polymorphism, beyond the typical criteria of decidability,
soundness, completeness, and principal typing.
As \citet{SerranoHVJ18} put it in their PLDI 2018 paper, type inference in the
presence of first-class polymorphism is still ``a deep, deep swamp'' and ``no
solution (...) with a good benefit-to-weight ratio has been presented to date''.
While previous proposals offer considerable expressive power, we nevertheless
consider the following combination of design goals to be both compelling and not
yet achieved by any prior work:
\begin{asparaitem}
  \item \textbf{Familiar System F types} Our ideal solution would use exactly
  the type language of System F. Systems such as HML~\citep{Leijen08},
  MLF~\citep{BotlanR03}, Poly-ML\footnote{The name Poly-ML does not appear in
  the original~\cite{GarrigueR99} paper, but was introduced
  retrospectively~\cite{BotlanR03}.}~\citep{GarrigueR99}, and
  QML~\citep{RussoV09}, capture (or exceed) the power of System F, but employ a
  strict superset of System F's type language. Whilst in some cases this
  difference is superficial, we consider that it does increase the burden on the
  programmer to understand and use these systems effectively, and may also
  contribute to increasing the syntactic overhead and decreasing the clarity of
  programs.

\item \textbf{Close to ML type inference} Our ideal solution would
  conservatively extend ML and standard Hindley-Milner type inference, including
  the (now-standard) \emph{value restriction}~\cite{Wright95}, without being
  tied to one particular type inference algorithm. Systems such as MLF and Boxy
  Types have relied on much more sophisticated type inference techniques than
  needed in classical Hindley-Milner type inference, and proven difficult to
  implement or extend further because of their complexity.  Other systems, such
  as GI, are relatively straightforward to implement atop an OutsideIn(X)-style
  constraint-based type inference algorithm, but would be much more work to add
  to a standard Hindley-Milner implementation.

\item \textbf{Low syntactic overhead} Our ideal solution would provide first-class
  polymorphism without significant departures from ordinary ML-style
  programming. Early systems~\cite{Remy94,LauferO94,OderskyL96,Jones97} showed
  how to accommodate System F-style polymorphism by associating it with nominal
  datatype constructors, but this imposes a significant syntactic overhead to
  make use of these capabilities, which can also affect the readability and
  maintainability of programs. All previous systems necessarily involve some
  type annotations as well, which we also desire to minimise as much as
  possible.

\item \textbf{Predictable behaviour} Our ideal solution would avoid guessing
  polymorphism and be specified so that programmers can anticipate where type
  annotations will be needed.  More recent systems, such as HMF~\citep{Leijen07}
  and GI~\cite{SerranoHVJ18}, use System F types, and are relatively easy to
  implement, but employ heuristics to guess one of several different polymorphic
  types, and require programmer annotations if the default heuristic behaviour
  is not what is needed.
\end{asparaitem}

In short, we consider that the problem of extending ML-style type inference with
the power of System F is solved as a \emph{technical} problem by several
existing systems, but there remains a significant \emph{design} challenge to
develop a system that uses \emph{familiar System F types}, is \emph{close to ML
type inference}, has \emph{low syntactic overhead}, and has \emph{predictable
behaviour}.  Of course, these desiderata represent our (considered, but
subjective) views as language designers, and others may (and likely will)
disagree.  We welcome such debate.

\paragraph{Our contribution: \freezeml}
In this paper, we introduce \freezeml, a core language extending ML
with two System F-like features:
\begin{itemize}
\item ``frozen'' variable occurrences for which polymorphic instantiation is
  inhibited (written $\freeze{x}$ to distinguish them from ordinary
  variables $x$ whose polymorphic types are implicitly instantiated); and
\item type-annotated lambda abstractions $\lambda (x : A).M$.
\end{itemize}
\freezeml also refines the typing rule for let by:
\begin{itemize}
\item restricting let-bindings to have principal types; and
\item allowing type annotations on let-bindings.
\end{itemize}
In \freezeml explicit type annotations are only required on lambda binders used
in a polymorphic way, and on let-bindings that assign a non-principal type to a
let-bound term; annotations are not required (or allowed) anywhere else.
As we shall see in Section~\ref{sec:essence}, the introduction of type\-/annotated
let-bindings and frozen variables allows us to macro-express explicit versions
of generalisation and instantiation (the two features that are implicit in plain
ML).
Thus, unlike ML, although \freezeml still has ML-like variables and
let-binding it \emph{also} enjoys explicit encodings of all of the underlying
System F features.
Correspondingly, frozen variables and type-annotated let-bindings are also
central to encoding type abstraction and type application of System F
(Section~\ref{sec:f-to-freezeml}).
%
Although, as we explain later, our approach is similar in expressiveness to
existing proposals such as Poly-ML, we believe its close alignment with System F
types and ML type inference are important benefits, and we argue via examples
that its syntactic overhead and predictability compare favourably with the state
of the art.  Nevertheless, further work would need to be done to systematically
compare the syntactic overhead and predictability of our approach with existing
systems --- this criticism, however, also applies to most previous work on new
language design ideas.

A secondary technical
contribution we make is to repair technical problem faced by FreezeML and some
previous systems.  In FreezeML, we restrict generalisation to principal types.
However, directly incorporating this constraint into the type system results in
rules that are syntactically not well-founded. We clarify that the typing
relation can still be defined and inductive reasoning about it is still sound.
This observation may also apply to other systems, such as HMF~\cite{Leijen08} and Poly-ML~\cite{GarrigueR99},
where the same issue arises but was not previously addressed.

\paragraph{Contributions}
This paper is a programming language design paper. Though we have an
implementation on top of \sysdesc programming
language~\citeanon{CooperLWY06}\footnote{\linkslink} implementation is not the
primary focus.
The paper makes the following main contributions:
\begin{itemize}
\item A high-level introduction to \freezeml (Section~\ref{sec:essence}).

\item A type system for \freezeml as a conservative extension of ML with the
  expressive power of System F (Section~\ref{sec:calculi}).

\item Local type-preserving translations back and forth between System F and
  \freezeml, and a discussion of the equational theory of \freezeml (Section~\ref{sec:translations}).

\item A type inference algorithm for \freezeml as an extension of algorithm
  W~\citep{DamasM82}, which is sound, complete, and yields principal types
  (Section~\ref{sec:inference}).
\end{itemize}

Section~\ref{sec:implementation} discusses implementation,
Section~\ref{sec:related-work} presents related work and
Section~\ref{sec:conclusion} concludes.

\iffinal
\else
References to an ``Appendix'' point to additional material including full
proofs of our main results in the anonymised supplementary material, available
in the submission system.
\fi

\section{An Overview of \freezeml}
\label{sec:essence}

We begin with an informal overview of \freezeml. Recall that the types of
\freezeml are exactly those of System F.

\paragraph{Implicit Instantiation}
In \freezeml (as in plain ML), when variable occurrences are typechecked, the
outer universally quantified type variables in the variable's type are
instantiated implicitly.
Suppose a programmer writes $\dec{choose}\ \id$, where $\dec{choose} : \forall
a. a \to a \to a$ and $\id : \forall a.a \to a$.
The quantifier in the type of $\id$ is implicitly instantiated with an as yet
unknown type $a$, yielding the type $a \to a$.
The type $a \to a$ is then used to instantiate the quantifier in the type of
$\dec{choose}$, yielding $\dec{choose}\ \id : (a \to a) \to (a
\to a)$.
The concrete type of $a$ depends on the context in which the expression is
used. For instance, if we were to apply $\dec{choose}\ \id$ to an increment
function then $a$ would be unified with $\Int$.
(For the formal treatment of type inference in Section~\ref{sec:inference} we
will be careful to explicitly distinguish between \emph{rigid} type variables,
like those bound by the quantifiers in the types of $\dec{choose}$ and $\id$,
and \emph{flexible} type variables, like the $a$ in the type inferred for the
expression $\dec{choose}\ \id$.)

\paragraph{Explicit Freezing \emph{($\freeze{x}$)}}

The programmer may explicitly prevent a variable from having its already existing quantifiers
instantiated by using the freeze operator $\freeze{-}$.
Whereas each ordinary occurrence of $\dec{choose}$ has type $a \to a \to a$ for
some type $a$, a frozen occurrence $\freeze{\dec{choose}}$ has type $\forall
a.a \to a \to a$.
More interestingly, whereas the term $\dec{single}\ \dec{choose}$ has type
$\lstp{a \to a \to a}$, the term $\dec{single}\ \freeze{\dec{choose}}$
has type $\lstp{\forall a.a \to a \to a}$.
%
This makes it possible to pass polymorphic arguments to functions that expect
them.
Consider a function $\dec{auto} : (\forall a.a \to a) \to (\forall a.a \to a)$.
Whereas the term $\dec{auto}\ \id$ does not typecheck (because $\id$ is
implicitly instantiated to type $a \to a$ which does not match the argument type
$\forall a.a \to a$ of $\dec{auto}$) the term $\dec{auto}\ \freeze{\id}$ does.


\paragraph{Explicit Generalisation \emph{($\gen{V}$)}}

We can generalise an expression to its principal polymorphic type by binding it to a variable and then freezing it, for instance: $\Let\;
id = \lambda x.x \;\In\; \dec{poly}\,\freeze{id}$, where $\dec{poly} : (\forall
a. a \to a) \to \Int \times \Bool$.
The explicit generalisation operator $\gen$ generalises the type of any value.
Whereas the term $\lambda x.x$ has type $a \to a$, the term
$\gen(\lambda x.x)$ has type $\forall a. a \to a$, allowing us to write
$\dec{poly}\,\gen(\lambda x.x)$.
Explicit generalisation is macro-expressible~\cite{Felleisen91} in \freezeml.
%
\[
\gen V \equiv \Let\; x = V \;\In\; \freeze{x}
\]
We can also define a type-annotated variant:
\[
\gen^A V \equiv \Let\; (x : A) = V \;\In\; \freeze{x}
\]
Note that \freezeml adopts the ML value restriction~\cite{Wright95}; hence let
generalisation only applies to syntactic values.

\paragraph{Explicit Instantiation \emph{($\inst{M}$)}}

As in ML, the polymorphic types of variables are implicitly instantiated when
typechecking each variable occurrence. Unlike in ML,
other terms can have polymorphic types, which are \emph{not} implicitly instantiated.
Nevertheless, we can instantiate a term by binding it to a variable: $\Let\; x =
\dec{head}\,\dec{ids} \;\In\; x\,\,42$, where $\dec{head} : \forall a. \lstp{a}
\to a$ returns the first element in a list and $\dec{ids} : \lstp{\forall a. a \to
a}$ is a list of polymorphic identity functions.
The explicit instantiation operator $@$ supports instantiation of a term without
having to explicitly bind it to a variable.
For instance, whereas the term $\dec{head}\,\ids$ has type $\forall a.a \to a$
the term $(\dec{head}\,\dec{ids})\inst$ in the context of application to 42 has type $\Int \to \Int$, so
$(\dec{head}\,\dec{ids})\inst\,42$ is well-formed. Explicit instantiation is
macro-expressible in \freezeml:
\[
M\inst \equiv \Let\; x = M \;\In\; x
\]

\paragraph{Ordered Quantifiers}

Like in System F, but unlike in ML, the order of quantifiers matters.
Quantifiers introduced through generalisation are ordered by the sequence in
which they first appear in a type.
Type annotations allow us to specify a different quantifier order, but
variable instantiation followed by generalisation restores the canonical order.
For example, if we have functions $\var{f} : (\forall a\ b.a \to b \to a \times
b) \to \Int$, $\dec{pair} : \forall a\ b. a \to b \to a \times b$, and
$\dec{pair'} : \forall b\ a. a \to b \to a \times b$, then
$f\ \freeze{\dec{pair}}$, $f\ \gen{\dec{pair}}$, $f\ \gen{\dec{pair'}}$ have
type $\Int$ and behave identically, whereas $f\ \freeze{\dec{pair'}}$ is
ill-typed.
%

%


\paragraph{Monomorphic parameter inference}
\label{sec:monomorphism-restriction}
As in ML, function arguments need not have annotations, but their inferred types
must be monomorphic, i.e. we cannot typecheck $\dec{bad}$:
\[
\bl
\dec{bad} = \lambda f.(f\,42, f\,\dec{True}) \\
\el
\]
 Unlike in ML we can annotate arguments with polymorphic types and use them at different types:
\[
\bl
\dec{poly} = \lambda (f : \forall a.a \to a).(f\,42, f\,\dec{True}) \\
\el
\]
%
%
%
One might hope that it is safe to infer polymorphism by local, compositional
reasoning, but that is not the case.
Consider the following two functions.
\[
\bl
\dec{bad1} = \lambda f.(\dec{poly}\,\freeze{f}, (f\,42)+1) \\
\dec{bad2} = \lambda f.((f\,42)+1, \dec{poly}\,\freeze{f})
\el
\]
We might reasonably expect both to be typeable by assigning the type $\forall
a.a \to a$ to $f$.
Now, assume type inference is left-to-right.
%
%
In $\dec{bad1}$ we first infer that $f$ has type $\forall a.a \to a$ (as
$\freeze{f}$ is the argument to $\dec{poly}$); then we may instantiate $a$ to
$\Int$ when applying $f$ to $42$.
In $\dec{bad2}$ we eagerly infer that $f$ has type $\Int \to \Int$; now when we
pass $\freeze{f}$ to $\dec{poly}$, type inference fails.
To rule out this kind of sensitivity to the order of type inference, and the
resulting incompleteness of our type inference algorithm, we insist that
unannotated $\lambda$-bound variables be monomorphic. This in turn entails
checking monomorphism constraints on type variables and maintaining other
invariants (\cref{sec:design-considerations}).
%
(One can build more sophisticated systems that defer determining whether a term
is polymorphic or not until more information becomes available --- both Poly-ML
and MLF do, for instance --- but we prefer to keep things simple.)




\subsection{\freezeml by Example}
\label{sec:examples}

\setlength{\jot}{0ex}

\begin{figure*}[htb]
\small
\setlength{\tabcolsep}{3pt}

\begin{tikzpicture}
\node (A-B-C) {

\begin{tabular}{l r c l}
\hline
A  & \multicolumn{3}{l}{POLYMORPHIC INSTANTIATION} \\
\hline
A1          & $\lambda x\ y.y$                                  &:& $a \to b \to b$\\
A1\altty    & $\gen(\lambda x\ y.y)$                            &:& $\forall a\ b.a \to b \to b$\\
A2          & $\dec{choose}\ \dec{id}$                          &:& $(a \to a) \to (a \to a)$ \\
A2\altty    & $\dec{choose}\ \freeze{\dec{id}}$                 &:& $(\forall a.a \to a) \to (\forall a.a \to a)$ \\
A3          & $\dec{choose}\ [] \ \dec{ids}$                    &:& $\lstp{\forall a.a \to a}$ \\
A4          & $\lambda(x : \forall a.a \to a). x\ x$            &:& $(\forall a.a \to a) \to (b \to b)$\\
A4\altty    & $\lambda(x : \forall a.a \to a). x\ \freeze{x}$   &:& $(\forall a.a \to a) \to (\forall a.a \to a)$\\
A5          & $\dec{id}\ \dec{auto}$                            &:& $(\forall a.a \to a) \to (\forall a.a \to a)$\\
A6          & $\dec{id}\ \dec{auto'}$                           &:& $(\forall a.a \to a) \to (b \to b)$ \\
A6\altty    & $\dec{id}\ \freeze{\dec{auto'}}$                  &:& $\forall b.(\forall a.a \to a) \to (b \to b)$ \\
A7          & $\dec{choose}\ \dec{id}\ \dec{auto}$              &:& $(\forall a.a \to a) \to (\forall a.a \to a)$\\
A8          & $\dec{choose}\ \dec{id}\ \dec{auto'}$             &:& \xmark \\
A9\newty    & $f\ (\dec{choose}\ \freeze{\dec{id}})\ \dec{ids}$ &:& $\forall a.a \to a$ \\
            & \multicolumn{3}{r}{where $f : \forall a.(a \to a) \to \lst{a} \to a$} \\
A10\newty   & $\dec{poly}\ \freeze{\dec{id}}$                   &:& $\Int \times \Bool$ \\
A11\newty   & $\dec{poly}\ \gen(\lambda x.x)$                   &:& $\Int \times \Bool$ \\
A12\newty   & $\dec{id}\ \dec{poly}\ \gen(\lambda x.x)$         &:& $\Int \times \Bool$\\
\hline
C  & \multicolumn{3}{l}{FUNCTIONS ON POLYMORPHIC LISTS} \\
\hline
C1        & $\dec{length}\ \dec{ids}$            &:& $\Int$ \\
C2        & $\dec{tail}\ \dec{ids}$              &:& $\lstp{\forall a.a \to a}$\\
C3        & $\dec{head}\ \dec{ids}$              &:& $\forall a.a \to a$\\
C4        & $\dec{single}\ \dec{id}$             &:& $\lstp{a \to a}$\\
C4\altty  & $\dec{single}\ \freeze{\dec{id}}$    &:& $\lstp{\forall a.a \to a}$\\
C5\newty  & $\freeze{\dec{id}} \cons \dec{ids}$  &:& $\lstp{\forall a.a \to a}$\\
C6\newty  & $\gen(\lambda x.x) \cons \dec{ids}$&:& $\lstp{\forall a.a \to a}$\\
C7        & $(\dec{single}\ \dec{inc}) \append (\dec{single}\ \dec{id})$ &:& $\lstp{\Int \to \Int}$\\
C8\newty  & $\dec{g}\ (\dec{single}\ \freeze{\dec{id}})\ \dec{ids}$               &:& $\forall a.a \to a$ \\
          & \multicolumn{3}{r}{where $\dec{g} : \forall a.\lst{a} \to \lst{a} \to a$} \\
C9\newty  & $\dec{map}\ \dec{poly}\ (\dec{single}\ \freeze{\dec{id}})$            &:& $\lstp{\Int \times \Bool}$\\
C10       & $\dec{map}\ \dec{head}\ (\dec{single}\ \dec{ids})$                    &:& $\lstp{\forall a.a \to a}$\\
\hline
\end{tabular}
};
\node[right= 0.1cm of A-B-C.north east, anchor=north west] (D-E-F){
\begin{tabular}{l r cl}
\hline
B  & \multicolumn{3}{l}{INFERENCE WITH POLYMORPHIC ARGUMENTS} \\
\hline
B1\newty    & $\lambda (f : \forall a.a \to a).$ \\
           & $\qquad(f\ 1, f\ \dec{True})$  &:& $(\forall a.a \to a) \to \Int \times \Bool$ \\
B2\newty    & $\lambda (\var{xs} : \lstp{\forall a.a \to a}).$ \\
             & $\qquad\qquad \dec{poly}\ (\dec{head}\ \var{xs})$ &:& $\lstp{\forall a.a \to a} \to \Int \times \Bool$ \\
\hline
D  & \multicolumn{3}{l}{APPLICATION FUNCTIONS \phantom{jhgasjdhagskdjhagskdjhg}} \\
\hline
D1\newty & $\dec{app}\ \dec{poly}\ \freeze{\dec{id}}$        &:& $\Int \times \Bool$\\
D2\newty & $\dec{revapp}\ \freeze{\dec{id}}\ \dec{poly}$     &:& $\Int \times \Bool$\\
D3\newty & $\dec{runST}\ \freeze{\dec{argST}}$               &:& $\Int$ \\
D4\newty & $\dec{app}\ \dec{runST}\ \freeze{\dec{argST}}$    &:& $\Int$ \\
D5\newty & $\dec{revapp}\ \freeze{\dec{argST}}\ \dec{runST}$ &:& $\Int$ \\
\hline
E  & \multicolumn{3}{l}{$\eta$-EXPANSION} \\
\hline
E1        & $k\ h\ l$                                   &:& \xmark \\
E2\newty  & $k\ \gen(\lambda x.(h\ x)\inst)\ l$ &:& $\forall a.\Int \to a \to a$ \\
          & \multicolumn{3}{r}{where
                $\ba[t]{@{~}l@{~}c@{~}l@{}}
                   k &:& \forall a.a \to \lst{a} \to a \\
                   h &:& \Int \to \forall a.a \to a \\
                   l &:& \lstp{\forall a.\Int \to a \to a} \\
                \ea$} \\
E3        & $r\ (\lambda x\ y.y)$                             &:& \xmark \\
E3\altty  & $r\ \gen(\lambda x.\gen(\lambda y.y))$          &:& $\Int$ \\
          & \multicolumn{3}{r}{where $r : (\forall a.a \to \forall b.b \to b) \to \Int$} \\
\hline
F  & \multicolumn{3}{l}{\freezeml PROGRAMS} \\
\hline
F1        & $\dec{id}  = \gen(\lambda \var{x}. \var{x})$                     &:& $\forall a.a \to a$\\
F2        & $\dec{ids} = [\freeze{\dec{id}}]$                 &:& $\lstp{\forall a.a \to a}$\\
F3        & $\dec{auto} = \lambda (\var{x} : \forall a.a \to a). \var{x}\ \freeze{\var{x}}$ &:& $(\forall a.a \to a) \to (\forall a.a \to a)$\\
F4        & $\dec{auto'} = \lambda (\var{x} : \forall a.a \to a). \var{x}\ \var{x}$         &:& $\forall b.(\forall a.a \to a) \to b \to b$\\
F5\newty  & $\dec{auto}\ \freeze{\dec{id}}$                   &:& $\forall a.a \to a$ \\
F6        & $(\dec{head}\ \dec{ids}) \cons \dec{ids}$               &:& $\lstp{\forall a.a \to a}$ \\
F7\newty  & $(\dec{head}\ \dec{ids})\inst\ 3$                 &:& $\Int$ \\
F8        & $\dec{choose}\ (\dec{head}\ \dec{ids})$           &:& $(\forall a.a \to a) \to (\forall a.a \to a)$\\
F8\altty  & $\dec{choose}\ (\dec{head}\ \dec{ids})\inst$      &:& $(a \to a) \to (a \to a)$\\
 \multicolumn{4}{l}{
\hspace{-5pt}$
\begin{array}{ll}
\text{F9}
& \Let\; f = \dec{revapp}\ \freeze{\dec{id}}\;\In\;  f\ \dec{poly} \\ &\qquad: \Int \times \Bool \\
 \text{F10}\dagger       & \dec{choose}\ \dec{id}\
                (\lambda
                  (x : \forall a.a \to a).\gen(\dec{auto'}\ x)) \\
   & \qquad : (\forall a.a \to a) \to (\forall a.a \to a)\\
\end{array}
$
} \\
\hline
\end{tabular}
};
\end{tikzpicture}
\caption{Example \freezeml Terms and Types}
\label{fig:freezeml-examples}
\end{figure*}


\begin{figure*}[ht]
\begin{minipage}{\textwidth}
\begin{align*}
\dec{head}   &: \forall a.\lst{a} \to a                   & \dec{id}     &: \forall a.a \to a                             & \dec{map}    &: \forall a\ b.(a \to b) \to \lst{a} \to \lst{b}  \\
\dec{tail}   &: \forall a.\lst{a} \to \lst{a}             & \dec{ids}    &: [\forall a.a \to a]                           & \dec{app}    &: \forall a\ b.(a \to b) \to a \to b             \\
[\,]         &: \forall a.\lst{a}                         & \dec{inc}    &: \Int \to \Int                                 & \dec{revapp} &: \forall a\ b.a \to (a \to b) \to b              \\
(\cons)      &: \forall a.a \to \lst{a} \to \lst{a}                         & \dec{choose} &: \forall a.a \to a \to a     & \dec{runST}  &: \forall a.(\forall s.ST\ s\ a) \to a           \\
\dec{single} &: \forall a.a \to \lst{a}                   & \dec{poly}   &: (\forall a.a \to a) \to \Int \times \Bool     & \dec{argST}  &: \forall s.ST\ s\ \Int                       \\
(\append)    &: \forall a.\lst{a} \to \lst{a} \to \lst{a} & \dec{auto}   &: (\forall a.a \to a) \to (\forall a.a \to a)   & \dec{pair}   &: \forall a\ b.a \to b \to a \times b \\
\dec{length} &: \forall a.\lst{a} \to \Int                & \dec{auto'}  &: \forall b.(\forall a.a \to a) \to (b \to b)   & \dec{pair'}  &: \forall b\ a.a \to b \to a \times b
\end{align*}
\end{minipage}
\caption{Type signatures for functions used in the text; adapted from~\cite{SerranoHVJ18}.}
\label{fig:function-signatures}
\end{figure*}


Figure~\ref{fig:freezeml-examples} presents a collection of \freezeml examples
that showcase how our system works in practice.  We use functions with type
signatures shown in Figure~\ref{fig:function-signatures} (adapted from
\citet{SerranoHVJ18}).
In Figure~\ref{fig:freezeml-examples} well-formed expressions are annotated with a
type inferred in \freezeml, whilst ill-typed expressions are annotated with
\xmark. Sections A-E of the table are taken from~\cite{SerranoHVJ18}. Section F
of the table contains additional examples which further highlight the behaviour
of our system.
Examples F1-F4 show how to define some of the functions and values in
Figure~\ref{fig:function-signatures} in \freezeml.
In \freezeml it is sometimes possible to infer a different type depending on the
presence of freeze, generalisation, and instantiation operators. In such cases
we provide two copies of an example in Figure~\ref{fig:freezeml-examples}, the
one with extra \freezeml annotations being marked with \altty.  Sometimes
explicit instantiation, generalisation, or freezing is mandatory to make an
expression well-formed in \freezeml. In such cases there is only one,
well-formed copy of an example marked with a \newty, e.g. A9\newty.
Example F10$\dagger$ typechecks only in a system without a value restriction due
to generalisation of an application.

\section{\freezeml via System F and ML}
\label{sec:calculi}

In this section we give a syntax-directed presentation of \freezeml and discuss
various design choices that we have made.  We wish for \freezeml to be an
ML-like call-by-value language with the expressive power of System F.  To this
end we rely on a standard call-by-value definition of System F, which
additionally obeys the value restriction (i.e. only values are allowed under
type abstractions). We take mini-ML~\cite{Clement86} as a core representation of
a call-by-value ML language. Unlike System F, ML separates monotypes from
(polymorphic) type schemes and has no explicit type abstraction and application.
Polymorphism in ML is introduced by generalising the body of a let-binding, and
eliminated implicitly when using a variable. Another crucial difference between
System F and ML is that in the former the order of quantifiers in a polymorphic
type matters, whereas in the latter it does not.
Full definitions of System F and ML, including the syntax, kinding and typing
rules, as well as translation from ML to System F, are given
in \auxref{sec:core-calculi}.

\paragraph*{Notations.}
We write $\ftv(A)$ for the sequence of distinct free type variables of a type in
the order in which they first appear in $A$. For example, $\ftv((a \to b) \to (a \to
c)) = a,b,c$.
%
%
%
%
Whenever a kind environment $\Delta$ appears as a domain of a substitution or a
$\forall$ quantifier, it is allowed to be empty.  In such case we identify type
$\forall \Delta. H$ with $H$.
We write $\Delta - \Delta'$ for the restriction of $\Delta$ to those type
variables that do not appear in $\Delta'$.
We write $\Delta \mathbin{\#} \Delta'$ to mean that the type variables in
$\Delta$ and $\Delta'$ are disjoint.
Disjointedness is also implicitly required when concatenating $\Delta$ and
$\Delta'$ to $\Delta, \Delta'$.

\subsection{\freezeml}
\label{sec:freezeml}


\freezeml is an extension of ML with two new features. First, let-bindings and
lambda-bindings may be annotated with arbitrary System F
types. Second, \freezeml adds a new form $\freeze{x}$, called \emph{frozen
variables}, for preventing variables from being instantiated.

The syntax of \freezeml is given in Figure~\ref{fig:freezeml-syntax}.
(We name the syntactic categories for later use in
Section~\ref{sec:inference}.)
The types are the same as in System F.
We explicitly distinguish two kinds of type: a monotype ($S$), is as in ML a
type entirely free of polymorphism, and a guarded type ($H$) is a type with no
top-level quantifier (in which any polymorphism is guarded by a type
constructor).
The terms include all ML terms plus frozen variables ($\freeze{x}$) and lambda-
and let-bindings with type ascriptions.
Values are those terms that may be generalised under the value restriction. They are slightly more general
than the value forms of Standard ML in that they are closed under let binding
(as in OCaml).
Guarded values are those values that can only have guarded types (that is, all
values except those that have a frozen variable in tail position).

\begin{figure}
\[
\bl
\ba[t]{@{}l@{}r@{~}c@{~}l@{}}
\slab{Type Variables}     &\dec{TVar} \ni a, b, c \\
\slab{Type Constructors}  &\dec{Con} \ni \tc
                                &::= & \Int \mid \List \mid \mathord{\to} \mid \mathord{\times} \mid \dots \\
\slab{Types}              &\dec{Type} \ni A, B
                                &::= & a \mid \tc\,\many{A} \mid \forall a.A \\
\slab{Monotypes}          &\dec{MType} \ni S, T
                                &::= & a \mid \tc\,\many{S} \\
\slab{Guarded Types}      &\dec{GType} \ni H
                                &::= & a \mid \tc\,\many{A} \\
\slab{Type Instantiation} &\dec{Subst} \ni \rsubst  &::= & \emptyset \mid \rsubst[a \mapsto A] \\
\slab{Term Variables}     &\dec{Var} \ni x, y, z \\
\slab{Terms}              &\dec{Term} \ni M, N
                                &::= & x \mid \freeze{x} \mid \lambda x.M  \\
                           &     &\mid& \lambda (x : A).M \mid M\,N \\
                           &     &\mid& \Let\; x = M \;\In\; N \\
                           &     &\mid& \Let\; (x : A) = M \;\In\; N \\
\slab{Values}             &\dec{Val} \ni V, W
                                &::= & x \mid \freeze{x} \mid \lambda x.M \\
                           &     &\mid& \lambda (x : A). M \\
                           &     &\mid& \Let\; x = V \;\In\; W \\
                           &     &\mid& \Let\; (x : A) = V \;\In\; W \\
\slab{Guarded Values}     &\dec{GVal} \ni U
                                &::= & x \mid \lambda x.M \mid \lambda (x\!:\! A).M\\
                                &     &\mid& \Let\; x = V \;\In\; U \\
                                &     &\mid& \Let\; (x : A) = V \;\In\; U \\
\slab{Kinds}              &\dec{Kind} \ni K
                                &::= & \mono \mid \poly \\
\slab{Kind Environments}  &\dec{PEnv} \ni \Delta
                                &::= & \cdot \mid \Delta, a \\
\slab{Type Environments}  &\dec{TEnv} \ni \Gamma
                                &::= & \cdot \mid \Gamma, x:A \\
\ea \\
\el
\]

\caption{\freezeml Syntax}
\label{fig:freezeml-syntax}
\end{figure}


\begin{figure}
\raggedright
$\boxed{\Delta \vdash A : K}$
\vspace{-0.6cm}
\begin{mathpar}
  \inferrule
    {a \in \Delta}
    {\Delta \vdash a : \mono}

  \inferrule
    {\arity(D) = n \\\\
     \Delta \vdash A_1 : K \\\\
     \cdots \\\\
     \Delta \vdash A_n : K}
    {\Delta \vdash \tc\,\many{A} : K}

  \inferrule
    {\Delta, a \vdash A : \poly}
    {\Delta \vdash \forall a.A : \poly}

  \inferrule
    {\Delta \vdash A : \mono}
    {\Delta \vdash A : \poly}
\end{mathpar}
\caption{\freezeml Kinding Rules}
\label{fig:freezeml-kinding}
\end{figure}
\begin{figure}
\raggedright
$\boxed{\Delta \vdash \rsubst : \Delta' \Rightarrow_K \Delta''}$
\vspace{-0.4cm}
\begin{mathpar}
\inferrule
  { }
  {\Delta \vdash \emptyset : \cdot \Rightarrow_K \Delta'}

\inferrule
  {\Delta \vdash \rsubst : \Delta' \Rightarrow_K \Delta'' \\
   \Delta, \Delta'' \vdash A : K}
  {\Delta \vdash \rsubst[a \mapsto A] : (\Delta', a) \Rightarrow_K \Delta''}
\end{mathpar}

\caption{\freezeml Instantiation Rules}
\label{fig:freezeml-inst}
\end{figure}

The \freezeml kinding judgement $\Delta \vdash A : K$ states that type $A$ has kind
$K$ in kind environment $\Delta$. The kinding rules are given in
Figure~\ref{fig:freezeml-kinding}.
As in ML we distinguish monomorphic types ($\mono$) from polymorphic types
($\poly$).
Unlike in ML polymorphic types can appear inside data type constructors.

Rules for type instantiation are given in Figure~\ref{fig:freezeml-inst}.  The
judgement $\Delta \vdash \rsubst : \Delta' \Rightarrow \Delta''$ defines a
well-formed finite map from type variables in $\Delta, \Delta'$ into type
variables in $\Delta, \Delta''$, such that $\rsubst(a) = a$ for every
$a \in \Delta$.
As such, it is only well-defined if $\Delta$ and $\Delta'$ are disjoint and
$\Delta$ and $\Delta''$ are disjoint.
Type instantiation accounts for polymorphism by either being restricted to
instantiate type variables with monomorphic kinds only ($\Rightarrow_\mono$) or
permitting polymorphic instantiations ($\Rightarrow_\poly$).  The following rule is
admissible
\begin{mathpar}
\inferrule
  {\Delta, \Delta' \vdash A : K \\ \Delta \vdash \rsubst : \Delta' \Rightarrow_{K'} \Delta''}
  {\Delta, \Delta'' \vdash \rsubst(A) : K \sqcup K'}
\end{mathpar}
where $\mono \sqcup \mono = \mono$ and $\mono \sqcup \poly = \poly \sqcup \mono
= \poly \sqcup \poly = \poly$.  We apply type instantiation in a standard way,
taking care to account for shadowing of type variables
(Figure~\ref{fig:freezeml-substitution-application}).

\begin{figure}[tb]
\begin{mathpar}
\ba{@{}r@{~}c@{~}l@{~}c@{~}r@{~}c@{~}l@{}}
\emptyset(A)            &=& A                     &\hspace{20pt}& \rsubst[a \mapsto A](a) &=& A\\
\rsubst(D~\many{A})     &=& D~(\many{\rsubst(A)}) &\hspace{20pt}& \rsubst[a \mapsto A](b) &=& \rsubst(b)
\ea
\vspace{-11pt}

\begin{eqs}
\rsubst(\forall a.A) &=& \forall c.\rsubst[a \mapsto c](A), \text{where}\ c \not\in \ftv(\rsubst(b)) \text{ for all } b \neq  c\\
\end{eqs}
\end{mathpar}
\caption{Application of a Type Instantiation in \freezeml}
\label{fig:freezeml-substitution-application}
\end{figure}


\begin{figure}
\raggedright
$\boxed{\typ{\Delta; \Gamma}{M : A}}$
\vspace{-0.4cm}
\begin{mathpar}
  \inferrule*[Lab=\freezemlLab{Freeze}]
    {x : A \in \Gamma}
    {\typ{\Delta; \Gamma}{\freeze{x} : A}}

  \inferrule*[Lab=\freezemlLab{Var}]
    {{x : \forall \Delta'.H \in \Gamma} \\\\
      \Delta \vdash \rsubst : \Delta' \Rightarrow_\poly \cdot
    }
    {\typ{\Delta; \Gamma}{x : \rsubst(H)}}

  \inferrule*[Lab=\freezemlLab{App}]
    {\typ{\Delta; \Gamma}{M : A \to B} \\\\
     \typ{\Delta; \Gamma}{N : A}
    }
    {\typ{\Delta; \Gamma}{M\,N : B}}

  \inferrule*[Lab=\freezemlLab{Lam}]
    {\typ{\Delta; \Gamma, x : S}{M : B}}
    {\typ{\Delta; \Gamma}{\lambda x.M : S \to B}}

  \inferrule*[Lab=\freezemlLab{Lam-Ascribe}]
    {\typ{\Delta; \Gamma, x : A}{M : B}}
    {\typ{\Delta; \Gamma}{\lambda (x : A).M : A \to B}} \\

  \inferrule*[Lab=\freezemlLab{Let}]
    {(\Delta', \Delta'') = \mgend{A'}{M} \\
     (\Delta, \Delta'', M, A') \Updownarrow A \\
     \Delta, \Delta''; \Gamma \vdash M : A' \\
     \typ{\Delta; \Gamma, x : A}{N : B} \\\\
     \meta{principal}(\Delta, \Gamma, M, \Delta'', A')
    }
    {\typ{\Delta; \Gamma}{\Let \; x = M\; \In \; N : B}}

  \inferrule*[Lab=\freezemlLab{Let-Ascribe}]
    {(\Delta', A') = \msplit(A, M) \\
     \typ{\Delta, \Delta'; \Gamma}{M : A'} \\
     \typ{\Delta; \Gamma, x : A}{N : B}
    }
    {\typ{\Delta; \Gamma}{\Let \; (x : A) = M\; \In \; N : B}}

\end{mathpar}

\caption{\freezeml typing rules}
\label{fig:freezeml-typing}
\end{figure}


\begin{figure}
\raggedright
$\boxed{(\Delta, \Delta', M, A') \Updownarrow A}$
\begin{mathpar}
\inferrule
  {M \in \dec{GVal}}
  {(\Delta, \Delta', M, A') \Updownarrow \forall \Delta'. A'}

\inferrule
  {\Delta \vdash \rsubst : \Delta' \Rightarrow_\mono \cdot \\ M \not\in \dec{GVal}}
  {(\Delta, \Delta', M, A') \Updownarrow \rsubst(A')}
\end{mathpar}

\begin{mathpar}
\mgend{A}{M} =
\bl
\left\{
\ba{@{~}l@{~}l@{}@{\quad}l}
  (\Delta', &\Delta') &\text{if $M \in \dec{GVal}$} \\
  (\cdot,   &\Delta') &\text{otherwise} \\
\ea
\right. \\
\text{where }\Delta' = \ftv(A) - \Delta \\
\el

\bl
\meta{split}(\forall \Delta.H, M) =
\left\{
\ba{@{~}l@{}@{\quad}l}
  (\Delta, H)              & \text{if $M \in \dec{GVal}$} \\
  (\cdot,\forall \Delta.H) & \text{otherwise}
\ea
\right.
\el

\end{mathpar}
\begin{mathpar}
\bl
\meta{principal}(\Delta, \Gamma, M, \Delta', A') = \\
\hspace{0.75cm}\Delta' = \ftv(A') - \Delta \;\text{ and }\;
\Delta, \Delta'; \Gamma \vdash M : A' \;\text{ and } \\
\hspace{0.75cm}\bl
(\text{for all }\Delta'', A'' \mid
       \bl
       \text{if }
       \Delta'' = \ftv(A'') - \Delta \text{ and } \\
       \Delta, \Delta''; \Gamma \vdash M : A'' \\
       \text{then there exists }
       \rsubst
       \text{ such that } \\
       \;\Delta  \vdash \rsubst : \Delta' \Rightarrow_\poly \Delta''
       \text{ and }
       \rsubst(A') = A'' )\\
       \el \\
\el
\el
\end{mathpar}
\caption{\freezeml auxiliary definitions}
\label{fig:freezeml-aux-judgments}
\end{figure}


The \freezeml judgement $\Delta; \Gamma \vdash M : A$ states that term $M$ has
type $A$ in kind environment $\Delta$ and type environment $\Gamma$; its rules
are shown in Figure~\ref{fig:freezeml-typing}.
These rules are adjusted with respect to ML to allow full System F types
everywhere except in the types of variables bound by unannotated lambdas, where
only monotypes are permitted.

As in ML, the \freezemlLab{Var} rule implicitly instantiates variables.
The $\poly$ in the judgement $\Delta \vdash \rsubst
: \Delta' \Rightarrow_\poly \cdot$ indicates that the type variables in
$\Delta'$ may be instantiated with polymorphic types.
%
%
The \freezemlLab{Freeze} rule differs from the \freezemlLab{Var} rule only in
that it suppresses instantiation.
In the \freezemlLab{Lam} rule, the restriction to a syntactically monomorphic
argument type ensures that an argument cannot be used at different types inside
the body of a lambda abstraction.
%
%
 However, the type of an unannotated lambda abstraction may subsequently be
generalised.
For example, consider the expression $\dec{poly}\ \gen{(\lambda x. x)}$.
%
The parameter $x$ cannot be typed with a polymorphic type;
giving the syntactic monotype $a$ to $x$ yields type $a \to a$ for the lambda-abstraction.
The $\gen$ operator then generalises this to $\forall a. a \to a$ as the type
of argument passed to $\dec{poly}$.
%
%
The \freezemlLab{Lam-Ascribe} rule allows an argument to be used polymorphically
inside the body of a lambda abstraction.
The \freezemlLab{App} rule is standard.

\paragraph{Let Bindings}

Because we adopt the value restriction, the \freezemlLab{Let} rule behaves
differently depending on whether or not $M$ is a guarded value (cf. $\dec{GVal}$
syntactic category in Figure~\ref{fig:freezeml-syntax}).
The choice of whether to generalise the type of $M$ is delegated to the
judgement $(\Delta, \Delta'', M, A') \Updownarrow A$, where $A'$ is the type of
$M$ and $\Delta''$ are the generalisable type variables of $M$, i.e. $\Delta''
= \ftv(A') - \Delta$.
The $\Updownarrow$ judgement determines $A$, the type given to $x$ while
type-checking $N$.
If $M$ is a guarded value, we generalise and have $A = \forall \Delta''.A'$.
If $M$ is not a guarded value, we have $A = \rsubst(A')$, where $\rsubst$ is an
instantiation with $\Delta \vdash \rsubst : \Delta'' \Rightarrow_\mono \cdot$.
This means that instead of abstracting over the unbound type variables
$\Delta''$ of $A'$, we instantiate them \emph{monomorphically}. We further
discuss the need for this behaviour in \cref{sec:design-considerations}.

The $\dec{gen}$ judgement used in the \freezemlLab{Let} rule may seem surprising
--- its first component is unused whilst the second component is identical in
both cases and corresponds to the generalisable type variables of $A'$.
Indeed, the first component of $\dec{gen}$ is irrelevant for typing but it is
convenient for writing the translation from \freezeml to System F
(Figure~\ref{fig:trans-freezeml-to-f} in Section~\ref{sec:freezeml-to-f}), where
it is used to form a type abstraction, and in the type inference algorithm
(Figure~\ref{fig:inference} in Section~\ref{sec:type-inference}), where it
allows us to collapse two cases into one.

The \freezemlLab{Let} rule requires that $A'$ is the principal type for $M$.
This constraint is necessary to ensure completeness of our type inference
algorithm;
we discuss it further in \cref{part:principal-type-restriction}.
The relation $\meta{principal}$ is defined in
Figure~\ref{fig:freezeml-aux-judgments}.


The \freezemlLab{Let-Ascribe} rule is similar to the \freezemlLab{Let} rule, but instead
of generalising the type of $M$, it uses the type $A$ supplied via an annotation.
As in \freezemlLab{Let}, $A'$ denotes the type of $M$.
However, the annotated case admits non-principal types for $M$.
The $\msplit$ operator enforces the value restriction.  If $M$ is a guarded
value, $A'$ must be a guarded type, i.e. we have $A' = H$ for some $H$.  We then
have $A = \forall \Delta'. H$.  If $M$ is not a guarded value $\msplit$ requires
$A' = A$ and $\Delta' = \cdot$.  This means that \emph{all} toplevel quantifiers
in $A$ must originate from $M$ itself, rather than from generalising it.

\medskip
Every valid typing judgement in ML is also a valid typing judgement in \freezeml.
\begin{restatable}{thm}{freezemlconvservative}
If $\typ{\Delta; \Gamma}{M : S}$ in ML then $\typ{\Delta; \Gamma}{M : S}$ in \freezeml.
\end{restatable}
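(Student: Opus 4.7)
The plan is to proceed by induction on the structure of the ML typing derivation of $\typ{\Delta;\Gamma}{M:S}$. For most rules the FreezeML counterpart yields the required conclusion essentially directly: ML-App matches FreezeML-App; ML-Lam matches FreezeML-Lam since both require the parameter at a monotype; and ML-Var extends to FreezeML-Var, because every monomorphic instantiation $\rsubst : \Delta' \Rightarrow_\mono \cdot$ is in particular a polymorphic instantiation $\rsubst : \Delta' \Rightarrow_\poly \cdot$. The induction hypothesis dispatches the premises of each rule.

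The substantive case is the let rule. The standard ML-Let generalises whatever monotype $\tau$ happens to have been assigned to $M$ in the derivation, without requiring $\tau$ to be principal, whereas FreezeML-Let demands that $M$ be typed at its principal type via the \emph{principal} side condition. To reconcile this I would first establish an auxiliary lemma: for any ML term $M$ (i.e.\ one containing neither $\freeze{-}$ nor type ascriptions) and any environments $\Delta, \Gamma$, $M$ admits a FreezeML principal type $S^{*}$ such that every FreezeML type assignable to $M$ is obtainable from $S^{*}$ by a polymorphic instantiation $\Rightarrow_\poly$. The lemma is proved by structural induction on $M$, inspecting the possible FreezeML derivations; the key observation is that an ML term contains no construct that can force a polymorphic type into the \emph{structure} of an inferred type, so although FreezeML-Var admits more instantiations than ML-Var, every such extra type arises by polymorphically substituting the free type variables of the ML principal monotype. (Alternatively this could be obtained as a corollary of soundness and completeness of the inference algorithm from Section~\ref{sec:inference}.)

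Granted this lemma, the let case proceeds as follows. Suppose the ML derivation uses $\Gamma \vdash M : \tau_0$ and generalises to $\overline{\Gamma}(\tau_0)$. Let $S^{*}$ be the FreezeML principal type for $M$ and $\Delta^{*} = \ftv(S^{*}) - \Delta$. By the induction hypothesis applied to the ML derivation of $M$, we obtain $\Delta, \Delta^{*}; \Gamma \vdash M : S^{*}$ in FreezeML. Since $M$ is a syntactic ML value (required for ML generalisation), $M$ is also a FreezeML guarded value, so the $\Updownarrow$ judgement generalises to $A = \forall \Delta^{*}. S^{*}$. The lemma guarantees that $\overline{\Gamma}(\tau_0)$ is at most as general as $\forall \Delta^{*}. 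S^{*}$, so every use of $x$ inside $N$ that ML typed by instantiating $\overline{\Gamma}(\tau_0)$ to a monotype $S''$ can be rederived in FreezeML by composing the two polymorphic instantiations at the FreezeML-Var occurrence. Applying the induction hypothesis to $N$ with the strengthened environment yields the required FreezeML derivation of the whole $\Let$-expression.

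The main obstacle is the auxiliary principal-typing lemma itself. Its subtlety lies in the fact that FreezeML-Var permits polymorphic substitutions that ML-Var forbids, so an ML term can in principle receive FreezeML types that have no ML analogue; the lemma requires showing that every such additional type still factors through the ML principal monotype via a $\Rightarrow_\poly$ instantiation. Establishing this closure property is where the bulk of the technical work goes, and it is what makes the proof of this apparently routine conservativity statement non-trivial.
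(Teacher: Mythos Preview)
The paper does not actually prove this theorem. It is stated in Section~\ref{sec:freezeml} and followed only by the parenthetical remark that ``the exact derivation can differ due to differences in the kinding rules and the principality constraint on the \freezemlLab{Let} rule.'' So there is nothing to compare against; the paper leaves the argument to the reader.

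Your outline is essentially correct and identifies the right crux: the \freezemlLab{Let} rule demands the \emph{principal} type for $M$, whereas \mlLab{Let} allows an arbitrary monotype, so one must replace the $\tau_0$ used in the ML derivation by the FreezeML principal type $S^{*}$ and then recover the typing of $N$ under the stronger binding. A few points to tighten. First, the induction should be on the term rather than the derivation: you need to invoke the hypothesis on a \emph{new} ML derivation of $N$ in which $x$ is bound at the more general scheme $\forall\Delta^{*}.S^{*}$ (obtained via the standard ML environment\-/strengthening lemma), and that is not a subderivation of the original. Second, the sentence ``by the induction hypothesis applied to the ML derivation of $M$, we obtain $\Delta,\Delta^{*};\Gamma\vdash M:S^{*}$'' is off; the IH on the given ML derivation yields $M:\tau_0$, and the judgement at $S^{*}$ comes from your auxiliary principal\-/type lemma. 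Third, you should explicitly discharge the non\-/value subcase of \freezemlLab{Let}, where $\Updownarrow$ instantiates the free variables of $S^{*}$ monomorphically: choose the instantiation sending $S^{*}$ to $\tau_0$, and $N$ goes through unchanged. Finally, for the auxiliary lemma itself the cleanest route is the one you mention parenthetically, namely to appeal to Theorems~\ref{thm:inference-sound} and~\ref{thm:inference-completeness-mg}; those results do not depend on the present theorem, so there is no circularity, and they immediately give both existence of $S^{*}$ and the fact that for an ML term in an ML context $S^{*}$ is a syntactic monotype (so $\forall\Delta^{*}.S^{*}$ is a bona fide ML type scheme and ML strengthening applies).
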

(The exact derivation can differ due to differences in the kinding rules and the
principality constraint on the \freezemlLab{Let} rule.)

\subsection{Design Considerations}
\label{sec:design-considerations}

\paragraph{Monomorphic instantiation in the \freezemlLab{Let} rule}

Recall that the \freezemlLab{Let} rule enforces the value restriction by
instantiating those type variables that would otherwise be quantified over.
Requiring these type variables to be instantiated with monotypes allows us
to avoid problems similar to the ones outlined in
Section~\ref{sec:monomorphism-restriction}.
Consider the following two functions.
\[
\bl
\dec{bad3} = \lambda (\var{bot} \!:\! \forall a.a).
               \Let\; f = \var{bot}\,\var{bot} \;\In\; (\dec{poly}\,\freeze{f}, (f\,42)+1) \\
\dec{bad4} = \lambda (\var{bot} \!:\! \forall a.a).
               \Let\; f = \var{bot}\,\var{bot} \;\In\; ((f\,42)+1, \dec{poly}\,\freeze{f}) \\
\el
\]
Since we do not generalise non-values in let-bindings due to the value
restriction,
in both of these examples
$f$ is initially assigned the type $a$ rather than the most general type
$\forall a.a$ (because $\var{bot}\,\var{bot}$ is a non-value).
Assuming type inference proceeds from left to right then type inference will
succeed on $\dec{bad3}$ and fail on $\dec{bad4}$ for the same reasons as in
Section~\ref{sec:monomorphism-restriction}.
In order to rule out this class of examples, we insist that non-values are first
generalised and then instantiated with monomorphic types. Thus we constrain $a$
to only unify with monomorphic types, which leads to type inference failing on
both $\dec{bad3}$ and $\dec{bad4}$.

Our guiding principle is ``never guess polymorphism''.
While our system permits instantiation of quantifiers with polymorphic types --
per \freezemlLab{Var} rule -- it does not permit polymorphic instantiations of
type variables inside the type environment.
The high-level invariant
that \freezeml uses to ensure that this principle is not violated is that any (as
yet) unknown types appearing in the type environment (which maps term variables
to their currently inferred types) during type inference must be explicitly
marked as monomorphic.
The only means by which inference can introduce unknown types into the type
environment are through unannotated lambda-binders or through not generalising
let-bound variables. By restricting these cases to be monomorphic we ensure in
turn that any unknown type appearing in the type environment must be explicitly
marked as monomorphic.
%

%
%

\paragraph{Principal Type Restriction}
\label{part:principal-type-restriction}

The \freezemlLab{Let} rule requires that when typing $\Let\; x = M \;\In\; N$, the
type $A'$ given to $M$ must be principal.
Consider the program
\[
\dec{bad5} = \Let\; f = \lambda x.x \;\In\; \freeze{f}\,42
\]
On the one hand, if we infer the type $\forall a.a \to a$ for $f$, then
$\dec{bad5}$ will fail to type check as we cannot apply a term of polymorphic
type (instantiation is only automatic for variables).
However, given a traditional declarative type system one might reasonably
propose $\Int \to \Int$ as a type for $f$, in which case $\dec{bad5}$ would be
typeable --- albeit a conventional type inference algorithm would have
difficulty inferring a type for it.
In order to ensure completeness of our type inference algorithm in the presence
of generalisation and freeze, we
bake principality into the typing rule for let, similarly to
  \cite{GarrigueR99,VytiniotisWJ06,Leijen08,DynamicsML}.
This means that the only legitimate type that $f$ may be assigned is the most
general one, that is $\forall a.a \to a$.

One may think of side-stepping the problem with $\dec{bad5}$ by always
instantiating terms that appear in application position (after all, it is
always a type error for an uninstantiated term of polymorphic type to appear in
application position). But then we can exhibit the same problem with a slightly
more intricate example.
\[
\dec{bad6} = \Let\; f = \lambda x.x \;\In\; \id\,\freeze{f}\,42
\]

The principality condition is also applied in the non\-/generalising case of
the \freezemlLab{Let} rule, meaning that we must instantiate the principal type
for $M$ rather than an arbitrary one.
Otherwise, we could still type $\dec{bad4}$ by assigning $\var{bot}\,\var{bot}$
type $\forall a. a \to a$. In the \freezemlLab{Let} rule $\Delta'$ would be
empty, making instantiation a no-op.


\paragraph{Well-foundedness}
The alert reader may already have noticed a complication resulting from the
principal type restriction: $\meta{principal}(\Delta,\Gamma,M,\Delta',A')$
contains a negative occurrences of the typing relation, in order to express that
$\Delta',A'$ is a ``most general'' solution for $\Delta'',A''$ among all
possible derivations of $\Delta,\Delta'';\Gamma \vdash M : A''$.  This negative
occurrence means that \emph{a priori}, the rules in
Figures~\ref{fig:freezeml-typing} and~\ref{fig:freezeml-aux-judgments} do not
form a proper inductive definition.

This is a potentially serious problem, but it can be resolved easily by
observing that the rules, while not syntactically well-founded, can be
stratified.  Instead of considering the rules in
Figures~\ref{fig:freezeml-typing} and~\ref{fig:freezeml-aux-judgments} as a
\emph{single} inductive definition, we consider them to determine a  \emph{function} $\J{-}$ from
terms $M$ to triples $(\Delta,\Gamma,A)$.
  The typing relation is then defined as $\typ{\Delta;\Gamma}{M:A} \iff
  (\Delta,\Gamma,A) \in \J{M}$.  We can easily prove
by induction on $M$ that $\J{M}$ is well-defined.
Furthermore, we can show that the inference rules in
Figure~\ref{fig:freezeml-typing} hold and are invertible.  When reasoning about
typing judgements, we can proceed by induction on $M$ and use inversion.  It is
also sound to
perform recursion over typing derivations provided the
$\meta{principal}$ assumption is not needed; we indicate this by greying out
this assumption (for example in Figure~\ref{fig:trans-freezeml-to-f}).   We give full details
and explain how this reasoning is performed in \auxref{sec:well-founded-freezeml}.

\paragraph{Type Variable Scoping}

A type annotation in \freezeml may contain type variables that is not bound by
the annotation.
In contrast to many other systems, we do not interpret such variables
existentially, but allow binding type variables across different annotations.
In an expression $\Let\; (x : A) = M \;\In\; N$, we therefore consider
the toplevel quantifiers of $A$ bound in $M$, meaning that they can be used
freely in annotations inside $M$,
rather like GHC's scoped type variables~\citep{PeytonjonesS02},
However, this is only true for the generalising case, when $M$ is a guarded
value.  In the absence of generalisation, any polymorphism in the type $A$
originates from $M$ directly (e.g., because $M$ is a frozen variable).
Hence, if $M$ is not a guarded value no bound variables of $A$ are bound in $M$.

Note that given the $\Let$ binding above, where $A$ has the shape $\forall
\Delta. H$, there is no ambiguity regarding which of the type variables in $\Delta$
result from generalisation and which originate from $M$ itself.
If $M$ is a guarded value, its type is guarded, too, and hence all variables in $\Delta$
result from generalisation.
Conversely, if $M \not \in \dec{GVal}$, then there is no generalisation at all.

Due to the unambiguity of the binding behaviour in our system with the value
restriction, we can define a purely syntax-directed well-formedness judgement
for verifying that types in annotations are well-kinded and respect the intended
scoping of type-variables.
We call this property well-scopedness, and it is a prerequisite for type
inference.
The corresponding judgement is
$\termwf{\Delta}{M}$, checking that
in $M$, the type annotations are well-formed with respect to kind environment
$\Delta$ (Figure~\ref{fig:terms-well-scopedness}).  The main subtlety in this judgement is in how $\Delta$ grows when we
encounter \emph{annotated} let-bindings.  For annotated lambdas, we just check
that the type annotation is well-formed in $\Delta$ but do not add any type
variables in $\Delta$.  For plain let, we just check well-scopedness
recursively.  However, for annotated let-bindings, we check that the type
annotation $A$ is well-formed, and we check that $M$ is well-scoped \emph{after
  extending $\Delta$ with the top-level type variables of $A$}.  This is
sensible because in the \textsc{Let-Ascribe} rule, these type variables (present
in the type annotation) are introduced into the kind environment when type
checking $M$.  In an unannotated let, in contrast, the generalisable type
variables are not mentioned in $M$, so it does not make sense to allow them to
be used in other type annotations inside $M$.

As a concrete example of how this
works, consider an explicitly annotated let-binding of the identity function:
$\Let\; (f: \forall a. a \to a) = \lambda (x: a). x \; \In \; N$, where the $a$
type annotation on $x$ is bound by $\forall a$ in the type annotation on $f$.
However, if we left off the $\forall a. a \to a$ annotation on $f$, then the $a$
annotation on $x$ would be unbound.
This also means that in expressions, we cannot let type annotations
$\alpha$-vary freely; that is, the previous expression is $\alpha$-equivalent to
$\Let\; (f: \forall b. b \to b) = \lambda (x: b). x \; \In\; N$ but not to $\Let\; (f:
\forall b. b \to b) = \lambda (x: a). x \; \In\; N$. This behaviour is similar to other
proposals for scoped type variables~\citep{PeytonjonesS02}.

\paragraph{``Pure'' FreezeML}
In a hypothetical version of \freezeml without the value restriction, a purely
syntactic check on $\Let\; (x : A) = M \;\In\; N$ is not sufficient to determine
which top-level quantifiers of $A$ are bound in $M$.
In the expression
\[\ba{l}
\Let\; (f : \forall a \, b. a \to b \to b ) = \\
\quad   \Let\; (g : \forall b. a \to b \to b) =  \lambda y \, z. z \;\In\; \dec{id} \: \freeze{g} \\
\In\; N
\ea\]
the outer $\Let$ generalises $a$, unlike the subsequent variable $b$, which
arises from the inner $\Let$ binding.
The well-scopedness judgement would require typing information.
Moreover, the \freezemlLab{Let-Asc} rule would have to nondeterministically
split the type annotation $A$ into $\forall \Delta', \Delta''.H$, such that
$\Delta'$ contains those variables to generalise ($a$ in the example), and
$\Delta''$ contains those type variables originating from $M$ directly ($b$ in
the example).
Similarly, type inference would have to take this splitting into account.

\begin{figure}
\begin{mathpar}
  \inferrule
    {\strut}
    {\termwf{\Delta}{\freeze{x}}}

  \inferrule
    {\strut}
    {\termwf{\Delta}{x}}

  \inferrule
    {\termwf{\Delta}{M}}
    {\termwf{\Delta}{\lambda x.M}}
\\
  \inferrule
    {\Delta \vdash A :\poly\\\\
     \termwf{\Delta}{M}
    }
    {\termwf{\Delta}{\lambda (x : A).M }}

  \inferrule
    {\termwf{\Delta}{M} \\\\
     \termwf{\Delta}{N}
    }
    {\termwf{\Delta}{M\,N}}

  \inferrule
    {\termwf{\Delta}{M } \\
     \termwf{\Delta}{N}
    }
    {\termwf{\Delta}{\Let \; x = M\; \In \; N }}

  \inferrule
    {\Delta \vdash A : \poly \\
     (\Delta', A') = \msplit(A, M) \\
     \termwf{\Delta, \Delta'}{M} \\
     \termwf{\Delta}{N}
    }
    {\termwf{\Delta}{\Let \; (x : A) = M\; \In \; N}}
\end{mathpar}

\caption{Well-Scopedness of \freezeml Terms}
\label{fig:terms-well-scopedness}
\end{figure}


\paragraph{Instantiation strategies}
In \freezeml (and indeed ML) the only terms that are implicitly instantiated are
variables.
%
%
Thus $(\dec{head}\ \dec{ids})\ 42$ is ill-typed and we must insert
the instantiation operator $\inst$ to yield a type-correct expression:
$(\dec{head}\ \dec{ids})\inst\ 42$.
It is possible to extend our approach to perform \emph{eliminator
instantiation}, whereby we implicitly instantiate terms appearing in monomorphic
elimination position (in particular application position), and thus, for
instance, infer a type for $\dec{bad5}$ without compromising completeness.

Another possibility is to instantiate all terms, except those that are
explicitly frozen or generalised. Here, it also makes sense to extend the
$\freeze{-}$ operator to act on arbitrary terms, rather than just variables. We
call this strategy \emph{pervasive instantiation}. Like eliminator
instantiation, pervasive instantiation infers a type for
$(\dec{head}\ \dec{ids})\ 42$. However, pervasive instantiation requires
inserting explicit generalisation where it was previously unnecessary. Moreover,
pervasive instantiation complicates the meta-theory, requiring two mutually
recursive typing judgements instead of just one.

The formalism developed in this paper uses variable instantiation alone, but our
implementation also supports eliminator instantiation. We defer further
theoretical investigation of alternative strategies to future work.

\section{Relating System F and \freezeml}\label{sec:translations}

In this section we present type-preserving translations mapping System F terms
to \freezeml terms and vice versa.
%
%
We also briefly discuss the equational theory induced on \freezeml by these
translations.

\subsection{From System F to \freezeml}
\label{sec:f-to-freezeml}

\begin{figure}[tb]
\[
\ba{@{}r@{~\;}c@{~\;}l@{\quad}l@{}}
  \transftofreezeml{x}                  &=& \freeze{x} \\
  \transftofreezeml{\lambda x^A. M}     &=& \lambda (x:A). \transftofreezeml{M} \\
  \transftofreezeml{M\; N}              &=& \transftofreezeml{M} \;\: \transftofreezeml{N} \\
  \transftofreezeml{\Lambda a.V^B}   &=&
    \Let\; (x : \forall a. B) = (\transftofreezeml{V})@ \; \In \; \freeze{x}\\
  \transftofreezeml{M^{\forall a.B} \: A}  &=&
    \Let\; (x : B[A/a]) = (\transftofreezeml{M})@ \;\In\; \freeze{x} \\
\ea\]
\caption{Translation from System F to \freezeml}
\label{fig:trans-f-to-freezeml}
\end{figure}

\begin{figure*}[htb]
\begin{mathpar}
  \transfreezemlfof[\Bigg]
    {\inferrule
      {x : A \in \Gamma}
      {\typ{\Delta; \Gamma}{\freeze{x} : A}}}
  = x

\transfreezemlfof[\Bigg]
    {\inferrule
      {\typ{\Delta; \Gamma, x : S}{M : B}}
      {\typ{\Delta; \Gamma}{\lambda x.M : S \to B}}
    }
  = \lambda x^S.\transfreezemlfof{M}

  \transfreezemlfof[\Bigg]
    {\inferrule
      {\typ{\Delta; \Gamma, x : A}{M : B}}
      {\typ{\Delta; \Gamma}{\lambda (x : A).M : A \to B}}
    }
  = \lambda x^A.\transfreezemlfof{M}

  \hspace{-10pt}\transfreezemlfof[\Bigg]{\raisebox{-0pt}
    {$\inferrule
      {{x : \forall \Delta'.H \in \Gamma} \\
        \Delta \vdash \rsubst : \Delta' \Rightarrow_\poly \cdot
      }
      {\typ{\Delta; \Gamma}{x : \rsubst(H)}}
    $}} = x \: \rsubst(\Delta')

\hspace{-10pt}\transfreezemlfof[\Bigg]{\raisebox{-0pt}
    {$\inferrule
      {\typ{\Delta; \Gamma}{M : A \to B} \\
        \typ{\Delta; \Gamma}{N : A}
      }
      {\typ{\Delta; \Gamma}{M\,N : B}}
    $}}
  = \transfreezemlfof{M} \; \transfreezemlfof{N}

  \transfreezemlfof[\mediumginormous]
    {\raisebox{-22pt}{$\inferrule*
      { (\Delta', \Delta'') = \mgend{A'}{M} \\
        (\Delta, \Delta'', M, A') \Updownarrow A \\\\
        \Delta, \Delta''; \Gamma \vdash M : A' \\
        \typ{\Delta; \Gamma, x : A}{N : B} \\\\
        {\color{gray} \meta{principal}(\Delta, \Gamma, M, \Delta'', A')}
     }
      {\typ{\Delta; \Gamma}{\Let \; x = M\; \In \; N : B}}
    $}}
  =
  \begin{array}{l}
    \Let \: x^A = \Lambda \, \Delta'. \transfreezemlfof{M}\\
    \In\; \transfreezemlfof{N}
  \end{array}
  =
  \transfreezemlfof[\mediumginormous]
    {\raisebox{-22pt}{$\inferrule*
      { (\Delta', A') = \msplit(A, M) \\\\
        \typ{\Delta, \Delta'; \Gamma}{M : A'} \\\\
        \typ{\Delta; \Gamma, x : A}{N : B}
      }
      {\typ{\Delta; \Gamma}{\Let \; (x : A) = M\; \In \; N : B}}
    $}}

\end{mathpar}
\caption{Translation from \freezeml to System F
}
\label{fig:trans-freezeml-to-f}
\end{figure*}



Figure~\ref{fig:trans-f-to-freezeml} defines a translation
$\transftofreezeml{-}$ of System F terms into \freezeml.
The translation depends on types of subterms and is thus formally defined on
derivations, but we use a shorthand notation in which subterms are annotated
with their type (e.g., in $\Lambda a.V^B$, $B$ indicates the type of $V$).

Variables are frozen to suppress instantiation. Term abstraction and application
are translated homomorphically.

Type abstraction $\Lambda a. V$ is translated using an annotated let-binding to
perform the necessary generalisation.
However, we cannot bind $x$ to the translation of $V$ directly as only
\emph{guarded} values may be generalised but $\transftofreezeml{V}$ may be
an unguarded value (concretely, a frozen variable).
Hence, we bind $x$ to $(\transftofreezeml{V})@$, which is syntactic sugar for
$\Let\; y = \transftofreezeml{V} \;\In\; y$.
This expression is indeed a guarded value.
We then freeze $x$ to prevent immediate instantiation.
Type application $M \: A$, where $M$ has type $\forall a.B$, is translated
similarly to type abstraction. We bind $x$ to the result of translating $M$, but
only after instantiating it. The variable $x$ is annotated with the intended
return type $B[A/a]$ and returned frozen.

Explicit instantiation is strictly necessary and the following, seemingly easier
translation is incorrect.
\[
  \transftofreezeml{M^{\forall a.B} \: A} \quad \neq \quad
    \Let\; (x : B[A/a]) = \transftofreezeml{M} \;\In\; \freeze{x} \\
\]
The term $\transftofreezeml{M}$ may be a frozen variable or an application,
whose type cannot be implicitly instantiated to type $B[A/a]$.

For any System F value $V$ (i.e., any term other than an application),
$\transftofreezeml{V}$ yields a \freezeml value
(Figure~\ref{fig:freezeml-syntax}).

Each translated term has the same type as the original.
\begin{restatable}[Type preservation]{thm}{ftofreezemltypepreservation}
\label{theorem:f-to-freezeml}
If $\typ{\Delta;\Gamma}{M : A}$ in System F then
$\typ{\Delta;\Gamma}{\transftofreezeml{M} : A}$ in \freezeml.
\end{restatable}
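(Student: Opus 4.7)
The plan is to proceed by induction on the System F typing derivation (equivalently, on the structure of $M$), producing one case per System F syntactic form.  Because $\transftofreezeml{-}$ is syntax-directed, each case reduces to selecting the appropriate \freezeml typing rule and feeding in the inductive hypotheses for the subterms.

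Because the translations of $\Lambda a.V^B$ and of type application $M^{\forall a.B}\,A$ both rely on the sugar $(\cdot)\inst \equiv \Let\; y = \cdot \;\In\; y$, whose outer \freezemlLab{Let} rule carries a $\meta{principal}$ side condition, I would strengthen the inductive hypothesis to assert that $A$ is moreover the principal type of $\transftofreezeml{M}$ in \freezeml.  Principality is essentially immediate in the simple cases: for a variable, \freezemlLab{Freeze} is the only rule that applies to a frozen variable; for an annotated lambda, \freezemlLab{Lam-Ascribe} pins the domain and the IH delivers the principal codomain; for an application, the arrow and argument shapes are forced by the two IHs.  A short separate induction on the translation clauses also shows that System F values are sent to \freezeml values, the auxiliary fact underpinning every guarded-value argument below.

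The interesting cases are $\Lambda a. V^B$ and $M^{\forall a. B}\,A$; in both I would apply \freezemlLab{Let-Ascribe} at the outermost level, sidestepping the principality constraint there.  For $\Lambda a. V^B$, the value-preservation lemma gives that $\transftofreezeml{V}$ is a value, so $(\transftofreezeml{V})\inst$ is a guarded value; then $\msplit$ strips the top-level quantifiers off the annotation $\forall a. B$, and the inner $\Let\; y = \transftofreezeml{V} \;\In\; y$ is typed at the resulting guarded body in the extended kind environment by combining the strengthened IH with a \freezemlLab{Var} step at $y$ under the identity instantiation.  For $M^{\forall a. B}\,A$, the inner \freezemlLab{Let} assigns $y$ the principal type $\forall a.B$ supplied by the IH, and the trailing \freezemlLab{Var} for $y$ selects the substitution $[a \mapsto A]$, composed with the identity on any further top-level quantifiers that $\msplit$ exposes in the outer rule, to land at the target $B[A/a]$; the outer $\freeze{x}$ then returns the ascribed type unchanged via \freezemlLab{Freeze}.

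The main obstacle is maintaining the principality invariant in tandem with the guarded-value status of translated subterms, since $\msplit$, $\mgen$, and the $\Updownarrow$ judgement all branch on the latter and thereby control both the kind environment extension and the guarded shape at which the inner body must be typed.  A related routine subtlety is to $\alpha$-rename bound variables inside $B$ so as to avoid capture in $B[A/a]$, so that the prefix of top-level quantifiers that $\msplit$ strips from the outer annotation aligns, position by position, with the substitution applied by \freezemlLab{Var} at $y$.
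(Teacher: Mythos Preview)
Your proposal is correct and follows essentially the same route as the paper: induction on the System~F derivation, a strengthened inductive hypothesis to discharge the $\meta{principal}$ side-condition of the inner unannotated $\Let$, the value-preservation lemma to ensure $(\transftofreezeml{V})\inst$ is a guarded value, and \freezemlLab{Let-Ascribe} at the outer level so that no principality constraint arises there.

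Two small differences worth noting. First, the paper phrases the strengthening as \emph{uniqueness} of the type of $\transftofreezeml{M}$ rather than principality; since $\ftv(A)\subseteq\Delta$ for every System~F type, principality with empty $\Delta'$ coincides with uniqueness, so this is cosmetic---but uniqueness makes the discharge of $\meta{principal}(\ldots,\cdot,A')$ a one-liner. Second, in the $\Lambda a.V^B$ case the paper splits explicitly on whether $\transftofreezeml{V}\in\dec{GVal}$: when it is, $B$ is guarded so $\Delta_B=\cdot$ and your ``identity instantiation'' description applies literally; when it is not (e.g.\ $V$ is a variable or instantiation, so $\transftofreezeml{V}$ ends in a frozen variable), $y$ is bound with type $B'\equiv_\alpha B=\forall\Delta'.H'$ via the non-generalising branch of \freezemlLab{Let}, and the \freezemlLab{Var} step on $y$ must instantiate $\Delta'$ with $\Delta_B$ rather than apply the identity. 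You flag this branching in your obstacle paragraph, but your case description slightly undersells it.
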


\subsection{From \freezeml to System F}
\label{sec:freezeml-to-f}

Figure~\ref{fig:trans-freezeml-to-f} gives the translation of \freezeml to
System F.
The translation depends on types of subterms and is thus formally defined on
derivations.
%
%
Frozen variables in \freezeml are simply variables in System F.
A plain (i.e., not frozen) variable $x$ is translated to a type application $x
\: \rsubst(\Delta')$, where $\rsubst(\Delta')$ stands for the pointwise application of
$\rsubst$ to $\Delta'$.
Here, $\rsubst$ and $\Delta'$ are obtained from $x$'s type derivation in
\freezeml; $\Delta'$ contains all top-level quantifiers of $x's$ type.
This makes \freezeml's implicit instantiation of non-frozen variables explicit.
Lambda abstractions and applications translate directly.
Let-bindings in \freezeml are translated as generalised let-bindings in System F
where $\Let\, x^A = M \,\In\, N$ is syntactic sugar for $(\lambda x^A.N)\,M$.
Here, generalisation is repeated type abstraction.

Each translated term has the same type as the original.
\begin{restatable}[Type preservation]{thm}{freezemltoftypepreservation}
\label{theorem:freezeml-to-f}
If $\typ{\Delta; \Gamma}{M : A}$ holds in \freezeml
then $\typ{\Delta; \Gamma}{ \transfreezemlfof{M} : A}$ holds in System F.
\end{restatable}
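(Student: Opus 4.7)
The plan is to proceed by induction on $M$ with case analysis on the last typing rule used, relying on the observation from the well-foundedness discussion that $\transfreezemlfof{-}$ does not use the greyed-out $\meta{principal}$ premise of \freezemlLab{Let}, so recursion over typing derivations is legitimate here. Equivalently, one can induct on $M$ using the $\J{-}$ function together with the (provable) invertibility of the typing rules.

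The easy cases are essentially immediate. For \freezemlLab{Freeze}, $\freeze{x}$ translates to $x$, and $x : A \in \Gamma$ gives $\typ{\Delta;\Gamma}{x : A}$ in System F directly. For \freezemlLab{Var} with $x : \forall \Delta'.H \in \Gamma$ and $\Delta \vdash \rsubst : \Delta' \Rightarrow_\poly \cdot$, the translation $x\,\rsubst(\Delta')$ types in System F by a sequence of type applications, yielding $\rsubst(H)$; the instantiation judgement ensures each type argument is well-kinded. The cases \freezemlLab{Lam}, \freezemlLab{Lam-Ascribe}, and \freezemlLab{App} are dispatched directly by the induction hypothesis combined with the corresponding System F rules, using that every \freezeml monotype is a valid System F type.

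The main obstacle is the pair of let-rules. In \freezemlLab{Let}, the translation produces $\Let\,x^{A} = \Lambda\Delta'.\transfreezemlfof{M}\In\transfreezemlfof{N}$, which desugars to $(\lambda x^{A}.\transfreezemlfof{N})\,(\Lambda\Delta'.\transfreezemlfof{M})$. When $M\in\dec{GVal}$, we have $\Delta' = \Delta'' = \ftv(A')-\Delta$ and $A = \forall\Delta''.A'$, and the System F generalisation rule yields the required type for $\Lambda\Delta''.\transfreezemlfof{M}$; this step also needs a short supporting lemma stating that $\transfreezemlfof{V}$ is a System F value whenever $V$ is a \freezeml value, so that generalisation respects System F's value restriction (proved by a routine induction on $V$). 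When $M\notin\dec{GVal}$, we have $\Delta' = \cdot$ and $A = \rsubst(A')$ for some $\rsubst : \Delta'' \Rightarrow_\mono \cdot$ obtained from the $\Updownarrow$ premise, yet the induction hypothesis only provides $\transfreezemlfof{M}$ at type $A'$ in the extended kind environment $\Delta,\Delta''$. I would resolve this by invoking the System F type-substitution lemma with $\rsubst$ to obtain $\typ{\Delta;\Gamma}{\rsubst(\transfreezemlfof{M}) : A}$, relying on the fact that $\Delta''$ is fresh for $\Gamma$ so $\rsubst$ leaves $\Gamma$ untouched.

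The \freezemlLab{Let-Ascribe} case is analogous, using $\msplit$ in place of $\mgend$: the annotation $A$ itself determines the quantifier prefix, and the analysis again splits on whether $M$ is a guarded value; in the guarded-value subcase we generalise over $\Delta'$, and in the other subcase $\msplit$ forces $\Delta' = \cdot$ and $A' = A$, which avoids any substitution step. Once the let-binder has the correct type, the body $N$ is handled directly by the induction hypothesis after extending the type environment with $x : A$. Apart from the type-substitution step in the non-value \freezemlLab{Let} case, the remainder is routine bookkeeping.
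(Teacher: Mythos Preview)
Your proposal is correct and follows essentially the same approach as the paper's proof: induction on $M$ with inversion on the typing derivation, the routine cases handled directly, and the \freezemlLab{Let} case split on whether $M$ is a guarded value, using the value-preservation lemma for the generalising subcase and the System F type-substitution lemma for the non-generalising subcase. Your observation that $\Delta''$ is fresh for $\Gamma$ (so that $\rsubst$ leaves $\Gamma$ unchanged) is a detail the paper leaves implicit but is indeed needed.
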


\subsection{Equational reasoning}
\label{sec:reasoning}

%
We can derive and verify equational reasoning principles for \freezeml by
lifting from System F via the translations.
We write $M \conv N$ to mean $M$ is observationally equivalent to $N$ whenever
$\typ{\Delta; \Gamma}{M : A}$ and $\typ{\Delta; \Gamma}{N : A}$.
%
%
At a minimum we expect $\beta$-rules to hold, and indeed they do; the twist is
that they involve substituting a different value depending on whether the
variable being substituted for is frozen or not.
\[
\ba{@{}l@{~}c@{~}l@{~}l@{~}l@{}}
\Let\; x = V \;\In\; N                &\conv& N[\gen V     &/~ \freeze{x},~ (\gen V)\inst &/~ x] \\
\Let\; (x : A) = V \;\In\; N          &\conv& N[\gen^A V   &/~ \freeze{x},~ (\gen^A V)\inst &/~ x] \\
(\lambda x.M)\,V                      &\conv& M[V          &/~ \freeze{x},~ \vinst      &/~ x] \\
(\lambda (x : A).M)\,V                &\conv& M[V          &/~ \freeze{x},~ \vinst &/~ x] \\
\ea
\]
If we perform type-erasure then these rules degenerate to the standard ones.
%
%
%
We can also verify that $\eta$-rules hold.
%
\[
\ba{@{}l@{~}c@{~}l@{}}
\Let\; x = U \;\In\; x                &\conv& U \\
\Let\; (x : A) = U \;\In\; x          &\conv& U \\
\lambda x.M\,x                        &\conv& M \\
\ea\qquad
\ba{@{}l@{~}c@{~}l@{}}
\Let\; x = \freeze{y} \;\In\; x       &\conv& y \\
\Let\; (x : A) = \freeze{y} \;\In\; x &\conv& y \\
\lambda (x : A).M\,\freeze{x}         &\conv& M \\
\ea
\]

\section{Type Inference}
\label{sec:inference}

In this section we present a sound and complete type inference algorithm for
\freezeml. The style of presentation is modelled on that of \citet{Leijen08}.

\subsection{Type Variables and Kinds}
\label{sec:freezeml-tyvars-and-kinds}

When expressing type inference algorithms involving first-class polymorphism, it
is crucial to distinguish between object language type variables, and meta
language type variables that stand for unknown types required to solve the type
inference problem. This distinction is the same as that between
\emph{eigenvariables} and \emph{logic variables} in higher-order logic
programming~\cite{Miller92}. We refer to the former as \emph{rigid} type
variables and the latter as \emph{flexible} type variables.
For the purposes of the algorithm we will explicitly separate the two by placing
them in different kind environments.

As in the rest of the paper, we let $\Delta$ range over fixed kind environments
in which every type variable is monomorphic (kind $\mono$).
In order to support, for instance, applying a function to a polymorphic
argument, we require flexible variables that may be unified with polymorphic
types.
For this purpose we introduce refined kind environments ranged over by $\Theta$.
Type variables in a refined kind environment may be polymorphic (kind $\poly$)
or monomorphic (kind $\mono$).
In our algorithms we place rigid type variables in a fixed environment $\Delta$
and flexible type variables in a refined environment $\Theta$.
Refined kind environments ($\Theta$) are given by the following grammar.
\begin{syntax}
 & \dec{KEnv} \ni \Theta &::=& \cdot \mid \Theta, a : K \\
\end{syntax}
%
%
We often implicitly treat fixed kind environments $\many{a}$ as refined kind
environments $\many{a : \mono}$.
The refined kinding rules are given in Figure~\ref{fig:kinding}.
%







\begin{figure}
\raggedright
$\boxed{\Theta \vdash A : K}$
\vspace{-0.6cm}
\begin{mathpar}
  \inferrule*[Lab=TyVar]
    {a:K \in \Theta}
    {\Theta \vdash a : K}
\hspace{7pt}
  \inferrule*[Lab=Cons]
    {\arity(D) = n \\\\
     \Theta \vdash A_1 : K \\\\ \cdots \\\\ \Theta \vdash A_n : K}
    {\Theta \vdash \tc\,\many{A} : K}
\hspace{7pt}
  \inferrule*[Lab=ForAll]
    {\Theta, a:\mono \vdash A : \poly}
    {\Theta \vdash \forall a.A : \poly}
\hspace{7pt}
  \inferrule*[Lab=Upcast]
    {\Theta \vdash A : \mono}
    {\Theta \vdash A : \poly}
\end{mathpar}

$\boxed{\Theta \vdash \Gamma}$
\vspace{-0.6cm}
\begin{mathpar}
\inferrule*[Lab=Empty]
  { }
  {\Theta \vdash \cdot}

\inferrule*[Lab=Extend]
  {\Theta \vdash \Gamma \\ \Theta \vdash A : \poly \\\\ (\text{for all } a \in \ftv(A) \mid a : \mono \in \Theta )}
  {\Theta \vdash \Gamma, x : A}
\end{mathpar}
\caption{Refined Kinding Rules}
\label{fig:kinding}
\end{figure}

The key difference with respect to the object language kinding rules is that
type variables can now be polymorphic.
Rather than simply defining kinding of type environments point-wise the
\textsc{Extend} rule additionally ensures that all type variables appearing in a
type environment are monomorphic.
This restriction is crucial for avoiding guessing of polymorphism. More
importantly, it is also key to ensuring that typing judgements are stable under
substitution.  Without it it would be possible to substitute monomorphic type
variables with types containing nested polymorphic variables, thus introducing
polymorphism into a monomorphic type.

We generalise typing judgements $\typ{\Delta;\Gamma}{M : A}$ to
$\typ{\Theta;\Gamma}{M : A}$, adopting the convention that
$\wfctx{\Theta}{\Gamma}$ and $\Theta \vdash A$ must hold as preconditions.

\subsection{Type Substitutions}

In order to define the type inference algorithm we will find it useful to define
a judgement for type substitutions $\fsubst$, which operate on flexible type
variables, unlike type instantiations $\rsubst$, which operate on rigid type
variables.
The type substitution rules are given in
Figure~\ref{fig:freezeml-type-subst}.
The rules are as in Figure~\ref{fig:freezeml-typing}, except that the kind
environments on the right of the turnstile are refined kind environments and
rather than the substitution having a fixed kind, the kind of each type variable
must match up with the kind of the type it binds.
\begin{figure}
\raggedright
$\boxed{\Delta \vdash \fsubst : \Theta \Rightarrow \Theta'}$
\vspace{-0.4cm}
\begin{mathpar}
\inferrule
  { }
  {\Delta \vdash \emptyset : \cdot \Rightarrow \Theta}

\inferrule
  {\Delta \vdash \fsubst : \Theta' \Rightarrow \Theta \\
   \Delta, \Theta \vdash A : K}
  {\Delta \vdash \fsubst[a \mapsto A] : (\Theta', a : K) \Rightarrow \Theta}
\end{mathpar}
\caption{Type Substitutions}
\label{fig:freezeml-type-subst}
\end{figure}

We write $\idsubst_\Theta$ for the identity type substitution on $\Theta$,
omitting the subscript when clear from context.
\begin{equations}
\idsubst_\cdot       = \emptyset  \qquad \idsubst_{\Theta, a:K} = \idsubst_{\Theta}[a \mapsto a]
\end{equations}
%
%
%
%
%

Composition of type substitutions is standard.
\begin{equations}
\fsubst \circ \emptyset             = \emptyset \qquad
\fsubst \circ \fsubst'[a \mapsto A] = (\fsubst \circ \fsubst')[a \mapsto \fsubst(A)]
\end{equations}
%
%
%

The rules shown in~Figure~\ref{fig:subst-prop} are admissible and we make use of them freely in our
algorithms and proofs.
\begin{figure}[t]
\begin{mathpar}
\inferrule*[Lab=S-Identity]
  { }
  {\Delta \vdash \idsubst_\Theta : \Theta \Rightarrow \Theta}

\inferrule*[Lab=S-Weaken]
  {\Delta \vdash \fsubst : \Theta \Rightarrow \Theta'}
  {\Delta, \Delta' \vdash \fsubst : \Theta \Rightarrow \Theta', \Theta''}
\\
\inferrule*[Lab=S-Compose]
  {\Delta \vdash \fsubst : \Theta' \Rightarrow \Theta'' \\\\ \Delta \vdash \fsubst' : \Theta \Rightarrow \Theta'}
  {\Delta \vdash \fsubst \circ \fsubst' : \Theta \Rightarrow \Theta''}

\inferrule*[Lab=S-Strengthen]
  {\Delta \vdash \fsubst : \Theta \Rightarrow \Theta' \\\\ \ftv(\fsubst) \mathbin{\#} \Delta', \Theta''}
  {\Delta - \Delta' \vdash \fsubst : \Theta \Rightarrow \Theta' - \Theta''}
\end{mathpar}
\caption{Properties of Substitution}
\label{fig:subst-prop}
\end{figure}

\subsection{Unification}
\label{sec:unification}

\begin{figure}
\raggedright

\[
\bl
\unify : (\meta{PEnv} \times \meta{KEnv} \times \meta{Type} \times \meta{Type})
           \rightharpoonup (\meta{KEnv} \times \meta{Subst}) \medskip \\

\unify(\Delta, \Theta, a, a ) = \\
\quad\mreturn\; (\Theta, \idsubst) \medskip \\

\unify(\Delta, (\Theta, a : K), a, A) = \\
\quad\bl
     \mlet\;\Theta_1 = \dec{demote}(K, \Theta, \ftv(A) - \Delta) \\
     \massert\;\Delta, \Theta_1 \vdash A : K \\
     \mreturn\; (\Theta_1, \idsubst[a \mapsto A]) \\
     \el
     \medskip \\

\unify(\Delta, (\Theta, a : K), A, a) = \\
\quad\bl
     \mlet\;\Theta_1 = \dec{demote}(K, \Theta, \ftv(A) - \Delta) \\
     \massert\;\Delta, \Theta_1 \vdash A : K \\
     \mreturn\; (\Theta_1, \idsubst[a \mapsto A]) \\
     \el
     \medskip \\

\unify(\Delta, \Theta, D\,\many{A}, D\,\many{B}) = \\
\quad\bl
     \mlet\; (\Theta_1, \fsubst_1) = (\Theta, \idsubst) \\
     \mlet\; n = \dec{arity}(D) \\
     \text{for } i \in 1...n \\
     \quad \mlet\; (\Theta_{i+1}, \fsubst_{i+1}) = \\
     \qquad\bl
           \mlet\; (\Theta', \fsubst') = \unify(\Delta, \Theta_i, \fsubst_i(A_i), \fsubst_i(B_i)) \\
           \mreturn\; (\Theta', \fsubst' \circ \fsubst_i) \\
           \el \\
     \mreturn\; (\Theta_{n+1}, \fsubst_{n+1}) \\
     \el \medskip \\

\unify(\Delta, \Theta, \forall a. A, \forall b. B)= \\
\quad\bl
     \mfresh\; c \\
     \mlet\; (\Theta_1, \fsubst') = \unify((\Delta, c), \Theta, A[c/a], B[c/b]) \\
     \massert\;c \notin \ftv(\fsubst') \\
     \mreturn\; (\Theta_1, \fsubst') \\
     \el \medskip \\
\el
\]

\begin{equations}
\dec{demote}(\poly, \Theta, \Delta)      &=& \Theta \\
\dec{demote}(\mono, \cdot, \Delta)      &=& \cdot \\
\dec{demote}(\mono, (\Theta,a:K), \Delta)       &=&
\dec{demote}(\mono,\Theta,\Delta),a:\mono  \quad (a \in \Delta)\\
\dec{demote}(\mono, (\Theta,a:K), \Delta) &=&
\dec{demote}(\mono, \Theta,\Delta),a:K \quad (a \not\in \Delta)  \\
\end{equations}

\caption{Unification Algorithm}
\label{fig:unification}
\end{figure}

A crucial ingredient for type inference is unification.
The unification algorithm is defined in Figure~\ref{fig:unification}. It is
partial in that it either returns a result or fails. Following \citet{Leijen08}
we explicitly indicate the successful return of a result $X$ by writing
$\mreturn\;X$. Failure may be either explicit or implicit (in the case that an
auxiliary function is undefined). The algorithm takes a quadruple ($\Delta,
\Theta, A, B$) of a fixed kind environment $\Delta$, a refined kind environment
$\Theta$, and types $A$ and $B$, such that $\Delta, \Theta \vdash A, B$.
It returns a unifier, that is, a pair $(\Theta', \fsubst)$ of a new refined kind
environment $\Theta'$ and a type substitution $\fsubst$, such that $\Delta
\vdash \fsubst : \Theta \Rightarrow \Theta'$.

A type variable unifies with itself, yielding the identity substitution.
Due to the use of explicit kind environments, there is no need for an explicit
occurs check to avoid unification of a type variable $a$ with a type $A$
including recursive occurrences of $a$.
Unification of a flexible variable $a$ with a type $A$ implicitly performs an
occurs check by checking that the type substituted for $a$ is well-formed in an
environment ($\Delta, \Theta_1$) that does not contain $a$.
A polymorphic flexible variable unifies with any other type, as is standard.
A monomorphic flexible variable only unifies with a type $A$ if $A$ may be
\emph{demoted} to a monomorphic type. The auxiliary $\dec{demote}$ function converts
any polymorphic flexible variables in $A$ to monomorphic flexible variables in
the refined kind environment. This demotion is sufficient to ensure that further
unification cannot subsequently make $A$ polymorphic.
Unification of data types is standard, checking that the data type constructors
match, and recursing on the substructures.
Following \citet{Leijen08}, unification of quantified types ensures that
forall-bound type variables do not escape their scope by introducing a fresh
rigid (skolem) variable and ensuring it does not appear in the free type
variables of the substitution.

\begin{restatable}[Unification is sound]{thm}{thmunifysound}
\label{thm:unification-sound}
If $\Delta, \Theta \vdash A, B : K$ and $\unify(\Delta, \Theta, A, B) = (\Theta',
\fsubst)$ then $\fsubst(A) = \fsubst(B)$ and $\Delta \vdash \fsubst : \Theta
\Rightarrow \Theta'$.
\end{restatable}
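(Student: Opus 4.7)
The plan is to proceed by induction on the recursion tree of the call $\unify(\Delta, \Theta, A, B)$, using the case analysis already laid out in Figure~\ref{fig:unification}. At each case we must establish two conclusions: that $\fsubst(A) = \fsubst(B)$ and that $\Delta \vdash \fsubst : \Theta \Rightarrow \Theta'$. The first conclusion typically follows quickly once the well-typedness of $\fsubst$ is in hand, since substitution is a deterministic, functional operation on types.

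The reflexive variable case is immediate, with \textsc{S-Identity} witnessing the empty substitution. For the variable/type cases $\unify(\Delta, (\Theta, a:K), a, A)$ (and their symmetric counterparts), the returned kind environment $\Theta_1 = \dec{demote}(K, \Theta, \ftv(A) - \Delta)$ does not mention $a$, so the asserted kinding $\Delta, \Theta_1 \vdash A : K$ forces $a \notin \ftv(A)$, supplying a built-in occurs check and giving $\fsubst(A) = A = \fsubst(a)$. For well-typedness, the plan is to state an auxiliary lemma about $\dec{demote}$: it produces a refined kind environment with the same type variables as $\Theta$, lowering only some kinds from $\poly$ to $\mono$. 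From this it follows that $\idsubst$ still maps $\Theta_1$ to $\Theta_1$, and by the extension rule for substitutions together with the asserted kinding, $\idsubst[a \mapsto A]$ is well-typed from $(\Theta, a:K)$ to $\Theta_1$. The role of demote is essential in the monomorphic case: it converts any polymorphic flexible variable occurring in $A$ into a monomorphic one, which is what makes the assertion $\Delta, \Theta_1 \vdash A : \mono$ attainable.

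For the constructor case we carry out an inner induction on the loop counter $i$, maintaining the invariants $\fsubst_{i+1}(\fsubst_i(A_j)) = \fsubst_{i+1}(\fsubst_i(B_j))$ for all $j \leq i$ and $\Delta \vdash \fsubst_{i+1} : \Theta \Rightarrow \Theta_{i+1}$. The well-typedness step follows from the outer induction hypothesis applied to the recursive call, combined with \textsc{S-Compose}. Preservation of earlier equalities under later substitutions is by functoriality: if $\fsubst_i(A_j) = \fsubst_i(B_j)$ then $\fsubst'(\fsubst_i(A_j)) = \fsubst'(\fsubst_i(B_j))$. The quantifier case introduces a fresh rigid variable $c$, extending $\Delta$ to $\Delta, c$; the induction hypothesis provides $\Delta, c \vdash \fsubst' : \Theta \Rightarrow \Theta_1$, and the assertion $c \notin \ftv(\fsubst')$ lets us apply \textsc{S-Strengthen} with $\Delta' = c$ and $\Theta'' = \cdot$ to obtain $\Delta \vdash \fsubst' : \Theta \Rightarrow \Theta_1$. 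The equation $\fsubst'(\forall a. A) = \fsubst'(\forall b. B)$ then reduces, via $\alpha$-renaming and freshness of $c$, to the induction hypothesis $\fsubst'(A[c/a]) = \fsubst'(B[c/b])$.

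The main obstacle I anticipate is the bookkeeping around $\dec{demote}$: one needs a careful lemma establishing both that the outputted $\Theta_1$ has the same underlying type variables as $\Theta$ (so that the new kind environment is compatible with $A$'s free variables) and that $\Delta, \Theta_1 \vdash A : K$ is in fact derivable from $\Delta, \Theta \vdash A : K'$ for the appropriate $K'$ after demotion. A secondary subtlety in the constructor case is making the preservation invariant precise, since the stored equalities need to survive composition with each new $\fsubst'$; this is why the invariant is phrased in terms of the accumulated $\fsubst_{i+1}$ rather than step-local substitutions. Beyond these points the proof is routine structural induction matching the shape of the algorithm.
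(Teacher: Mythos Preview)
Your proposal is correct and follows essentially the same route as the paper: induction matching the case structure of $\unify$, with the paper's auxiliary lemma about $\dec{demote}$ (that $\ftv(\Theta_1) = \ftv(\Theta)$ and $\Delta \vdash \idsubst : \Theta \Rightarrow \Theta_1$) playing exactly the role you anticipate, and the constructor and quantifier cases handled just as you describe. One small remark: your worry that one must \emph{derive} $\Delta,\Theta_1 \vdash A : K$ from the original kinding is misplaced for soundness---the theorem assumes $\unify$ returns successfully, so the assertion already holds; that derivability concern belongs to completeness, not here.
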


\begin{restatable}[Unification is complete and most general]{thm}{thmunifycomplete}
\label{thm:thmunifycomplete}
If $\Delta \vdash \fsubst : \Theta \Rightarrow \Theta'$ and $\Delta, \Theta
\vdash A : K$ and $\Delta, \Theta \vdash B : K$ and $\fsubst(A) = \fsubst(B)$, then
$\unify(\Delta, \Theta, A, B) = (\Theta'', \fsubst')$ where there exists
$\fsubst''$ satisfying $\Delta \vdash \fsubst'' : \Theta'' \Rightarrow \Theta'$
such that $\fsubst = \fsubst'' \circ \fsubst'$.
\end{restatable}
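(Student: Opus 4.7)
The plan is to proceed by induction on the sum of the sizes of $A$ and $B$, mirroring the case analysis of $\unify$. In each case where $\unify$ returns $(\Theta'', \fsubst')$, I must exhibit a residual substitution $\fsubst''$ with $\Delta \vdash \fsubst'' : \Theta'' \Rightarrow \Theta'$ and $\fsubst = \fsubst'' \circ \fsubst'$. In each case where the algorithm would seemingly fail (e.g., distinct head constructors, or a failed demotion, or an escaping skolem), I must derive a contradiction from $\fsubst(A) = \fsubst(B)$.

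The easy cases are routine. When $A = B = a$, the unifier is $(\Theta, \idsubst)$ and I simply take $\fsubst'' = \fsubst$. For the congruence case $\tc\,\many{A} = \tc\,\many{B}$, I chain the induction hypothesis through each of the $n$ recursive calls, using at step $i$ the fact that $\fsubst_i$ admits a residual which unifies $\fsubst_i(A_i)$ and $\fsubst_i(B_i)$, and composing the witnesses produced at each step.

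The principal obstacle is the flexible-variable binding case with demotion. In $\unify(\Delta, (\Theta, a:K), a, A)$, the hypothesis $\fsubst(a) = \fsubst(A)$ combined with kind preservation of $\fsubst$ gives $\Delta \vdash \fsubst(A) : K$. When $K = \mono$, this forces $\fsubst(A)$ to be a monotype, which in turn forces $\Delta \vdash \fsubst(b) : \mono$ for every flexible polymorphic $b$ free in $A$ outside $\Delta$, since $\fsubst$ commutes with data constructors and cannot erase a $\forall$. This is exactly what legitimises the call $\dec{demote}(K, \Theta, \ftv(A) - \Delta)$: $\fsubst$ remains a valid substitution out of the demoted environment $\Theta_1$ because on every variable whose kind was lowered, $\fsubst$ already produces a monotype. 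The witness $\fsubst''$ is then the restriction of $\fsubst$ to $\Theta_1$ with $a$ removed from its domain; the identity $\fsubst = \fsubst'' \circ (\idsubst[a \mapsto A])$ holds pointwise on $(\Theta, a:K)$ precisely because $\fsubst(a) = \fsubst(A)$. The symmetric case $\unify(\Delta, (\Theta, a:K), A, a)$ is handled identically.

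The quantifier case $\unify(\Delta, \Theta, \forall a.A, \forall b.B)$ requires the standard skolem argument. After $\alpha$-renaming both bound variables to a common fresh rigid $c$, the hypothesis $\fsubst(\forall a.A) = \fsubst(\forall b.B)$ yields $\fsubst(A[c/a]) = \fsubst(B[c/b])$, where $\fsubst$ is implicitly extended by $c \mapsto c$. I invoke the induction hypothesis with $(\Delta, c)$ in place of $\Delta$ to obtain $(\Theta_1, \fsubst')$ and a residual $\fsubst''$ from which the conclusion follows. The side-condition $c \notin \ftv(\fsubst')$ must indeed hold: otherwise, composition with any $\fsubst''$ defined over $\Delta$-contexts could not recover $\fsubst$ in which $c$ does not appear, contradicting most-generality of the recursive unifier. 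The hard part throughout is tracking the invariant that any $\fsubst$ compatible with the original inputs remains compatible after each demotion step; once that monomorphicity-propagation lemma is stated and proved, the remainder of the argument is careful substitution bookkeeping.
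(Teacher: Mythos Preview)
Your proposal matches the paper's proof: same induction (the paper uses the maximum of the sizes rather than the sum, but either works), same case structure, and the same key argument in the flexible-variable case that $\fsubst(a):K$ forces $\fsubst$ to send every flexible variable free in $A$ to a type of kind $K$, justifying both the demotion and the well-formedness of the restricted residual. One point you elide in the $\forall$ case: the inductive residual is only known to satisfy $\Delta, c \vdash \fsubst'' : \Theta_1 \Rightarrow \Theta'$, and you must also argue $c \notin \ftv(\fsubst'')$ (not just $c \notin \ftv(\fsubst')$) to drop $c$ from the context; the paper closes this via a small surjectivity lemma for unifiers---every variable in the output environment $\Theta_1$ is reached by $\fsubst'$ from some input variable---so that $c \in \ftv(\fsubst'')$ would force $c \in \ftv(\fsubst'' \circ \fsubst') = \ftv(\fsubst)$, a contradiction.
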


\begin{figure}[t]
\raggedright
\[
\bl
\scalebox{0.9}{$\Infer : (\meta{PEnv} \times \meta{KEnv} \times \meta{TEnv} \times \meta{Term}) \rightharpoonup (\meta{KEnv} \times \meta{Subst} \times \meta{Type}) \medskip$} \\

\Infer(\Delta, \Theta, \Gamma, \freeze{x}) = \\
\quad\mreturn\; (\Theta, \idsubst, \Gamma(x)) \medskip \\

\Infer(\Delta, \Theta, \Gamma, x) = \\
\quad\bl
     \mlet\; \forall \many{a}.H = \Gamma(x) \\
     \mfresh\; \many{b} \\
     \mreturn\; ((\Theta,\many{b : \poly}), \idsubst, H[\many{b}/\many{a}])) \\
     \el \medskip \\

\Infer(\Delta, \Theta, \Gamma, \lambda x.M) =\\
\quad\bl
     \mfresh\; a \\
     \mlet\; (\Theta_1, \fsubst[a \mapsto S], B) = \Infer(\Delta, (\Theta, a:\mono), (\Gamma,x : a), M) \\
     \mreturn\; (\Theta_1, \fsubst, S \to B) \\
     \el \medskip \\

\Infer(\Delta, \Theta, \Gamma, \lambda (x : A).M) = \\
\quad\bl
     \mlet\; (\Theta_1, \fsubst, B) = \Infer(\Delta, \Theta, (\Gamma,x : A), M) \\
     \mreturn\; (\Theta_1, \fsubst, A \to B) \\
     \el \medskip \\

\Infer(\Delta, \Theta, \Gamma, M\ N) = \\
\quad\bl
     \mlet\; (\Theta_1, \fsubst_1, A') = \Infer(\Delta, \Theta, \Gamma, M) \\
     \mlet\; (\Theta_2, \fsubst_2, A) = \Infer(\Delta, \Theta_1, \fsubst_1(\Gamma), N) \\
     \mfresh\; b \\
     \mlet\; (\Theta_3, \fsubst_3[b \mapsto B]) = \unify(\Delta, (\Theta_2,b:\poly), \fsubst_2(A'), A \to b) \\
     \mreturn\; (\Theta_3, \fsubst_3 \circ \fsubst_2 \circ \fsubst_1, B) \\
     \el \medskip \\

\Infer(\Delta, \Theta, \Gamma, \Let\; x = M \;\In\; N) = \\
\quad\bl
     \mlet\; (\Theta_1, \fsubst_1, A) = \Infer(\Delta, \Theta, \Gamma, M) \\
     \mlet\; \Delta' = \ftv(\fsubst_1) - \Delta \\
     \mlet\; (\Delta'', \Delta''') = \mgend[(\Delta, \Delta')]{A}{M}  \\
     \mlet\; \Theta_1' = \meta{demote}(\mono, \Theta_1, \Delta''') \\
     \mlet\; (\Theta_2, \fsubst_2, B) = \Infer(\Delta, \Theta_1' - \Delta'', \fsubst_1(\Gamma),x:\forall \Delta''.A, N) \\
     \mreturn\; (\Theta_2, \fsubst_2 \circ \fsubst_1, B) \\
     \el \medskip \\

\Infer(\Delta, \Theta, \Gamma, \Let\; (x : A) = M \;\In\; N) = \\
\quad\bl
     \mlet\; (\Delta', A') = \msplit(A, M) \\
     \mlet\; (\Theta_1, \fsubst_1, A_1) = \Infer((\Delta, \Delta'), \Theta, \Gamma, M) \\
     \mlet\; (\Theta_2, \fsubst_2') =
       \unify((\Delta, \Delta'), \Theta_1, A', A_1) \\

     \mlet\; \fsubst_2 = (\fsubst_2' \circ \fsubst_1) \\
     \massert\; \ftv(\fsubst_2) \mathbin{\#} \Delta' \\
     \mlet\; (\Theta_3, \fsubst_3, B) =
       \Infer(\Delta, \Theta_2, (\fsubst_2(\Gamma), x: A), N) \\
     \mreturn\; (\Theta_3, \fsubst_3 \circ \fsubst_2, B) \\
     \el \medskip
\el
\]
\caption{Type Inference Algorithm}
\label{fig:inference}
\end{figure}

\subsection{The Inference Algorithm}
\label{sec:type-inference}

The type inference algorithm is defined in Figure~\ref{fig:inference}. It is
partial in that it either returns a result or fails. The algorithm takes a
quadruple ($\Delta, \Theta, \Gamma, M)$ of a fixed kind environment $\Delta$, a
refined kind environment $\Theta$, a type environment $\Gamma$, and a term $M$,
such that $\Delta; \Theta \vdash \Gamma$.
If successful, it returns a triple $(\Theta', \fsubst, A)$ of a new refined kind
environment $\Theta'$, a type substitution $\fsubst$, such that $\Delta \vdash
\fsubst : \Theta \Rightarrow \Theta'$, and a type $A$ such that $\Delta, \Theta'
\vdash A: \poly$.

The algorithm is an extension of algorithm W~\citep{DamasM82} adapted to use
explicit kind environments $\Delta,\Theta$.
Inferring the type of a frozen variable is just a matter of looking up its type
in the type environment.
As usual, the type of a plain (unfrozen) variable is inferred by instantiating
any polymorphism with fresh type variables. The returned identity
type substitution is weakened accordingly.
Crucially, the argument type inferred for an unannotated lambda abstraction is
monomorphic.
If on the other hand the argument type is annotated with a type, then we just
use that type directly.
For applications we use the unification algorithm to check that the function and
argument match up.
Generalisation is performed for unannotated let-bindings in which the
let-binding is a guarded value.
For unannotated let-bindings in which the let-binding is not a guarded value,
generalisation is suppressed and any ungeneralised flexible type variables are
demoted to be monomorphic.
When a let-binding is annotated with a type then rather than performing
generalisation we use the annotation, taking care to account for any
polymorphism that is already present in the inferred type for $M$ using
$\dec{split}$, and checking that none of the quantifiers escape by inspecting
the codomain of $\fsubst_2$.

\begin{restatable}[Type inference is sound]{thm}{thminferencesound}
\label{thm:inference-sound}
If $\Delta, \Theta \vdash \Gamma$ and $\termwf{\Delta}{M}$ and $\Infer(\Delta,
\Theta, \Gamma, M) = (\Theta', \fsubst, A)$ then $\typ{\Delta, \Theta';
\fsubst(\Gamma)}{M : A}$ and $\Delta \vdash \fsubst : \Theta \Rightarrow \Theta'$.
\end{restatable}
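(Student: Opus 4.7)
The plan is to proceed by induction on the term $M$, relying on the stratification of the typing relation discussed at the end of \cref{sec:design-considerations}: even though $\meta{principal}$ creates a negative occurrence of the typing relation, the paper treats typing as a function $\J{-}$ from terms to triples, so induction on $M$ is sound. In each case I follow the corresponding clause of $\Infer$, verify the well-kindedness statement $\Delta \vdash \fsubst : \Theta \Rightarrow \Theta'$ using the admissible properties of substitution in \cref{fig:subst-prop}, and build a derivation of $\typ{\Delta,\Theta';\fsubst(\Gamma)}{M:A}$ using the corresponding syntax-directed typing rule from \cref{fig:freezeml-typing}.

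The variable and frozen-variable cases follow directly from \freezemlLab{Var} and \freezemlLab{Freeze}: the identity substitution is well-kinded by S-Identity, and the fresh flexible variables of kind $\poly$ introduced in the unfrozen case extend $\Theta$ monotonically, providing the instantiation $\rsubst$ required by \freezemlLab{Var}. For the lambda cases I invoke the induction hypothesis on the body. The unannotated case needs an invariant stating that flexible variables declared with kind $\mono$ in the input are mapped to monotypes in the output substitution; this invariant ensures the pattern match $\fsubst[a \mapsto S]$ at kind $\mono$ really yields a monotype $S$, matching the monotype premise of \freezemlLab{Lam}. I would prove this as a separate lemma about $\Infer$ (or by strengthening the inductive statement).

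For applications I use the IH twice, combined with unification soundness (\Cref{thm:unification-sound}), and glue the pieces together using S-Compose; this also requires a substitution-stability lemma for typing, which I would establish in advance. The unannotated let case computes the generalisable flexible variables $\Delta'$ of $\fsubst_1$, then uses $\mgen$ and $\meta{demote}$ to decide whether to generalise or to monomorphise; assembling the final judgement via \freezemlLab{Let} is delicate because the rule has a principality premise which the soundness statement does not provide, but the paper explicitly allows this premise to be greyed out when reasoning by induction on terms, as noted around \cref{fig:trans-freezeml-to-f}. The \freezemlLab{Let-Ascribe} case is handled analogously, using $\msplit$, a recursive call under the kind environment extended by the annotation's top-level quantifiers, a unification against the annotated body type, the escape check $\ftv(\fsubst_2) \fresh \Delta'$ to preserve skolem scoping, and a final IH call for the body.

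The main obstacles will be: (i) formulating and proving the monotype invariant for flexible variables of kind $\mono$, which is what bridges the refined-kind-environment machinery of $\Infer$ with the monotype restriction in \freezemlLab{Lam} and in the non-guarded-value branch of \freezemlLab{Let}; (ii) a substitution-stability lemma for typing under well-kinded substitutions, used pervasively when we transport a derivation over $\Gamma$ to a derivation over $\fsubst(\Gamma)$; and (iii) discharging the principality obligation of \freezemlLab{Let} without circularity, which is resolved by the stratified inductive-on-terms reasoning from the well-foundedness discussion rather than by a direct appeal to the rule.
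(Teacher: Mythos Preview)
Your overall structure---induction on $M$, one case per clause of $\Infer$, appeal to unification soundness and a substitution-stability lemma---matches the paper, and your identification of the auxiliary lemmas (ii) and the monotype invariant (i) is accurate (the latter falls out of the IH's conclusion $\Delta \vdash \fsubst : (\Theta,a{:}\mono) \Rightarrow \Theta'$, since a type of kind~$\mono$ in the refined system is $\forall$-free and hence a syntactic monotype).

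There is, however, a real gap in your handling of the \freezemlLab{Let} case. The ``greying out'' of the $\meta{principal}$ hypothesis applies when one is \emph{consuming} a typing derivation (e.g.\ defining the translation to System~F by recursion over derivations); it does \emph{not} license omitting that premise when you are \emph{constructing} a derivation. To conclude $\typ{\Delta,\Theta';\fsubst(\Gamma)}{\Let\;x=M\;\In\;N:B}$ via \freezemlLab{Let} you must actually establish $\meta{principal}((\Delta,\Theta'),\fsubst(\Gamma),M,\Delta'',A')$ for the inferred $A'$. The paper does this by invoking a separate lemma (\cref{lem:inferred-types-are-principal}) stating that whatever $\Infer$ returns is principal; the proof of that lemma in turn appeals to \emph{completeness} (\cref{thm:inference-completeness-mg}) as well as soundness. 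Consequently, soundness, completeness, and the ``inferred types are principal'' lemma (together with several stability-of-principality lemmas) are all proved \emph{simultaneously} by a single well-founded induction on terms, with the dependency graph carefully arranged so that every cycle passes through a strict-subterm edge. Your proposal, which treats soundness in isolation and tries to sidestep the principality obligation, would not go through as stated.
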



%
\begin{restatable}[Type inference is complete and principal]{thm}{thminferencecomplete}
\label{thm:inference-completeness-mg}
Let $\termwf{\Delta}{M}$ and $\wfctx{\Delta, \Theta}{\Gamma}$.
If
  $\Delta \vdash \fsubst : \Theta \Rightarrow \Theta'$ and $\typ{\Delta,
  \Theta'; \fsubst(\Gamma)}{M : A}$, then $\Infer(\Delta, \Theta, \Gamma, M)
\allowbreak = (\Theta'', \fsubst', A')$ where there exists $\fsubst''$
satisfying $\Delta \vdash \fsubst'' : \Theta'' \Rightarrow \Theta'$ such that
$\fsubst = \fsubst'' \circ \fsubst'$
and $\fsubst''( A') = A$.
\end{restatable}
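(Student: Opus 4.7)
The plan is to proceed by induction on the term $M$, using the invertibility of the typing rules from \cref{fig:freezeml-typing} that was established in the well-foundedness discussion. We cannot induct on the typing derivation directly because the \freezemlLab{Let} rule contains a negative occurrence of $\vdash$ inside $\meta{principal}$; induction on $M$ sidesteps this, and inversion gives us access to the premises we need. Throughout, we maintain the invariant that any substitution witnessing a declarative typing can be decomposed through the one produced by $\Infer$.

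For the base cases ($\freeze{x}$, $x$) the argument is straightforward: for $\freeze{x}$ the returned type is exactly $\Gamma(x)$, so take $\fsubst'' = \fsubst$; for $x$ the declarative rule instantiates $\forall \Delta'.H \in \Gamma$ with some $\rsubst$, while $\Infer$ introduces fresh polymorphic flexible variables $\many{b:\poly}$, and we build $\fsubst''$ by extending $\fsubst$ so that $\fsubst''(b_i) = \rsubst(a_i)$. For $\lambda x.M$ and $\lambda(x\!:\!A).M$ we invoke the induction hypothesis on the body with an appropriately extended type environment (noting that for the unannotated case, declarative \freezemlLab{Lam} forces the argument type $S$ to be monomorphic, which matches the fresh $a:\mono$ used algorithmically). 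For application $MN$, we first apply the IH to $M$, then to $N$ on the substituted environment, and then appeal to completeness of unification (\cref{thm:thmunifycomplete}) to extract a most-general unifier; the three factorizations compose to give the required $\fsubst''$.

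The two let cases are where the real work happens, and the generalising \freezemlLab{Let} case is the main obstacle. Given a declarative derivation with premises $(\Delta',\Delta'') = \mgend{A_d}{M}$, $(\Delta,\Delta'',M,A_d)\Updownarrow A$, $\Delta,\Delta''; \fsubst(\Gamma) \vdash M : A_d$, and crucially the principality assumption, we invoke the IH on $M$ to obtain $(\Theta_1,\fsubst_1,A_1)$ with a witnessing $\fsubst''_1$. The principality premise of \freezemlLab{Let} is what we exploit: it tells us that the declarative type $A_d$ for $M$ is most general, so it must agree (up to renaming of the generalised variables $\Delta''$) with the inferred $\forall \Delta''_{\mathrm{alg}}. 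A_1$ obtained from the algorithm's computation of $\Delta' = \ftv(\fsubst_1) - \Delta$ and $\mgen$. This lets us align the algorithmic $\Delta''_{\mathrm{alg}}$ with the declarative $\Delta''$ and build the combined witness. In the non-guarded-value branch, the algorithm's $\meta{demote}$ call matches the declarative monomorphic instantiation $\Delta \vdash \rsubst : \Delta'' \Rightarrow_\mono \cdot$ from the $\Updownarrow$ premise, so we obtain a matching monomorphic factorization. Applying the IH to $N$ with the extended environment finishes the case. The \freezemlLab{Let-Ascribe} case is similar but simpler: $\msplit$ determines the annotation split syntactically, the recursive call gives a type $A_1$, and unification of $A_1$ with $A'$ (completeness of $\unify$ again) yields the most general substitution; the disjointness check $\ftv(\fsubst_2) \mathbin{\#} \Delta'$ corresponds to the declarative scoping requirement.

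The hard part is really the interaction between the principality assumption in \freezemlLab{Let} and the algorithmic choice of generalisable variables: one must argue carefully that the set $\Delta''_{\mathrm{alg}} = \ftv(A_1) - \Delta$ computed algorithmically corresponds, via the witnessing substitution, to the declarative $\Delta''$ used in the $\Updownarrow$ judgement, and that no flexible variables ``leak'' between levels. This is where the refined kinding discipline on $\Theta$ and the $\meta{demote}$ operation carry their weight; without the monomorphism restriction on the type environment (\cref{sec:design-considerations}), a substitution could introduce new polymorphism that the algorithm would never guess, and completeness would fail. Everything else reduces to routine bookkeeping about substitution composition, alpha-renaming of rigid variables, and invoking the properties in \cref{fig:subst-prop}.
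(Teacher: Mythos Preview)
Your proposal is correct and matches the paper's approach: induction on $M$ with inversion, the same case structure, and the same identification of the \freezemlLab{Let} case as the crux where the principality premise is used to align the algorithmic and declarative generalisable variables. The paper factors that alignment into a dedicated lemma showing that the witnessing $\fsubst''_1$ maps the algorithmic generalisable set bijectively and order-preservingly onto the declarative $\Delta''$; note that your shorthand $\Delta''_{\mathrm{alg}} = \ftv(A_1) - \Delta$ should be $\ftv(A_1) - \Delta - \ftv(\fsubst_1)$ (as your earlier paragraph correctly has via $\mgen$), since without subtracting $\ftv(\fsubst_1)$ you would over-generalise variables originating from $\Gamma$.
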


\section{Implementation}
\label{sec:implementation}

We have implemented FreezeML as an extension of \sysname\pseudonym.  This
exercise was mostly routine.
In the process we addressed several practical concerns and encountered some
non-trivial interactions with other features of \sysname.
In order to keep this paper self-contained we avoid concrete \sysname syntax, but
instead illustrate the ideas of the implementation in terms of extensions to the
core syntax used in the paper.

In ASCII we render $\freeze{x}$ as \verb|~|$x$.
For convenience, \sysname builds in the generalisation $\gen$ and instantiation
operators $\inst$.

In practice (in \sysname and other functional languages), it is often convenient to
include a type signature for a function definition rather than annotations on
arguments.
%
%
Thus
\[
\bl
f : \forall a.A \to B \to C \\
f~x~y = M \\
N \\
\el
\]
is treated as:
\[
\Let\; (f : \forall a.A \to B \to C) = \lambda (x : A).\lambda (y : B).M \;\In\; N
\]
Though $x$ and $y$ are not themselves annotated, $A$ and $B$ may be polymorphic,
and may mention $a$.

Given that FreezeML is explicit about the order of quantifiers, adding support
for explicit type application~\citep{EisenbergWA16} is straightforward.
We have implemented this feature in \sysname.

\sysname has an implicit subkinding system used for various purposes including
classifying base types in order to support language-integrated
query~\citeanon{LindleyC12} and distinguishing between linear and non-linear
types in order to support session typing~\citeanon{LindleyM17}.
In plain FreezeML, if we have $\dec{poly} : (\forall a.a \to
a) \to \Int \times \Bool$ and $\dec{id} : \forall a.a \to a$, then we may write
$\dec{poly}~\freeze{\dec{id}}$.
The equivalent in \sysname also works.
However, the type inferred for the identity function in \sysname is not $\forall
a.a \to a$, but rather $\forall (a : \circ).a \to a$, where the subkinding
constraint $\circ$ captures the property that the argument is used linearly.
Given this more refined type for $\dec{id}$ the term
$\dec{poly}~\freeze{\dec{id}}$ no longer type-checks.
In this particular case one might imagine generating an implicit coercion (a
function that promises to use its argument linearly may be soundly treated as a
function that may or may not use its argument linearly).
In general one has to be careful to be explicit about the kinds of type
variables when working with first-class polymorphism.
Similar issues arise from the interaction between first-class polymorphism and
\sysname's effect type system~\citeanon{LindleyC12}.

Existing infrastructure for subkinding in the implementation of \sysname was
helpful for adding support for FreezeML as we exploit it for tracking the
monomorphism / polymorphism distinction.
However, there is a further subtlety: in FreezeML type variables of monomorphic
kind may be instantiated with (though not unified with) polymorphic types; this
behaviour differs from that of other kinds in \sysname.
%

The \sysname source language allows the programmer to explicitly distinguish
between rigid and flexible type variables. Flexible type variables can be
convenient to use as wild-cards during type inference.
As a result, type annotations in \sysname are slightly richer than those admitted by
the well-scopedness judgement of Figure~\ref{fig:terms-well-scopedness}.
It remains to verify the formal properties of the richer system.







\section{Related Work}
\label{sec:related-work}

There are many previous attempts to bridge the gap between ML and System F.
%
Some systems employ more expressive types than those of System F;
others implement heuristics in the type system to achieve a balance between
increased complexity of the system and reducing the number of necessary type
annotations;
finally, there are systems like ours that eschew such heuristics for the sake of
simplifying the type system further.
Users then have to state their intentions explicitly, potentially resulting in
more verbose programs.



%
%

\paragraph{Expressive Types}

MLF~\cite{BotlanR03} (sometimes stylised as \mlf) is considered to be the most
expressive of the conservative ML extensions so far.
%
  MLF achieves
its expressiveness by going beyond regular System F types and introducing
polymorphically bounded types, though translation from MLF to System F and vice
versa remains possible~\cite{BotlanR03, Leijen07}.
MLF also extends ML with
type annotations on lambda binders.
Annotations on binders that are
\emph{used} polymorphically are mandatory, since type inference will not guess
second-order types.
This is required to maintain principal types.


HML~\cite{Leijen09} is a simplification of MLF.  In HML all polymorphic function
arguments require annotations.  It 
significantly simplifies the type inference algorithm compared to MLF, though
polymorphically bounded types are still used.


\paragraph{Heuristics}

HMF~\cite{Leijen08} contrasts with the above systems in that it only uses
regular System F types (disregarding order of quantifiers).
Like FreezeML, it only allows principal types for let-bound variables, and type
annotations are needed on all polymorphic function parameters.
HMF allows both instantiation and generalisation in argument positions, taking
n-ary applications into account.
The system uses weights to select between less and more polymorphic types.
Whole lambda abstractions require an annotation to have a polymorphic return
type.
Such term annotations are \emph{rigid}, meaning they suppress instantiation and
generalisation.
As instantiation is implicit in HMF, rigid annotations can be seen as a means to
freeze arbitrary expressions.
%


Several systems for first-class polymorphism were proposed in the context of
the Haskell programming language.
These
systems include boxy types~\cite{VytiniotisWJ06}, FPH~\cite{VytiniotisWJ08}, and
GI~\cite{SerranoHVJ18}.
The Boxy Types system, used to implement GHC's
\texttt{ImpredicativeTypes} extension, was very fragile and thus difficult to
use in practice.
Similarly, the FPH system  -- based on MLF -- was simpler but still difficult to implement in practice.
GI is the latest development in this line of research.  Its key ingredient is a
heuristic that restricts polymorphic instantiation, based on whether a variable
occurs under a type constructor and argument types in an application.
Like HMF, it uses System F types, considers n-ary applications for typing, and
requires annotations both for polymorphic parameter and return types.
However, only top-level type variables may be re-ordered.
The authors show how to combine their system
with the OutsideIn(X)~\cite{VytiniotisJSS11} constraint-solving type inference
algorithm used by the Glasgow Haskell Compiler.
They also report a prototype
implementation of GI as an extension to GHC with encouraging experience porting
existing Hackage packages that use rank-$n$ polymorphism.

\paragraph{Explicitness}
Some early work on first-class polymorphism was based on the observation
that polymorphism can be encapsulated inside nominal
types~\cite{Remy94,LauferO94,OderskyL96,Jones97}.

The QML~\cite{RussoV09} system explicitly distinguishes between polymorphic
schemes and quantified types and hence does not use plain System F types.
%
%
Type schemes are used for ML let\-/polymorphism and introduced and eliminated
implicitly.
Quantified types are used for first-class polymorphism, in particular for polymorphic function arguments.
Such types must always be introduced and eliminated explicitly,
which requires stating the full type and not just instantiating the type variables.
%
%
All
polymorphic instantiations must therefore be made explicitly by annotating terms
at call sites.
%
%
%
Neither $\Let$- nor $\lambda$-bound variables can be annotated with a type.
%
%

%

Poly-ML~\cite{GarrigueR99} is similar to QML in that it distinguishes
two incompatible sorts of polymorphic types.
Type
schemes arise from standard ML generalisation; (boxed) polymorphic types are
introduced using a dedicated syntactic form which requires a type annotation.
Boxed polymorphic types are considered to be simple types, meaning that a
type variable can be instantiated with a boxed polymorphic type, but not with a
type scheme.
Terms of a boxed type are not instantiated implicitly, but must be opened
explicitly, resulting in instantiation.
Unlike QML, the instantiated type is deduced from the context, rather than requiring
an annotation.

%

Unlike FreezeML, Poly-ML supports inferring polymorphic parameter types for unannotated
lambdas, but this is limited to situations where the type is unambiguously
determined by the context.
This is achieved by using \emph{labels}, which track whether polymorphism was guessed
or confirmed by a type annotation.
Whereas FreezeML has type annotations on binders, Poly-ML has type annotations
on terms and propagates them using the label system.

In Poly-ML, the example $\lambda x. \dec{auto}\ x$ typechecks, guessing a
polymorphic type for $x$; FreezeML requires a type annotation on $x$.
In FreezeML the
program $\Let\; id = \lambda x.x \;\In\; \Let\;c = id\ 3 \;\In\; \dec{auto}\ \lceil id \rceil$
typechecks, whereas in Poly-ML a type annotation is required (in order to
convert between $\forall a. a \to a$ and $[\forall a. a \to a]$).
However, Poly-ML could be extended with a new construct for introducing boxed
polymorphism without a type annotation, using the principal type instead.
With such a change it is possible to translate from FreezeML into
this modified version of Poly-ML without inserting any new type
annotations (see~\auxref{sec:freezeml-to-poly-ml}).

\Auxref{sec:other-systems} contains an example-based comparison of \freezeml,
GI, MLF, HMF, FPH, and HML.

%

\paragraph{Instantiation as subsumption}

In FreezeML instantiation induces a natural subtyping relation such that $A \leq
B$ iff $B$ is an instance of $A$. In other systems (e.g. \cite{OderskyL96}) such
a subtyping relation applies implicitly to all terms via a subsumption
rule. This form of subsumption is fundamentally incompatible with frozen
variables, which explicitly suppress instantiation, enabling fine-grained
control over exactly where instantiation occurs. Nonetheless, subsumption comes
for free on unfrozen variables and potentially elsewhere if one adopts more
sophisticated instantiation strategies.

\section{Conclusions}
\label{sec:conclusion}

In this paper, we have introduced \freezeml as an exercise in language design
for reconciling ML type inference with System F-style first-class polymorphism.
We have also implemented \freezeml as part of \sysdesc programming
language~\citeanon{CooperLWY06}, which uses a variant of ML type inference
extended with row types, and has a kind system readily adapted to check that
inferred function arguments are monotypes.

Directions for future work include extending \freezeml to accommodate features such
as higher-kinds, GADTs, and dependent types,
as well as exploring different implicit instantiation strategies.
It would also be instructive to rework our formal account using the methodology
of \citet{GundryMM10} and use that as the basis for mechanised soundness and
completeness proofs.



\begin{acks}
This work was supported by EPSRC grant EP/K034413/1 `From Data Types to Session
Types---A Basis for Concurrency and Distribution', ERC Consolidator Grant Skye
(grant number 682315), by an LFCS internship, and by an ISCF Metrology
Fellowship grant provided by the UK government's Department for Business, Energy
and Industrial Strategy (BEIS). We are grateful to James McKinna, Didier
R\'{e}my, Andreas Rossberg, and Leo White for feedback and to anonymous
reviewers for constructive comments.
\end{acks}

\balance
\bibliography{bibliography}


\ifextended
\pagebreak
\onecolumn
\appendix

\section{\freezeml vs. Other Systems}
\label{sec:other-systems}

In this appendix we present an example-based comparison of \freezeml with other
systems for first-class polymorphism: GI~\cite{SerranoHVJ18},
MLF~\cite{BotlanR03}, HMF~\cite{Leijen08}, FPH~\cite{VytiniotisWJ08}, and
HML~\cite{Leijen09}.  Sections A-E of Figure~\ref{fig:freezeml-examples} have
been presented in~\cite{SerranoHVJ18}, together with analysis of how the five
systems behave for these examples.  We now use these examples to
compare \freezeml with other systems.

Firstly, we focus on which examples can be typechecked without explicit type
annotations. (We do not count \freezeml freezes, generalisations, and
instantiations as annotations, since these are mandatory in our system by design
and they do not require spelling out a type explicitly, allowing the programmer
to rely on type inference.) Out of 32 examples presented in Sections A-E of the
Figure~\ref{fig:freezeml-examples}, MLF typechecks all but B1 and E1, placing it
first in terms of expressiveness. HML ranks second, being unable to typecheck
B1, B2 and E1\footnote{Table presented in~\cite{SerranoHVJ18} claims that HML
cannot typecheck E3 but \expertname pointed out to us in private correspondence
that this is not the case and HML can indeed typecheck E3.}.  \freezeml handles
all examples except for A8, B1, B2, and E1, ranking third.  FPH, GI, and HMF
fail to typecheck 6 examples, 8 examples, and 11 examples respectively.  If we
permit annotations on binders only, the number of failures for most systems
decreases by 2, because the systems can now typecheck Examples B1 and B2.  MLF
was already able to typecheck B2 without an annotation, so now it handles all
but E1. If we permit type annotations on arbitrary terms the number of examples
that cannot be typechecked becomes: MLF -- 1 (E1), \freezeml -- 2 (A8, E1) -- GI
and HML -- 2 (E1, E3), FPH -- 4, and HMF -- 6.  These observations are
summarised in Table~\ref{tab:system-comparison} below.

\begin{table}[h]
\caption{Summary of the number of examples not handled by each system}
\label{tab:summary}
\begin{tabular}{|l|c|c|c|c|c|c|}
\hline
Annotate?    & MLF& HML& \freezeml & FPH & GI & HMF\\
\hline
Nothing & 2 & 3 & 4 & 6 & 8 & 11\\
Binders & 1 & 2 & 2 & 4 & 6 & 6\\
Terms   & 1 & 2 & 2 & 4 & 2 & 6\\
\hline
\end{tabular}
\label{tab:system-comparison}
\end{table}



Due to \freezeml's approach of explicitly annotating polymorphic instantiations, we
might require $\freeze{-}$, $\gen$, and $\inst$ annotations where other systems
need no annotations whatsoever. This is especially the case for Examples
A10-12, which all other five systems can handle without annotations.  We are
being more verbose here, but the additional ink required is minimal and we see
this as a fair price for the benefits our system provides. Also, being explicit
about generalisations allows us to be precise about the location of quantifiers
in a type. This allows us to typecheck Example E3, which no other system except
MLF can do.

\freezeml is incapable of typechecking A8, under the assumption that the only allowed
modifications are insertions of freeze, generalisation, and instantiation. We
can however $\eta$-expand and rewrite A8 to F10.

When dealing with $n$-ary function applications, \freezeml is insensitive to the
order of arguments. Therefore, if an application $M\ N$ is well-formed then so
are $\dec{app}\ M\ N$ and $\dec{revapp}\ N\ M$, as shown in section D of the
table. Many systems in the literature also enjoy this property, but there are
exceptions such as Boxy Types~\cite{VytiniotisWJ06}.

\section{Specifications of Core Calculi}
\label{sec:core-calculi}

In this appendix we provide full specification of two core calculi on which we
base \freezeml --- call-by-value System F and ML --- as well as translation from
ML to System F.

\subsection{Call-by-value System F}

\begin{figure}
\begin{syntax}
\slab{Type Variables}    &a, b, c \\
\slab{Type Constructors} &\tc     &::= & \scalebox{0.9}{$\Int \mid \Bool \mid \List \mid \mathord{\to} \mid \mathord{\times} \mid \dots$} \\
\slab{Types}             &A, B    &::=& a \mid \tc\,\many{A} \mid \forall a.A \\
\slab{Term Variables}    &x, y, z \\
\slab{Terms}             &M, N    &::= & \scalebox{0.9}{$x \mid \lambda x^A.M \mid M\,N \mid \Lambda a.V \mid M\,A$} \\
\slab{Values}            &V, W    &::= & I \mid \lambda x^A.M \mid \Lambda a.V \\
\slab{Instantiations}    &I       &::= & x \mid I\,A \\
\slab{Kind Environments} &\Delta  &::=& \cdot \mid \Delta, a \\
\slab{Type Environments} &\Gamma  &::=& \cdot \mid \Gamma, x:A \\
\end{syntax}
\caption{System F Syntax}
\label{fig:system-f-syntax}
\end{figure}


We begin with a standard call-by-value variant of System F.
The syntax of System F types, environments, and terms is given in
Figure~\ref{fig:system-f-syntax}.

We let $a, b, c$ range over type variables.
We assume a collection of type constructors $\tc$ each of which has a fixed arity
$\arity(\tc)$.
Types formed by type constructor application include base types ($\Int$ and
$\Bool$), lists of elements of type $A$ ($\List\,A$), and functions from $A$ to
$B$ ($A \to B$).
Data types may be Church-encoded using polymorphic functions~\cite{Wadler90},
but for the purposes of our examples we treat them specially.
Types comprise type variables ($a$), fully-applied type constructors
($\tc\,\many{A}$), and polymorphic types ($\forall a.A$).
Type environments track the types of term variables in a term. Kind environments
track the type variables in a term. For the calculi we present in this section,
we only have a single kind, $\poly$, the kind of all types, which we
omit.
Nevertheless, kind environments are still useful for explicitly tracking which
type variables are in scope, and when we consider type inference
(Section~\ref{sec:inference}) we will need a refined kind system in order to
distinguish between monomorphic and polymorphic types.

We let $x, y, z$ range over term variables.
Terms comprise variables ($x$), term abstractions ($\lambda x^A.M$), term
applications ($M\,N$), type abstractions ($\Lambda a.V$), and type applications
($M\,A$).
We write $\Let\; x^A = M \;\In\; N$ as syntactic sugar for $(\lambda x^A.N)\,M$, we
write $M\,\many{A}$ as syntactic sugar for repeated type application
$M\,A_1\,\cdots\,A_n$, and $\Lambda \many{a}.V$ as syntactic sugar for repeated
type abstraction $\Lambda a_1.\cdots\Lambda a_n.V$.  We also may write $\Lambda
\Delta.A$ when $\Delta = \many{a}$.
We restrict the body of type abstractions to be syntactic values in accordance
with the ML value restriction~\cite{Wright95}.

\begin{figure}
\raggedright
$\boxed{\Delta \vdash A : \poly}$
\vspace{-0.5cm}
\begin{mathpar}
  \inferrule
    {a \in \Delta}
    {\Delta \vdash a : \poly}

  \inferrule
    {\arity(D) = n \\\\
     \Delta \vdash A_1 : \poly \, \cdots \, \Delta \vdash A_n : \poly}
    {\Delta \vdash \tc\,\many{A} : \poly}

  \inferrule
    {\Delta, a \vdash A : \poly}
    {\Delta \vdash \forall a.A : \poly}
\end{mathpar}

\raggedright
$\boxed{\typ{\Delta; \Gamma}{M : A}}$
\vspace{-0.5cm}
\begin{mathpar}
  \inferrule*[Lab=\fLab{Var}]
    {x : A \in \Gamma}
    {\typ{\Delta; \Gamma}{x : A}}

  \inferrule*[Lab=\fLab{App}]
    {\typ{\Delta;\Gamma}{M : A \to B} \\\\
     \typ{\Delta;\Gamma}{N : A}
    }
    {\typ{\Delta;\Gamma}{M\,N : B}}

  \inferrule*[Lab=\fLab{PolyLam}]
    {\typ{\Delta, a;\Gamma}{V : A}}
    {\typ{\Delta;\Gamma}{\Lambda a . V : \forall a . A}}

  \inferrule*[Lab=\fLab{Lam}]
    {\typ{\Delta;\Gamma, x : A}{M : B}}
    {\typ{\Delta;\Gamma}{\lambda x^A.M : A \to B}}

  \inferrule*[Lab=\fLab{PolyApp}]
    {\typ{\Delta;\Gamma}{M : \forall a . B}\\
    \Delta \vdash A : \poly}
    {\typ{\Delta;\Gamma}{M\,A : B[A/a]}}
\end{mathpar}

\caption{System F Kinding and Typing Rules}
\label{fig:system-f-typing}
\end{figure}


Well-formedness of types and the typing rules for System F are given in
Figure~\ref{fig:system-f-typing}.
Standard equational rules ($\beta$) and ($\eta$) for System F are given in
Figure~\ref{fig:system-f-equations}.
%

\begin{figure}
\raggedright

\begin{minipage}{0.45\columnwidth}
$\beta$-rules
\begin{equations}
(\lambda x^A. V)\,W &\conv& V[W/x] \\
(\Lambda a. V)\,A   &\conv& V[A/a] \medskip \\
\end{equations}
\end{minipage}
\hspace{10pt}
\begin{minipage}{0.4\columnwidth}
$\eta$-rules
\begin{equations}
\lambda x^A. M\, x &\conv& M \\
\Lambda a. V\,a    &\conv& V \medskip \\
\end{equations}
\end{minipage}

\caption{System F Equational Rules}
\label{fig:system-f-equations}
\end{figure}






\subsection{ML}

\begin{figure}
\begin{syntax}
\slab{Type Variables}      &a, b, c \\
\slab{Type Constructors}   &\tc     &::= & \scalebox{0.9}{$\Int \mid \Bool \mid \List \mid \mathord{\to} \mid \times \mid \dots$} \\
\slab{Monotypes}           &S, T    &::= & a \mid \tc\,\many{S} \\
\slab{Type Schemes}        &P, Q    &::= & \forall \many{a}.S \\
\slab{Type Instantiations} &\rsubst &::= & \emptyset \mid \rsubst[a \mapsto S] \\
\slab{Term Variables}      &x, y, z \\
\slab{Terms}               &M, N    &::= & \scalebox{0.85}{$x \mid \lambda x.M \mid M\,N \mid \Let\; x = M \;\In\; N$} \\
\slab{Values}              &V, W    &::= & \scalebox{0.85}{$x \mid \lambda x.M \mid \Let\; x = V \;\In\; W$} \\
\slab{Kinds}               &K       &::= & \mono \mid \poly \\
\slab{Kind Environments}   &\Delta  &::= & \cdot \mid \Delta, a \\
\slab{Type Environments}   &\Gamma  &::= & \cdot \mid \Gamma, x:P \\
\end{syntax}

\caption{ML Syntax}
\label{fig:ml-syntax}
\end{figure}


\begin{figure}
\raggedright
$\boxed{\Delta \vdash S : \mono}$
$\boxed{\Delta \vdash P : \poly}$
\vspace{-0.5cm}
\begin{mathpar}
  \inferrule
    {a \in \Delta}
    {\Delta \vdash a : \mono}

  \inferrule
    {\Delta, \Delta' \vdash S : \mono}
    {\Delta \vdash \forall \Delta'.S : \poly}

  \inferrule
    {\arity(D) = n \\\\
     \Delta \vdash S_1 : \mono \quad \cdots \quad \Delta \vdash S_n : \mono}
    {\Delta \vdash \tc\,\many{S} : \mono}
\end{mathpar}

\raggedright
$\boxed{\typ{\Delta; \Gamma}{M : S}}$
\begin{mathpar}
  \inferrule*[Lab=\mlLab{Var}]
    {x : \forall \Delta'.S \in \Gamma \\
     \Delta \vdash \rsubst : \Delta' \Rightarrow \cdot
    }
    {\typ{\Delta; \Gamma}{x : \rsubst(S)}}

  \inferrule*[Lab=\mlLab{Lam}]
    {\typ{\Delta; \Gamma, x : S}{M : T}}
    {\typ{\Delta; \Gamma}{\lambda x.M : S \to T}}
\\
  \inferrule*[Lab=\mlLab{App}]
    {\typ{\Delta; \Gamma}{M : S \to T} \\\\
     \typ{\Delta; \Gamma}{N : S}
    }
    {\typ{\Delta; \Gamma}{M\,N : T}}

  \inferrule*[Lab=\mlLab{Let}]
    {\Delta' = \mgend{S}{M} \quad
     \typ{\Delta, \Delta'; \Gamma}{M : S} \\\\
     P = \forall \Delta'.S \quad
     \typ{\Delta; \Gamma, x : P}{N : T}
    }
    {\typ{\Delta; \Gamma}{\Let \; x = M\; \In \; N : T}}
\end{mathpar}

\raggedright
$\boxed{\Delta \vdash \rsubst : \Delta' \Rightarrow \Delta''}$
\begin{mathpar}
\inferrule
  { }
  {\Delta \vdash \emptyset : \cdot \Rightarrow \Delta'}

\inferrule
  {\Delta \vdash \rsubst : \Delta' \Rightarrow \Delta'' \\
   \Delta, \Delta'' \vdash S : \mono}
  {\Delta \vdash \rsubst[a \mapsto S] : (\Delta', a) \Rightarrow \Delta''}
\end{mathpar}

\begin{mathpar}
\mgend{S}{M} =
\begin{cases}
  \, \ftv(S) - \Delta    &\text{if $M$ is a value} \\
  \, \cdot               &\text{otherwise}
\end{cases}
\end{mathpar}

\caption{ML Kinding and Typing Rules}
\label{fig:ml-typing}
\end{figure}


We now outline a core fragment of ML. The syntax is given in
Figure~\ref{fig:ml-syntax}, well-formedness of types and the typing rules in
Figure~\ref{fig:ml-typing}.
Unlike in System F we here separate monomorphic types ($S, T$) from type schemes
($P, Q$) and there is no explicit provision for type abstraction or type
application.
Instead, only variables may be polymorphic and polymorphism is introduced by
generalising the body of a let-binding ($\mlLab{Let}$), and eliminated
implicitly when using a variable ($\mlLab{Var}$).

Instantiation applies a type instantiation to the monomorphic body of a
polymorphic type. The rules for type instantiations are given in
Figure~\ref{fig:ml-typing}.
We may apply type instantiations to types and type schemes in the standard way:

\[
\ba{@{}r@{~}c@{~}l@{~}c@{~}r@{~}c@{~}l@{}}
\emptyset(S)            &=& S                     &\hspace{20pt}& \rsubst[a \mapsto S](a) &=& S\\
\rsubst(D~\many{S})     &=& D~(\many{\rsubst(S)}) &\hspace{20pt}& \rsubst[a \mapsto S](b) &=& \rsubst(b)
\ea
\]

Generalisation is defined at the bottom of Figure~\ref{fig:ml-typing}.  If $M$ is a value, the
generalisation operation $\mgend{S}{M}$ returns the list of type variables in $S$
that do not occur in the kind environment $\Delta$, in the order in which they
occur, with no duplicates. To satisfy the value restriction, $\mgend{S}{M}$ is
empty if $M$ is not a value.

A crucial difference between System F and ML is that in System F the order in
which quantifiers appear is important ($\forall a\,b.A$ and $\forall b\,a.A$ are
different types), whereas in ML, because instantiation is implicit, the order
does not matter.
As we are concerned with bridging the gap between the two we have developed an
extension of ML in which the order of quantifiers is important.  However, this
change does not affect the behaviour of type inference for ML terms since the
order of quantifiers is lost when polymorphic variable types are instantiated, as in
rule \textsc{ML-Var}.

\subsection{ML as System F}

ML is remarkable in providing statically typed polymorphism without the
programmer having to write any type annotations.
In order to achieve this coincidence of features the type system is carefully
constructed, and crucial operations (instantiation and generalisation) are left
implicit (i.e., not written as explicit constructs in the program).
This is convenient for programmers, but less so for metatheoretical study.

\begin{figure}
\begin{align*}
  \transmltof[\ginormous]{\raisebox{-7pt}
    {$\inferrule
      {{x : \forall \Delta'.S \in \Gamma} \\\\
        \Delta \vdash \rsubst : \Delta' \Rightarrow \cdot
      }
      {\typ{\Delta; \Gamma}{x : \rsubst(S)}}
    $}} &= x \: \rsubst(\Delta') \\
  \transfreezemlfof[\Ginormous]
    {\raisebox{-27pt}{$\inferrule*
      {\Delta' = \mgen(\Delta, S, M) \\\\
       \typ{\Delta, \Delta'; \Gamma}{M : S} \\\\
       P = \forall \Delta'.S \\\\
       \typ{\Delta; \Gamma, x : P}{N : T}
      }
      {\typ{\Delta; \Gamma}{\Let \; x = M\; \In \; N : T}}
    $}}
  &=
    \begin{aligned}[t]
    &\Let \: x^{\forall \Delta'.S} = \Lambda \, \Delta'. \transfreezemlfof{M} \\
    &\In\; \transfreezemlfof{N} \\
    \end{aligned} \\
  \transmltof[\Bigg]
    {\inferrule
      {\typ{\Delta; \Gamma, x : S}{M : T}}
      {\typ{\Delta; \Gamma}{\lambda x.M : S \to T}}
    }
  &= \lambda x^S.\transmltof{M} \\
  \transmltof[\ginormous]{\raisebox{-7pt}
    {$\inferrule
      {\typ{\Delta; \Gamma}{M : S \to T} \\\\
        \typ{\Delta; \Gamma}{N : S}
      }
      {\typ{\Delta; \Gamma}{M\,N : T}}
    $}}
  &= \transmltof{M} \; \transmltof{N}
\end{align*}
\caption{Translation from ML to System F}
\label{fig:ml-to-f}
\end{figure}


In order to explicate ML's polymorphic type system, let us consider a translation of ML into
System F. Such a translation is given in Figure~\ref{fig:ml-to-f}.
As the translation depends on type information not available in terms, formally
it is defined as a translation from derivations to terms (rather than terms to
terms). But we abuse notation in the standard way to avoid explicitly writing
derivation trees everywhere.
Each recursive invocation on a subterm is syntactic sugar for invoking the
translation on the corresponding part of the derivation.

The translation of variables introduces repeated type applications.
Recall that we use $\Let \: x^A
\: = \: M \: \In \: N$ as syntactic sugar for $(\lambda x^A. N) \: M$ in System
F.
Translating the let binding of a value then yields repeated type abstractions.
For non-values $M$, $\Delta'$ is empty.

\begin{restatable}{thm}{translationtyping}
If $\Delta; \Gamma \vdash M : S$ then $\Delta; \Gamma \vdash \transmltof{M} :
S$.
\end{restatable}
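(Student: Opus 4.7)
The plan is to proceed by structural induction on the ML typing derivation of $\typ{\Delta;\Gamma}{M : S}$, since the translation $\transmltof{-}$ is defined by cases on this derivation. Before starting the induction I would state two supporting observations: (i) ML monotypes $S$ are System F types of kind $\poly$ (trivially, since kinding in ML for monotypes is a sub-judgement of System F kinding via the $\mono$/$\poly$ collapse when ML kinds are erased), and (ii) ML type schemes $\forall \Delta'.S$ translate to System F types $\forall \Delta'.S$ of the same shape, so the environments $\Delta$ and $\Gamma$ carry over directly.

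For the four cases, \mlLab{Lam} and \mlLab{App} go through immediately by the induction hypothesis followed by \fLab{Lam} and \fLab{App} respectively (the translation is homomorphic there). For \mlLab{Var}, where $x : \forall\Delta'.S \in \Gamma$ and $\Delta \vdash \rsubst : \Delta' \Rightarrow \cdot$, the translation is $x\,\rsubst(\Delta')$, i.e.\ repeated type application of $x$ to the images $\rsubst(a)$ for $a \in \Delta'$ in order. Starting from $\typ{\Delta;\Gamma}{x : \forall \Delta'.S}$ in System F via \fLab{Var}, a straightforward inner induction on the length of $\Delta'$, repeatedly applying \fLab{PolyApp} and using compositionality of substitution, yields the required type $\rsubst(S)$. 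The well-kindedness side conditions for \fLab{PolyApp} come from unpacking $\Delta \vdash \rsubst : \Delta' \Rightarrow \cdot$.

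The \mlLab{Let} case is the one that needs care. Here the translation produces $(\lambda x^{\forall \Delta'.S}.\transmltof{N})\,(\Lambda \Delta'.\transmltof{M})$ where $\Delta' = \mgend{S}{M}$. By the induction hypothesis applied to the premises, $\typ{\Delta,\Delta';\Gamma}{\transmltof{M} : S}$ and $\typ{\Delta;\Gamma, x:\forall \Delta'.S}{\transmltof{N} : T}$ in System F. The body $\Lambda \Delta'.\transmltof{M}$ then has type $\forall \Delta'.S$ by repeated applications of \fLab{PolyLam}, provided the value restriction is met, and the outer application gives type $T$ via \fLab{Lam} and \fLab{App}. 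The main obstacle, and the only subtlety worth calling out, is the value restriction side condition for \fLab{PolyLam}: when $M$ is an ML value, $\Delta'$ may be nonempty and $\transmltof{M}$ must be a System F value. I would discharge this with a short auxiliary lemma, proved by induction on the ML value grammar: if $V$ is an ML value then $\transmltof{V}$ is a System F value. The variable case translates to an instantiation $x\,\rsubst(\Delta')$ which is in the $I$ value subcategory; the $\lambda x.M$ case translates to $\lambda x^S.\transmltof{M}$, a value; and the let-value case $\Let\;x = V\;\In\;W$ translates to $(\lambda x^{\forall \Delta'.S}.\transmltof{W})\,(\Lambda \Delta'.\transmltof{V})$, which is not syntactically a value, so this case requires either broadening $\dec{Val}$ via $\beta$-equivalence or arguing that for these nested lets we can inline through the let (e.g.\ that $\mgen$ for the outer let on a let-value only generalises over variables that don't appear in $V$), so the outer $\Lambda$ distributes into $W$. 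I would sketch the cleanest fix (either restricting core ML lets to non-value bodies as usual, or taking the equational/erasure view) and defer the detailed check.
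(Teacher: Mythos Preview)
The paper does not give a proof of this theorem; it is stated as a routine result. Your approach---induction on the ML typing derivation, following the cases of the translation---is the standard one, and it mirrors the structure of the paper's proof of the analogous FreezeML-to-F result (Theorem~\ref{theorem:freezeml-to-f}).

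Your handling of the four cases is correct, and you have put your finger on a genuine gap: the auxiliary lemma ``if $V$ is an ML value then $\transmltof{V}$ is a System F value'' fails for ML values of the form $\Let\; x = V\;\In\; W$, since their translation is an application $(\lambda x^{\forall \Delta'.S}.\transmltof{W})\,(\Lambda \Delta'.\transmltof{V})$, which is not a syntactic value in the paper's System F. This is not an error in your reasoning but a real wrinkle the paper does not address; the paper's own proof of Theorem~\ref{theorem:freezeml-to-f} makes the same unsupported claim in the \freezemlLab{Let} case (``we rely on the fact that $\transfreezemlfof{V}$ is a value in System F as well''), where FreezeML guarded values likewise include let-forms whose translations are applications.

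The cleanest fix is to close System F values under value-lets, i.e.\ add the clause $(\lambda x^A.W)\,V$ (for $V,W$ values) to the value grammar. With that extension the auxiliary lemma holds by a direct induction on the ML value grammar, and the rest of your argument goes through unchanged. The alternatives you sketch---inlining, or distributing the outer $\Lambda$ into the let body---are more delicate and unnecessary here.
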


The fragment of System F in the image of the translation is quite restricted in
that type abstractions are always immediately bound to variables and type
applications are only performed on variables.  Furthermore, all quantification must be
top-level.
Next we will extend ML in such a way that the translation can also be extended
to cover the whole of System F.

\section{Well-foundedness of FreezeML typing}
\label{sec:well-founded-freezeml}

In this appendix we give the full details of how FreezeML's typing relation can
be defined, despite the apparent failure of well-foundedness in the rule-based
presentation in  Figures~\ref{fig:freezeml-typing}
and~\ref{fig:freezeml-aux-judgments}.

We will define a function $\J{M}$ as follows by recursion on terms $M$.  The
result of $\J{M}$ is a set of triples $(\Delta,\Gamma,A)$.  Note that there is no
requirement that $M$, or $N$, are closed.  We define an auxiliary function
$P(-)$ that takes a set of triples $(\Delta,\Gamma,A)$ and produces a set of
quadruples $(\Delta,\Gamma,\Delta',A)$.  Intuitively, $\J{M}$ corresponds to
those triples $(\Delta,\Gamma,A)$ with respect to which $M$ is well-formed, and
$P(\J{M})$ corresponds analogously to those $(\Delta,\Gamma,\Delta',A)$
characterising a principal typing derivation for $M$.  We will make this
relationship precise shortly.

\begin{eqnarray*}
  \J{\freeze{x}} &=& \{(\Delta,\Gamma,A) \mid x : A \in \Gamma \text{ and }
                     \Delta \vdash \Gamma \text{ and } \Delta \vdash A : \poly\}\\
  \J{x} &=& \{(\Delta,\Gamma,\delta(H)) \mid x : \forall \Delta'.H \in
                    \Gamma \text{ and }
      \Delta \vdash \rsubst : \Delta' \Rightarrow_\poly \cdot \text{ and }
                     \Delta \vdash \Gamma \text{ and } \Delta \vdash \rsubst(H) : \poly\}\\
\J{M\,N} &=& \{(\Delta,\Gamma,B) \mid (\Delta,\Gamma,A
            \to B) \in \J{M} \text{ and } (\Delta, \Gamma,A) \in \J{N}\}\\
\J{\lambda x.M} &=& \{(\Delta,\Gamma,S \to B) \mid (\Delta,(\Gamma,x:S),B) \in
                   \J{M} \}\\
\J{\lambda (x:A).M} &=& \{(\Delta,\Gamma,A \to B) \mid (\Delta,(\Gamma,x:A),B) \in
                   \J{M} \}\\
\J{\Let \; x = M\; \In \; N} &=& \{(\Delta,\Gamma,B) \mid
(\Delta', \Delta'') = \mgend{A'}{M} \text{ and }
     (\Delta, \Delta'', M, A') \Updownarrow A \\
&&\qquad\qquad\text{ and }
     (\Delta, \Gamma,\Delta'',A')  \in P(\J{M}) \text{ and }
     (\Delta, (\Gamma, x : A),B) \in \J{N} \}
\\
\J{\Let \; (x : A) = M\; \In \; N} &=& \{(\Delta,\Gamma,B) \mid (\Delta', A')
                                          = \msplit(A, M) \text{ and }
     ((\Delta, \Delta'),\Gamma,A') \in \J{M} \text{ and }
     (\Delta, (\Gamma, x : A),B) \in \J{N}\}
\smallskip\\
P(X) &=& \{(\Delta,\Gamma,\Delta',A') \mid ((\Delta,\Delta'),\Gamma,A') \in
            X \text{ and for all }\Delta'', A''\\
&&\qquad\qquad\qquad\begin{array}{l}
       \text{if }
       \Delta'' = \ftv(A'') - \Delta \text{ and }
       ((\Delta, \Delta''), \Gamma, A'') \in X \\
       \text{then there exists }
       \rsubst
       \text{ such that }
       \;\Delta  \vdash \rsubst : \Delta' \Rightarrow_\poly \Delta''
       \text{ and }
       \rsubst(A') = A''\}
\end{array}
\end{eqnarray*}
As usual, we adopt the implicit convention that variables $x$ are
$\alpha$-renamed so as not to conflict with other names already in scope; that
is, in the cases for lambda-abstraction we implicitly assume $x \notin
FV(\Gamma)$ and for let, assume  $x \notin FV(\Gamma,M)$.

\begin{lemma}
For each $M$, the set $\J{M}$ is well-defined.
\end{lemma}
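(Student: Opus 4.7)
The plan is to proceed by structural induction on $M$, verifying in each case that the defining equation only invokes $\J{-}$ on proper subterms and otherwise relies on deterministic operations whose inputs are already constructed. The base cases, $M = \freeze{x}$ and $M = x$, are immediate: $\J{M}$ is given as a set comprehension over triples $(\Delta,\Gamma,A)$ whose defining conditions are checked against $\Gamma$, decidable well-formedness judgements on $\Delta$, $\Gamma$, and $A$, and (for the non-frozen variable case) existence of an instantiation $\rsubst$, with no recursive reference to $\J{-}$.

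For the inductive cases, I would observe that in each of $\J{M\,N}$, $\J{\lambda x.M}$, $\J{\lambda (x:A).M}$, and $\J{\Let \; (x : A) = M\; \In \; N}$, the right-hand side is built from set comprehensions whose conditions only mention $\J{M'}$ for proper subterms $M'$ of $M$. By induction hypothesis each such $\J{M'}$ is a well-defined set, and so the comprehension yields a well-defined set for $\J{M}$.

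The only subtle case is the unannotated $\Let \; x = M\; \In \; N$, which is defined using $P(\J{M})$. The crucial observation is that $P(-)$ is a \emph{pure} set-to-set operation: it takes a set $X$ of triples and returns the set of quadruples $(\Delta,\Gamma,\Delta',A')$ satisfying a condition that is stated entirely in terms of membership in the \emph{given} $X$, with no recursive invocation of $\J{-}$ anywhere. Even though the condition contains a universal quantifier over all $(\Delta'',A'')$ with $((\Delta,\Delta''),\Gamma,A'') \in X$, this is harmless because we merely inspect the already-constructed set $\J{M}$ rather than recurring back into the function being defined. By the induction hypothesis $\J{M}$ is a well-defined set, hence $P(\J{M})$ is a well-defined set, and the set comprehension defining $\J{\Let \; x = M\; \In \; N}$ is therefore well-defined.

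No part of the argument is technically difficult; the point of the lemma is precisely to make explicit that the apparent circularity in Figures~\ref{fig:freezeml-typing} and~\ref{fig:freezeml-aux-judgments}, where $\meta{principal}$ syntactically contains a negative occurrence of the typing relation, has been sidestepped by a stratified definition: first we build $\J{M}$ by structural recursion on $M$ using the auxiliary $P$; afterwards the typing relation is \emph{defined} as $(\Delta,\Gamma,A) \in \J{M}$. The one place where care is needed in writing this up is to flag explicitly that the universal quantification inside $P$ does not constitute a recursive call, so that the structural induction really is well-founded.
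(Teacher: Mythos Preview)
Your proposal is correct and follows exactly the same approach as the paper: structural induction on $M$, observing that each clause of $\J{-}$ only refers to $\J{-}$ applied to immediate subterms. The paper's proof is a one-sentence remark to this effect, whereas you have spelled out the cases and, in particular, made explicit the key observation that $P(-)$ is a set-to-set operation with no hidden recursion; this elaboration is faithful to the paper's intent and adds nothing incorrect.
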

\begin{proof}
  Straightforward by induction on $M$, since in each case $\J{M}$ is defined
  in terms of $\J{-}$ applied to immediate subterms of $M$.
\end{proof}

\begin{definition}
We define the typing relation $\Delta;\Gamma \vdash M : A$ in terms
of $\J{-}$, and $\meta{principal}$ in terms of $P(-)$:
\[
\begin{array}{rcl}
\Delta;\Gamma \vdash M : A &\iff&
(\Delta,\Gamma,A) \in \J{M}\\
\meta{principal}(\Delta,\Gamma,M,\Delta',A) &\iff& (\Delta,\Gamma,\Delta',A) \in
  P(\J{M})\;.
\end{array}
\]
\end{definition}
  We will show that this relation satisfies all
of the rules listed in Figures~\ref{fig:freezeml-typing}
and~\ref{fig:freezeml-aux-judgments} and that the rules are invertible.
We first show that $\meta{principal}$ satisfies the definition given in Figure~\ref{fig:freezeml-aux-judgments}.

\begin{lemma}
$\meta{principal}(\Delta,\Gamma,M,\Delta',A')$ holds if and only if $\typ{\Delta,\Delta';\Gamma}{M:A'}$  and
 for all $\Delta'', A''$ if $\Delta'' = \ftv(A'') - \Delta$ and
      $\typ{\Delta, \Delta''; \Gamma}{M:A''}$ then there exists
       $\rsubst$
       such that
      $\Delta  \vdash \rsubst : \Delta' \Rightarrow_\poly \Delta''$ and
      $\rsubst(A') = A''$.
\end{lemma}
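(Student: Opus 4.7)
The plan is straightforward: this lemma is essentially a definitional unfolding, relying on the preceding lemma that $\J{M}$ is well-defined and on the way $\meta{principal}$ and $\typ{-}{-}$ have just been defined in terms of $\J{-}$ and $P(-)$. I will prove the biconditional by chaining three equivalences: (i) the definition of $\meta{principal}$, (ii) the definition of the auxiliary function $P$, and (iii) the definition of the typing relation.

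Concretely, first I would rewrite $\meta{principal}(\Delta,\Gamma,M,\Delta',A')$ as $(\Delta,\Gamma,\Delta',A') \in P(\J{M})$. Next, by the definition of $P(-)$, this set membership is equivalent to the conjunction of $((\Delta,\Delta'),\Gamma,A') \in \J{M}$ together with the universally quantified clause: for all $\Delta'', A''$ with $\Delta'' = \ftv(A'') - \Delta$ and $((\Delta,\Delta''),\Gamma,A'') \in \J{M}$, there exists $\rsubst$ such that $\Delta \vdash \rsubst : \Delta' \Rightarrow_\poly \Delta''$ and $\rsubst(A') = A''$. Finally, translating each $\J{M}$-membership into the typing relation via $\typ{\Delta;\Gamma}{M:A} \iff (\Delta,\Gamma,A) \in \J{M}$ gives exactly the right-hand side of the biconditional.

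There is no real obstacle, since the two directions of the biconditional are the two directions of the same three equivalences. The only subtlety worth flagging is that the lemma statement does not include the side condition $\Delta' = \ftv(A') - \Delta$ that appears in the informal rendering of $\meta{principal}$ in Figure~\ref{fig:freezeml-aux-judgments}; this is consistent because $P$ itself does not impose such a condition on its first argument $\Delta'$, only on the universally quantified $\Delta''$. So no case splitting on this side condition is needed. The entire argument is thus a short chain of definitional rewrites, and it relies essentially only on the already-established fact that $\J{M}$ is well-defined (so that $P(\J{M})$ makes sense) together with the above definitions.
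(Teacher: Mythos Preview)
Your proposal is correct and mirrors the paper's proof exactly: both argue by a chain of three definitional unfoldings ($\meta{principal}$ in terms of $P(\J{M})$, then $P$'s defining condition, then the typing relation in terms of $\J{M}$). Your remark about the absence of the side condition $\Delta' = \ftv(A') - \Delta$ is also accurate, since the formal definition of $P$ does not impose it.
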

\begin{proof}
By unfolding definitions:
\begin{eqnarray*}
\meta{principal}(\Delta,\Gamma,M,\Delta',A')
&\iff&
(\Delta,\Gamma,\Delta',A') \in
  P(\J{M})\\
&\iff&
 ((\Delta,\Delta'),\Gamma,A') \in
             \J{M} \text{ and for all }\Delta'', A''\\
&&\qquad\begin{array}{l}
       \text{if }
       \Delta'' = \ftv(A'') - \Delta \text{ and }
       ((\Delta, \Delta''), \Gamma, A'') \in \J{M} \\
       \text{then there exists }
       \rsubst
       \text{ such that }
       \;\Delta  \vdash \rsubst : \Delta' \Rightarrow_\poly \Delta''
       \text{ and }
       \rsubst(A') = A''
\end{array}\\
&\iff&
\typ{\Delta,\Delta';\Gamma}{M:A'} \text{ and for all }\Delta'', A''\\
&&\qquad\begin{array}{l}
       \text{if }
       \Delta'' = \ftv(A'') - \Delta \text{ and }
       \typ{\Delta, \Delta'';\Gamma}{M:A''}\\
       \text{then there exists }
       \rsubst
       \text{ such that }
       \;\Delta  \vdash \rsubst : \Delta' \Rightarrow_\poly \Delta''
       \text{ and }
       \rsubst(A') = A''
\end{array}
\end{eqnarray*}
\end{proof}

Likewise, the next lemma shows that all of the inference rules listed in
Figure~\ref{fig:freezeml-typing} hold of the typing relation as defined using $\J{}$.  This is largely straightforward, but the
details are given explicitly for the cases involving $\meta{principal}$ and \freezemlLab{Let},
in order to make it clear that there is no circularity.

\begin{lemma}
~\\
  \begin{enumerate}
  \item If $x : A \in \Gamma$ and $\Delta \vdash \Gamma$ and $\Delta \vdash A
  : \poly$ then $\typ{\Delta; \Gamma}{\freeze{x} : A}$.
\item If $x : \forall \Delta'.H \in \Gamma$ and $\Delta \vdash \Gamma$ and $\Delta \vdash \rsubst(H)
  : \poly$ and $\Delta \vdash \rsubst : \Delta'
  \Rightarrow_\poly \cdot$ then $\typ{\Delta; \Gamma}{x : \rsubst(H)}$.
\item If $\typ{\Delta; \Gamma}{M : A \to B}$ and $\typ{\Delta; \Gamma}{N : A}$ then
    $\typ{\Delta; \Gamma}{M\,N : B}$.
\item If $\typ{\Delta; \Gamma, x : S}{M : B}$ then
   $\typ{\Delta; \Gamma}{\lambda x.M : S \to B}$.
\item If $\typ{\Delta; \Gamma, x : A}{M : B}$ then
   $\typ{\Delta; \Gamma}{\lambda (x:A).M : A \to B}$.
\item If the following hold
\[
\bl
(\Delta', \Delta'') = \mgend{A'}{M} \\
(\Delta, \Delta'', M, A') \Updownarrow A \\
\typ{\Delta,\Delta''; \Gamma}{M:A'} \\
\typ{\Delta; \Gamma, x : A}{N:B} \\
\meta{principal}(\Delta,\Gamma,M,\Delta'',A') \\
\el
\]
then $\typ{\Delta;\Gamma}{\Let \; x = M\; \In \; N: B}$.
\item If the following hold
\[
\bl
(\Delta', A') = \msplit(A, M) \\
\typ{\Delta, \Delta';\Gamma}{M:A'} \\
\typ{\Delta; \Gamma, x : A}{N:B} \\
\el
\]
then $\typ{\Delta;\Gamma}{\Let \; (x : A) = M\; \In \; N:B}$.

  \end{enumerate}
\end{lemma}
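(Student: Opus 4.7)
The plan is to prove each part by expanding the definition $\typ{\Delta;\Gamma}{M:A} \iff (\Delta,\Gamma,A) \in \J{M}$ and then matching the hypotheses of each rule against the set-builder condition defining $\J{M}$ for the appropriate term constructor. In every case, the recursive definition of $\J{M}$ is chosen precisely so that its membership condition mirrors the premises of the corresponding inference rule. Hence each part of the lemma should reduce, after unfolding, to a direct check that the premises imply membership in the defining set.

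Parts (1)--(5) and (7) are essentially routine. For (1), the three hypotheses $x : A \in \Gamma$, $\Delta \vdash \Gamma$, and $\Delta \vdash A : \poly$ are exactly the conditions in the definition of $\J{\freeze{x}}$. Part (2) is similar, using the definition of $\J{x}$ with the instantiation $\rsubst$. Parts (3)--(5) unfold $\J{M\,N}$, $\J{\lambda x.M}$, and $\J{\lambda(x:A).M}$ respectively; each is defined in terms of $\J{-}$ applied to immediate subterms, so the assumed typing derivations for subterms (which by definition are membership facts in $\J{-}$ applied to those subterms) transport directly into the required membership in $\J{M}$. Part (7) follows the same pattern for $\J{\Let\;(x:A) = M\;\In\;N}$, simply noting that the $\msplit$ condition appears identically in the definition.

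The only subtle case is (6). After unfolding, $\meta{principal}(\Delta,\Gamma,M,\Delta'',A')$ is by definition $(\Delta,\Gamma,\Delta'',A') \in P(\J{M})$. Inspecting the definition of $P$ one sees that this already implies $((\Delta,\Delta''),\Gamma,A') \in \J{M}$, which is equivalent to the premise $\typ{\Delta,\Delta'';\Gamma}{M:A'}$ (so in fact that premise is redundant). Combined with the remaining hypotheses $(\Delta',\Delta'') = \mgend{A'}{M}$, $(\Delta, \Delta'', M, A') \Updownarrow A$, and $(\Delta,(\Gamma,x:A),B) \in \J{N}$, the defining condition for $(\Delta,\Gamma,B) \in \J{\Let\;x = M\;\In\;N}$ is exactly satisfied, giving $\typ{\Delta;\Gamma}{\Let\;x=M\;\In\;N : B}$.

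The main conceptual hurdle to address is reassuring the reader that no circularity arises in (6). The key observation is that $P(\J{M})$ is an operation on the already-constructed set $\J{M}$, and that set is defined by recursion on the strictly smaller subterm $M$ --- not on the whole $\Let$ term. So the reference to $\meta{principal}$ in the \freezemlLab{Let} rule, although it mentions typing of $M$, refers only to $\J{M}$, which is defined before $\J{\Let\;x=M\;\In\;N}$ is formed. I would also double-check that the $\alpha$-convention on bound variables $x$ in lambda- and let-cases causes no issue with the set-builder comprehensions (i.e., that $x \notin FV(\Gamma)$ or $FV(\Gamma,M)$ holds without loss of generality), since this is implicit in the construction of $\J{-}$.
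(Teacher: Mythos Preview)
Your proposal is correct and follows essentially the same approach as the paper: unfold the definition $\typ{\Delta;\Gamma}{M:A} \iff (\Delta,\Gamma,A) \in \J{M}$ in each case and check that the hypotheses match the set-builder conditions, with only the \freezemlLab{Let} case deserving explicit attention. Your extra remarks---that the premise $\typ{\Delta,\Delta'';\Gamma}{M:A'}$ is redundant given $\meta{principal}$, and the explanation of why no circularity arises---are correct elaborations that the paper either leaves implicit here or addresses in the surrounding well-foundedness discussion.
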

\begin{proof}
  In each case, the reasoning is straightforward by unfolding definitions.  We
  give the details of the case for \freezemlLab{Let}.

  Assume the following:
\[
\bl
(\Delta', \Delta'') = \mgend{A'}{M} \\
(\Delta, \Delta'', M, A') \Updownarrow A \\
\typ{\Delta,\Delta''; \Gamma}{M:A'} \\
\typ{\Delta; \Gamma, x : A}{N:B} \\
\meta{principal}(\Delta,\Gamma,M,\Delta'',A')
\el
\]

By definition, we also know that $(\Delta,(\Gamma,x:A),B) \in \J{N}$ and
$(\Delta,\Gamma,\Delta'',A') \in P(\J{M})$.  These are the required facts (along
with the first two) to conclude that $(\Delta,\Gamma,B) \in \J{\Let \; x = M\;
  \In \; N: B}$, which is equivalent by definition to $\typ{\Delta;\Gamma}{\Let \; x = M\; \In \; N: B}$.
\end{proof}

Furthermore, the rules are all \emph{invertible}; that is, if a conclusion of a
rule is derivable, then some instantiations of the hypotheses are also
derivable:
\begin{lemma}
~\\
  \begin{enumerate}
  \item If $\typ{\Delta; \Gamma}{\freeze{x} : A}$ then $x : A \in \Gamma$.
\item If  $\typ{\Delta; \Gamma}{x : A}$ then there exists $\Delta', H$ such that
  $x : \forall \Delta'.H \in \Gamma$ and $\Delta \vdash \rsubst : \Delta'
  \Rightarrow_\poly \cdot$.
\item If $\typ{\Delta; \Gamma}{M\,N : B}$ then there exists $A$ such that $\typ{\Delta; \Gamma}{M : A \to B}$ and $\typ{\Delta; \Gamma}{N : A}$.
\item If $\typ{\Delta; \Gamma}{\lambda x.M : S \to B}$ then $\typ{\Delta; \Gamma, x : S}{M : B}$.
\item If $\typ{\Delta; \Gamma}{\lambda (x:A).M : A \to B}$ then $\typ{\Delta; \Gamma, x : A}{M : B}$.
\item If  $\typ{\Delta;\Gamma}{\Let \; x = M\; \In \; N: B}$ then there exist
  $\Delta',\Delta'',A'$ such that:
\[
\bl
(\Delta', \Delta'') = \mgend{A'}{M} \\
(\Delta, \Delta'', M, A') \Updownarrow A \\
\typ{\Delta,\Delta''; \Gamma}{M:A'} \\
\typ{\Delta; \Gamma, x : A}{N:B} \\
\meta{principal}(\Delta,\Gamma,M,\Delta'',A') \\
\el
\]
\item If $\typ{\Delta;\Gamma}{\Let \; (x : A) = M\; \In \; N:B}$ then there exist
  $\Delta',A'$ such that:
\[
\bl
(\Delta', A') = \msplit(A, M) \\
\typ{\Delta, \Delta';\Gamma}{M:A'} \\
\typ{\Delta; \Gamma, x : A}{N:B} \\
\el
\]
\end{enumerate}
\end{lemma}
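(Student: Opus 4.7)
The plan is to prove each case of the invertibility lemma by directly unfolding the definition of $\J{-}$ and the encoding $\typ{\Delta;\Gamma}{M:A} \iff (\Delta,\Gamma,A) \in \J{M}$. Because $\J{M}$ is defined by recursion on $M$ with exactly one defining equation per syntactic form, each membership assertion unpacks uniquely to the hypotheses of the corresponding inference rule. Thus the proof is really a collection of short calculational arguments, one per term shape, with no nested induction required.

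For the pure variable/lambda/application cases (1--5), I would simply unfold. For case (1), $(\Delta,\Gamma,A) \in \J{\freeze{x}}$ is defined exactly as $\{(\Delta,\Gamma,A) \mid x:A \in \Gamma, \ldots\}$, so we immediately extract $x:A \in \Gamma$. Case (2) is analogous, unpacking the equation for $\J{x}$ to obtain $\Delta', H, \rsubst$ with the required properties. Case (3) unpacks $\J{M\,N}$ to yield an intermediate $A$ with $(\Delta,\Gamma,A\to B) \in \J{M}$ and $(\Delta,\Gamma,A) \in \J{N}$, which re-encode as the two required typing judgements. Cases (4) and (5) are immediate from the equations for $\J{\lambda x.M}$ and $\J{\lambda(x:A).M}$.

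For the two let-cases (6) and (7), the approach is the same but involves one extra unfolding. In case (7), $(\Delta,\Gamma,B) \in \J{\Let(x:A)=M\In N}$ yields directly the triple $(\Delta', A') = \msplit(A,M)$ together with $((\Delta,\Delta'),\Gamma,A') \in \J{M}$ and $(\Delta,(\Gamma,x:A),B) \in \J{N}$, which are exactly the required typing hypotheses. Case (6) is the only one requiring a second unfolding: $(\Delta,\Gamma,B) \in \J{\Let\;x=M\In N}$ unpacks by definition to the generalisation/$\Updownarrow$ conditions, the typing hypothesis for $N$, and membership $(\Delta,\Gamma,\Delta'',A') \in P(\J{M})$; the latter is, by the preceding lemma characterising $\meta{principal}$, precisely $\meta{principal}(\Delta,\Gamma,M,\Delta'',A')$, and it also entails $((\Delta,\Delta''),\Gamma,A') \in \J{M}$, i.e.\ $\typ{\Delta,\Delta'';\Gamma}{M:A'}$.

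The main subtlety to guard against is the apparent negative occurrence of the typing relation inside $\meta{principal}$: one might worry that inversion in case (6) requires strong induction over derivations. The point is that $\J{-}$ is defined non-inductively and all seven cases dispatch purely on the term's outermost constructor, so no induction on derivations is needed — every case is an equational consequence of the definition of $\J{-}$ together with, for case (6) only, the previous lemma identifying $\meta{principal}$ with membership in $P(\J{M})$. This is the place where the stratified definition from \auxref{sec:well-founded-freezeml} pays off, and confirming that no circularity sneaks in through $P(-)$ will be the one step that deserves explicit attention in the writeup.
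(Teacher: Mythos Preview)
Your proposal is correct and matches the paper's own proof essentially step for step: both argue by unfolding the defining equation of $\J{-}$ for each syntactic form, and both handle case~(6) by first extracting $(\Delta,\Gamma,\Delta'',A') \in P(\J{M})$, then invoking the preceding lemma to identify this with $\meta{principal}(\Delta,\Gamma,M,\Delta'',A')$, which in turn yields the typing premise for $M$. Your remark that no induction on derivations is needed, because dispatch is purely on the outermost constructor of the term, is exactly the point the paper is making with its stratified definition.
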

\begin{proof}
  In each case, the reasoning is straightforward by unfolding definitions.  We
  again give the details of the case for \freezemlLab{Let}.

Suppose that $\typ{\Delta;\Gamma}{\Let \; x = M\; \In \; N: B}$ holds; that is,
$(\Delta,\Gamma,B) \in \J{\Let \; x = M\; \In \; N}$.  By definition,
\begin{eqnarray*}
(\Delta,\Gamma,B) \in \J{\Let \; x = M\; \In \; N} &\iff&
(\Delta', \Delta'') = \mgend{A'}{M} \text{ and }
     (\Delta, \Delta'', M, A') \Updownarrow A \\
&&\text{ and }
     (\Delta, \Gamma,\Delta'',A')  \in P(\J{M}) \text{ and }
     (\Delta, (\Gamma, x : A),B) \in \J{N} \\
&\iff&
(\Delta', \Delta'') = \mgend{A'}{M} \text{ and }
     (\Delta, \Delta'', M, A') \Updownarrow A \\
&&\text{ and }
\typ{\Delta,\Delta';\Gamma}{M:A'} \text{ and }
     \typ{\Delta; \Gamma, x : A}{N:B} \text{ and }
     \meta{principal}(\Delta, \Gamma,M,\Delta'',A')
\end{eqnarray*}
where in the last step we use the fact that $\meta{principal}(\Delta,
\Gamma,M,\Delta'',A')$ implies $\typ{\Delta,\Delta';\Gamma}{M:A'}$.
\end{proof}

Therefore, we may reason about the typing relation by induction on the structure
of $M$, and immediately applying inversion in each case.  It is also possible
to define functions structurally over derivations, provided that the principality information
is ignored.  For example, the translations in Figure~\ref{fig:trans-freezeml-to-f}
and~\auxref{app:freezeml-to-polyml} have this form.  To indicate that the
principality information is not used in recursion over derivations, this
assumption is greyed out.

In the previous inversion lemma, we did not mention the well-formedness
preconditions in the variable case.  Instead, we show that typing relations
always involve well-formed $\Gamma$ and $A$ with respect to $\Delta$:
\begin{lemma}
  If $\typ{\Delta;\Gamma}{M:A}$ then $\Delta \vdash \Gamma$ and $\Delta \vdash A
  : \poly$.
\end{lemma}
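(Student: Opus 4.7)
The plan is to proceed by induction on the structure of $M$, using the explicit definition of $\J{M}$ given earlier in this appendix. For the variable cases $M = \freeze{x}$ and $M = x$, the two desired conclusions $\Delta \vdash \Gamma$ and $\Delta \vdash A : \poly$ are baked directly into the set-builder side conditions of $\J{-}$, so there is nothing to prove. All other cases proceed by invoking the induction hypothesis on immediate subterms and then applying inversion principles for kinding and for environment well-formedness (interpreted point-wise as in the refined setting of Figure~\ref{fig:kinding}).

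For the lambda cases $\lambda x.M'$ and $\lambda (x:C).M'$, the IH applied to $M'$ yields $\Delta \vdash \Gamma, x:C$ and $\Delta \vdash B : \poly$ where $C$ is either the inferred monotype $S$ or the ascribed type $A$. Inverting the extended context gives $\Delta \vdash \Gamma$ and $\Delta \vdash C : \poly$; the \textsc{Cons} rule (preceded by an \textsc{Upcast} on either component if needed to align kinds) then yields $\Delta \vdash C \to B : \poly$. For the application case, the IH on both subterms provides $\Delta \vdash \Gamma$ directly as well as $\Delta \vdash A \to B : \poly$; the remaining obligation $\Delta \vdash B : \poly$ follows by inverting this arrow kinding, again noting that any intervening \textsc{Upcast} can only be from $\mono$ to $\poly$ and preserves kindedness of the components.

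The let cases are the main technical point. For both $\Let\; x = M'\; \In\; N$ and $\Let\; (x:A) = M'\; \In\; N$, applying the IH to $N$ already yields $\Delta \vdash \Gamma, x:A$ and $\Delta \vdash B : \poly$; inverting the first gives $\Delta \vdash \Gamma$, and the conclusion on $B$ is immediate. The only nontrivial step is verifying that the $A$ obtained by the $\Updownarrow$ judgement (respectively by $\msplit$) is genuinely well-kinded in $\Delta$, which is needed to make the IH applicable at type $(\Gamma, x:A)$. Here I would split on whether $M' \in \dec{GVal}$: in the generalising case $A = \forall \Delta''.A'$, and the IH applied to the subderivation for $M'$ yields $\Delta, \Delta'' \vdash A' : \poly$, from which iterated \textsc{ForAll} gives $\Delta \vdash A : \poly$; in the non-generalising unannotated case $A = \rsubst(A')$ for $\Delta \vdash \rsubst : \Delta'' \Rightarrow_\mono \cdot$, so the admissible composition rule stated just after Figure~\ref{fig:freezeml-substitution-application} gives $\Delta \vdash \rsubst(A') : \poly$; in the ascribed case, the analogous reasoning combines $\Delta, \Delta' \vdash A' : \poly$ with the structure of $\msplit$ to recover $\Delta \vdash A : \poly$.

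The main obstacle is the well-foundedness of this induction in the presence of the $\meta{principal}$ side condition on \textsc{Let}. However, as emphasised in this appendix, that side condition may be greyed out when performing recursion on typing derivations, so the structural induction on $M$ goes through without any circularity, relying only on the IH at strictly smaller terms and the standard inversion lemmas for the refined kinding judgement.
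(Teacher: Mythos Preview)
Your proposal is correct and follows the same approach as the paper: induction on $M$, with the let case handled by applying the IH to $N$ to obtain $\Delta \vdash \Gamma, x:A$ and $\Delta \vdash B : \poly$, then inverting the former to get $\Delta \vdash \Gamma$. However, the step you flag as ``nontrivial''---verifying that $A$ is well-kinded in order to \emph{apply} the IH---is unnecessary: the IH has no such precondition (its only hypothesis is the typing judgement for $N$, which holds by definition of $\J{\Let\ldots}$), and well-kindedness of $A$ is instead a \emph{consequence} of the IH via inversion on $\Delta \vdash \Gamma, x:A$; the paper makes exactly this observation in the remark immediately following the proof.
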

\begin{proof}
  By induction on $M$, then analysis of the corresponding case of $\J{-}$.  The
  cases for variables and frozen variables are immediate since the required
  relations are preconditions.  For most other cases, the induction hypothesis and then
  inversion on some subderivation suffices.  We give the details for let, to
  illustrate the required reasoning.

Suppose $(\Delta,\Gamma,B) \in \J{\Let \; x = M\; \In \; N}$.  By definition,
this means that $(\Delta,(\Gamma,x:A),B) \in \J{N}$ must also hold (among other
preconditions).  Therefore, by the induction hypothesis for $N$, we have $\Delta
\vdash \Gamma,x:A$ and $\Delta \vdash B : \poly$.  By inversion on derivations
of context well-formedness we have $\Delta \vdash \Gamma$, which concludes the
proof for this case.
\end{proof}
Notice in particular that in cases such as ascribed lambda and let, we need not
explicitly check that $A$ is well-formed with respect to $\Delta$, since it is
necessary by construction (though it also would not hurt to perform such a check).

\if 0
To illustrate the soundness of induction or recursion in simple cases where the
principality assumption is not used, we prove:

\begin{theorem}
  Suppose $X$ is a $M$-indexed family of sets of triples $(\Delta,\Gamma,A)$, closed under the following induction
  scheme:
  \begin{enumerate}
  \item If $x : A \in \Gamma$ then $(\Delta, \Gamma,A) \in X\sem{\freeze{A}}$.
\item If $x : \forall \Delta'.H \in \Gamma$ and $\Delta \vdash \rsubst : \Delta'
  \Rightarrow_\poly \cdot$ then $(\Delta, \Gamma,\rsubst(H)) \in X\sem{x}$.
\item If $(\Delta,\Gamma,A \to B) \in X\sem{M}$ and $(\Delta, \Gamma, A) \in X\sem{N}$ then
    $(\Delta, \Gamma, B) \in X\sem{M\,N}$.
\item If $(\Delta, (\Gamma, x : S),B) \in X\sem{M}$ then
   $(\Delta; \Gamma, S \to B)\in X\sem{\lambda x.M }$.
\item If $(\Delta, (\Gamma, x : A), B) \in X\sem{M}$ then
   $(\Delta,\Gamma,A \to B) \in X\sem{\lambda (x:A). M}$.
\item If $(\Delta', \Delta'') = \mgend{A'}{M}$ and
    $ (\Delta, \Delta'', M, A') \Updownarrow A$ and
     $((\Delta,\Delta''),\Gamma,A')\in X\sem{M}$ and
     $(\Delta, (\Gamma, x : A),B) \in X\sem{N}$
then $(\Delta,\Gamma, B) \in X\sem{\Let \; x = M\; \In \; N}$.
\item If $(\Delta', A') = \msplit(A, M)$ and
     $((\Delta, \Delta'),\Gamma,A') \in X\sem{M} $ and
    $(\Delta, (\Gamma x:,A),B) \in X\sem{N}$ then
$(\Delta,\Gamma,B)\in X\sem{\Let \; (x : A) = M\; \In \; N}$.
  \end{enumerate}
Then $\J{M} \subseteq X\sem{M}$.
\end{theorem}
\begin{proof}
  The proof is by induction on the structure of $M$.  As before, the cases not
  involving let or principality constraints are straightforward, and omitted.

  \begin{itemize}
  \item Let case.  $ \Let \; x = M\; \In \; N$.  Assume the induction
    hypothesis holds for $M$ and $N$.

For the first part, assume $(\Delta,\Gamma,A) \in \J{\Let \; x = M\; \In \; N}$.  This
implies that  there exist
  $\Delta',\Delta'',A'$ such that  $(\Delta', \Delta'') = \mgend{A'}{M}$ and
    $ (\Delta, \Delta'', M, A') \Updownarrow A$ and
     $((\Delta,\Delta''),\Gamma,A') \in \J{M}$ and
     $(\Delta, (Gamma, x : A),B) \in \J{N}$ and
$(\Delta,\Gamma,\Delta'',A')\in P(\J{M})$.  By the induction hypothesis
applied to the first two facts, we know $\J{M} \subseteq X\sem{M}$ and $\J{N}
\subseteq X\sem{M}$ and so have $((\Delta,\Delta''),\Gamma,A')\in X\sem{M}$ and
     $(\Delta, (\Gamma, x : A),B) \in X\sem{N}$.
Notice that the assumptions of part (6) of the induction scheme are satisfied,
since there is no assumption corresponding to principality, so
that we can conclude $(\Delta,\Gamma,\Let \; x = M\; \In \; N, B) \in S$, as
required.
  \end{itemize}

\end{proof}

\fi

\section{Example Translation from \freezeml to System F}
\label{sec:freezeml-to-poly-ml}

Below is an example translation from \freezeml to System F, where $\dec{app}$,
$\dec{auto}$, and $\dec{id}$ have the types given in
Figure~\ref{fig:function-signatures}.
\[
\bl
    \quad\transfreezemlfof{\Let \: \dec{app} = \lambda f. \lambda z. f \: z \: \In \:
           \dec{app} \: \freeze{\dec{auto}} \: \freeze{\dec{id}}} \\[1ex]
=   \Let \: \dec{app}^{\forall a\,b. (a \to b) \to a \to b} =\\
\hspace{1cm}\Lambda a \,b. \transfreezemlfof{\lambda f. \lambda z. f \: z} \ \In \: \transfreezemlfof{\dec{app} \: \freeze{\dec{auto}} \: \freeze{\dec{id}}}\\[1ex]
=   ( \lambda \dec{app}^{\forall a\,b. (a \to b) \to a \to b}. \\
\hspace{1cm}   \transfreezemlfof{\dec{app} \: \freeze{\dec{auto}} \: \freeze{\dec{id}}} ) \ ( \Lambda a \,b. \transfreezemlfof{\lambda f. \lambda z. f \: z} ) \\[1ex]
=   ( \lambda \dec{app}^{\forall a\,b. (a \to b) \to a \to b}. \\
\hspace{1cm}   \transfreezemlfof{\dec{app} \: \freeze{\dec{auto}} \: \freeze{\dec{id}}} ) \ ( \Lambda a \,b. \lambda f^{a \to b}. \lambda z^a. f \: z ) \\
\el
\]
where subterm
$\transfreezemlfof{\dec{app} \: \freeze{\dec{auto}} \: \freeze{\dec{id}}}$
further translates as:
\[
   \transfreezemlfof{\dec{app} \: \freeze{\dec{auto}} \: \freeze{\dec{id}}}
    =  \dec{app}\,
          \bl
          ((\forall a. a \to a ) \to (\forall a. a \to a )) \\
          (\forall a. a \to a) \\
           \dec{auto} \\
           \dec{id} \\
          \el \\
\]

The type of the whole translated term is $\forall a. a \to a$.  The
translation enjoys a type preservation property.

\section{Translation from \freezeml to Poly-ML}
\label{app:freezeml-to-polyml}

\newcommand{\transfreezemlttopolyml}[2][]{\llbracket #2 \rrbracket_{\tau}}
\newcommand{\transfreezemlttSigmatopolyml}[2][]{\llbracket #2 \rrbracket_{\sigma}}
\newcommand{\transfreezemlttVarSigmatopolyml}[2][]{\llbracket #2 \rrbracket_{\varsigma}}
\newcommand{\transfreezemlfopolyml}[2][]{\left\llbracket #2 \right\rrbracket}
\newcommand{\polyopen}[1]{\langle #1 \rangle}

\paragraph{Types}
Let $\epsilon$ be a fixed label.
Then $\transfreezemlttopolyml{}$ is defined as follows:
\begin{align*}
\transfreezemlttopolyml{a} &= a \\
\transfreezemlttopolyml{A_1 \to A_2} &= \transfreezemlttopolyml{A_1} \to \transfreezemlttopolyml{A_2} \\
\transfreezemlttopolyml{\forall \Delta.H} &= [\forall \Delta. \transfreezemlttopolyml{H} ]^\epsilon \qquad\qquad \text{if $\Delta \neq \cdot$ }
\end{align*}

Further, $\transfreezemlttSigmatopolyml{}$ is defined as follows,
meaning that $\transfreezemlttSigmatopolyml{}$ behaves like $\transfreezemlttopolyml{}$ but leaves quantifiers at the toplevel unboxed.
\begin{align*}
\transfreezemlttSigmatopolyml{a} &= \transfreezemlttopolyml{a} \\
\transfreezemlttSigmatopolyml{A_1 \to A_2} &= \transfreezemlttopolyml{A_1 \to A_2} \\
\transfreezemlttSigmatopolyml{\forall \Delta.H} &= \forall \Delta. \transfreezemlttopolyml{H}  \qquad\qquad \text{if $\Delta \neq \cdot$ }
\end{align*}

Finally, $\transfreezemlttVarSigmatopolyml{A}$ is defined as $ \forall \epsilon. \transfreezemlttopolyml{A}$ and is applied to typing environments by applying $\transfreezemlttVarSigmatopolyml{}$ to the types therein.

\paragraph{Terms (Core)}
\[
\begin{array}{rcll}
  \transfreezemlfopolyml[\Bigg]
    {\inferrule
      {x : A \in \Gamma}
      {\typ{\Delta; \Gamma}{\freeze{x} : A}}}
  &= &x \\[4ex]

  \hspace{-10pt}\transfreezemlfopolyml[\ginormous]{\raisebox{-7pt}
    {$\inferrule
      {{x : \forall \Delta'.H \in \Gamma} \\\\
        \Delta \vdash \rsubst : \Delta' \Rightarrow_\poly \cdot
      }
      {\typ{\Delta; \Gamma}{x : \rsubst(H)}}
    $}} &=
          &\begin{cases}
            x & \text{if } \Delta' = \cdot \\
            \polyopen{x} & \text{otherwise}
            \end{cases} \\[6ex]

\hspace{-10pt}\transfreezemlfopolyml[\ginormous]{\raisebox{-7pt}
    {$\inferrule
      {\typ{\Delta; \Gamma}{M : A \to B} \\\\
        \typ{\Delta; \Gamma}{N : A}
      }
      {\typ{\Delta; \Gamma}{M\,N : B}}
    $}}
  &= &\transfreezemlfopolyml{M} \; \transfreezemlfopolyml{N} \\[6ex]

  \transfreezemlfopolyml[\Bigg]
    {\inferrule
      {\typ{\Delta; \Gamma, x : S}{M : B}}
      {\typ{\Delta; \Gamma}{\lambda x.M : S \to B}}
    }
  &= &\lambda x.\transfreezemlfopolyml{M} \\[4ex]

  \transfreezemlfopolyml[\Bigg]
    {\inferrule
      {\typ{\Delta; \Gamma, x : A}{M : B}}
      {\typ{\Delta; \Gamma}{\lambda (x : A).M : A \to B}}
    }
  &= &\lambda (x : \transfreezemlttopolyml{A}).\transfreezemlfopolyml{M} \\[1ex]
  &=& \lambda x. \Let \: x =  (x : \transfreezemlttopolyml{A}) \: \In \: \transfreezemlfopolyml{M} \\[4ex]

\end{array}
\]

\paragraph{Terms (Let, value-restricted)}

\[
\begin{array}{rcll}

\transfreezemlfopolyml[\Ginormouser]
    {\raisebox{-33pt}{$\inferrule*
      { (\Delta', \Delta'') = \mgend{A'}{M} \\\\
        \Delta, \Delta''; \Gamma \vdash M : A' \\\\
        (\Delta, \Delta'', M, A') \Updownarrow A \\\\
        \typ{\Delta; \Gamma, x : A}{N : B} \\\\
        {\color{gray}\meta{principal}(\Delta, \Gamma, M, \Delta'', A')}
     }
      {\typ{\Delta; \Gamma}{\Let \; x = M\; \In \; N : B}}
    $}}
  &=
  &\begin{cases}
    \Let \: x = [ \transfreezemlfopolyml{M} : \transfreezemlttSigmatopolyml{A}] \; \In\; \transfreezemlfopolyml{N}
    & \text{if $\Delta' \neq \cdot$ \begin{minipage}[t]{4cm}\fe{could omit the type annotation if using principal-type version of [ ]}\end{minipage}} \\
    \Let \: x = \transfreezemlfopolyml{M} \; \In\; \transfreezemlfopolyml{N} & \text{otherwise } \\
  \end{cases} \\[16ex]

\end{array}
\]
Note that in the first case above, the type annotation
$\transfreezemlttSigmatopolyml{A}$ would not be needed if Poly-ML was extended
with a boxing operator that does not require a type annotation but uses the
principal type instead.

\[
\begin{array}{rcll}

\transfreezemlfopolyml[\Ginormous]
    {\raisebox{-27pt}{$\inferrule*
      { (\Delta', A') = \msplit(A, M) \\\\
        \typ{\Delta, \Delta'; \Gamma}{M : A'} \\\\
        A = \forall \Delta'.A' \\\\
        \typ{\Delta; \Gamma, x : A}{N : B}
      }
      {\typ{\Delta; \Gamma}{\Let \; (x : A) = M\; \In \; N : B}}
    $}} &=

  &\begin{cases}
    \Let \: x = [ \transfreezemlfopolyml{M} : \transfreezemlttSigmatopolyml{A}] \; \In\; \transfreezemlfopolyml{N}
    & \text{if $\Delta' \neq \cdot$ } \\
    \Let \: x = \transfreezemlfopolyml{M} \; \In\; \transfreezemlfopolyml{N} & \text{otherwise } \\
  \end{cases} \\[16ex]

\end{array}
\]

\begin{lemma}
If $\typ{\Delta;\Gamma}{M : A}$ in FreezeML, then $\typ{\transfreezemlttVarSigmatopolyml{\Gamma}}{ \transfreezemlfopolyml{M} : \transfreezemlttopolyml{A} }$ in Poly-ML.
\end{lemma}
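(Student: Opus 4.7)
The plan is to proceed by structural induction on $M$, invoking the inversion lemma for the FreezeML typing judgement (justified in \auxref{sec:well-founded-freezeml}, since the translation does not depend on the $\meta{principal}$ side-condition and is displayed with that assumption greyed out). In parallel, I will need a substitution/instantiation lemma: for any FreezeML instantiation $\Delta \vdash \rsubst : \Delta' \Rightarrow_\poly \cdot$, the pointwise application of $\transfreezemlttopolyml{-}$ yields valid Poly-ML simple-type instantiations for $\Delta'$ (exploiting the fact that boxed polymorphic types are themselves simple types in Poly-ML), together with the compositionality identity $\transfreezemlttopolyml{\rsubst(H)} = \rsubst^\dagger(\transfreezemlttopolyml{H})$, where $\rsubst^\dagger$ is the translated instantiation. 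This identity is the workhorse used to align FreezeML's implicit polymorphic instantiation with Poly-ML's box-opening.

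For the leaf cases, $\freeze{x}$ is translated to $x$; from $x : A \in \Gamma$ we get $x : \forall \epsilon. \transfreezemlttopolyml{A}$ in the translated environment, which after instantiating $\epsilon$ gives exactly $\transfreezemlttopolyml{A}$. For unfrozen $x$ with $x : \forall \Delta'.H \in \Gamma$, the environment contains $x : \forall \epsilon. [\forall \Delta'. \transfreezemlttopolyml{H}]^\epsilon$ when $\Delta'\neq \cdot$, in which case the explicit $\polyopen{\cdot}$ opens the box and instantiates $\Delta'$ via $\rsubst^\dagger$ to give $\rsubst^\dagger(\transfreezemlttopolyml{H}) = \transfreezemlttopolyml{\rsubst(H)}$; when $\Delta' = \cdot$ no box is present and only the $\epsilon$ instantiation is needed. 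The lambda and application cases follow directly from the induction hypothesis, using that $\transfreezemlttVarSigmatopolyml{\cdot}$ is pointwise and that $\transfreezemlttopolyml{A \to B} = \transfreezemlttopolyml{A} \to \transfreezemlttopolyml{B}$. The annotated lambda case additionally uses the unfolding $\lambda (x:T).N \equiv \lambda x.\Let\; x = (x:T)\; \In\; N$ shown in the translation table.

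The let cases are where the real work lies. For $\Let\; x = M \;\In\; N$ with $M$ a guarded value and generalisable variables $\Delta'' \neq \cdot$, the IH gives a Poly-ML typing of $\transfreezemlfopolyml{M}$ at $\transfreezemlttSigmatopolyml{A} = \forall \Delta''. \transfreezemlttopolyml{A'}$ under an extended kind context; boxing produces a term of simple type $[\forall \Delta''. \transfreezemlttopolyml{A'}]^\epsilon = \transfreezemlttopolyml{\forall \Delta''.A'}$, and then Poly-ML let-generalisation over $\epsilon$ (the only free variable introduced by the box) yields the scheme $\forall \epsilon. \transfreezemlttopolyml{\forall \Delta''.A'} = \transfreezemlttVarSigmatopolyml{\forall \Delta''.A'}$, which is exactly what the IH on $N$ demands in the translated environment. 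The non-generalising sub-case and the two \freezemlLab{Let-Ascribe} sub-cases are analogous, with $\msplit$ ensuring that the quantifiers moved into the box really do come from the RHS. I also need the straightforward observation that the label $\epsilon$ used in the translation is not free anywhere else, so Poly-ML generalisation can indeed abstract over it.

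The main obstacle I anticipate is ensuring that Poly-ML's generalisation step picks up \emph{exactly} the quantifier $\forall \epsilon$ demanded by $\transfreezemlttVarSigmatopolyml{-}$, and no other variables from the ambient context $\Delta$ or label environment. This depends on a careful accounting of which label and type variables are free in the translated RHS, and on the fact that FreezeML's $\Delta''$ generalisation coincides, after translation, with the quantifiers already packaged inside the box (so Poly-ML is not asked to re-generalise them). A secondary subtlety is the $\Updownarrow$ non-generalising branch, where FreezeML monomorphically instantiates $\Delta''$: here the Poly-ML translation must apply the corresponding instantiation to the let body before binding, which can be handled by the substitution lemma mentioned above.
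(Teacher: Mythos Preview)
The paper states this lemma without proof: the appendix gives only the translation tables and then the bare lemma statement. There is therefore nothing to compare your proposal against.

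That said, your plan is the natural one and mirrors the structure of the paper's proof of Theorem~\ref{theorem:freezeml-to-f} (the translation to System~F): induction on $M$, inversion on the typing derivation ignoring the $\meta{principal}$ hypothesis, and a compositionality lemma for the type translation under instantiation. The ingredients you single out---the identity $\transfreezemlttopolyml{\rsubst(H)} = \rsubst^\dagger(\transfreezemlttopolyml{H})$ and the analysis of how Poly-ML's let-generalisation interacts with the box---are the right ones.

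One point to be careful about: in your sketch you treat $\epsilon$ as if it were an ordinary type variable subject to ML-style generalisation, but in Poly-ML $\epsilon$ is a \emph{label}, and Poly-ML's schemes quantify over labels separately from type variables. The claim that ``$\epsilon$ is the only free variable introduced by the box'' and that generalisation abstracts over exactly it relies on Poly-ML's label discipline (labels on boxes are fresh and generalisable) rather than on ordinary type-variable generalisation. Your argument goes through, but you will need to invoke the Poly-ML rules for label introduction and generalisation explicitly, not just ML generalisation. Similarly, in the $\freeze{x}$ case the elimination of the outer $\forall \epsilon$ in $\transfreezemlttVarSigmatopolyml{A}$ is a label instantiation, not a type-variable instantiation.
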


\section{Proofs from Section~\ref{sec:translations}}

{



For convenience,
we use the following (derivable) System F typing rules, allowing $n$-ary type applications
and abstractions:
\begin{mathpar}
\inferrule*[Right=\fLab{PolyApp*}]
    {\typ{\Delta;\Gamma}{M : \forall \Delta' . B} \\
      \Delta' = a_1, \dotsc, a_n \\
      \many{A} = A_1, \dotsc, A_n
    }
    {\typ{\Delta;\Gamma}{M\,\many{A} : B[A_1/a_1] \cdots [A_n/a_n]} \\\\
    \text{where $\many{A}$ may be empty} } \\
\inferrule*[Right=\fLab{PolyLam*}]
    {\typ{\Delta, \Delta';\Gamma}{V : A}}
    {\typ{\Delta;\Gamma}{\Lambda \, \Delta' . V : \forall \Delta' . A}}
\end{mathpar}

Recall that we have defined $\Let \: x^A = M \: \In \: N$ as syntactic
sugar for $(\lambda x^A. N) \: M$ in System F .

For readability, we preserve the syntactic sugar in the proofs and use the
following typing rule:

\begin{gather*}
{\inferrule*[right=\fLab{Let}]
      {
        \typ{\Delta; \Gamma}{M : A} \\
        \typ{\Delta; \Gamma, x : A}{N : B}
      }
      {\typ
        {\Delta; \Gamma}
        {\Let \: x^A = M \: \In \: N : B }
      }
    }
\end{gather*}

\newcommand{\hypo}[1]{#1}

\begin{lemma}
\label{lem:system-f-value-translates-to-fml-value}
For each System F value $V$, $\transftofreezeml{V}$ is a \freezeml value.

\end{lemma}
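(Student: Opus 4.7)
The plan is to prove this by straightforward structural induction on the System F value $V$, using the grammar $V ::= I \mid \lambda x^A.M \mid \Lambda a.V$ with instantiations $I ::= x \mid I\,A$. Since instantiations are themselves values, I will treat them as two subcases (variables and type applications) within the single induction. The key auxiliary observation I will establish up front is that for any FreezeML value $W$, the expression $W@ \equiv \Let\; y = W \;\In\; y$ is itself a FreezeML value, matching the clause $\Let\; x = V \;\In\; W$ of the \dec{Val} grammar in Figure~\ref{fig:freezeml-syntax} (with $y$ as the tail-position value).

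For the base case $V = x$, the translation yields $\freeze{x}$, which is directly in \dec{Val}. For $V = \lambda x^A.M$, the translation $\lambda(x:A).\transftofreezeml{M}$ is in \dec{Val} by inspection, independently of $M$. For $V = \Lambda a.V'^B$, the translation is $\Let\;(x : \forall a. B) = (\transftofreezeml{V'})@ \;\In\; \freeze{x}$. By the induction hypothesis $\transftofreezeml{V'}$ is a FreezeML value, so by the auxiliary observation $(\transftofreezeml{V'})@$ is a FreezeML value; combined with $\freeze{x}$ being a value, the whole term matches the $\Let\;(x:A) = V \;\In\; W$ clause of \dec{Val}.

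The remaining case is the instantiation subcase $V = I\,A$. Here $I$ is itself an instantiation and hence a System F value, so the induction hypothesis applies to give that $\transftofreezeml{I}$ is a FreezeML value. The translation $\Let\;(x : B[A/a]) = (\transftofreezeml{I})@ \;\In\; \freeze{x}$ then has the same shape as in the type-abstraction case, and the same reasoning (auxiliary observation plus closure of \dec{Val} under annotated let with value body and frozen-variable tail) shows it is a FreezeML value.

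I do not anticipate any real obstacle: the lemma is essentially a bookkeeping check that the four non-trivial translation clauses land in the \dec{Val} grammar. The only subtle point is making explicit that the desugaring of $(-)@$ produces a value when its argument is a value, which is why the translation of $\Lambda a.V$ must wrap $\transftofreezeml{V}$ in $(-)@$ rather than binding $x$ to $\transftofreezeml{V}$ directly (as discussed in the prose preceding the lemma). Once that observation is stated, each case is immediate from the induction hypothesis.
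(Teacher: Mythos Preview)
Your proposal is correct and takes essentially the same approach as the paper: a structural induction on $V$. The paper's own proof is just the single line ``By induction on structure of $V$,'' so your write-up is simply a fully spelled-out version of that argument, including the auxiliary observation about $(-)@$ preserving valuehood that the paper leaves implicit.
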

\begin{proof}
By induction on structure of $V$.
\end{proof}

\ftofreezemltypepreservation*
\begin{proof}
The proof is by well-founded induction on derivations of $\typ{\Delta;\Gamma}{M
  : A}$.  This means that we may apply the induction hypothesis to any judgement
appearing in a subderivation, not just to those appearing in the immediate
ancestors of the conclusion.
We slightly strengthen the induction hypothesis so that the $A$ is the \emph{unique}
type of $\transftofreezeml{M}$.
Formally, we show that if $\typ{\Delta;\Gamma}{M : A}$ holds in System F,
then $\typ{\Delta;\Gamma}{\transftofreezeml{M} : A}$ holds in \freezeml and
for all $B$ with $\typ{\Delta;\Gamma}{\transftofreezeml{M} : B}$ we have $A = B$.
We show how to extend $\transftofreezeml{-}$ to a function that translates
System F type derivations to \freezeml type derivations.

\newcommand{\transto}{\leadsto_F}

\begin{itemize}
\item Case \fLab{Var}, $\mathcal{J} =  \typ{\Delta; \Gamma}{x : A}$:
\begin{equations}
 \transftofreezeml[\bigg]
  {\inferrule
    {x : A \in \Gamma}
    {\typ{\Delta; \Gamma}{x : A}}
  }

  &\Longrightarrow
  &\inferrule*[vcenter]
    {x : A \in \Gamma}
    {\typ{\Delta; \Gamma}{\freeze{x} : A}} \\[4ex]
\end{equations}

\item Case \fLab{Lam},  $\mathcal{J} =  \typ{\Delta;\Gamma}{\lambda x^A.M : A \to B}$:
\begin{equations}
  \transftofreezeml[\bigg]
  {\inferrule
    {\typ{\Delta;\Gamma, x : A}{M : B}}
    {\typ{\Delta;\Gamma}{\lambda x^A.M : A \to B}}
  }

  &\Longrightarrow
  &\inferrule*[vcenter]
    {\hypo{\typ{\Delta; \Gamma, x : A}{\transftofreezeml{M} : B}}}
    {\typ{\Delta; \Gamma}{\lambda (x : A).\transftofreezeml{M} : A \to B}} \\[4ex]
\end{equations}

\item Case \fLab{App}, $\mathcal{J} = \typ{\Delta;\Gamma}{M \; N : B}$:
\begin{equations}
  \transftofreezeml[\ginormous]
  {\raisebox{-7.5pt}{$
    {\inferrule
      {\typ{\Delta;\Gamma}{M : A \to B} \\\\
        \typ{\Delta;\Gamma}{N : A}
      }
      {\typ{\Delta;\Gamma}{M \; N : B}}
    }
   $}}

  &\Longrightarrow
  &\inferrule
    {\hypo{\typ{\Delta; \Gamma}{\transftofreezeml{M} : A \to B}} \\\\
     \hypo{\typ{\Delta; \Gamma}{\transftofreezeml{N} : A}}
    }
    {\typ{\Delta; \Gamma}{\transftofreezeml{M} \; \transftofreezeml{N} : B}} \\[8ex]
\end{equations}

\item Case \fLab{TAbs}, , $\mathcal{J} = \typ{\Delta;\Gamma}{\Delta a. V  : \forall a.B}$

Let $B = \forall \Delta_B. H_B$.
By Lemma~\ref{lem:system-f-value-translates-to-fml-value},  $\transftofreezeml{V}$ is a value, and $\Let\; y = \transftofreezeml{V} \;\In\; y$ is a guarded value which we refer to as $U_@$.

\begin{gather*}
\transftofreezeml[\bigg]
{
  \inferrule
    {\typ
          {\Delta, a;\Gamma}
          {V : B}
    }
    {\typ{\Delta;\Gamma}{\Lambda a . V  : \forall a . B}}
}
   \quad\Longrightarrow
 \\[2ex]
    \inferrule*
    {
     \mathcal{D} \\
     \inferrule*
       {x : \forall a.B \in \Gamma, (x : \forall a.B)}
       {\typ{\Delta; \Gamma, (x:\forall a.B)}{\freeze{x} :
           \forall a.B}} \\
       ((a, \Delta_B),H_B) = \meta{split}(\forall a.B, U_@) \\
    }
    {\typ
      {\Delta; \Gamma}
      { \Let \; (x : \forall a. B) = U_@ \; \In \;
        \freeze{x} :  \forall a.B}
    }
  \end{gather*}

The sub-derivation $\mathcal{D}$ for $\typ{\Delta, a, \Delta_B; \Gamma}{U_@ : H_B}$ differs based on whether $\transftofreezeml{V}$ is a guarded value or not: \\

If $\transftofreezeml{V} \in \dec{GVal}$:
$\transftofreezeml{V}$ must have a guarded type and hence we have $B = H_B$ and $\Delta_B = \cdot$.
By induction we have $\typ{\Delta, a;\Gamma}{\transftofreezeml{V} : B}$ and hence $\ftv(B) \subseteq \Delta, a$. This further implies $\mgend[(\Delta, a, \Delta_B)]{H_B}{\transftofreezeml{V}} = (\cdot, \cdot)$.
Let $\rsubst$ be the empty substitution.
\begin{mathpar}
\inferrule*
    {
     \Delta, a, \Delta_B; \Gamma \vdash \transftofreezeml{V} : H_B \\
     \inferrule*
       {y : H_B \in \Gamma \\\\
        \Delta, a, \Delta_B \vdash \rsubst : \cdot \Rightarrow_\poly \cdot}
      {\typ{\Delta; \Gamma, y : H_B}{y : \rsubst(H_B)}}
      \\\\
     ((\Delta, a, \Delta_B), \cdot, \transftofreezeml{V}, H_B) \Updownarrow H_B \\
      (\cdot, \cdot) = \mgend[(\Delta, a, \Delta_B)]{H_B}{\transftofreezeml{V}} \\
     \meta{principal}((\Delta, a, \Delta_B), \Gamma, \transftofreezeml{V}, \cdot, B')
    }
    {\typ{\Delta, a, \Delta_B; \Gamma}{\Let\; y = \transftofreezeml{V} \;\In\; y : H_B}}

\end{mathpar}

If $\transftofreezeml{V} \not\in \dec{GVal}$:
Let $B' = \forall \Delta'.H'$ be alpha-equivalent to $B$ such that all $\Delta'$ are fresh.
We then have $\Delta, a; \Gamma \vdash \transftofreezeml{V} : B'$ by induction.
This implies $\mgend[(\Delta, a, \Delta_B)]{B'}{\transftofreezeml{V}} = (\cdot, \cdot)$.
Let $\rsubst$ be defined such that $\rsubst(\Delta') = \Delta_B$, which implies $\rsubst(H') = H_B$.
\begin{mathpar}
\inferrule*
    {
     \Delta, a, \Delta_B; \Gamma \vdash \transftofreezeml{V} : B' \\
     \inferrule*
       {y : \Delta'.H' \in \Gamma \\\\
        \Delta, a, \Delta_B \vdash \rsubst : \Delta' \Rightarrow_\poly \cdot}
      {\typ{\Delta, a, \Delta_B; \Gamma, y : B'}{y : \rsubst(H')}}
      \\\\
     ((\Delta, a, \Delta_B), \cdot, \transftofreezeml{V}, B') \Updownarrow B' \\
      (\cdot, \cdot) = \mgend[(\Delta, a, \Delta_B)]{B'}{\transftofreezeml{V}} \\
     \meta{principal}((\Delta, a, \Delta_B), \Gamma, \transftofreezeml{V}, \cdot, B')
    }
    {\typ{\Delta, a, \Delta_B; \Gamma}{\Let\; y = \transftofreezeml{V} \;\In\; y : H_B}}

\end{mathpar}

In both cases, satisfaction of
$\meta{principal}((\Delta, a, \Delta_B), \Gamma, \transftofreezeml{V}, \cdot, B')$ follows from the
fact that by induction, $B'$ is the unique type of $\transftofreezeml{V}$.

\item Case \fLab{TApp}, , $\mathcal{J} = \typ{\Delta;\Gamma}{M \; A : B[A/a]}$

Let $B = \forall \Delta_B. H_B$ and w.l.o.g.\ $a \fresh \Delta_B$ and $\ftv(A) \fresh a, \Delta_B$ .
Let $U_@$ be defined as in the previous case. We then have

\begin{gather*}
  \transftofreezeml[\bigg]
  {
  {\inferrule
    {\typ{\Delta;\Gamma}{M : \forall a . B}}
    {\typ{\Delta;\Gamma}{M\,A : B[A/a]}}
  }
  }
  \Longrightarrow \\[2ex]
  \inferrule
    {
     \mathcal{D} \\
     \inferrule*
       {x : B[A/a] \in \Gamma, (x : B[A/a])}
       {\typ{\Delta; \Gamma, (x:B[A/a])}{\freeze{x} :
           B[A/a]}} \\
     (\Delta_B,H_B[A/a]) = \meta{split}(B[A/a], U_@) \\
    }
    {\typ
      {\Delta; \Gamma}
      {\Let \, (x  : B[A/a]) = U_@  \: \In \: \freeze{x} : B[A/a]}
    }
\end{gather*}

We consider the sub-derivation $\mathcal{D}$ for $\typ{\Delta, \Delta_B; \Gamma}{U_@ : H_B[A/a]}$ \\

By induction, we have $\typ{\Delta;\Gamma}{\transftofreezeml{V} : \forall a. B}$, which implies that $\transftofreezeml{V}$ is not a guarded value.

Let $B' = \forall \Delta'.H'$ be alpha-equivalent to $B$ such that all $\Delta'$ are fresh.
We then have $\Delta; \Gamma \vdash \transftofreezeml{V} : \forall a. B'$ by induction.
This implies $(\cdot, \cdot) = \mgend[(\Delta,  \Delta_B)]{\forall a. B'}{\transftofreezeml{V}}$.
Let $\rsubst$ be defined such that $\rsubst(\Delta') = \Delta_B$ and $\rsubst(a) = A$, which implies $\rsubst(H') = H_B[A/a]$.
\begin{mathpar}
\inferrule*
    {
     \Delta; \Gamma \vdash \transftofreezeml{V} : \forall a. B' \\
     \inferrule*
       {y : a, \Delta'.H' \in \Gamma \\\\
        \Delta, \Delta_B \vdash \rsubst : a, \Delta' \Rightarrow_\poly \cdot}
      {\typ{\Delta, \Delta_B; \Gamma, y : \forall a. B'}{y : \rsubst(H')}}
      \\\\
     ((\Delta, \Delta_B), \cdot, \transftofreezeml{V}, \forall a. B') \Updownarrow \forall a. B' \\
      (\cdot, \cdot) = \mgend[(\Delta,  \Delta_B)]{\forall a. B'}{\transftofreezeml{V}} \\
     \meta{principal}((\Delta, \Delta_B), \Gamma, \transftofreezeml{V}, \cdot, \forall a. B')
    }
    {\typ{\Delta,  \Delta_B; \Gamma}{\Let\; y = \transftofreezeml{V} \;\In\; y : H_B[A/a]}}

\end{mathpar}

As in the previous case, satisfaction of
$\meta{principal}((\Delta, a, \Delta_B), \Gamma, \transftofreezeml{V}, \cdot, \forall a.B')$ follows from the
fact that by induction, $\forall a.B'$ is the unique type of $\transftofreezeml{V}$.

\end{itemize}

Finally, we observe that the translated terms indeed have unique types: For
variables, the type is uniquely determined from the context.  Functions are
translated to annotated lambdas, without any choice for the parameter type.  For
term applications, uniqueness follows by induction.  For term applications an
abstractions, the result type of the expression is the type of freezing $x$.  In
both cases, this variable is annotated with a type.

This completes the proof, since any derivation is in one of the forms
used in the above cases.
\end{proof}

\freezemltoftypepreservation*
\begin{proof}
We perform induction on $M$, in each case using inversion on the derivation of $\typ{\Delta, \Gamma}{M : A}$.  In
each case we show how the definition of $\transfreezemlfof{-}$ can be extended to a
function returning the desired derivation.

\begin{itemize}
\item Case \freezemlLab{Freeze}:
     \begin{equations}

  \transfreezemlfof[\Bigg]
    {\inferrule
      {x : A \in \Gamma}
      {\typ{\Delta; \Gamma}{\freeze{x} : A}}
    }
  &\Longrightarrow\;
  & \inferrule*[vcenter,right=\fLab{Var}]
      {x : A \in \Gamma}
      {\typ{\Delta; \Gamma}{x : A}} \\[6ex]

     \end{equations}
\item Case \freezemlLab{Var}:
  Let $\Delta' = (a_1, \dotsc, a_n)$.

     \begin{equations}

  \transfreezemlfof[\Bigg]
    {\inferrule
      {{x : \forall \Delta'.H \in \Gamma} \\
       \Delta \vdash \rsubst : \Delta' \Rightarrow \cdot
      }
      {\typ{\Delta; \Gamma}{x : \rsubst(H)}}
    }
  &\Longrightarrow
  &\inferrule*[vcenter,right=\fLab{PolyApp*}]
    {\inferrule*[vcenter,right=\fLab{Var}]
      {x : \forall \Delta'.H \in \Gamma}
      {\typ{\Delta; \Gamma}{x : \forall a_1, \dotsc, a_n.H}} \\
    }
    {\typ{\Delta;\Gamma}{ x  \,\rsubst{a_1} \, \cdots \, \rsubst{a_n} : H[\rsubst{a_1}/a_1]\cdots[\rsubst{a_n}/a_n]}} \\

  && \\[2ex]

     \end{equations}
\item Case \freezemlLab{Lam}:
     \begin{equations}

  \transfreezemlfof[\Bigg]
    {\inferrule
      {\typ{\Delta; \Gamma, x : S}{M : B}}
      {\typ{\Delta; \Gamma}{\lambda x.M : S \to B}}
    }
  &\Longrightarrow
  &\inferrule*[vcenter,right=\fLab{Lam}]
     {\hypo{\typ{\Delta;\Gamma, x : S}{\transfreezemlfof{M} : B}}}
     {\typ{\Delta;\Gamma}{\lambda x^S.\transfreezemlfof{M} : S \to B}} \\[6ex]

     \end{equations}
\item Case \freezemlLab{Lam-Ascribe}
     \begin{equations}

  \transfreezemlfof[\Bigg]
    {\inferrule
      {\typ{\Delta; \Gamma, x : A}{M : B}}
      {\typ{\Delta; \Gamma}{\lambda (x : A).M : A \to B}}
    }
  &\Longrightarrow
  &\inferrule*[vcenter,right=\fLab{Lam}]
     {\hypo{\typ{\Delta;\Gamma, x : A}{\transfreezemlfof{M} : B}}}
     {\typ{\Delta;\Gamma}{\lambda x^A.\transfreezemlfof{M} : A \to B}} \\[6ex]

     \end{equations}
\item Case \freezemlLab{App}:
     \begin{equations}
  \transfreezemlfof[\Bigg]
    {\inferrule
      {\typ{\Delta; \Gamma}{M : A \to B} \\
        \typ{\Delta; \Gamma}{N : A}
      }
      {\typ{\Delta; \Gamma}{M\,N : B}}
    }
  &\Longrightarrow
  &\inferrule*[vcenter,right=\fLab{App}]
    {\hypo{\typ{\Delta;\Gamma}{\transfreezemlfof{M} : A \to B}} \\
     \hypo{\typ{\Delta;\Gamma}{\transfreezemlfof{N} : A}}
    }
    {\typ{\Delta;\Gamma}{\transfreezemlfof{M}\,\transfreezemlfof{N} : B}} \\[6ex]

    \end{equations}
\item Case \freezemlLab{Let}: In this case there are two subcases, depending on
  whether $M$ is a guarded value or not.
  \begin{itemize}
  \item $M = V \in \dec{GVal}$: In this case, we have $\mgen(\Delta,A',M) = (\Delta',\Delta')$
    for some possibly nonempty $\Delta'$, and $(\Delta,\Delta',M,A')
    \Updownarrow \forall \Delta'. A'$.  We
    proceed as follows:
\begin{gather*}
  \transfreezemlfof[\mediumginormous]
    {\raisebox{-15pt}{$\inferrule
      { \typ{\Delta, \Delta'; \Gamma}{V : A'} \\\\
        \typ{\Delta; \Gamma, x : \forall \Delta'.A'}{N : B} \\\\
        \dots
      }
      {\typ{\Delta; \Gamma}{\Let \; x = V\; \In \; N : B}}
    $}}
  \Longrightarrow  \\[2ex]
  \begin{array}{l}
    \inferrule*[right=\fLab{Let}]{
    \inferrule*[right=\fLab{PolyLam*}]{\typ{\Delta,\Delta';\Gamma}{\transfreezemlfof{V}:A'}}
    {\typ{\Delta;\Gamma}{\Lambda\, \Delta'.\transfreezemlfof{V}:\forall \Delta'.A' }}
    \\
    \typ{\Delta;\Gamma,x:\forall \Delta'.A'}{\transfreezemlfof{N}:B}}
{\typ{\Delta;\Gamma}{\Let \: x^A = \Lambda \, \Delta'. \transfreezemlfof{V}\
    \In\; \transfreezemlfof{N} : B}}
  \end{array}
\end{gather*}
where we rely on the fact that $\transfreezemlfof{V}$ is a value in System F as well,
and appeal to the derivable rule \fLab{PolyLam*}.
\item $M \notin \dec{GVal}$.  In this case, we know that $\mgen(\Delta,A,M) =
  (\cdot,\Delta')$ and $(\Delta,\Delta',M,A') = \rsubst(A') = A$ for
  some $\rsubst$ satisfying $\typ{\Delta}{\rsubst : \Delta' \Rightarrow_\mono \cdot}$.   We proceed as follows:
     \begin{gather*}
  \transfreezemlfof[\mediumginormous]
    {\raisebox{-15pt}{$\inferrule
      { \typ{\Delta,\Delta'; \Gamma}{M : A'} \\\\
        \typ{\Delta; \Gamma, x : A}{N : B} \\\\
        \dots
      }
      {\typ{\Delta; \Gamma}{\Let \; x = M\; \In \; N : B}}
    $}}
  \Longrightarrow \\[2ex]
  \begin{array}{l}
    \inferrule{
    \typ{\Delta;\Gamma}{\rsubst(\transfreezemlfof{M}):\rsubst(A')}
    \\
    \typ{\Delta;\Gamma,x:A}{\transfreezemlfof{N}:B}}
{\typ{\Delta;\Gamma}{\Let \: x^A = \transfreezemlfof{M}\
    \In\; \transfreezemlfof{N} : B}}
  \end{array}
     \end{gather*}
where we make use of a standard substitution lemma for System F to instantiate type variables
from $\Delta'$ in $\transfreezemlfof{M}$ and $A$ to obtain a derivation    of
$\typ{\Delta;\Gamma}{\rsubst(\transfreezemlfof{M}):\rsubst(A')}$, which suffices
since $A = \rsubst(A')$.  Note that $\transfreezemlfof{M}$ could
contain free type variables from $\Delta'$ since all inferred types are
translated to explicit annotations.
  \end{itemize}
\item Case \freezemlLab{Let-Ascribe}: This case is analogous to the case for
  \freezemlLab{Let}.

\end{itemize}
\end{proof}

}
\section{Type Substitutions, Environments and Well-Scoped Terms}
\label{sec:app-subst}

{

This section collects, and sketches (mostly straightforward) proofs of
properties about type substitutions, kind and type environments, and the
well-scoped term judgement.
We may then use the properties from this section without explicitly referencing them
in subsequent sections.

Note that when types appear on their own or in
contexts $\Gamma$, we identify $\alpha$-equivalent types.

We use the following notations in this and subsequent sections,
where $\Theta = ( a_1 : K_1, \dotsc, a_n : K_n )$.
Recall that this implies  all $a_i$ being pairwise different.

\begin{itemize}
\item
Let $(b : K ) \in \Theta$ hold iff $b = a_i$ and $K = K_i$ for some $1 \le i \le n$
and let $b \in \Theta$ hold iff $(b : K) \in \Theta$ holds for some $K$.
\item
For all $1 \le i \le n$, we define $\Theta(a_i) = K_i$.
\item
We define $\ftv(\Theta)$ as $(a_1, \dotsc, a_n)$.
\item
Given $\theta$ such that $\Delta \vdash \theta : \Theta \Rightarrow \Theta'$,
then $\ftv(\theta)$ is defined as $\ftv( \theta(a_1) \to \dotsc \to \theta(a_n) )$.
\item
Given $\Theta' = (b_1 : K'_1, \dotsc, b_m : K'_m)$, $\Theta' \subseteq \Theta$ holds iff
there exists a function $f$ from $\{1, \dotsc, m \}$ to $\{1, \dotsc, n \}$ such that
for all $1 \le i \le m$, we have $b_i = a_{f(i)}$ and $K'_i = K_{f(i)}$.
\item
We have $\Theta \permutation \Theta'$ iff
$\Theta \subseteq \Theta$ and $\Theta' \subseteq \Theta$.
\item
Given $\Delta = (a_1, \dotsc, a_n)$, all of the above notations are defined on $\Delta$ by
applying them to $\Theta = (a_1 : \mono, \dotsc, a_n : \mono)$.
\item Given kinds $K, K'$, we write $K \le K'$ iff $K \sqcup K' = K'$.
\end{itemize}

\begin{lemma}
\label{lem:substequality}
If $A = B$ then $\theta(A) = \theta(B)$ for any $\theta$.
\end{lemma}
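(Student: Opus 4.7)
The plan is to proceed by structural induction on $A$, exploiting the fact that the paper identifies types up to $\alpha$-equivalence, so that the assumption $A = B$ means $A$ and $B$ are $\alpha$-equivalent. The lemma then amounts to showing that type substitution is well-defined on $\alpha$-equivalence classes, and all the content lies in the quantifier case.

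For type variables, both sides reduce by a single clause of Figure~\ref{fig:freezeml-substitution-application} and the result is immediate. For type constructor applications $D\,\many{A}$, $\alpha$-equivalence is componentwise, so the inductive hypothesis applied to each $A_i$ yields the desired equality. The main work is therefore the quantifier case. Here we assume $A = \forall a. A_0$ and $B = \forall a'. B_0$ are $\alpha$-equivalent, i.e.\ $A_0[b/a] = B_0[b/a']$ for some (equivalently, any) fresh $b$. By the definition of substitution, we have $\theta(\forall a. A_0) = \forall c.\,\theta[a \mapsto c](A_0)$ for some $c \notin \ftv(\theta(d))$ for $d \ne c$, and analogously $\theta(\forall a'. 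B_0) = \forall c'.\,\theta[a' \mapsto c'](B_0)$. Because both sides are themselves $\alpha$-classes, I may pick $c = c'$ uniformly, choosing it fresh for $\ftv(\theta)$, $a$, $a'$, $\ftv(A_0)$, and $\ftv(B_0)$. The goal then reduces to $\theta[a \mapsto c](A_0) = \theta[a' \mapsto c](B_0)$.

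To finish, I would establish a small renaming/substitution commutation fact: for $c$ fresh as above, $\theta[a \mapsto c](A_0) = \theta[b \mapsto c](A_0[b/a])$, and similarly on the right-hand side. With this, the goal becomes $\theta[b \mapsto c](A_0[b/a]) = \theta[b \mapsto c](B_0[b/a'])$, which holds by the $\alpha$-equivalence hypothesis $A_0[b/a] = B_0[b/a']$ (applying the inductive hypothesis, since $A_0[b/a]$ is structurally smaller than $A$ once we pick a canonical representative, or by a direct syntactic argument since the substitution is identical on both sides).

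The main obstacle is purely bookkeeping around freshness: ensuring the witness $c$ introduced by the substitution clause for $\forall$ can be chosen uniformly for both $\theta(A)$ and $\theta(B)$ without clashing with any bound or free variables on either side, and justifying the renaming/substitution commutation used to reduce to the $\alpha$-equivalence hypothesis. The substitution clause's side condition ``$c \notin \ftv(\rsubst(b))$ for all $b \neq c$'' already leaves $c$ free to be chosen, so this is essentially a variable-hygiene argument rather than a mathematical difficulty; all other cases are routine.
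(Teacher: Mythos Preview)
Your proposal is correct and follows essentially the same approach as the paper: structural induction on the type, with the only interesting case being the quantifier, handled by choosing a single sufficiently fresh bound name $c$ uniformly for both sides so that the definition of capture-avoiding substitution reduces the goal to the inductive hypothesis on the $\alpha$-equivalent bodies. The paper's write-up is terser (it folds your commutation fact into the chain $\theta(\forall a.A') = \forall c.\theta(A'[c/a])$ directly from the definition of substitution), but the argument is the same.
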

\begin{proof}
  The point of this property is that alpha-equivalence is preserved by
  substitution application, because substitution application is
  capture-avoiding.  Concretely, the proof is by induction on the (equal)
  structure of $A$ and $B$.  In the case of a binder
  $A = \forall a. A' = \forall b. B' = B$, where one or both of $a,b$ are
  affected by $\theta$, alpha-equivalence implies that we may rename $a$ and $b$
  respectively to a sufficiently fresh $c$, such that $A'[c/a] = B'[c/a]$ and
  $\theta(c) = c$.  Therefore, by induction
  $\theta(A) = \theta(\forall a. A') = \theta(\forall c. A'[c/a]) = \forall
  c. \theta(A'[c/a]) = \forall c. \theta(B'[c/b]) = \theta(\forall c. B'[c/b]) =
  \theta(\forall b. B') = \theta(B)$.
\end{proof}

\begin{lemma}
\label{lem:substforall}
$\theta(\forall a.A) = \theta(\forall c. A[c/a])$, where $c \notin \ftv(\theta)
\cup \ftv(A)$
is fresh.
\end{lemma}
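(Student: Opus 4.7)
The plan is to reduce this to the immediately preceding Lemma~\ref{lem:substequality}, which says that substitution respects equality (i.e.\ alpha-equivalence) of types. The freshness hypothesis $c \notin \ftv(A)$ is exactly what is needed to ensure that the renaming $[c/a]$ is capture-avoiding on $A$, so that $\forall a. A$ and $\forall c. A[c/a]$ are alpha-equivalent. Since the paper identifies alpha-equivalent types, Lemma~\ref{lem:substequality} then yields $\theta(\forall a. A) = \theta(\forall c. A[c/a])$ directly.

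If one prefers to argue from the definition of $\theta(\forall a. A)$ in Figure~\ref{fig:freezeml-substitution-application}, the plan is as follows. Unfold both sides: $\theta(\forall a. A) = \forall c_1. \theta[a \mapsto c_1](A)$ for some fresh $c_1$ with $c_1 \notin \ftv(\theta(b))$ for $b \neq c_1$, and $\theta(\forall c. A[c/a]) = \forall c_2. \theta[c \mapsto c_2](A[c/a])$ for a similarly chosen $c_2$. Using the hypothesis $c \notin \ftv(\theta)$, we may pick $c_2 = c$, at which point $\theta[c \mapsto c](A[c/a]) = \theta(A[c/a])$. A standard substitution-composition calculation then gives $\theta(A[c/a]) = \theta[a \mapsto c](A)$, because $c$ is fresh for both $\theta$ and $A$, so the renaming commutes with $\theta$. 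Thus both sides reduce to $\forall c. \theta[a \mapsto c](A)$ up to the (alpha-equivalent) choice of bound name, and Lemma~\ref{lem:substequality} closes the remaining gap.

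The only subtle point is ensuring that the chosen bound variable is genuinely fresh for everything in sight, so that neither $[c/a]$ nor the extended substitution $\theta[a \mapsto c]$ captures any free variables of $A$ or of $\theta$'s codomain; this is precisely what the two freshness hypotheses $c \notin \ftv(\theta)$ and $c \notin \ftv(A)$ buy us. Given those, the argument is mechanical, and I would expect to present it in two or three lines that cite Lemma~\ref{lem:substequality}.
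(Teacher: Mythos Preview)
Your primary approach is correct and matches the paper's proof exactly: observe that $\forall a.A = \forall c.A[c/a]$ by alpha-equivalence (using $c \notin \ftv(A)$), then invoke Lemma~\ref{lem:substequality}. The paper states this in a single sentence, so your two-to-three-line expectation is spot on; the alternative unfolding you sketch is unnecessary extra work.
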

\begin{proof}
  This is a special case of the previous property, observing that $\forall a. A
  = \forall c. A[c/a]$ if $c$ is sufficiently fresh.
\end{proof}
\begin{lemma}
\label{lem:restriction-wellformed}
If $\Delta \vdash \theta[a \rightarrow A] : \Theta,(a : K) \Rightarrow \Theta'$,
then $\Delta \vdash \theta: \Theta \Rightarrow \Theta'$ and $\Delta,\Theta'
\vdash A : K$.
\end{lemma}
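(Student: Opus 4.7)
The plan is to proceed by a single application of inversion on the derivation $\Delta \vdash \theta[a \mapsto A] : \Theta, (a:K) \Rightarrow \Theta'$, using the substitution rules in Figure~\ref{fig:freezeml-type-subst}. Those rules are syntax-directed on the shape of the substitution and its domain: the empty-substitution rule applies only when both the substitution and the domain are empty, while the extension rule applies only when both are non-empty and have matching last entries. Since our substitution has non-empty domain $\text{dom}(\theta) \cup \{a\}$ and the kind environment ends in $a:K$, only the extension rule is applicable.

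First I would record the standard disjointness conventions: by the implicit freshness side-condition on writing $\theta[a \mapsto A]$, the variable $a$ is not already in the domain of $\theta$, and similarly $a \notin \Theta$ (otherwise $\Theta,(a:K)$ would not be a well-formed refined kind environment). These conventions make the split of the derivation into its extension-rule shape unambiguous: the last binding added to both the substitution and the domain is the one for $a$, so the premises can only be $\Delta \vdash \theta : \Theta \Rightarrow \Theta'$ and $\Delta, \Theta' \vdash A : K$.

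Having done this inversion, the two required conclusions are exactly the two premises of the extension rule, so no further reasoning is needed. The only mildly delicate point—and it is not really an obstacle—is justifying that the inversion indeed matches $\theta$ on the left with $\Theta$ on the right and $a \mapsto A$ with $a:K$, rather than some other unusual splitting; this is immediate from the freshness conventions above. Hence the lemma follows from a single application of inversion on the second rule of Figure~\ref{fig:freezeml-type-subst}.
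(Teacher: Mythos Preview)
Your proposal is correct and matches the paper's own proof, which simply states that the result follows by inversion on the substitution well-formedness judgement. Your additional remarks about the freshness conventions ensuring the inversion splits uniquely are sound but more detailed than what the paper provides.
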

\begin{proof}
  This follows by inversion on the substitution well-formedness judgement.
\end{proof}

\begin{lemma}
\label{lem:substitution-lookup}
If $\Delta \vdash \theta : \Theta \Rightarrow \Theta'$ and $\Delta,\Theta \vdash a : K$
then $\Delta,\Theta' \vdash \theta(a) : K$.
\end{lemma}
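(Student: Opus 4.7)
The plan is to proceed by case analysis on whether $a \in \Delta$ or $a \in \Theta$ (these sets are disjoint by the conventions on well-formed environments), combined with induction on the derivation of $\Delta \vdash \theta : \Theta \Rightarrow \Theta'$ in the latter case.

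If $a \in \Delta$, then by the convention that substitutions act as the identity outside their domain, $\theta(a) = a$. A derivation of $\Delta, \Theta \vdash a : K$ factors through \textsc{TyVar} (possibly followed by \textsc{Upcast}) using the fact that $a : \mono \in \Delta$, and the same derivation can be replayed to establish $\Delta, \Theta' \vdash a : K$ since $a$ remains in $\Delta$ on both sides.

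If $a \in \Theta$, I would induct on the derivation of $\Delta \vdash \theta : \Theta \Rightarrow \Theta'$. The base case $\theta = \emptyset$ is vacuous since then $\Theta = \cdot$, so $a \notin \Theta$. In the inductive step $\theta = \theta_0[b \mapsto B]$ with $\Theta = \Theta_0, b : K_0$, by Lemma~\ref{lem:restriction-wellformed} we have $\Delta \vdash \theta_0 : \Theta_0 \Rightarrow \Theta'$ and $\Delta, \Theta' \vdash B : K_0$. When $a = b$ we have $\theta(a) = B$ and the assumed kinding derivation gives $K_0 \le K$ (it uses \textsc{TyVar} on $a : K_0 \in \Theta$, possibly followed by \textsc{Upcast}), so \textsc{Upcast} applied to $\Delta, \Theta' \vdash B : K_0$ yields $\Delta, \Theta' \vdash B : K$ if $K_0 \neq K$. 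When $a \neq b$ we have $\theta(a) = \theta_0(a)$, and $\Delta, \Theta_0 \vdash a : K$ is still derivable (since the derivation in $\Delta, \Theta$ does not use the new binding for $b$), so the induction hypothesis on $\theta_0$ gives the result.

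The proof is routine; the only mildly subtle point is accommodating the \textsc{Upcast} rule, which lets a derivation of $\Delta, \Theta \vdash a : K$ differ from the nominal kind of $a$ in $\Theta$ by promoting $\mono$ to $\poly$. Because kinding is monotone in the $\le$ order on kinds, this mismatch is easily repaired by re-applying \textsc{Upcast} at the end of either case.
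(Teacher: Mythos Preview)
Your proof is correct and takes essentially the same approach as the paper's: induction on the derivation of $\Delta \vdash \theta : \Theta \Rightarrow \Theta'$, with a case split on whether $a$ coincides with the most recently bound variable. The only differences are cosmetic: you peel off the $a \in \Delta$ case before starting the induction (the paper handles it implicitly in the base case and via the inductive hypothesis), and you are explicit about the \textsc{Upcast} subtlety when $K_0 = \mono$ but $K = \poly$, whereas the paper simply asserts $K = K'$ in the $a = a'$ case---your treatment here is in fact more careful.
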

\begin{proof}
  By induction on the structure of the derivation of $\Delta \vdash \theta :
  \Theta \Rightarrow \Theta'$.  The base case is straightforward: if $\theta$ is empty
  then $\Theta$ is also empty so $a \in \Delta$.  Moreover, $\theta(a) = a$ so
  we can conclude $\Delta,\Theta' \vdash \theta(a) : K$.  For the inductive
  case, we have a derivation of the form:
\[\inferrule
  {\Delta \vdash \theta : \Theta \Rightarrow \Theta' \\
   \Delta, \Theta \vdash A' : K'}
  {\Delta \vdash \theta[a' \mapsto A'] : (\Theta, a' : K') \Rightarrow \Theta'}
\]
There are two cases.  If $a = a'$ then the subderivation of $\Delta, \Theta
\vdash A' : K'$ proves the desired conclusion since $\theta[a'\mapsto A'](a) =
A'$ and $K = K'$.
Otherwise, $a \neq a'$ so from $\Delta,\Theta,a':K' \vdash a : K$ we can infer
that $\Delta,\Theta \vdash a : K$ as well.  So, by induction we have that
$\Delta,\Theta' \vdash \theta(a) : K$.  Since $a \neq a'$ we can also conclude that
$\Delta,\Theta' \vdash \theta[a' \mapsto A'](a) : K$, as desired.
\end{proof}

\begin{lemma}
\label{lem:stability-wf-typ}
If $\Delta, \Theta \vdash A : K$ and $\Delta \vdash \theta : \Theta \Rightarrow \Theta'$,
then $\Delta, \Theta' \vdash \theta A : K$.
\end{lemma}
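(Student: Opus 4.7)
The plan is to proceed by induction on the derivation of $\Delta, \Theta \vdash A : K$, following the four kinding rules (TyVar, Cons, ForAll, Upcast). The only interesting case will be ForAll, since the others are either immediate or straightforward inductive applications.

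For the TyVar case, $A = a$ with $a : K \in \Delta, \Theta$. I split on whether $a \in \Delta$ or $a : K \in \Theta$. If $a \in \Delta$, then by the implicit disjointness $\Delta \fresh \Theta$, the variable $a$ is not in the domain of $\theta$, so $\theta(a) = a$ and the conclusion $\Delta, \Theta' \vdash a : K$ follows from TyVar (again using disjointness of $\Delta$ and $\Theta'$). If $a : K \in \Theta$, I invoke Lemma~\ref{lem:substitution-lookup} directly. The Cons case is routine: apply the induction hypothesis to each $A_i$ and reassemble with Cons, since substitution commutes with type constructors. The Upcast case is immediate: the induction hypothesis gives $\Delta, \Theta' \vdash \theta(A) : \mono$, and Upcast lifts this to $\poly$.

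The main obstacle is the ForAll case, $A = \forall a. A'$ with $K = \poly$ derived from $\Delta, \Theta, a : \mono \vdash A' : \poly$. Here I need to descend under the binder while preserving the substitution judgement. First, I use Lemma~\ref{lem:substforall} to pick a fresh $c \notin \ftv(\theta) \cup \ftv(A') \cup \Delta \cup \Theta \cup \Theta'$ so that $\theta(\forall a. A') = \forall c.\, \theta[a \mapsto c](A'[c/a])$. Next, I extend $\theta$ to $\theta' = \theta[a \mapsto c]$ and argue $\Delta \vdash \theta' : (\Theta, a : \mono) \Rightarrow (\Theta', c : \mono)$: this uses freshness of $c$ together with a weakening step showing that $\Delta \vdash \theta : \Theta \Rightarrow \Theta', c : \mono$ (since extending the target with a fresh variable preserves well-formedness of all the types in the image of $\theta$), followed by an application of the extension rule from Figure~\ref{fig:freezeml-type-subst} using the trivial derivation $\Delta, \Theta', c : \mono \vdash c : \mono$. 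Applying the induction hypothesis to the subderivation $\Delta, \Theta, a : \mono \vdash A' : \poly$ with $\theta'$ then yields $\Delta, \Theta', c : \mono \vdash \theta'(A') : \poly$, and one more use of the ForAll rule finishes the case.

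Throughout, I rely implicitly on Lemma~\ref{lem:substequality} to ensure that $\alpha$-renaming under a binder preserves the substitution's action, and on the disjointness conventions for kind environments so that concatenations such as $\Delta, \Theta'$ and $\Delta, \Theta', c : \mono$ are well-defined. No further lemmas beyond those stated in this appendix should be needed.
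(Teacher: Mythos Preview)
Your proposal is correct and follows essentially the same approach as the paper: induction on the kinding derivation, with TyVar handled by Lemma~\ref{lem:substitution-lookup}, Cons and Upcast straightforward, and ForAll handled by extending $\theta$ across the binder. The only cosmetic difference is that the paper assumes the bound variable $a$ is itself fresh (so extends to $\theta[a \mapsto a]$) rather than introducing a separate fresh $c$, and it invokes Lemma~\ref{lem:substitution-lookup} uniformly for TyVar without your explicit case split on $a \in \Delta$ versus $a \in \Theta$---that lemma already covers both cases.
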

\begin{proof}
  By induction on the structure of the derivation of $\Delta, \Theta \vdash A :
  K$.  The case for \textsc{TyVar} is \ref{lem:substitution-lookup}.  The cases for
  \textsc{Cons} and \textsc{Upcast} are immediate by induction.  For the
  \textsc{ForAll} case, assume the derivation is of the form:
\[  \inferrule*
    {\Delta,\Theta, a:\poly \vdash A : \poly}
    {\Delta,\Theta \vdash \forall a.A : \poly}
\]
Without loss of generality, assume $a$ is fresh and in particular not mentioned
in $\Theta,\Theta',\Delta$.  Then we can derive
$\Delta \vdash \theta[a \mapsto a] : \Theta,a:\star \Rightarrow
\Theta',a:\star$, and we may apply the induction hypothesis to conclude that
$\Delta,\Theta',a:\star \vdash \theta[a \mapsto a](A) : \poly$.  Moreover, since $a$ was
sufficiently fresh, and is unchanged by $\theta[a \mapsto a]$, we can conclude
$\Delta,\Theta' \vdash \forall a. A : \star$.
\end{proof}
\begin{lemma}
\label{lem:stability-wf-env}
If $\wfctx{\Delta, \Theta}{\Gamma}$ and $\Delta \vdash \theta : \Theta \Rightarrow
\Theta'$.
then $\wfctx{\Delta, \Theta'}{\theta\Gamma}$.
\end{lemma}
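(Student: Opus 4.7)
The plan is to proceed by induction on the derivation of $\wfctx{\Delta,\Theta}{\Gamma}$. The \textsc{Empty} case is immediate, since $\theta(\cdot) = \cdot$. For the \textsc{Extend} case, where $\Gamma = \Gamma', x : A$ with $\Delta,\Theta \vdash A : \poly$ and the side condition that every $a \in \ftv(A)$ satisfies $a:\mono \in \Delta,\Theta$, the induction hypothesis yields $\wfctx{\Delta,\Theta'}{\theta\Gamma'}$, and Lemma~\ref{lem:stability-wf-typ} yields $\Delta,\Theta' \vdash \theta(A) : \poly$. It remains to reestablish the monomorphism side condition for $\theta(A)$.

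For that, I would first prove (or note as a small auxiliary fact, by induction on derivations using the refined kinding rules) that if $\Delta,\Theta' \vdash B : \mono$ then every $b \in \ftv(B)$ satisfies $b:\mono \in \Delta,\Theta'$; this is direct since \textsc{ForAll} is the only rule that produces kind $\poly$ non-vacuously and it cannot be inverted from a $\mono$ conclusion. Given this, I would argue that an arbitrary $b \in \ftv(\theta(A))$ arises in one of two ways. Either (i) $b$ is already free in $A$ and does not occur in the domain of $\theta$, so $b \in \Delta$ or $b \in \Theta \setminus \mathrm{dom}(\theta)$; by the precondition $b : \mono \in \Delta,\Theta$, and since $\theta$ does not move $b$, either $b \in \Delta$ (trivially in $\Delta,\Theta'$ as monomorphic) or $b$ must lie in $\Theta'$ (this is where I would appeal to the implicit invariant that variables of $\Theta$ untouched by $\theta$ are carried into $\Theta'$ with the same kind, using the substitution well-formedness judgement and Lemma~\ref{lem:substitution-lookup}). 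Or (ii) $b$ is introduced by substituting some $a \in \ftv(A) \cap \mathrm{dom}(\theta)$; the precondition gives $a:\mono$, so by Lemma~\ref{lem:substitution-lookup} we have $\Delta,\Theta' \vdash \theta(a) : \mono$, and the auxiliary fact above gives $b:\mono \in \Delta,\Theta'$.

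The main obstacle I anticipate is case (i): making precise that an unsubstituted variable of $\Theta$ survives into $\Theta'$ with the same kind. The clean way to handle this is to observe that a well-formed substitution must act as the identity on the parts of $\Theta$ that remain in scope after substitution, which follows from the structure of the well-formedness judgement for substitutions in Figure~\ref{fig:freezeml-type-subst}; alternatively, one can simply apply Lemma~\ref{lem:substitution-lookup} to derive $\Delta,\Theta' \vdash \theta(b) : \mono$ and then reuse the auxiliary lemma on free variables of monomorphic types. Either route reduces the apparent bookkeeping to the two clean observations already in hand.

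Once the monomorphism constraint is verified, the \textsc{Extend} rule for context well-formedness applies to conclude $\wfctx{\Delta,\Theta'}{\theta(\Gamma'), x : \theta(A)}$, i.e.\ $\wfctx{\Delta,\Theta'}{\theta\Gamma}$, completing the induction.
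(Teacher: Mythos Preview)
Your proposal is correct and follows essentially the same approach as the paper: induction on the context well-formedness derivation, with the inductive step appealing to Lemma~\ref{lem:stability-wf-typ} and then reestablishing the monomorphism side condition via the observation that each free variable of $\theta(A)$ lies inside some $\theta(b)$ with $b \in \ftv(A)$ of kind $\mono$, whence $\Delta,\Theta' \vdash \theta(b) : \mono$ and all its free variables are monomorphic.

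Your ``main obstacle'' is a non-issue, however: the substitution well-formedness judgement $\Delta \vdash \theta : \Theta \Rightarrow \Theta'$ (Figure~\ref{fig:freezeml-type-subst}) is built so that $\mathrm{dom}(\theta)$ is \emph{exactly} $\Theta$, so $\Theta \setminus \mathrm{dom}(\theta) = \emptyset$ and your case~(i) collapses to $b \in \Delta$. The paper avoids this detour by splitting instead on whether the free variable lies in $\Delta$ or in $\Theta'$, and in the latter case uses precisely your ``alternative'' route through Lemma~\ref{lem:substitution-lookup}.
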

\begin{proof}
  By induction on the derivation of $\wfctx{\Delta,\Theta}{ \Gamma}$.  The base
  case is:
\[\inferrule*{\strut}{\wfctx{\Delta,\Theta}{\cdot}}\]
Moreover, it follows from $\Delta \vdash \theta : \Theta \Rightarrow \Theta'$ that $\Delta \fresh
\Theta'$, so the conclusion is immediate, since $\theta(\cdot) = \cdot$.
In the inductive case, the derivation of $\wfctx{\Delta,\Theta}{\Gamma,x:A}$ is of
the form:
\[\inferrule{\wfctx{\Delta,\Theta}{ \Gamma} \\
\Delta,\Theta \vdash A : \star\\
\forall a \in \ftv(A). (\Delta,\Theta)(a) = \mono}
{\wfctx{\Delta,\Theta}{ \Gamma,x:A}}
\]
In this case, by induction we have $\wfctx{\Delta,\Theta'}{ \theta\Gamma}$ and
using Lemma~\ref{lem:stability-wf-typ} we have $\Delta,\Theta' \vdash \theta
A : K$.  We also need to show that $\forall a \in \ftv(\theta(A))$, we have
$(\Delta,\Theta')(a) = \mono$.  There are two cases: if $a\in \Delta$ this is
immediate.  If $a \in \Theta'$, then since $a \in \ftv(\theta(A))$ we know
that there must exist $b \in \Theta$ such that $a \in \ftv(\theta(b))$ and
$b \in \ftv(A)$.  By virtue of the assumption $\forall a \in
\ftv(A). (\Delta,\Theta)(a) = \mono$, we know that $(\Delta,\Theta)(b) =
\mono$, hence $\Theta(b) = \mono$.  This implies that
$\typ{\Delta,\Theta'}{\theta(b) : \mono}$, which further implies that all the
  free type variables of $\theta(b)$, including $a$, must also have kind
  $\mono$.
Now the desired conclusion $\wfctx{\Delta,\Theta'}{ \theta(\Gamma,x:A)}$ follows.
\end{proof}

\begin{lemma}\label{lem:subst-substitution}
  \begin{enumerate}
  \item If $\typ{\Delta}{\rsubst_1 : \Delta_1 \Rightarrow_K \Delta_2}$ and
    $\typ{\Delta}{\rsubst_2 : \Delta_2 \Rightarrow_K \Delta_3}$ then
    $\typ{\Delta}{\rsubst_2 \circ \rsubst_1 : \Delta_1 \Rightarrow_K \Delta_3}$.
\item If $\typ{\Delta}{\theta : \Theta \Rightarrow \Theta'}$ and
  $\Delta' \fresh \Theta' \fresh \Delta''$ and
  $\typ{\Delta,\Theta}{\rsubst_1 : \Delta' \Rightarrow_K \Delta''}$ then
  $\typ{\Delta,\Theta'}{\theta \circ \rsubst_1 : \Delta' \Rightarrow_K \Delta''}$.
  \end{enumerate}
  \end{lemma}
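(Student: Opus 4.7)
My plan is to prove both parts by induction on the derivation of the rigid substitution $\rsubst_1$ (the one whose domain is being composed on the right), using inversion of the substitution well-formedness rules combined with a stability-under-substitution property in the inductive step. The base case in each part is immediate: if $\rsubst_1 = \emptyset$ then its domain is empty, $\rsubst_2 \circ \emptyset = \theta \circ \emptyset = \emptyset$, and the empty substitution is always well-formed with the empty domain.

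For part~1, in the inductive case $\rsubst_1 = \rsubst_1'[a \mapsto A]$ with $\Delta_1 = \Delta_1', a$. Inversion yields $\typ{\Delta}{\rsubst_1' : \Delta_1' \Rightarrow_K \Delta_2}$ and $\typ{\Delta, \Delta_2}{A : K}$. Applying the induction hypothesis with $\rsubst_2$ gives $\typ{\Delta}{\rsubst_2 \circ \rsubst_1' : \Delta_1' \Rightarrow_K \Delta_3}$. Since by definition $(\rsubst_2 \circ \rsubst_1'[a \mapsto A]) = (\rsubst_2 \circ \rsubst_1')[a \mapsto \rsubst_2(A)]$, the result follows provided $\typ{\Delta, \Delta_3}{\rsubst_2(A) : K}$, which is obtained from $\typ{\Delta, \Delta_2}{A : K}$ and $\typ{\Delta}{\rsubst_2 : \Delta_2 \Rightarrow_K \Delta_3}$ by a rigid analogue of Lemma~\ref{lem:stability-wf-typ} (or equivalently by viewing $\rsubst_2$ as a refined substitution with $\Delta_2, \Delta_3$ all marked $\mono$ and invoking Lemma~\ref{lem:stability-wf-typ} directly).

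Part~2 proceeds with the same induction. In the inductive case $\rsubst_1 = \rsubst_1'[a \mapsto A]$, inversion gives $\typ{\Delta, \Theta}{\rsubst_1' : \Delta^* \Rightarrow_K \Delta''}$ and $\typ{\Delta, \Theta, \Delta''}{A : K}$, where $\Delta' = \Delta^*, a$. The induction hypothesis delivers $\typ{\Delta, \Theta'}{\theta \circ \rsubst_1' : \Delta^* \Rightarrow_K \Delta''}$, and it remains to show $\typ{\Delta, \Theta', \Delta''}{\theta(A) : K}$ in order to extend with $[a \mapsto \theta(A)]$. For this I first weaken $\theta$ using rule \textsc{S-Weaken} to $\typ{\Delta, \Delta''}{\theta : \Theta \Rightarrow \Theta'}$, which is legitimate precisely because of the freshness hypothesis $\Delta' \fresh \Theta' \fresh \Delta''$, and then apply Lemma~\ref{lem:stability-wf-typ} with the base context $\Delta, \Delta''$ to conclude.

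The main obstacle I anticipate is the bookkeeping around how the composite contexts $\Delta, \Theta', \Delta''$ line up with the layering of fixed context then refined context required by Lemma~\ref{lem:stability-wf-typ}. The freshness assumption is exactly what lets one shuffle $\Delta''$ into the fixed part via \textsc{S-Weaken}, so once this rearrangement is identified the rest of the argument is mechanical. A minor secondary concern is ensuring that $a$ is fresh for $\Delta''$ and for $\Theta, \Theta'$ when extending the substitution — this is part of the implicit disjointness convention accompanying well-formed substitutions and can be discharged by $\alpha$-renaming if needed.
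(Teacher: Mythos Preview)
Your proposal is correct and follows exactly the approach the paper takes: the paper's proof simply reads ``In both cases, by straightforward induction on structure of $\rsubst_1$,'' and you have spelled out precisely those inductions, invoking the composition definition, inversion, and stability (Lemma~\ref{lem:stability-wf-typ}) at the expected points. The only thing to note is that the freshness facts you need for the \textsc{S-Weaken} step in part~2 (namely $\Delta'' \fresh \Delta$ and $\Delta'' \fresh \Theta$) come not just from the stated hypothesis $\Delta' \fresh \Theta' \fresh \Delta''$ but also from the implicit disjointness built into the well-formedness of $\typ{\Delta,\Theta}{\rsubst_1 : \Delta' \Rightarrow_K \Delta''}$; once that is observed, everything goes through as you describe.
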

  \begin{proof}
    In both cases, by straightforward induction on structure of $\rsubst_1$.
  \end{proof}

\begin{lemma}
\label{lem:ty-extend-delta}
  If $\Theta \vdash A: K$ and $\Theta' \fresh \Theta$ then $\Theta,\Theta'
  \vdash A : K$.
\end{lemma}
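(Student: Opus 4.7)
The plan is to prove this weakening property by a straightforward structural induction on the derivation of $\Theta \vdash A : K$, one case per kinding rule from Figure~\ref{fig:kinding}. Since we already have the refined kinding rules in hand and the disjointness assumption $\Theta' \mathbin{\#} \Theta$, the argument is essentially mechanical: we show that each rule continues to apply once the environment is extended with $\Theta'$ on the right.

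For the \textsc{TyVar} case, the hypothesis $a : K \in \Theta$ immediately yields $a : K \in \Theta,\Theta'$, so the rule re-derives the goal. The \textsc{Cons} case follows by applying the induction hypothesis to each of the $n = \arity(D)$ premises $\Theta \vdash A_i : K$ to obtain $\Theta,\Theta' \vdash A_i : K$, then re-applying \textsc{Cons}. The \textsc{Upcast} case is immediate: by induction $\Theta,\Theta' \vdash A : \mono$, and \textsc{Upcast} then gives $\Theta,\Theta' \vdash A : \poly$.

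The only case that requires any care is \textsc{ForAll}, where from $\Theta, a : \mono \vdash A' : \poly$ we must derive $\Theta, \Theta', a : \mono \vdash A' : \poly$ in order to apply \textsc{ForAll} again. Since we identify $\alpha$-equivalent types, we may assume without loss of generality that $a$ is fresh for $\Theta'$; then $(\Theta, a : \mono) \mathbin{\#} \Theta'$ holds, and the induction hypothesis applied with the extended left-hand environment $\Theta, a : \mono$ yields exactly what we need. Re-applying \textsc{ForAll} then gives $\Theta, \Theta' \vdash \forall a. A' : \poly$.

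I do not anticipate any real obstacle: freshness-handling in the \textsc{ForAll} case is the only mildly subtle point, and it is resolved by the standard convention of Barendregt-style $\alpha$-renaming already used throughout the paper (cf.\ Lemmas~\ref{lem:substequality} and~\ref{lem:substforall}). The lemma is a direct analogue of the standard weakening property for kinding and will be used freely, along with the other substitution and environment properties, in the subsequent soundness and completeness proofs.
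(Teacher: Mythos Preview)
Your proposal is correct and takes essentially the same approach as the paper: a straightforward induction on the kinding derivation, with the only subtlety being the \textsc{ForAll} case handled by $\alpha$-renaming the bound variable away from $\Theta'$. The paper's proof sketch is even terser than yours but makes exactly the same point.
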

\begin{proof}
  Straightforward by induction on the structure of derivations of $\Theta \vdash
  A: K$.  The only subtlety is in the case for $\forall$-types, where we assume
  without loss of generality that the bound type variable $a$ is renamed away
  from $\Theta$ and $\Theta'$, so that the induction hypothesis applies.
\end{proof}
\begin{lemma}
\label{lem:substitution-extend-delta}
If $\Delta \vdash \theta : \Theta \Rightarrow \Theta'$
  and $\Delta' \fresh \Delta, \Theta'$ as well as $\Delta' \fresh \Theta$
then $\Delta, \Delta' \vdash \theta : \Theta  \Rightarrow  \Theta'$.
\end{lemma}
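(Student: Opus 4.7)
The plan is to prove this weakening property for substitutions by straightforward induction on the structure of the derivation of $\Delta \vdash \theta : \Theta \Rightarrow \Theta'$, which is equivalent to induction on the length of $\Theta$ (and correspondingly of $\theta$).

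For the base case, $\theta = \emptyset$ and $\Theta = \cdot$, so the conclusion $\Delta, \Delta' \vdash \emptyset : \cdot \Rightarrow \Theta'$ follows immediately from the first rule of Figure~\ref{fig:freezeml-type-subst}, without even needing the freshness assumptions. For the inductive step, the derivation has the form
\begin{mathpar}
\inferrule
  {\Delta \vdash \theta_0 : \Theta_0 \Rightarrow \Theta' \\
   \Delta, \Theta' \vdash A : K}
  {\Delta \vdash \theta_0[a \mapsto A] : (\Theta_0, a : K) \Rightarrow \Theta'}
\end{mathpar}
with $\theta = \theta_0[a \mapsto A]$ and $\Theta = \Theta_0, a : K$. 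Applying the induction hypothesis to the first premise (noting that the freshness conditions on $\Delta'$ relative to $\Theta_0 \subseteq \Theta$ still hold) yields $\Delta, \Delta' \vdash \theta_0 : \Theta_0 \Rightarrow \Theta'$. For the second premise, I would invoke Lemma~\ref{lem:ty-extend-delta} to extend $\Delta, \Theta' \vdash A : K$ with the additional variables $\Delta'$; this requires $\Delta' \fresh \Delta, \Theta'$, which is exactly one of our hypotheses. The result is $\Delta, \Delta', \Theta' \vdash A : K$, after which reapplying the second substitution rule gives the required $\Delta, \Delta' \vdash \theta_0[a \mapsto A] : (\Theta_0, a : K) \Rightarrow \Theta'$.

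I do not expect any significant obstacle. The only subtlety is keeping track of the disjointness side-conditions: the statement requires $\Delta' \fresh \Theta$, which ensures that no variable in $\Delta'$ shadows one in $\Theta$ when we extend $\Delta$; and $\Delta' \fresh \Delta, \Theta'$ is what Lemma~\ref{lem:ty-extend-delta} consumes. Both are inherited intact by the subderivation on $\Theta_0$, so the induction goes through without any case analysis beyond the two rules defining substitution well-formedness.
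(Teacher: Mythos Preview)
Your proposal is correct and follows essentially the same approach as the paper's proof: induction on the derivation of $\Delta \vdash \theta : \Theta \Rightarrow \Theta'$, with the inductive step invoking Lemma~\ref{lem:ty-extend-delta} to weaken the kinding premise. One minor remark: the base case does implicitly rely on $\Delta' \fresh \Delta, \Theta'$, since the substitution judgement carries a side-condition that the fixed environment is disjoint from the target refined environment.
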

\begin{proof}
  By induction on the derivation of $\Delta \vdash \theta : \Theta \Rightarrow
  \Theta'$.  The base case is immediate given that $\Delta'$ is fresh for
  $\Delta$ and $\Theta'$.  For the inductive case, we have a
  derivation of the form:
\[
\inferrule*{
\Delta \vdash \theta : \Theta \Rightarrow \Theta' \\
\Delta,\Theta' \vdash A: K
}{
\Delta \vdash \theta[a \mapsto A] : (\Theta,a:K) \Rightarrow \Theta'
}
\]
By induction (since $\Delta'$ is clearly fresh for $\Delta,\Theta,$ and $\Theta'$) we
have $\Delta,\Delta' \vdash \theta : \Theta \Rightarrow \Theta'$.  Moreover, by
weakening (Lemma~\ref{lem:ty-extend-delta}) we also have
$\Delta,\Delta',\Theta' \vdash A : K$.  We can conclude, as required, that $\Delta,\Delta'
\vdash \theta[a\mapsto A] : (\Theta,a:K) \Rightarrow \Theta'$.
\end{proof}

\begin{lemma}
\label{lem:stability-subst-under-promotion}
If $\Theta_D = \meta{demote}(K, \Theta, \Delta')$
and
$\Delta \vdash \theta : \Theta_D \Rightarrow \Theta'$
then
$\Delta \vdash \theta : \Theta \Rightarrow \Theta'$.
\end{lemma}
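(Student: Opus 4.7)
The plan is to proceed by induction on the structure of $\Theta_D$ (equivalently, on the derivation of $\Delta \vdash \theta : \Theta_D \Rightarrow \Theta'$), first case-splitting on $K$. When $K = \poly$, the definition of $\meta{demote}$ gives $\Theta_D = \Theta$ immediately, so the hypothesis \emph{is} the conclusion. The real work is in the $K = \mono$ case.

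For $K = \mono$, I would induct on the structure of $\Theta$. The base case $\Theta = \cdot$ forces $\Theta_D = \cdot$ and $\theta = \emptyset$, so the conclusion is immediate. In the inductive step, $\Theta = (\Theta'', a : K')$, and by inversion (Lemma~\ref{lem:restriction-wellformed}) $\theta$ has the shape $\theta'[a \mapsto A]$ where $\Delta \vdash \theta' : \Theta_D'' \Rightarrow \Theta'$ and $\Delta, \Theta' \vdash A : K_D$, with $\Theta_D = (\Theta_D'', a : K_D)$ determined by which clause of $\meta{demote}$ applies. I would then split on whether $a \in \Delta'$: if $a \not\in \Delta'$ then the demote clause gives $K_D = K'$ and $\Theta_D'' = \meta{demote}(\mono, \Theta'', \Delta')$, so both the recursive invocation (handled by the induction hypothesis on $\Theta''$) and the kinding $\Delta, \Theta' \vdash A : K'$ come for free. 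If $a \in \Delta'$, then $K_D = \mono$ regardless of $K'$; here the induction hypothesis still supplies $\Delta \vdash \theta' : \Theta'' \Rightarrow \Theta'$, and the needed kinding $\Delta, \Theta' \vdash A : K'$ is obtained from $\Delta, \Theta' \vdash A : \mono$ by applying the \textsc{Upcast} rule (since $K' \in \{\mono, \poly\}$ and $\mono \le \poly$).

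Once both subderivations are in hand, reassembling them with the substitution-extension rule yields $\Delta \vdash \theta'[a \mapsto A] : (\Theta'', a : K') \Rightarrow \Theta'$, completing the inductive step. I do not anticipate any real obstacle: the argument is essentially an unfolding of the definition of $\meta{demote}$ combined with the observation that demotion only ever \emph{strengthens} a kind (from $\poly$ to $\mono$), so any substitution well-formed against the demoted environment is automatically well-formed against the original one, with \textsc{Upcast} bridging the one-step kind gap.
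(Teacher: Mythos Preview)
Your proposal is correct and follows essentially the same approach as the paper: case-split on $K$, then for $K = \mono$ induct on the environment and use \textsc{Upcast} to bridge the kind gap where demotion strengthened $\poly$ to $\mono$. The only cosmetic difference is that the paper inducts on $\Theta_D$ and case-splits on the demoted kind, whereas you induct on $\Theta$ and case-split on $a \in \Delta'$; since $\ftv(\Theta) = \ftv(\Theta_D)$ these are equivalent reorganizations of the same argument.
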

\begin{proof}
If $K = \poly$, $\meta{demote}$ yields $\Theta = \Theta_D$ and the statement
holds immediately.

Otherwise, if $K = \mono$, we perform induction on $\Theta_D$.
By definition of $\meta{demote}$, we have $\ftv(\Theta) = \ftv(\Theta_D)$.

If $\Theta_D = \cdot$ we have $\Theta = \cdot$ and can derive the following:
\begin{mathpar}
\inferrule
  { }
  {\Delta \vdash \emptyset : \cdot \Rightarrow \Theta'}

\end{mathpar}

Let $\Theta_D = (\Theta''_D, a : K')$.
By inversion we then have

\begin{mathpar}
\inferrule
  {\Delta \vdash \fsubst : \Theta''_D \Rightarrow \Theta' \\
   \Delta, \Theta' \vdash A : K'}
  {\Delta \vdash \fsubst[a \mapsto A] :
     (\Theta''_D, a : K') \Rightarrow \Theta'}
\end{mathpar}

By $\ftv(\Theta) = \ftv(\Theta_D)$ we have $\Theta = (\Theta'', a : K'')$.
By induction this implies $\Delta \vdash \theta : \Theta'' \Rightarrow \Theta'$.

If $K' = \poly$, then by definition of $\meta{demote}$ we have $a \not\in
\Delta'$ and $K'' = \poly$.
We can then derive the following:

\begin{mathpar}
\inferrule
  {\Delta \vdash \fsubst : \Theta'' \Rightarrow \Theta' \\
   \Delta, \Theta' \vdash A : \poly}
  {\Delta \vdash \fsubst[a \mapsto A] :
       (\Theta'', a : \poly) \Rightarrow \Theta'}
\end{mathpar}

Otherwise, we have $K' = \mono$ and show that $\Delta, \Theta' \vdash A : K''$
holds.
If $K'' = \mono$, this follows immediately from $\Delta, \Theta' \vdash A : K'$.
If $K'' = \poly$, we upcast $\Delta, \Theta' \vdash A : \mono$ to $\Delta,
\Theta' \vdash A : \poly$.

In both cases for $K''$, we can then derive the following:

\begin{mathpar}
\inferrule
  {\Delta \vdash \fsubst : \Theta \Rightarrow \Theta' \\
   \Delta, \Theta' \vdash A : K''}
  {\Delta \vdash \fsubst[a \mapsto A]
      : (\Theta'', a : \mono) \Rightarrow \Theta'}
\end{mathpar}
\end{proof}

\begin{lemma}\label{lem:demote-sound}
If $\Theta' = \dec{demote}(K,\Theta,\Delta)$
then $\ftv(\Theta) = \ftv(\Theta')$ and $\typ{\Delta}{\idsubst :
  \Theta\Rightarrow \Theta'}$.
\end{lemma}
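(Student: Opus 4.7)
The plan is to proceed by case analysis on $K$. The case $K = \poly$ is immediate: $\dec{demote}(\poly, \Theta, \Delta) = \Theta$, so both conclusions hold trivially, the latter by rule \textsc{S-Identity}.

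For $K = \mono$, I would induct on the structure of $\Theta$. The base case $\Theta = \cdot$ is immediate. For $\Theta = (\Theta_0, a : K')$, apply the induction hypothesis to obtain $\Theta_0' = \dec{demote}(\mono, \Theta_0, \Delta)$ with $\ftv(\Theta_0) = \ftv(\Theta_0')$ and $\typ{\Delta}{\idsubst_{\Theta_0} : \Theta_0 \Rightarrow \Theta_0'}$. In both sub-cases of the definition we have $\Theta' = (\Theta_0', a : K'')$ for some $K''$ (namely $K'' = \mono$ if $a \in \Delta$ and $K'' = K'$ otherwise), so $\ftv(\Theta') = \ftv(\Theta_0'), a = \ftv(\Theta_0), a = \ftv(\Theta)$, giving the first conjunct. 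For the second, extend the identity substitution using $[a \mapsto a]$ via the extension rule of Figure~\ref{fig:freezeml-type-subst}, which requires $\typ{\Delta, \Theta_0', a : K''}{a : K'}$. If $a \notin \Delta$ then $K'' = K'$ and this holds directly by \textsc{TyVar}. If $a \in \Delta$ then $K'' = \mono$; \textsc{TyVar} gives $a : \mono$, which we upcast via \textsc{Upcast} to $a : \poly$ when $K' = \poly$, and take as-is when $K' = \mono$.

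The only subtlety is this last kind-juggling step, where the substitution demands that $[a \mapsto a]$ be justified at the originally recorded kind $K'$ while the codomain environment only records $a$ at the demoted kind $\mono$; rule \textsc{Upcast} bridges this gap. Freshness assumptions ($a \notin \Theta_0$ in well-formed $\Theta$, and implicitly $\Delta \fresh \Theta$) ensure the extended substitution judgement is itself well-formed.
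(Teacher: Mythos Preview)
Your proof is correct and follows essentially the same approach as the paper's: case analysis on $K$, then induction on $\Theta$ in the $K = \mono$ case, with the key step being the use of \textsc{Upcast} to justify the kinding premise $\typ{\Delta, \Theta_0', a : K''}{a : K'}$ when demotion has lowered $K'' = \mono$ below $K' = \poly$. Your treatment is in fact slightly more explicit than the paper's about the kinding obligation incurred by the substitution-extension rule.

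One small caveat: your closing remark that ``implicitly $\Delta \fresh \Theta$'' is in tension with the sub-case $a \in \Delta$ you just handled (since there $a$ lies in both the demote argument $\Delta$ and in $\Theta$). This is really an artefact of the lemma statement overloading $\Delta$: in the paper's actual use of the lemma (soundness of $\unify$), the third argument to $\dec{demote}$ is $\ftv(A) - \Delta$, not the rigid context $\Delta$ itself, so the disjointness needed for the substitution judgement is not at risk. Your core argument is unaffected; only the parenthetical is slightly misleading.
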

\begin{proof}
  Proof by case analysis on $K$ and induction on $\Theta$.
There are three cases.  If $K = \poly$ then the result is immediate since
  $\Theta = \Theta'$.  If $K = \mono$ and $\Theta = \cdot$ then the result is
  also immediate.
  Otherwise, if $K = \mono$ and $\Theta = \Theta_1,a:K$ then
  $\dec{demote}(K,\Theta,\Delta) = \dec{demote}(K,\Theta_1,\Delta),a:K'$, where
  $\Theta_1' = \dec{demote}(K,\Theta_1,\Delta)$ and $K'$ is $\mono$ if $a \in
  \Delta$, otherwise $K = K'$.  Then by induction we have
  $\ftv(\Theta_1) = \ftv(\Theta_1')$ and $\typ{\Delta}{\idsubst: \Theta_1
    \Rightarrow \Theta_1'}$.
Clearly, $\ftv(\Theta_1,a:K) = \ftv(\Theta_1',a:K')$.  To see that $\typ{\Delta}{\idsubst :
  \Theta\Rightarrow \Theta'}$, consider two cases: if $a \in \Delta$ then $K' =
\mono$ and we can conclude $\typ{\Delta}{\idsubst :
  \Theta,a:K\Rightarrow \Theta_1',a:\mono}$ since if $K = \poly$ then we can use
\textsc{Upcast}.  Otherwise, $K = K'$ so the result is immediate.
\end{proof}

\begin{lemma}\label{lem:demote-complete}
Let $\Delta : \Theta \Rightarrow \Theta'$ and $\Delta, \Theta \vdash A : K$
such that $\Delta, \Theta' \vdash \theta(A) : K'$ for some $K'$ with $K' \le K$.
Furthermore, let $\dec{demote}(K',\Theta,\ftv(A) - \Delta) = \Theta_D$.
Then $\typ{\Delta,\Theta_D}{A : K'}$.
\end{lemma}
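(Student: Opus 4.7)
The plan is to proceed by case analysis on $K'$. If $K' = \poly$, then by definition of $\dec{demote}$ we have $\Theta_D = \Theta$, and the required judgement $\typ{\Delta, \Theta}{A : \poly}$ follows from the assumption $\typ{\Delta, \Theta}{A : K}$ together with \textsc{Upcast} if needed (since $K \le \poly$ trivially). The substantive work lies in the case $K' = \mono$, which I would prove by induction on the structure of $A$.

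For the type-variable case $A = a$, I would split on whether $a \in \Delta$ or $a \in \Theta$. In the first subcase, $a$ already has kind $\mono$ because $\Delta$ contains only monomorphic variables. In the second subcase, $a \in \ftv(A) - \Delta$, so by the definition of $\dec{demote}(\mono, \Theta, \ftv(A) - \Delta)$ the variable $a$ is assigned kind $\mono$ in $\Theta_D$. In either situation $\typ{\Delta, \Theta_D}{a : \mono}$ follows immediately from \textsc{TyVar}. The quantifier case $A = \forall b. A'$ cannot arise when $K' = \mono$: we have $\theta(A) = \forall c. \theta'(A')$ (up to renaming), and inversion on the kinding rules in Figure~\ref{fig:kinding} shows that any $\forall$-type has kind $\poly$, contradicting $K' = \mono$.

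For the inductive case $A = D\,\many{A_i}$, since $\theta(A) = D\,\many{\theta(A_i)}$ has kind $\mono$, inversion yields $\typ{\Delta, \Theta'}{\theta(A_i) : \mono}$ for each $i$. Applying the induction hypothesis to each $A_i$ gives $\typ{\Delta, \Theta_{D,i}}{A_i : \mono}$, where $\Theta_{D,i} = \dec{demote}(\mono, \Theta, \ftv(A_i) - \Delta)$. Since $\ftv(A_i) \subseteq \ftv(A)$, the environment $\Theta_D$ demotes a superset of the variables demoted in each $\Theta_{D,i}$. A small auxiliary observation that $\mono$-kinding is preserved when additional type variables are demoted from $\poly$ to $\mono$ (provable by trivial induction on the kinding derivation, since \textsc{TyVar} is the only way to conclude $a : \mono$ for a variable and a $\mono$ premise always remains derivable in a ``more demoted'' environment) then yields $\typ{\Delta, \Theta_D}{A_i : \mono}$, whence \textsc{Cons} gives the desired conclusion.

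The main obstacle I anticipate is the bookkeeping around the auxiliary monotonicity observation for the \textsc{Cons} case: one must verify carefully that each $A_i$ uses only variables whose kinds remain unchanged or are strengthened in passing from $\Theta_{D,i}$ to $\Theta_D$, and nothing is made strictly ``more polymorphic''. Beyond this, and the small inversion argument ruling out the $\forall$-case when $K' = \mono$, the proof is a routine structural induction guided by the definition of $\dec{demote}$.
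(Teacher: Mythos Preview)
Your proposal is correct and follows essentially the same approach as the paper: both handle the $K' = \poly$ (equivalently $K' = K$) case trivially and then argue the $K' = \mono$ case by induction, with the same variable, constructor, and $\forall$ cases. Your treatment of the \textsc{Cons} case is in fact more careful than the paper's, which applies the induction hypothesis and silently writes $\Theta_D$ rather than $\Theta_{D,i}$ (and contains a typo, writing $\theta(A_i)$ where $A_i$ is meant); your explicit monotonicity observation is exactly the missing justification for that step.
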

\begin{proof}
  For $K' = K$, the statement follows immediately.
  Therefore, we consider only the case $K = \poly, K' = \mono$.

  We perform induction on the derivation of  $\Delta, \Theta' \vdash \theta(A) : \mono$.

  \begin{description}
  \item[Case $\theta(A) = a$:]
    \begin{mathpar}
    \inferrule*
      {a:K' \in \Theta'}
      {\Delta, \Theta' \vdash a : \mono}
    \end{mathpar}

    We have $A = b$ for some $b \in \Delta, \Theta$.
    If $b \in \Delta$, then $\Delta \vdash b : \mono$ follows immediately.
    Otherwise, we have $(b : K'') \in \Theta$ for some $K''$.
    By $b \in \ftv(A) - \Delta$, we then have $(b : \mono)$ in $\Theta_D$.

  \item[Case $\theta(A) = D \: \theta(A_1) \dots \theta(A_n)$:]
  \begin{mathpar}
  \inferrule*
    {\arity(D) = n \\\\
     \Delta, \Theta' \vdash \theta(A_1) : \mono \quad \cdots \quad \Delta,\Theta' \vdash A_n : \mono}
    {\Delta, \Theta' \vdash \tc\,\many{\theta(A)} : \mono}
  \end{mathpar}

  By induction we have $\Delta, \Theta_D \vdash \theta(A_i) : \mono$ for all $1 \le i \le n$.
  We can therefore derive $\Delta, \Theta_D \vdash \tc\,\many{A} : \mono$.

\end{description}

Note that we can disregard upcasts and $\theta(A) = \forall b.B$
as they would both yield $K' = \poly$:

\begin{mathpar}
  \inferrule*
    {\Delta, \Theta', b:\mono \vdash B : \poly}
    {\Delta, \Theta' \vdash \forall b.B : \poly}

  \inferrule*
    {\Delta, \Theta' \vdash A : \mono}
    {\Delta, \Theta' \vdash A : \poly}
\end{mathpar}

\end{proof}

The following property states the well-formedness conditions
needed in order for composition of substitutions to imply
composition of the functions induced by them.

{
\resetNum
\proofContext{lem:decompose-composition}
\begin{lemma}
Let the following conditions hold:
\begin{align}
&\Delta \vdash \theta' : \Theta \Rightarrow \Theta' \\
&\Delta \vdash \theta'' : \Theta' \Rightarrow \Theta'' \\
&\theta = \theta'' \circ \theta' \\
& \Delta, \Theta \vdash  A
\end{align}
Then $\theta(A) = \theta''\theta'(A)$ holds.

\end{lemma}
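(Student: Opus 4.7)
The plan is to proceed by induction on the structure of the type $A$ (equivalently, on the derivation of $\Delta, \Theta \vdash A : K$ for some $K$). The statement is the standard compositionality-of-substitution property, and the well-formedness preconditions (1)--(2) are needed precisely to justify that $\theta' \circ \theta''$ is meaningful and that substitution bodies are closed under the relevant environments.

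First I would dispatch the two atomic cases. When $A = a$, I split on whether $a \in \Delta$ or $a \in \Theta$. If $a \in \Delta$, then $a$ is fixed by all three substitutions (since $\Delta$-variables are disjoint from the domains of $\theta'$ and $\theta''$ by the preconditions), so both sides equal $a$. If $a \in \Theta$, then by the definition of composition we have $(\theta'' \circ \theta')(a) = \theta''(\theta'(a))$ directly. When $A = D\,\many{A_i}$, each $A_i$ is well-formed in $\Delta, \Theta$ by inversion, so the induction hypothesis gives $\theta(A_i) = \theta''(\theta'(A_i))$ for each $i$, and both sides distribute over the constructor.

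The $\forall$-case is the main obstacle and requires care with capture avoidance. Given $A = \forall b. A'$, I would use \cref{lem:substforall} to alpha-rename $b$ to a single variable $c$ chosen fresh for $\ftv(\theta'), \ftv(\theta''), \ftv(\theta)$, and $\ftv(A')$. Under this freshness choice, the three substitutions' actions on $\forall b. A'$ collapse to the clean forms $\theta(\forall c. A'[c/b])$, $\theta'(\forall c. A'[c/b])$, and then applying $\theta''$ on top; crucially, the freshness of $c$ ensures no renaming is triggered in any of the three applications, so we can peel off $\forall c$ from each side and obtain $\theta(A'[c/b]) = \theta''(\theta'(A'[c/b]))$. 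This follows from the induction hypothesis applied to $A'[c/b]$, which is well-formed in $\Delta, \Theta, c : K'$ (for the kind of $b$) by the standard substitution property on kinding; the corresponding extended substitutions $\theta'[c\mapsto c]$ and $\theta''[c \mapsto c]$ witness well-formedness $\Delta \vdash \theta'[c \mapsto c] : (\Theta, c:K') \Rightarrow (\Theta', c:K')$ and similarly for $\theta''$, and their composition agrees with $\theta[c \mapsto c]$. Equality of the resulting types then lifts back through \cref{lem:substequality} to conclude $\theta(A) = \theta''(\theta'(A))$.

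The only subtle point beyond bookkeeping is ensuring that the fresh variable $c$ chosen to avoid capture during $\theta'(\forall b. A')$ is also fresh for $\theta''$, so that the subsequent application of $\theta''$ does not itself trigger a further renaming that would break the induction. This is guaranteed by choosing $c$ fresh for \emph{all} of $\theta', \theta'', \theta$ simultaneously at the outset, which is always possible since each $\ftv(\cdot)$ is finite.
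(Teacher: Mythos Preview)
The paper states this lemma without proof, treating it as a routine property of substitution composition. Your structural induction is the standard argument and is correct; in particular, the $\forall$-case is handled properly by picking a single $c$ fresh for all of $\theta$, $\theta'$, $\theta''$ and $A'$ simultaneously, so that none of the three substitution applications triggers a further renaming and the induction hypothesis applies directly to $A'[c/b]$.
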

}

\begin{lemma}
\label{lem:subst-gen-inst}
If $\termwf{\Delta}{M}$, and $\Delta \vdash \theta : \Theta \Rightarrow
\Theta'$, then:
\begin{enumerate}
\item If $\ftv(A) - (\Delta, \Theta) \fresh \Theta'$ then
$\mgend[(\Delta,\Theta)]{A}{M} = \mgend[(\Delta,\Theta')]{\theta(A)}{M}$;
\item if $\Delta'' \fresh \Delta,\Theta$ and $\Delta'' \fresh \Theta'$ and
$((\Delta,\Theta), \Delta'', M, A') \Updownarrow A $ then
$((\Delta,\Theta'), \Delta'', M, \theta(A')) \Updownarrow \theta(A)$;
\end{enumerate}
\end{lemma}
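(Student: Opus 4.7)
The plan is to prove the two parts independently, each by unfolding the definitions of $\mgen$ and $\Updownarrow$ given in Figure~\ref{fig:freezeml-aux-judgments} and reducing to standard substitution reasoning.

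For part~(1), I will show $\ftv(A) - (\Delta,\Theta) = \ftv(\theta(A)) - (\Delta,\Theta')$ as ordered sequences. The $\mgen$ function only cares about this set (and about whether $M \in \dec{GVal}$, which is the same in both invocations), so the two results coincide. The key observation is that $\theta$ acts as the identity on variables outside $\Theta$: if $a \in \Delta$ then $\theta(a) = a$, and if $a \notin \Delta,\Theta$ then $a \notin \dom(\theta)$ (since $\Delta \vdash \theta : \Theta \Rightarrow \Theta'$ forces $\dom(\theta) \subseteq \Theta$ and $\Delta \fresh \Theta'$). Together with the hypothesis $\ftv(A) - (\Delta,\Theta) \fresh \Theta'$, this means that every variable in $\ftv(A) - (\Delta,\Theta)$ survives unchanged in $\theta(A)$ and remains outside $(\Delta,\Theta')$; conversely, any variable in $\ftv(\theta(A)) - (\Delta,\Theta')$ must come from a variable of $A$ outside $(\Delta,\Theta)$, since variables of $\Theta$ are mapped into $(\Delta,\Theta')$. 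The order-preservation is immediate because $\theta$ does not introduce or reorder the fresh part of $\ftv(A)$.

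For part~(2), I will case-split on whether $M \in \dec{GVal}$, following the two cases of the $\Updownarrow$ judgement. If $M \in \dec{GVal}$ then by inversion $A = \forall\Delta''.A'$, and I need to produce $\theta(A) = \forall\Delta''.\theta(A')$; this follows from Lemma~\ref{lem:substforall} (or Lemma~\ref{lem:substequality}) using the freshness assumptions $\Delta'' \fresh \Theta,\Theta'$, so that $\Delta''$ escapes both the domain and the codomain of $\theta$ and may be pushed under the substitution. If $M \notin \dec{GVal}$, inversion gives an instantiation $\rsubst$ with $(\Delta,\Theta) \vdash \rsubst : \Delta'' \Rightarrow_\mono \cdot$ such that $A = \rsubst(A')$. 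I will produce the witness $\rsubst' = \theta \circ \rsubst$ for the new $\Updownarrow$-derivation: Lemma~\ref{lem:subst-substitution}(2), together with the freshness hypotheses, yields $(\Delta,\Theta') \vdash \theta \circ \rsubst : \Delta'' \Rightarrow_\mono \cdot$, and Lemma~\ref{lem:decompose-composition} (applied with appropriate well-formedness of $A'$) gives $(\theta\circ\rsubst)(A') = \theta(\rsubst(A')) = \theta(A)$, as required.

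The main obstacle is bookkeeping of freshness: I must verify that the hypotheses $\Delta'' \fresh \Delta,\Theta$ and $\Delta'' \fresh \Theta'$ are actually strong enough to apply Lemmas~\ref{lem:subst-substitution} and~\ref{lem:decompose-composition}, and that $\rsubst$'s codomain being $\cdot$ does not cause trouble with Lemma~\ref{lem:subst-substitution}(2) (which expects a non-trivial codomain); if needed, I will weaken to $\typ{\Delta,\Theta}{\rsubst : \Delta'' \Rightarrow_\mono \Theta}$ first, then compose. Similarly, in part~(1) the subtle point is that the equality of $\ftv$ sequences must hold \emph{as ordered lists with no duplicates}, so I will confirm that $\theta$ does not merge or reorder the fresh variables, which relies precisely on the assumption $\ftv(A) - (\Delta,\Theta) \fresh \Theta'$.
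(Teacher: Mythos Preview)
Your approach matches the paper's: part~(1) by showing the two $\ftv$ sequences coincide, part~(2) by case analysis on $M \in \dec{GVal}$, using $\theta \circ \rsubst$ as the witness in the non-value case. Part~(1) and the guarded-value subcase of part~(2) are fine.

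There is, however, a slip in the non-guarded case of part~(2). The $\Updownarrow$ rule applied to the \emph{new} input $((\Delta,\Theta'), \Delta'', M, \theta(A'))$ produces $\rsubst'(\theta(A'))$, so you must check $(\theta\circ\rsubst)(\theta(A')) = \theta(A)$, not $(\theta\circ\rsubst)(A') = \theta(A)$. Your appeal to Lemma~\ref{lem:decompose-composition} gives only the latter. The two quantities differ whenever $A'$ contains variables from $\Theta$: for $a \in \Theta$ you get $(\theta\circ\rsubst)(a) = a$ (since $a \notin \Delta''$), whereas $(\theta\circ\rsubst)(\theta(a)) = \theta(a)$.

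The fix is exactly what the paper does: observe that $\ftv(\theta) \fresh \Delta''$ (which follows from your hypothesis $\Delta'' \fresh \Theta'$ together with $\Delta'' \fresh \Delta$). This means that for every $a \in \Theta$, the type $\theta(a)$ contains no $\Delta''$-variables, so $(\theta\circ\rsubst)$ acts as the identity on it; and for $a \in \Delta''$, $\theta(a) = a$ since $\Delta'' \fresh \Theta$. Together these give $(\theta\circ\rsubst)(\theta(A')) = \theta(\rsubst(A'))$ directly. So your witness $\theta\circ\rsubst$ is right, but the justification needs this freshness argument rather than composition-of-substitutions.
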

\begin{proof}
  \begin{enumerate}
    \item For part 1: Observe that
    \begin{eqnarray*}
      \mgend[(\Delta,\Theta)]{A}{M} &=&\left\{\ba{ll}
                                        (\Delta',\Delta') & M \in \dec{GVal}\\
(.,\Delta') & M \notin \dec{GVal}
\ea\right.
\\
      \mgend[(\Delta,\Theta')]{\theta(A)}{M} &=& \left\{\ba{ll}
                                        (\Delta'',\Delta'') & M \in \dec{GVal}\\
(.,\Delta'') & M \notin \dec{GVal}
\ea\right.
    \end{eqnarray*}
where $\Delta' = \ftv(A) - (\Delta,\Theta)$ and $\Delta'' = \ftv(\theta(A)) -
(\Delta,\Theta')$.  So, the equation $\mgend[(\Delta,\Theta)]{A}{M} =
\mgend[(\Delta,\Theta')]{\theta(A)}{M}$ holds if and only if $\Delta' =
\Delta''$.
Suppose $a\in \Delta'$, that is, it is a free type variable of $A$ and not among
$\Delta,\Theta$.  Since $\theta$ only affects type variables in $\Theta$,
we have $\theta(a) = a$ and it
follows that $a \in \ftv(\theta(A))$.  Moreover, by assumption $\Delta' \fresh
\Theta'$ so $a \in \ftv(\theta(A)) - (\Delta,\Theta') = \Delta''$.  Conversely,
suppose $a \in \Delta''$, that is, $a$ is a free type variable of $\theta(A)$
and not among $\Delta,\Theta'$. Since $a \not\in \Delta,\Theta'$, we must have $\theta(a) =
a$ since $\theta$ was a well-formed substitution mentioning only type variables
in $\Delta,\Theta'$.  This implies that $a \in \ftv(A)$ since $a$ cannot have
been introduced by $\theta$.

We has thus shown $\Delta' \permutation \Delta''$.
To show $\Delta' = \Delta''$,
assume $a, b \in \Delta'$ such that $a$ occurs before $b$ in $\Delta'$.
This means that the first occurrence of $a$ in $A$ is before the first occurrence of $b$ in $A$.
For all $c \in \Theta$ we have $c \neq b$ and $ \ftv( \theta(c) ) \fresh b$.
Thus, the first occurrence of $a$ in $\theta(A)$ remains before the first occurrence of $b$ in $\theta(A)$.

\item For part 2: We consider two cases.
  \begin{itemize}
  \item If the derivation is of the form
    \[
    \inferrule{M \in \dec{GVal}}
    {((\Delta,\Theta),\Delta'', M, A') \Updownarrow \forall \Delta''. A'}
    \]
    then we may derive
    \[
    \inferrule{M \in \dec{GVal}}
    {((\Delta,\Theta'),\Delta'', M, \theta(A')) \Updownarrow \forall \Delta''. \theta(A')}
    \]
    by observing that since $\Delta'' \fresh \Theta$ and $\Delta'' \fresh \Theta$, we know that $\theta(\forall
    \Delta''. A') = \forall \Delta''. \theta(A')$.

  \item If the derivation is of the form
    \[
    \inferrule{\typ{\Delta,\Theta}{\rsubst : \Delta'' \Rightarrow_\mono \cdot}\\
      M \notin \dec{GVal}}
    {((\Delta,\Theta),\Delta'', M, A') \Updownarrow \rsubst(A')}
    \]
Then first we observe (by property~\ref{lem:subst-substitution}) that
$\typ{\Delta,\Theta'}{\theta \circ \rsubst : \Delta'' \Rightarrow_\mono \cdot}$, so we can derive
    \[
    \inferrule{\typ{\Delta,\Theta'}{\theta \circ \rsubst : \Delta'' \Rightarrow_\mono \cdot}\\
      M \notin \dec{GVal}}
    {((\Delta,\Theta'),\Delta'', M, \theta(A')) \Updownarrow \theta\circ \rsubst(\theta(A'))}
    \]
observing that $\theta(\rsubst(A')) = \theta \circ \rsubst(\theta(A'))$ since
$\ftv(\theta) \fresh \Delta''$.
  \end{itemize}
  \end{enumerate}
\end{proof}

\begin{lemma}
\label{lem:bijective-renaming}
Let $\Delta \vdash \theta : \Theta \Rightarrow \Theta'$ be a bijection between
the type variables in $\Theta$ and $\Theta'$.
Furthermore, let $\Theta \fresh \Delta' \fresh \Theta'$ and $\termwf{\Delta}{M}$
hold.

Then the following holds:
\begin{enumerate}
\item
\label{lem-part:bijective-renaming:typ}
If
  $\typ{\Delta,\Theta;\Gamma}{M : A}$
then
  $\typ{\Delta,\Theta';\theta(\Gamma)}{M : \theta(A)}$.
\item
\label{lem-part:bijective-renaming:principal}
If
  $\meta{principal}( (\Delta, \Theta), \Gamma, M, \Delta' ,A)$
then
  $\meta{principal}( (\Delta, \Theta'), \theta(\Gamma), M, \Delta' ,\theta(A))$.
\end{enumerate}
\end{lemma}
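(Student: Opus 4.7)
The plan is to prove the two parts simultaneously by induction on the term $M$, using the well-founded characterization of typing via $\J{-}$ (and of $\meta{principal}$ via $P(\J{-})$) established earlier in the appendix, together with inversion on the respective judgements in each case. The simultaneity is needed because the \freezemlLab{Let} case of part~\ref{lem-part:bijective-renaming:typ} invokes the principality assumption on the bound term, which must be transported via part~\ref{lem-part:bijective-renaming:principal}; conversely, part~\ref{lem-part:bijective-renaming:principal} uses part~\ref{lem-part:bijective-renaming:typ} (in both directions along $\theta$). Throughout, we exploit bijectivity by using the inverse renaming $\theta^{-1}\colon \Theta' \Rightarrow \Theta$, which is well-formed over $\Delta$ and satisfies $\theta^{-1} \circ \theta = \idsubst_\Theta$ and $\theta \circ \theta^{-1} = \idsubst_{\Theta'}$.

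For part~\ref{lem-part:bijective-renaming:typ}, the cases for \freezemlLab{Freeze}, \freezemlLab{App}, and unannotated/annotated $\lambda$ are routine; note that for \freezemlLab{Lam-Ascribe} and \freezemlLab{Let-Ascribe} the annotation $A'$ satisfies $\Delta \vdash A' : \poly$ by $\termwf{\Delta}{M}$, so $\theta(A') = A'$. For \freezemlLab{Var}, assuming $x : \forall \Delta''.H \in \Gamma$ (with $\Delta''$ chosen fresh from $\Theta,\Theta'$ by $\alpha$-equivalence) and $(\Delta,\Theta) \vdash \rsubst : \Delta'' \Rightarrow_\poly \cdot$, one defines the required instantiation for $\theta(\Gamma)$ pointwise by $\rsubst''(a) = \theta(\rsubst(a))$ and checks $\rsubst''(\theta(H)) = \theta(\rsubst(H))$ by structural induction on $H$, using that $\theta$ and $\rsubst$ act on disjoint variables. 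For \freezemlLab{Let}, we $\alpha$-rename the generalisable variables $\Delta''$ to be fresh from $\Theta,\Theta'$, apply the induction hypothesis (both parts) to $M$ and $N$, and then use Lemma~\ref{lem:subst-gen-inst} to obtain the transformed $\mgen$ and $\Updownarrow$ judgements. Well-formedness of $\theta$ over the extended context is secured by Lemma~\ref{lem:substitution-extend-delta}.

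For part~\ref{lem-part:bijective-renaming:principal}, we must verify three clauses of $\meta{principal}((\Delta,\Theta'), \theta(\Gamma), M, \Delta', \theta(A))$. The first clause, $\Delta' = \ftv(\theta(A)) - (\Delta,\Theta')$, follows from bijectivity of $\theta$ and $\Delta' \fresh \Theta,\Theta'$, which ensure that $\theta$ is an order-preserving bijection between the $\Theta$-occurrences of $A$ and the $\Theta'$-occurrences of $\theta(A)$ while leaving $\Delta$- and $\Delta'$-occurrences (and their positions) fixed. The second clause follows directly from part~\ref{lem-part:bijective-renaming:typ}. For the universal third clause, given $\Delta'''$ and $A'''$ satisfying $\Delta''' = \ftv(A''') - (\Delta,\Theta')$ and $\typ{\Delta,\Theta',\Delta'''; \theta(\Gamma)}{M : A'''}$, we $\alpha$-rename so that $\Delta''' \fresh \Theta$ and apply part~\ref{lem-part:bijective-renaming:typ} with $\theta^{-1}$ to obtain $\typ{\Delta,\Theta,\Delta'''; \Gamma}{M : \theta^{-1}(A''')}$; by the principal-typing assumption there exists $\rsubst'$ with $(\Delta,\Theta) \vdash \rsubst' : \Delta' \Rightarrow_\poly \Delta'''$ and $\rsubst'(A) = \theta^{-1}(A''')$. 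We then set $\rsubst(a) = \theta(\rsubst'(a))$ for $a \in \Delta'$ and conclude $\rsubst(\theta(A)) = \theta(\rsubst'(A)) = A'''$ by the commutation argument sketched above.

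The main obstacle I anticipate is the bookkeeping needed to make the commutation of the renaming $\theta$ with instantiations $\rsubst$ (and with the implicit renamings hidden in $\mgen$, $\msplit$, and $\Updownarrow$) entirely rigorous. The proof therefore depends crucially on: (i) $\alpha$-renaming bound or generalisable variables to be fresh from $\Theta \cup \Theta'$ before applying $\theta$, (ii) Lemma~\ref{lem:subst-gen-inst} for the behaviour of $\mgen$ and $\Updownarrow$ under $\theta$, and (iii) the substitution/composition lemmas in \auxref{sec:app-subst} to re-establish well-formedness of the composed substitutions. Once these are in place, the inductive steps themselves are conceptually straightforward.
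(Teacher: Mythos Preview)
Your proposal is correct and follows essentially the same approach as the paper: a simultaneous induction on $M$ for the two parts, with the \freezemlLab{Let} case of part~\ref{lem-part:bijective-renaming:typ} appealing to part~\ref{lem-part:bijective-renaming:principal} on the bound subterm, and part~\ref{lem-part:bijective-renaming:principal} established by running part~\ref{lem-part:bijective-renaming:typ} backwards along $\theta^{-1}$ after freshening the universally quantified $\Delta'''$. The only cosmetic difference is that the paper makes the freshening explicit via a bijective instantiation $\rsubst_F$ (composed with $\theta^{-1}$) rather than speaking of $\alpha$-renaming, and does the $\mgen$/$\Updownarrow$ preservation inline rather than citing Lemma~\ref{lem:subst-gen-inst}.
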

\begin{proof}
\begin{enumerate}
\item
For the first part of the lemma, we perform induction on $M$ and focus on the case $\Let \; x = M \;\In\ N$  .
By inversion, we have the following:

\begin{equations}
(\Delta', \Delta'') = \mgend[(\Delta,\Theta)]{A'}{M} \\
((\Delta, \Theta), \Delta'', M, A') \Updownarrow A \\
\Delta, \Theta, \Delta''; \Gamma \vdash M : A' \\
\typ{\Delta, \Theta; \Gamma, x : A}{N : B} \\
\meta{principal}((\Delta, \Theta) \Gamma, M, \Delta'', A')
\end{equations}

We assume w.l.o.g. that $\Delta'' \fresh \Theta'$. (This is justified, as per
the induction hypothesis, we may otherwise just apply an appropriate renaming
substitution.)
By induction, we then have
$\typ{\Delta, \Theta', \Delta''; \theta(\Gamma)}{M : \theta(A')}$.

By $\Theta' \fresh \Delta'' \fresh \Delta, \Theta$ and $\Delta \vdash \theta :
\Theta \Rightarrow \Theta'$, we also have $\mgend[\Delta]{M}{A'} =
\mgend[\Delta]{M}{\theta(A')} = (\Delta', \Delta'')$.
Similarly,
$((\Delta, \Theta') \Delta'', M, A') \Updownarrow \theta(A) $ holds:
If $M \in \dec{GVal}$, then $A = \forall \Delta''. A'$ and $\theta(A) = \forall
\Delta ''. \theta(A')$.
Otherwise, $A = \rsubst(A')$ for some $\rsubst$ with $\Delta \vdash \Delta''
\Rightarrow_\mono \cdot$.
Hence, the domains of $\rsubst$ and $\theta$ are disjoint, and we have
$((\Delta, \Theta') \Delta'', M, A') \Updownarrow \theta(\rsubst(A'))$.
$\meta{principal}( (\Delta, \Theta'), \theta(\Gamma), M, \Delta' ,\theta(A))$
follows directly from induction and the second part of the lemma.
Likewise, $\typ{\Delta, \Theta'; \theta(\Gamma), x : \theta(A)}{N : \theta(B)}$
follows by induction.

We have thus shown all properties needed to derive $\typ{\Delta, \Theta';
\theta(\Gamma)}{\Let\; x = M \;\In\; N : \theta(B)}$:
\[
\inferrule*
    {(\Delta', \Delta'') = \mgend[(\Delta, \Theta')]{A'}{M} \\
     ((\Delta, \Theta'), \Delta'', M, A') \Updownarrow \theta(A) \\
     \Delta, \Theta', \Delta''; \theta(\Gamma) \vdash M : \theta(A') \\
     \typ{\Delta, \Theta'; \theta(\Gamma, x : A)}{N : \theta(B)} \\\\
     \meta{principal}((\Delta, \Theta'), \Gamma, M, \Delta'', \theta(A'))
    }
    {\typ{\Delta, \Theta'; \theta(\Gamma)}{\Let \; x = M\; \In \; N : \theta(B)}}
\]

\item
To show the second part of the lemma, observe that
$\typ{\Delta, \Theta', \Delta'}{M : \theta(A)}$ and
$\Delta' = \ftv(\theta A) - \Delta, \Theta'$ follows directly from
$\meta{principal}( (\Delta, \Theta), \Gamma, M, \Delta' ,A)$ by applying the
first part of the lemma.

Further, assume
$\typ{\Delta, \Theta', \Delta_p; \theta(\Gamma)}{M : A_p}$, where $\Delta_p = \ftv(A_p) - (\Delta, \Theta')$.
Let $\rsubst_F$ be a bijective instantiation that maps variables in $\Delta_p$
to fresh ones, yielding $\Delta, \Theta' \vdash \rsubst_F : \Delta'
\Rightarrow_\mono \Delta_F$ for some appropriate $\Delta_F$.
Due to $\theta$ being a bijection, we can use its inverse $\theta^{-1}$ with
$\Delta, \Delta_F \vdash \theta^{-1} : \Theta' \Rightarrow \Theta$.

We apply the first part of the lemma to $\theta^{-1} \circ \rsubst_F$, yielding
$\typ
  {\Delta; \Theta, \Delta_F; \theta^{-1}(\rsubst_F(\theta(\Gamma)))}
  {M : \theta^{-1}(\rsubst_F(A_p)}$ and $\ftv(\theta^{-1}(\rsubst_F(A_p)) - \Delta, \Theta = \Delta_F$.
We have $\ftv(\theta\Gamma) \subseteq \Delta, \Theta' \fresh \Delta'$ and
therefore $\theta^{-1}(\rsubst_F(\theta(\Gamma))) = \Gamma$.

By $\meta{principal}( (\Delta, \Theta), \Gamma, M, \Delta' ,A)$, we then have
that there exists $\rsubst$ such that
$\Delta, \Theta \vdash \rsubst : \Delta' \Rightarrow \Delta_F$ and
$\rsubst(A) = \theta^{-1}(\rsubst_F(A_p))$.
Hence, $\rsubst_F^{-1}(\theta(\rsubst(A))) = A_p$, meaning that $\rsubst \circ
\theta \circ \rsubst_F^{-1}$ is the instantiation showing that
$\meta{principal}( (\Delta, \Theta'), \Gamma, M, \Delta' ,A)$ holds

\end{enumerate}
\end{proof}

\begin{lemma}
\label{lem:composition-strengthening}
If
  $\Delta \vdash \theta : \Theta \Rightarrow \Theta''$ and
  $\Delta \vdash \theta' : \Theta \Rightarrow \Theta'$ and
  $\Delta \vdash \theta'' : \Theta' \Rightarrow \Theta'', \Theta_E$ as well as
  $\theta = \theta'' \circ \theta'$ then
for all
  $a \in \ftv(\theta') - \Delta$ we have
  $\Delta, \Theta'' \vdash \theta''(a)$.
\end{lemma}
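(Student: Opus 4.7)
The plan is to unfold the definition of $\ftv(\theta')$, then exploit the equation $\theta = \theta'' \circ \theta'$ together with the well-formedness of $\theta$ to bound the free variables of $\theta''(a)$. Specifically, since $a \in \ftv(\theta') - \Delta$, there is some $b \in \Theta$ with $a \in \ftv(\theta'(b))$. By inverting $\Delta \vdash \theta' : \Theta \Rightarrow \Theta'$ we get $\Delta, \Theta' \vdash \theta'(b) : K$ for $K = \Theta(b)$, and therefore $a \in \Theta'$. Inverting $\Delta \vdash \theta'' : \Theta' \Rightarrow \Theta'', \Theta_E$ at the entry for $a$ then yields $\Delta, \Theta'', \Theta_E \vdash \theta''(a) : \Theta'(a)$; the goal is to strengthen this to $\Delta, \Theta'' \vdash \theta''(a) : \Theta'(a)$ by showing that $\ftv(\theta''(a))$ avoids $\Theta_E$.

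The key step will be to compare $\theta''(a)$ with $\theta(b)$. Since $\theta = \theta'' \circ \theta'$, an easy induction on the structure of $\theta'(b)$ (using the capture-avoiding substitution convention from Figure~\ref{fig:freezeml-substitution-application} and \Cref{lem:substforall}, so binders in $\theta'(b)$ are renamed away from $\ftv(\theta'')$) gives the auxiliary fact that $\ftv(\theta''(a')) \subseteq \ftv(\theta''(\theta'(b)))$ for every $a' \in \ftv(\theta'(b))$. In particular this applies to our chosen $a$, yielding $\ftv(\theta''(a)) \subseteq \ftv(\theta(b))$.

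Now $\Delta \vdash \theta : \Theta \Rightarrow \Theta''$ gives $\Delta, \Theta'' \vdash \theta(b) : \Theta(b)$, so every free type variable of $\theta(b)$ lies in $\Delta \cup \Theta''$. Combined with the previous inclusion, $\ftv(\theta''(a)) \subseteq \Delta \cup \Theta''$. A standard strengthening lemma on the kinding judgement (invertible on the \textsc{TyVar}, \textsc{Cons}, and \textsc{ForAll} rules, straightforward by induction on the derivation of $\Delta, \Theta'', \Theta_E \vdash \theta''(a) : \Theta'(a)$) then removes the unused $\Theta_E$ and delivers $\Delta, \Theta'' \vdash \theta''(a)$.

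The main obstacle is the auxiliary fact about capture-avoiding substitution preserving free variables of images, which looks trivial but requires care with $\alpha$-renaming: the $\forall$ case is where one must invoke \Cref{lem:substforall} to pick a bound variable fresh for both $\theta''$ and the current working context so that the inductive hypothesis applies cleanly. Everything else is essentially inversion on the well-formedness judgements, so once this auxiliary property is in hand the proof is immediate.
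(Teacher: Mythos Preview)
Your proposal is correct and follows essentially the same idea as the paper's proof, which is a one-line sketch: ``Via induction on $\theta'$, observing that $\Delta \vdash \theta : \Theta \Rightarrow \Theta''$ dictates the behaviour of $\theta''$ on all variables in the intersection of $\Theta'$ and the codomain of $\theta'$.''  Your direct argument (pick $a$, find a witness $b \in \Theta$ with $a \in \ftv(\theta'(b))$, then use $\theta(b) = \theta''(\theta'(b))$ to bound $\ftv(\theta''(a))$) is precisely an unrolling of that observation; the paper merely organises it as an induction on the entries of $\theta'$ rather than by first choosing $a$ and then the witness.  Your explicit auxiliary fact about $\ftv(\theta''(a)) \subseteq \ftv(\theta''(A))$ for $a \in \ftv(A)$, with the care about $\alpha$-renaming in the $\forall$ case, is exactly the content the paper elides.
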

\begin{proof}
Via induction on $\theta'$, observing that
$\Delta \vdash \theta : \Theta \Rightarrow \Theta''$ dictates the behaviour of $\theta''$
on all  variables in the intersection of $\Theta'$ and the codomain of $\theta'$.
\end{proof}

\section{Correctness of unification proofs}


{


\subsection{Soundness of unification}

\resetNum

\thmunifysound*

\begin{proof}

Via induction on the maximum of the sizes of $A$ and $B$.
We only consider the cases where unification succeeds.

\begin{enumerate}
\item $\unify(\Delta, \Theta, a, a)$: we have $\theta = \iota_{\Delta, \Theta}$
  (identity substitution) and the result is immediate.

\item $\unify(\Delta, (\Theta, a : K'), a, A)$ or $\unify(\Delta, (\Theta, a :
  K'), A, a)$: We consider the first case; the second is symmetric. We have
\[\ba{rcl}
\dec{unify}(\Delta,(\Theta,a:K'),a,A) &=& (\Theta_1,\idsubst[a\mapsto A])\\
\dec{demote}(K',\Theta,\ftv(A) - \Delta) &=& \Theta_1\\
 \typ{\Delta,\Theta_1&}{& A : K'}
\ea\]
First, observe that $a \notin \ftv(A)$ since $a \not\in \Delta,\Theta$ and
$\ftv(\Theta_1) = \ftv(\Theta)$.  Therefore
\[\idsubst[a \mapsto A](a) = A = \idsubst[a \mapsto A](A)\]
Next, by Lemma~\ref{lem:demote-sound} we know that $\typ{\Delta}{\idsubst :
  \Theta \Rightarrow \Theta_1}$.  Moreover, by $\typ{\Delta,\Theta_1
  }{ A : K'}$ we can derive $\typ{\Delta}{\idsubst[a \mapsto A]
    :\Theta,a:K' \Rightarrow \Theta_1}$.

\item $\unify(\Delta, \Theta, D\,A_1\,\dots\,A_n, D\,B_1\,\dots\,B_n)$: we need
  to show that types under the constructor $D$ are pairwise identical after a
  substitution: $\theta(A_1) = \theta(B_1),\dots,\theta(A_n) = \theta(B_n)$,
  where $n = \arity(D)$.
  We perform a nested induction, showing that for all $0 \le j \le n+1$ the
  following holds:
  $\Delta \vdash \theta_j \vdash \Theta \Rightarrow \Theta_j$
  and
  for all $1 \le i < j$ we have
  $\theta_j(A_i) = \theta_j(B_i)$.

  For $j = 0$, this holds immediately.


  In the inductive step,
  by definition of $\meta{unify}$ we have $\theta_{j+1} = \theta' \circ \theta_j$,
  and by the outer induction
  $\theta'(A_j) = \theta'(B_j)$
   and $\Delta \vdash \theta' :
  \Theta_{j} \Rightarrow \Theta_{j+1}$.
  Together, we then have
  $\Delta \vdash \theta_{j+1} : \Theta \Rightarrow \Theta_{j+1}$.
  From Lemma~\ref{lem:substequality} we
  know that $\theta_{j+1}$ maintains equalities established by
  $\theta_j$, and so we have
  $\theta_{j+1}(A_i) = \theta_{j+1}(B_i)$ for all $1 \le i < j+1$.

  From the definition of substitution we then have
  \[
  \theta(D\,A_1\,\dots\,A_n) =
    D\,\theta(A_1)\,\dots\,\theta(A_n) = D\,\theta(B_1)\,\dots\,\theta(B_n)
    = \theta(D\,B_1\,\dots\,B_n)
  \]
  with $\Delta \vdash \theta : \Theta \Rightarrow \Theta_{n+1}$.

\item \label{un:snd:caseforall} $\unify(\Delta, \Theta, \forall a. A, \forall
  b. B)$: In this case we must have
\begin{gather}
\dec{unify}((\Delta,c),\Theta,A[c/a],B[c/b]) \; =  \; (\Theta_1,\theta) \notag\\
c \; \fresh \; \Delta,\Theta \mathlabelNum{un:snd:freshDeltaTheta} \\
\ftv(B) \fresh c \; \fresh \; \ftv(A) \mathlabelNum{un:snd:freshAB} \\
c  \; \fresh \; \ftv(\theta) \mathlabelNum{un:snd:cftvtheta}
\end{gather}
so from the inductive hypothesis we have $\theta(A[c/a]) =
  \theta(B[c/b])$ \labelNum{un:snd:thetaeq}, where $c$ is fresh and $\Delta, c
  \vdash \theta : \Theta \Rightarrow \Theta_1$.    We now derive:
\[
  \begin{array}{cll}
    &\theta(\forall a. A) \\
=
    &\theta(\forall c. A[c/a]) &\reason{by \refNum{un:snd:cftvtheta,un:snd:freshAB}, \cref{lem:substforall}} \\
=
    &\forall c. \theta(A[c/a]) & \reason{by \refNum{un:snd:freshDeltaTheta,un:snd:cftvtheta}} \\
  \end{array}
\]
  \noindent
and by exactly the same reasoning, $\theta(\forall b. B) = \forall
c. \theta(B[c/b])$.  Then by \refNum{un:snd:thetaeq} we can conclude
$\theta(\forall a. A) = \forall c. \theta(A[c/a]) = \forall c. \theta(B[c/b])
= \theta(\forall b. B)$,
  which is the desired equality, and $\Delta \vdash \theta : \Theta_1
  \Rightarrow \Theta$ because $c \notin \ftv(\theta)$ implies that we can remove
  it from $\Delta$ without damaging the well-formedness of $\theta$.

\end{enumerate}
\end{proof}

\subsection{Completeness of unification}

\fe{This is a standalone property and is similar in spirit to
\cref{lem:theta-is-id-on-non-Gamma-tyvars}.
Should we move this lemma to the previous appendix or move the
 other lemma out of it?
}

\begin{lemma}[Unifiers are surjective]
\label{lem:unify-surjective}
Let $\unify(\Delta, \Theta, A, B) = (\Theta', \theta)$.
Then $\ftv(\Theta') \subseteq \ftv(\Theta)$ and for all
$b \in \Theta'$ there exists $a \in \Theta$ such that $b \in \ftv(\theta(a))$.
\end{lemma}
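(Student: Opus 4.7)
\medskip

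The plan is to prove both conjuncts simultaneously by induction on the recursive structure of $\meta{unify}$ (equivalently, on the total size of $A$ and $B$), following the same case analysis used in the soundness proof (Theorem~\ref{thm:unification-sound}). In the two base cases where one side is a type variable, the result is immediate: for $\meta{unify}(\Delta,\Theta,a,a)$ we have $\Theta' = \Theta$ and $\theta = \idsubst$, so each $b\in\Theta'$ satisfies $b\in\ftv(\theta(b))$ with witness $b$ itself; and for $\meta{unify}(\Delta,(\Theta,a{:}K),a,A)$ (and its symmetric variant), the returned environment is $\Theta_1 = \dec{demote}(K,\Theta,\ftv(A)-\Delta)$, and $\ftv(\Theta_1) = \ftv(\Theta) \subseteq \ftv(\Theta,a{:}K)$ by direct inspection of $\dec{demote}$; again each $b\in\Theta_1$ is its own witness since $\idsubst[a\mapsto A](b) = b$.

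The forall case is largely a transport of the inductive hypothesis: the recursive call returns exactly the same $(\Theta_1,\theta')$ that the outer call returns, and the freshly introduced skolem $c$ lives in the \emph{fixed} environment, not in $\Theta$ or $\Theta_1$, so both the containment $\ftv(\Theta_1)\subseteq\ftv(\Theta)$ and the surjectivity claim follow directly.

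The main work, and the main obstacle, is the constructor case, where $\meta{unify}$ threads $n = \arity(D)$ subcalls through an accumulated pair $(\Theta_i,\theta_i)$ with $\theta_{i+1} = \theta'\circ\theta_i$ and $\Theta_{i+1}$ obtained from $\meta{unify}(\Delta,\Theta_i,\theta_i(A_{i+1}),\theta_i(B_{i+1})) = (\Theta_{i+1},\theta')$. I would prove the two conjuncts as a joint invariant on $i$: (i) $\ftv(\Theta_i)\subseteq\ftv(\Theta)$, and (ii) for every $b\in\Theta_i$ there exists $a\in\Theta$ with $b\in\ftv(\theta_i(a))$. Containment is immediate by chaining the inductive hypothesis for the inner $\meta{unify}$ call with the outer invariant. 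Surjectivity is the delicate part: given $b\in\Theta_{i+1}$, the inductive hypothesis on the inner call supplies some $c\in\Theta_i$ with $b\in\ftv(\theta'(c))$, and the outer invariant supplies $a\in\Theta$ with $c\in\ftv(\theta_i(a))$; then, because substitution application replaces $c$ in $\theta_i(a)$ by $\theta'(c)$ (which contains $b$), we obtain $b\in\ftv(\theta'(\theta_i(a))) = \ftv(\theta_{i+1}(a))$ using the composition identity from Statement~\refNum{lem:decompose-composition} (Lemma~\ref{lem:decompose-composition}). Applied at $i = n$ this yields the required conclusion. The trickiest bookkeeping is ensuring that the well-formedness preconditions of the composition identity hold at each step, which follows from the soundness theorem applied to each inner call.
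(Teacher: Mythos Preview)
Your proof is correct and follows the same inductive skeleton as the paper's, but the paper's argument is considerably terser because it exploits a stronger invariant than you do. The paper simply observes that (i) every case of $\unify$ constructs $\Theta'$ from $\Theta$ by removing or demoting variables, and (ii) the returned substitution is built from identities, with the only non-trivial bindings $[a\mapsto A]$ occurring precisely when $a$ is simultaneously dropped from the output environment. Consequently, for every surviving $b\in\Theta'$ one has $\theta(b)=b$, so $b$ is always its own witness. You already see this in your base cases (``each $b\in\Theta_1$ is its own witness since $\idsubst[a\mapsto A](b)=b$''), but in the constructor case you prove only the weaker surjectivity statement directly, which forces the two-level chaining argument and the appeal to the composition lemma. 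If instead you carried the invariant ``$\theta_i(b)=b$ for all $b\in\Theta_i$'' through the loop, the inductive step becomes a one-liner: $b\in\Theta_{i+1}\subseteq\Theta_i$ gives $\theta_i(b)=b$, and the inner hypothesis gives $\theta'(b)=b$, hence $\theta_{i+1}(b)=\theta'(\theta_i(b))=b$. Your route is not wrong, just heavier than it needs to be.
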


\begin{proof}
The first part follows immediately from the fact that in each case $\Theta'$,
is always constructed from $\Theta$ by removing variables or demoting them.

For the second part, observe that $\theta'$ is constructed by manipulating
appropriate identity functions.
Mappings are only changed in the cases $(a, A)$ and $(A,a)$,
such that $\theta(a) = A$.
However, at the same time, $a$ is removed from the output.

\end{proof}

\resetNum

\thmunifycomplete*

\proofContext{unify:compl}
\begin{proof}
Via induction on the maximum of the sizes of $A$ and $B$.

\begin{enumerate}

\item Case $A = a = B$:
In this case $\unify(\Delta, \Theta, a, a)$ succeeds and returns
$(\Theta,\idsubst_{\Delta,\Theta})$.  Moreover, we may choose $\theta'' = \theta$ and
conclude that $\Delta \vdash \theta : \Theta \Rightarrow \Theta'$ and $\theta =
\theta \circ \idsubst_{\Delta, \Theta}$, as desired.

\item Case $A = a \neq B$ or $B = b \neq A$.  The two cases where one side is a
  variable are symmetric; we consider $A = a \neq B$.

  Since
  $\theta(a) = \theta(B)$ for $B \neq a$, we must have that $a \in \Theta$.
  Thus, $\Theta = \Theta_1',a:K'$ for some kind $K'$ such that $K' \le K$
  (due to assumption $\Delta, \Theta \vdash A : K$).
  Also, since types are
  finite syntax trees we must have $a \neq \ftv(B)$~\labelNum{a-ftv-B}.
  By assumption $\Delta \vdash \theta : \Theta \Rightarrow \Theta'$,
  we have $\theta(a) : K'$ and by $\theta(a) = \theta(B)$ therefore also
  $\Delta, \Theta' \vdash \theta(B) : K'$~\labelNum{theta-on-B-K'}.

 We now define
  $\Theta_1 = \dec{demote}(K',\Theta_1',\ftv(B) - \Delta)$ and
   choose $\theta''$ to agree with $\theta$ on $\Theta_1$, and undefined on
  $a$, yielding $\Delta \vdash \theta'' : \Theta'_1 \Rightarrow \Theta'$%
  ~\labelNum{theta''-wf-weak}.
  By $\refNum{a-ftv-B}$ we then have $\theta''(B) = \theta(B)$,
  making \refNum{theta-on-B-K'} equivalent to
  $\Delta, \Theta' \vdash \theta''(B) : K'$.
  We apply \cref{lem:demote-complete}, yielding $\Delta, \Theta_1 \vdash B : K'$

  Hence unification succeeds in this case with
  $\dec{unify}(\Delta,\Theta,a,B) = (\Theta_1,\idsubst[a \mapsto B])$.

  We strengthen \refNum{theta''-wf-weak} to
  $\typ{\Delta}{\theta'' : \Theta_1 \Rightarrow \Theta'}$ by
  observing that for each $b \in \ftv(B) - \Delta$ (i.e., those variables
  potentially demoted to $K'$ in $\Theta_1$),
  we have $\Delta, \Theta' \vdash \theta(b) : K'$.
  If $K' = \poly$ we have $K' = K$ by $K' \le K$ and
  $\Delta, \Theta' \vdash \theta(b) : K'$ follows immediately. Otherwise,
  if $K ' = \mono$, then due to $b \in \ftv(B)$, $\theta(b)$ occurs
  in $\theta(B)$, and $\theta(b) : \poly \ge K'$ would violate $\refNum{theta-on-B-K'}$.

  Clearly, $\theta'' \circ (\idsubst[a \mapsto B]) =
  (\theta'' \circ \idsubst)[a \mapsto \theta''(B)] = \theta$ since $\theta''$
  agrees with $\theta$ on all variables other than $a$, and $a \not\in \ftv(B)$
  as well as $\theta(a) = \theta(B)$.

\item $\theta(D\,A_1\,\dots\,A_n) = \theta(D\,B_1\,\dots\,B_n)$: by definition
  of substitution we have $\theta(A_i) = \theta(B_i)$, where $i \in 1,\dots,n$
  and $n \ge 0$.
  We perform a nested induction, showing that for all $0 \le j \le n+1$ the
  following holds:
  We have $\Delta \vdash \theta_j : \Theta \Rightarrow \Theta_j$~\labelNum{D:theta-j-wf} and
  there exists $\theta''_j$ such that
    $\Delta \vdash \theta''_j : \Theta_n \Rightarrow \Theta'$ and
    $\theta''_j \circ \theta_j = \theta$~\labelNum{D:theta''-n}
  as well as
  for all $1 \le i < j$
  unification of $\theta_i(A_i)$ and  $\theta_i(B_i)$ succeeds.

  \begin{enumerate}
  \item $j = 0$: unification succeeds with $\theta' = \theta_1 = \iota$ and the
    theorem holds for $\theta'' = \theta$ and $\Theta'' = \Theta$.
  \item $j \ge 1$:
    We use \refNum{D:theta''-n} to obtain
    $\theta''_j(\theta_j (A_j)) = \theta(A)$~\labelNum{D:theta-A-split} and
    $\theta''_j(\theta_j (B_j)) = \theta(B)$~\labelNum{D:theta-B-split}.

    We then have
    $$(\Theta_{j+1}, \theta'_{j+1}) = \unify(\Delta, \Theta_j, \theta_j(A_j),
    \theta_j(B_j))$$ and $\theta_{j+1} = \theta'_{j+1} \circ \theta_j$
    (by definition of $\unify$).

    By \refNum{D:theta-A-split,D:theta-B-split,D:theta-j-wf} the outer induction shows that
    unification of $\theta_j(A_j)$ and $\theta_j(B_j)$ succeeds and
    there exists $\Delta \vdash \theta'' : \Theta_{j+1} \Rightarrow \Theta'$
    such that $\theta'' \circ \theta'_{j+1} = \theta''_j$~\labelNum{D:theta''-j}.
    By \cref{thm:unification-sound}, we have
    $\Delta \vdash \theta'_{j+1} : \Theta_j \Rightarrow \Theta_{j+1}$
    and hence by composition also $\Delta \vdash \theta_{j+1} : \Theta \Rightarrow \Theta_{j+1}$.
    Further, by \refNum{D:theta''-n,D:theta''-j},
    we have
    \[
      (\theta'' \circ \theta'_{j+1}) \circ \theta_j =
      \theta''_j \circ \theta_j =
      \theta
    \]
    Choosing
    $\theta''_{j+1} = \theta''$
    then satisfies
    $\theta''_{j+1} \circ \theta_{j+1} = \theta$
    and $\Delta  \vdash \theta''_{j+1} : \Theta_{j+1} \Rightarrow \Theta'$.
  \end{enumerate}

\item $\theta(\forall a. A) = \theta(\forall b. B)$: we take fresh $c \notin
  \ftv(\theta, A, B)$.  By Lemma~\ref{lem:substforall} and definition of
  substitution we have $\theta(A[c/a]) = \theta(B[c/b])$.  By induction
  $\unify((\Delta, c), \Theta, A[c/a], B[c/b])$ succeeds with $(\Theta_1,
  \theta')$ and there exist $\theta''$ such that
  $\theta = \theta'' \circ \theta'$~\labelNum{forall:comp-theta''F} and
  ${\Delta,c \vdash \theta'' : \Theta_1 \Rightarrow \Theta'}$%
  ~\labelNum{forall:theta''-wf}.
  The latter implies $c \not\in \Delta,\Theta'$.
  By \refNum{forall:comp-theta''F} and $c \not\in \ftv(\theta)$ we
  have $c \not \in \ftv(\theta')$.

  This means that
  $\unify(\Delta, \Theta, \forall a. A, \forall b. B)$ succeeds with
  $(\Theta_1, \theta')$

  We strengthen \refNum{forall:theta''-wf} to
  ${\Delta \vdash \theta'' : \Theta_1 \Rightarrow \Theta'}$ by
  showing that $c \not\in \ftv(\theta'')$.
  Hence, assume $e \in \Theta$ such that $c \in \ftv(\theta''(e))$.
  By \cref{lem:unify-surjective}, there exists $f \in \Theta$ such that
  $e \in \ftv(\theta'(f))$.
  This would imply $c \in \ftv(\theta''(\theta'(e))$, which by
  \refNum{forall:comp-theta''F} contradicts $\Delta \vdash \theta : \Theta
  \Rightarrow \Theta'$ and $c \not\in \Delta,\Theta'$.

\end{enumerate}
\end{proof}

}

\section{Correctness of type inference algorithm}
\label{sec:correctness-type-inference-algo}


\setlength{\jot}{2pt}

This section contains proofs of correctness of the type inference algorithm.
All of the properties (lemmas and theorems) in this appendix are parameterised
by a single term $M$ and we prove them correct simultaneously by induction on
the structure of $M$.
As the proof is by induction on the structure of \emph{terms}, there is no need
to concern ourselves about what it would mean to perform induction on the
structure of \emph{derivations} in light of the negative occurrence of the
typing relation in the principal type restriction (as discussed in
Section~\ref{part:principal-type-restriction}).

The dependencies between different proofs are shown in
\cref{fig:proof-dep-graph}.
A straight arrow $P \xrightarrow{\phantom{........}} Q$ denotes a direct
dependency in which for any term $M$, the property $Q[M]$ depends on $P[M]$.
A dashed arrow $P \xdashrightarrow{\phantom{........}} Q$ denotes a decreasing
dependency in which for any term $M$, the property $Q[M]$ depends only on
$P[M']$ where $M'$ is a \emph{strict} subterm of $M$.
%
%
All cycles in \cref{fig:proof-dep-graph} include a dashed arrow, ensuring that
all properties depend only on one another in a well-founded way.

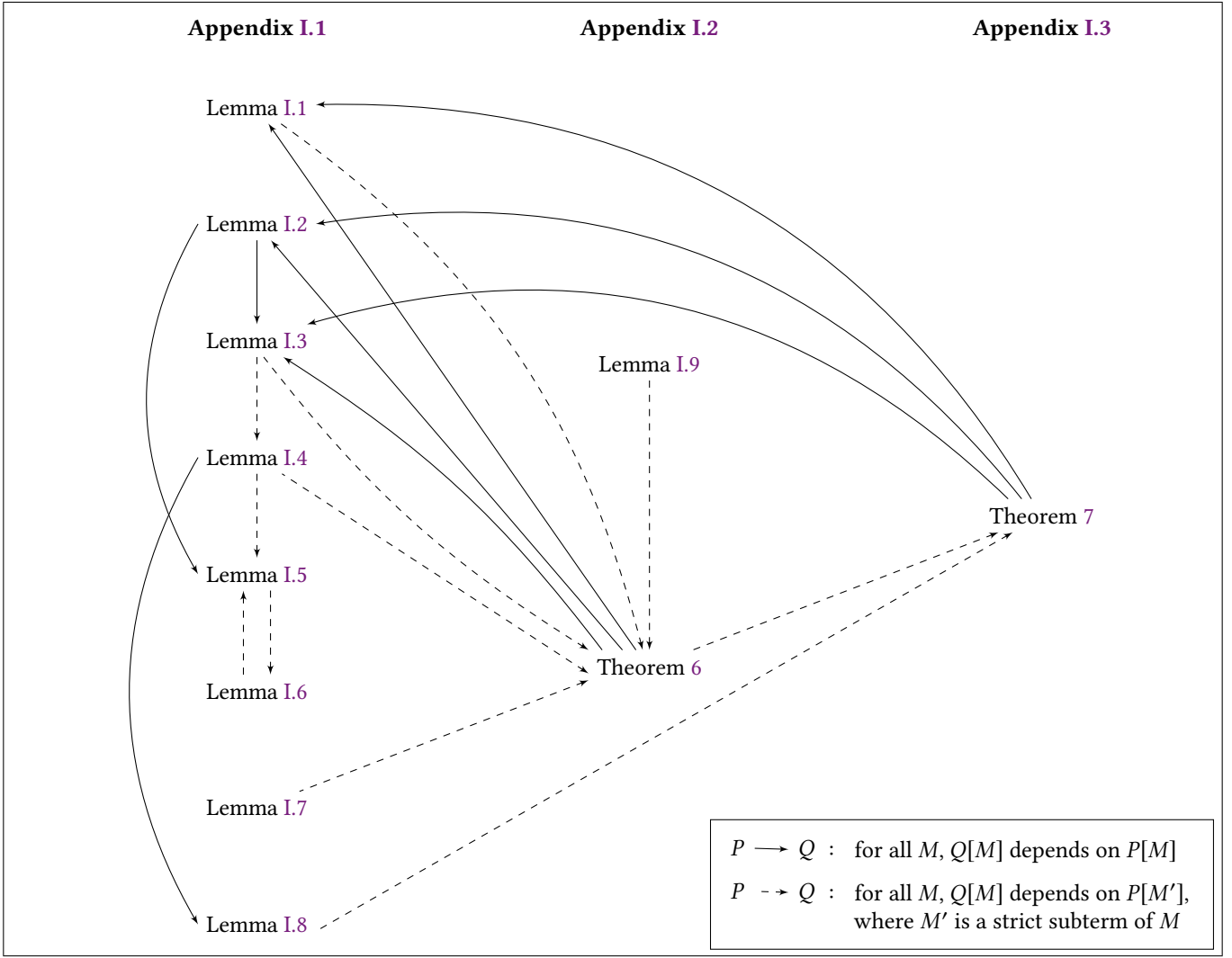
\begin{figure}[ht]
\begin{tikzpicture}[node distance=3.5cm]

\tikzstyle{normal}=[latex'-]
\tikzstyle{onsub}=[latex'-, dashed]


\node (I1-header) {\textbf{\Cref{subsec:correctness-inference:principality}}};
\node[right=of I1-header] (I2-header) {\textbf{\Cref{subsec:correctness-inference:soundness}}};
\node[right=of I2-header] (I3-header) {\textbf{\Cref{subsec:correctness-inference:completeness-mg}}};

\node[matrix, row sep=1.25cm, below=0.5cm of I1-header] (I-1)
{
\node
   (lem-inferred-types-are-principal)
   {\cref{lem:inferred-types-are-principal}}; \\
\node
  (lem-inferred-types-vs-principal-types)
  {\cref{lem:inferred-types-vs-principal-types}}; \\
\node
  (lem-stability-principality-substitution)
  {\cref{lem:stability-principality-substitution}}; \\
\node
  (lem-subst-gamma-and-A)
  {\cref{lem:subst-gamma-and-A}}; \\
\node
  (lem-shrink-context)
  {\cref{lem:shrink-context}}; \\
\node
  (lem-extend-tv-context-typing)
  {\cref{lem:extend-tv-context-typing}}; \\
\node
  (lem-principal-rename-gen-vars)
  {\cref{lem:principal:rename-gen-vars}}; \\
\node
  (lem-bijection-on-principal-types-vars)
  {\cref{lem:bijection-on-principal-types-vars}}; \\
};

\node[matrix, row sep=4cm,anchor=center] (I-2) at ( I-1.east -| I2-header)
{
\node
  (lem-theta-is-id-on-non-Gamma-tyvars)
  {\cref{lem:theta-is-id-on-non-Gamma-tyvars}}; \\
\node
  (thm-inference-sound)
  {\cref{thm:inference-sound}}; \\
};

\node[matrix,anchor=center] (I-3) at ( I-1.east -| I3-header)
{
\node
  (thm-inference-completeness-mg)
  {\cref{thm:inference-completeness-mg}};\\
};


\draw
  (lem-inferred-types-are-principal) edge[normal, bend left] (thm-inference-completeness-mg)
  (lem-inferred-types-are-principal) edge[normal] ($(thm-inference-sound.north)+(-2mm,0)$);

\draw
  (lem-inferred-types-vs-principal-types) edge[normal] ($(thm-inference-sound.north)+(-4mm,0)$);

\draw
  (lem-inferred-types-vs-principal-types.west) edge[bend right, -latex'] (lem-shrink-context.west)
  (lem-inferred-types-vs-principal-types.east) edge[normal, bend angle=30, bend left] ($(thm-inference-completeness-mg.north)+(-3mm,0)$);

\draw
  (lem-stability-principality-substitution) edge[normal, bend angle=30, bend left] ($(thm-inference-completeness-mg.north)+(-5mm,0)$)
  (lem-stability-principality-substitution) edge[normal, bend angle=10, bend left] ($(thm-inference-sound.north west)+(2mm,0)$)
  (lem-stability-principality-substitution) edge[normal] (lem-inferred-types-vs-principal-types);

\draw
  (lem-subst-gamma-and-A) edge [normal, onsub] (lem-stability-principality-substitution);

\draw
  (lem-shrink-context) edge[onsub] (lem-subst-gamma-and-A)
  (lem-shrink-context.south) ++(-2mm,0) edge[onsub] ($(lem-extend-tv-context-typing.north)+(-2mm,0)$);

\draw
  (lem-extend-tv-context-typing.north) ++(2mm,0) edge[onsub] ($(lem-shrink-context.south)+(2mm,0)$);


\draw
  (lem-bijection-on-principal-types-vars.west) edge[normal, bend left] (lem-subst-gamma-and-A.west);



\draw
  (thm-inference-sound) edge[onsub] (lem-theta-is-id-on-non-Gamma-tyvars)
  (thm-inference-sound.west) ++(0,-1mm) edge[onsub] (lem-subst-gamma-and-A)
  (thm-inference-sound.north) ++(-1mm,0) edge[onsub, bend angle=20, bend right] (lem-inferred-types-are-principal)
  (thm-inference-sound.north west) ++(0,0) edge[onsub, bend angle=10, bend left]  ($(lem-stability-principality-substitution.south)+(1mm,0)$)
  (thm-inference-sound.west) ++(0,-2mm) edge[onsub] (lem-principal-rename-gen-vars);

\draw
  (thm-inference-completeness-mg) edge[onsub] (thm-inference-sound)
  (thm-inference-completeness-mg) edge[onsub] ($(lem-bijection-on-principal-types-vars.east)+(0,-1mm)$);

\node[matrix, draw, below right=-1.55cm and 6.7cm of lem-bijection-on-principal-types-vars.center, outer sep=0mm, inner sep=1.5mm, overlay]
{
  \node (P1) {$P$}; &[5mm] \node (Q1) {$Q \;:$}; & \node{for all $M$, $Q[M]$ depends on $P[M]$}; \\
  \node (P2) {$P$}; & \node (Q2) {$Q \;:$}; & \node{\begin{minipage}{5cm}for all $M$, $Q[M]$ depends on $P[M']$, \\where $M'$ is a strict subterm of $M$\end{minipage}}; \\
};

\draw (Q1) edge[normal] (P1);
\draw (Q2) edge[onsub]  (P2);

\end{tikzpicture}
\caption{Dependencies between properties in
\Cref{sec:correctness-type-inference-algo}}
\label{fig:proof-dep-graph}
\end{figure}

\subsection{Principality}
\label{subsec:correctness-inference:principality}

In this subsection we collect together proofs of properties related to
principality.

{

\resetNum
\proofContext{inferred-types-are-principal}

\begin{lemma}[Inferred types are principal]
\label{lem:inferred-types-are-principal}
If
  $\Infer(\Delta, \Theta,\Gamma,M) = (\Theta', \theta, A)$ and
  $\termwf{\Delta}{M}$ and
  $\wfctx{\Delta, \Theta}{\Gamma}$
then
  $\meta{principal}((\Delta,\Theta' - \Delta'), \theta \Gamma,\Delta',A)$ holds, where
  $\Delta' = \ftv(A) - \Delta - \ftv(\theta)$.
\end{lemma}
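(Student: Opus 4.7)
The plan is to proceed by structural induction on $M$, matching the recursive structure of $\Infer$. For each case, we need to verify the three components of $\meta{principal}((\Delta, \Theta' - \Delta'), \theta\Gamma, M, \Delta', A)$: (i) that $\Delta' = \ftv(A) - (\Delta, \Theta' - \Delta')$, which we show by unfolding the definition of $\Delta'$ and observing how $\Infer$ constructs $\Theta'$; (ii) the typing $\typ{(\Delta, \Theta' - \Delta', \Delta'); \theta\Gamma}{M : A}$, which follows from soundness (\cref{thm:inference-sound}) applied at the current term $M$, rearranging the kind environment to expose the generalisable variables $\Delta'$; and (iii) for any competing typing $\typ{(\Delta, \Theta' - \Delta', \Delta''); \theta\Gamma}{M : A''}$ with $\Delta'' = \ftv(A'') - (\Delta, \Theta' - \Delta')$, exhibiting an instantiation $\rsubst$ with $(\Delta, \Theta' - \Delta') \vdash \rsubst : \Delta' \Rightarrow_\poly \Delta''$ and $\rsubst(A) = A''$.

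For the variable cases the argument is direct. For $\freeze{x}$ the \freezemlLab{Freeze} rule forces $A$ to equal $\Gamma(x)$ uniquely, so $\Delta' = \emptyset$ and $\rsubst$ is the empty substitution. For an unfrozen variable with $\Gamma(x) = \forall \many{a}.H$, the algorithm instantiates each $a_i$ with a fresh $b_i : \poly$; any alternative type must arise from an instantiation $\rsubst'$ of the $\many{a}$ via \freezemlLab{Var}, and we obtain the required $\rsubst$ by composing the renaming $b_i \mapsto \rsubst'(a_i)$. The lambda cases follow by applying the induction hypothesis to the body and transporting the resulting instantiation across the added binding (using the supporting lemmas for extending and shrinking the context, \cref{lem:shrink-context} and \cref{lem:extend-tv-context-typing}). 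For the application case $M\,N$, given an alternative typing $\typ{\ldots}{M\,N : A''}$ we invert to obtain alternative typings of $M$ at some $A_0 \to A''$ and of $N$ at $A_0$; the inductive hypotheses yield instantiations witnessing principality for the subterms, which jointly form a substitution unifying $\fsubst_2(A')$ with $A \to b$. Unification completeness (\cref{thm:thmunifycomplete}) then guarantees that the algorithm's unifier $\fsubst_3$ is most general, furnishing the required $\rsubst$.

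The two let cases are the subtlest. For $\Let\; x = M \;\In\; N$, the algorithm first invokes $\Infer$ on $M$ (obtaining a principal type $A_M$ by the induction hypothesis), computes $\mgen$ and possibly $\meta{demote}$s flexible variables, then recurses on $N$. Given an alternative typing of the whole expression, inversion on the \freezemlLab{Let} rule supplies an alternative typing of $M$ together with its own principality side-condition. We use \cref{lem:inferred-types-vs-principal-types} to relate the alternative principal type of $M$ to ours, then appeal to \cref{lem:principal-rename-gen-vars} and \cref{lem:bijection-on-principal-types-vars} to align the ordering of generalised quantifiers (recall that $\meta{principal}$ permits reordering via any $\rsubst_\poly$). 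The inductive hypothesis on $N$ then contributes the remaining component of the instantiation, which we glue together after stabilising under the substitution $\fsubst_1$ (via \cref{lem:stability-principality-substitution}). The \freezemlLab{Let-Ascribe} case is simpler because the ascription fixes $A$, eliminating the need to reconcile competing generalisations of $M$.

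The main obstacle is the unannotated let case: we must track three distinct "kinds" of type variables --- the fixed $\Delta$, the scope-extruded flexible residue $\Theta' - \Delta'$, and the freshly generalisable $\Delta'$ --- while accounting for the ordering freedom that $\meta{principal}$ grants to reordering $\Delta'$, and while distinguishing the guarded-value and non-guarded-value branches of $\Updownarrow$ and $\mgen$. The supporting lemmas collected in \cref{subsec:correctness-inference:principality} are designed precisely to encapsulate this bookkeeping; with them in hand the let case reduces to a careful diagram chase.
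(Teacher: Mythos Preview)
Your proposal takes a genuinely different route from the paper. You propose a direct structural induction on $M$ with a full case analysis, essentially replaying the case structure of the completeness proof. The paper's argument is far shorter and performs no case analysis at all: it simply invokes soundness (\cref{thm:inference-sound}) to obtain $\typ{\Delta,\Theta';\theta\Gamma}{M:A}$ and the well-formedness of $\theta$, then, given any competing typing $\typ{\Delta,(\Theta'-\Delta'),\Delta_p;\theta\Gamma}{M:A_p}$, weakens $\theta$ to target $(\Theta'-\Delta'),\Delta_p$ and invokes completeness (\cref{thm:inference-completeness-mg}) on the \emph{same} term $M$ to get a factoring substitution $\theta''$ with $\theta''(A)=A_p$ and $\theta = \theta''\circ\theta$. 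Because $\theta = \theta''\circ\theta$, $\theta''$ must act as the identity on $\ftv(\theta)$, so its restriction to $\Delta' = \ftv(A)-\Delta-\ftv(\theta)$ is the required instantiation $\rsubst$. That is the entire proof.

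The reason this works without circularity is the simultaneous-induction setup described at the start of \cref{sec:correctness-type-inference-algo}: all of these properties are parameterised by $M$ and proved together; the straight arrows in the dependency graph mean that $\text{Lemma}[M]$ may use $\text{Soundness}[M]$ and $\text{Completeness}[M]$ on the same $M$, while the dashed back-edges from those theorems to this lemma go to strict subterms. Your case-by-case argument would effectively inline the completeness proof, duplicating all of its bookkeeping (and, in the let case, all of the principality-preservation lemmas you name). It is not obviously wrong, but it is substantially more work for no gain, and the let-case ``diagram chase'' you gesture at is precisely what completeness already packages up. The paper's approach buys a five-line proof; yours buys nothing extra.
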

\resetNum
\proofContext{inferred-types-are-principal}
\begin{proof}
By \cref{thm:inference-sound} we have
$\Delta \vdash \theta : \Theta \Rightarrow \Theta'$~\labelNum{theta-wf}
and $\typ{\Delta, \Theta'; \theta(\Gamma)}{M : A}$~\labelNum{typ-M}.
The latter implies $\Delta, \Theta' \vdash A$ and
hence $\Delta' \subseteq \Theta'$.
We can therefore rewrite \refNum{typ-M} as
$\typ{\Delta, (\Theta' - \Delta'), \Delta'; \theta(\Gamma)}{M : A}$,
satisfying the first condition of
$\meta{principal}((\Delta,\Theta' - \Delta'), \theta \Gamma,\Delta',A)$.

By definition of $\Delta'$, we have $\Delta' \fresh \ftv(\theta)$.
We can therefore strengthen \refNum{theta-wf} to
  $\Delta \vdash \theta : \Theta \Rightarrow \Theta' - \Delta'$%
  ~\labelNum{theta-wf-str}

Let $\Delta_p, A_p$ such that
  $\Delta_p = \ftv(A_p) - (\Delta, \Theta' - \Delta')$ and
  $\typ{\Delta, (\Theta' - \Delta'), \Delta_p}{M : A_p}$%
  ~\labelNum{typ-M-Ap}.
The latter implies $\Delta_p \fresh \Delta, \Theta' - \Delta'$ and
we can weaken \refNum{theta-wf-str} to
  $\Delta \vdash \theta :
    \Theta \Rightarrow (\Theta' - \Delta'), \Delta_p$~\labelNum{theta-wf-str-wk}.

Hence, we can apply \cref{thm:inference-completeness-mg}, to
\refNum{theta-wf-str-wk,typ-M-Ap}, stating that there exists $\theta''$ s.t.\
$\Delta \vdash \theta'' : \Theta' \Rightarrow (\Theta' - \Delta'), \Delta_p$
and $\theta''(A) = A_p$ and $\theta = \theta'' \circ \theta$.

The latter implies that for all $a \in \ftv(\theta)$, $\theta''(a) = a$ must
hold. Hence, by defining $\rsubst$ as a restriction of $\theta''$ such that
$\rsubst(a) = \theta''(a)$ for all
$a \in \ftv(A) - \Delta - \ftv(\theta)$ (i.e., $\Delta'$),
we get $\Delta \vdash \rsubst : \Delta' \Rightarrow_\poly (\Theta' - \Delta'), \Delta_p$
and maintain $\rsubst(A) = A_p$. We rewrite the former to
$\Delta, (\Theta' - \Delta') \vdash \rsubst : \Delta' \Rightarrow_\poly  \Delta_p$,
obtaining an instantiation as required by the definition of $\meta{principal}$.
\end{proof}

\resetNum
\proofContext{inferred-types-vs-principal-types}
\begin{lemma}[Inferred types and principal types are isomorphic]
\label{lem:inferred-types-vs-principal-types}
Let the following conditions hold:
\begin{align}
  &\wfctx{\Delta, \Theta}{\Gamma} \mathlabelNum{wftcx}\\
  &\termwf{\Delta}{M} \mathlabelNum{termwf} \\
  &\Delta' \fresh \Theta' \\
  &\meta{principal}((\Delta, \Theta),\Gamma,M, \Delta',A)
    \mathlabelNum{principal}\\
  &\Infer(\Delta, \Theta, \Gamma, M) = (\Theta', \theta,A')
    \mathlabelNum{infer}\\
  &\Delta'' = \ftv(A') - \Delta - \ftv(\theta) \mathlabelNum{Delta''}
\end{align}
Then there exists $\rsubst$ such that
  $\Delta, (\ftv(\theta) - \Delta) \vdash
    \rsubst : \Delta'' \Rightarrow_\mono \Delta'$
and
  $\rsubst(\Delta'') = \Delta'$ and $\rsubst(A') = \theta(A)$.
\end{lemma}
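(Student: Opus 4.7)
The strategy is to exploit principality in both directions. The hypothesis tells us $A$ is principal for $M$ under $\Gamma$, and by Lemma~\ref{lem:inferred-types-are-principal} the inferred type $A'$ is principal for $M$ under $\theta\Gamma$ (with generalisable variables $\Delta''$ over the context $\Delta, \Theta' - \Delta''$). Since $\theta$ translates derivations under $\Gamma$ into derivations under $\theta\Gamma$, the two principal types $A'$ and $\theta(A)$ classify the same expression in the same environment and must therefore be isomorphic.

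First I would invoke \cref{thm:inference-sound} on \refNum{infer} to obtain $\Delta \vdash \theta : \Theta \Rightarrow \Theta'$ and $\typ{\Delta,\Theta';\theta\Gamma}{M : A'}$; combining this with $\Delta'' = \ftv(A') - \Delta - \ftv(\theta)$ and \cref{lem:inferred-types-are-principal}, $A'$ is principal for $M$ under $\theta\Gamma$ with generalisable variables $\Delta''$. Next, from \refNum{principal} I extract a derivation $\typ{\Delta,\Theta,\Delta';\Gamma}{M : A}$ and lift it through $\theta$ (extended by the identity on $\Delta'$, legal since $\Delta' \fresh \Theta'$) to $\typ{\Delta,\Theta',\Delta';\theta\Gamma}{M : \theta(A)}$, applying \cref{lem:stability-wf-env} on $\Gamma$ and a routine typing-substitution lemma (proved together with the rest of this section by induction on $M$). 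Feeding this derivation into the principality of $A'$ delivers an instantiation $\rsubst$ with $\Delta,\Theta' - \Delta'' \vdash \rsubst : \Delta'' \Rightarrow_\poly \Delta_X$ and $\rsubst(A') = \theta(A)$, where $\Delta_X$ consists of the fresh variables in $\ftv(\theta(A))$; by construction $\Delta_X \subseteq \Delta'$.

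The remaining work is to sharpen $\rsubst$ to $\Delta,(\ftv(\theta) - \Delta) \vdash \rsubst : \Delta'' \Rightarrow_\mono \Delta'$ and to show $\rsubst(\Delta'') = \Delta'$. Monotonicity is the main obstacle: I need to argue that every variable in $\Delta''$ has kind $\mono$ in $\Theta'$, so that the upper bound $\Rightarrow_\poly$ produced by principality collapses to $\Rightarrow_\mono$. This is an invariant of the inference algorithm (tracked in the design-considerations section): the only way variables survive in $A'$ outside $\ftv(\theta)$ is through fresh monomorphic introductions (either from unannotated $\lambda$ binders or from demotion in the non-generalising \freezemlLab{Let} branch), and the target $\Delta'$ of the principal derivation of $A$ is rigid and monomorphic by hypothesis. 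I would split on the syntactic form of $M$ to check this invariant, reusing the inductive hypotheses from the companion lemmas (notably \cref{lem:principal:rename-gen-vars} and \cref{lem:bijection-on-principal-types-vars}) that a principal instantiation between two principal types is necessarily a bijective renaming.

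Finally, to conclude $\rsubst(\Delta'') = \Delta'$ exactly rather than merely $\rsubst(\Delta'') \subseteq \Delta'$, I would run the principality argument in the opposite direction: apply principality of $A$ to $\typ{\Delta,\Theta';\theta\Gamma}{M : A'}$ (transported back through an inverse substitution on fresh variables) to obtain an instantiation $\rsubst'$ with $\rsubst'(A) = A'$ and matching $\Delta'$ to $\Delta''$. Composing $\rsubst$ and $\rsubst'$ yields a substitution that fixes the respective principal type; combined with the fact that any instantiation witnessing principality is surjective on the generalisable variables of its codomain, this forces $\rsubst$ to be a bijection between $\Delta''$ and $\Delta'$, which gives the required equality. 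Rephrasing $\Delta,\Theta' - \Delta''$ as $\Delta,(\ftv(\theta) - \Delta)$ uses the definition $\Delta'' = \ftv(A') - \Delta - \ftv(\theta)$ together with $\ftv(\theta) \subseteq \Theta'$ arising from \refNum{wftcx,infer}.
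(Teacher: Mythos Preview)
Your high-level plan---pair principality of $A'$ with principality of $A$ to obtain mutually inverse instantiations and hence a bijection---matches the paper's strategy. Two of your steps, however, do not go through as written.

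First, the claim that ``every variable in $\Delta''$ has kind $\mono$ in $\Theta'$'' is false. In the variable case, $\Infer$ returns $((\Theta,\many{b:\poly}), \idsubst, H[\many{b}/\many{a}])$, so $\Delta'' \subseteq \many{b}$ and every element has kind $\poly$. The $\Rightarrow_\mono$ conclusion has nothing to do with the kinds of the domain $\Delta''$; it holds because the \emph{codomain} $\Delta'$ is a plain kind environment (all entries $\mono$). Once $\rsubst$ is shown to map $\Delta''$ bijectively onto $\Delta'$, each $\rsubst(a)$ is a single rigid variable of kind $\mono$, and $\Rightarrow_\mono$ follows at once---no case split on $M$ is needed, and the lemmas you cite for that purpose (\cref{lem:principal:rename-gen-vars}, \cref{lem:bijection-on-principal-types-vars}) do not supply this invariant.

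Second, your backward direction (``transported back through an inverse substitution on fresh variables'') is too vague to carry the argument: $\theta$ is not a priori invertible, and the principality hypothesis on $A$ lives over $(\Delta,\Theta)$, not over $(\Delta,\Theta')$. The paper supplies the missing piece by invoking completeness (\cref{thm:inference-completeness-mg}) with the identity substitution, obtaining $\theta''$ such that $\idsubst_{\Delta,\Theta} = \theta'' \circ \theta$ and $\theta''(A') = A$. The composition equation forces $\theta$ to be a bijection from $\Theta$ onto $\ftv(\theta)-\Delta$, which licences \cref{lem:bijective-renaming} to transfer principality of $A$ to principality of $\theta(A)$ over $\theta\Gamma$. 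Applying that to the derivation of $M:A'$ (after \cref{lem:shrink-context}, which you also omit) gives the reverse instantiation $\rsubst_I : \Delta' \Rightarrow_\mono \Delta''$ with $\rsubst_I(\theta(A)) = A'$; composing with $\theta''$ shows $\theta''$ is bijective on $\Delta''$, and $\rsubst$ is then taken to be the restriction of $\theta''$ to $\Delta''$. Your appeal to \cref{lem:inferred-types-are-principal} could in principle replace one of the two principality uses without circularity, but without the completeness step (or an equivalent) you have no handle on the inverse direction.
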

\begin{proof}
By definition of $\meta{principal}$ we have
  $\typ{\Delta, \Theta, \Delta' ; \Gamma}{M : A}$~\labelNum{typ-M} and
  $\Delta' = \ftv(A) - \Delta, \Theta$.

Applying \cref{thm:inference-sound} to \refNum{infer}, we get
$\Delta \vdash \Theta \Rightarrow \Theta'$ and
$\typ{\Delta, \Theta';\theta(\Gamma)}{M : A'}$~\labelNum{typ-M-A'}.

We have
  $\Delta \vdash \idsubst_{\Delta, \Theta} : \Theta \Rightarrow \Theta$
and therefore by $\Delta' \mathop{\#} \Theta$ and weakening also
  $\Delta \vdash \idsubst_{\Delta, \Theta} :
    \Theta \Rightarrow \Theta, \Delta'$~\labelNum{idsubst-weakened}.
Trivially, we can rewrite \refNum{typ-M} and \refNum{principal} as
  $\typ
    {\Delta, \Theta, \Delta' ; \idsubst_{\Delta, \Theta} (\Gamma)}
    {M : A}$~\labelNum{typ-M-substed}
and
  $\meta{principal}(
    (\Delta, \Theta), \allowbreak
    \idsubst_{\Delta, \Theta}(\Gamma),
    M,  \allowbreak
    \Delta', \allowbreak
    A)$~\labelNum{principal-idsubsted},
respectively.
We can apply \cref{thm:inference-completeness-mg},
using
\refNum{%
   termwf,%
   idsubst-weakened,%
   typ-M-substed%
},
which yields existence of $\theta''$ such that
  $\Delta \vdash \theta'' : \Theta' \Rightarrow \Theta, \Delta'$ and
  $\idsubst_{\Delta, \Theta} = \theta'' \circ \theta$~\labelNum{composition} and
  $\theta''(A') = A $~\labelNum{A'-vs-A}.
The latter implies that $\theta''$ maps the type variables from $\Delta''$
surjectively into $\Delta'$~\labelNum{theta''-surjective}.

Let $\Theta_{\theta} = \ftv(\theta)$.  By \refNum{composition}, we then have
$\Delta' \fresh \Theta_{\theta}$ and $\theta$ is a bijection from $\Theta$ to
$\Theta_{\theta}$. Conversely, the restriction of $\theta''$ to
$\Theta_{\theta}$ is a bijection from $\Theta_{\theta}$ to $\Theta$.

We can therefore apply
\cref{lem:bijective-renaming}(\ref{lem-part:bijective-renaming:principal}) and
obtain
  $\meta{principal}(
    (\Delta, \Theta_{\theta}),
    \theta(\Gamma),
    M,
    \Delta',
    \theta(A))$~\labelNum{principal-substed}.

By \cref{lem:shrink-context} and $\wfctx{\Delta, \Theta_{\theta}}{\theta(\Gamma)}$
as well as $\Delta, \Theta_{\theta}, \Delta'' \subseteq \ftv(A')$,
we can strengthen \refNum{typ-M-A'} to
  $\typ{\Delta, \Theta_{\theta}, \Delta'';\theta(\Gamma)}{M : A'}$%
    ~\labelNum{typ-M-A'-strengthened}.

We have
  $\ftv(\theta(A)) -(\Delta, \Theta_{\theta}) =
   \Delta' =
   \ftv(A) - (\Delta, \Theta)$.
By definition of $\meta{principal}$, \refNum{principal-substed,typ-M-A'-strengthened} imposes
that there exists $\rsubst_I$ such that
$\Delta, \Theta_{\theta} \vdash \rsubst_I : \Delta' \Rightarrow_\mono \Delta''$ and
$\rsubst_I(\theta(A)) = A'$.

Using \refNum{A'-vs-A}, we rewrite the latter to
\begin{equation}
\rsubst_I(\theta(\theta''(A')) = A' \mathlabelNum{roundabout}
\end{equation}
This implies that $\theta''$ maps $\Delta''$ not only surjectively (cf.\
\refNum{theta''-surjective}), but bijectively into $\Delta'$.
By \refNum{A'-vs-A} we further have $\theta''(\Delta'') = \Delta'$ (i.e., the
order of variables is preserved).

Since $\theta$ is the identity on $\Delta'$, $\rsubst_I$ must be the inverse of
$\theta''$ on $\Delta'$.  Hence, we define $\rsubst$ such that $\rsubst(a) =
\theta''(a)$ for all $a \in \Delta''$, yielding $\Delta, \Theta_{\theta} \vdash
\rsubst : \Delta'' \Rightarrow_\mono \Delta'$.  As the inverse of $\rsubst_I$,
applying $\rsubst$ to both sides of \refNum{roundabout} yields
$\theta(\theta''(A')) = \theta(A) = \rsubst(A')$, which is the desired property.

\end{proof}
}

{
\proofContext{stability-principality-substitution}
\resetNum
\begin{lemma}[Stability of principality under substitution]
\label{lem:stability-principality-substitution}
Let the following conditions hold:
\begin{align}
&\wfctx{\Delta, \Theta}{\Gamma} \mathlabelNum{ctx} \\
&\Delta' \fresh \Theta' \mathlabelNum{Theta'-vs-Delta'} \\
&\termwf{\Delta}{M} \mathlabelNum{M-wf}\\
&\Delta \vdash \theta : \Theta \Rightarrow \Theta' \mathlabelNum{theta}\\
&\meta{principal}((\Delta, \Theta), \Gamma, M, \Delta',A) \mathlabelNum{principal}
\end{align}

Then $\meta{principal}((\Delta, \Theta'), \theta\Gamma, M, \Delta',\theta A)$ holds.
\end{lemma}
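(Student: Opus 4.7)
My plan is to unfold the conclusion $\meta{principal}((\Delta,\Theta'),\theta\Gamma,M,\Delta',\theta A)$ into three obligations: (i) $\Delta' = \ftv(\theta A) - (\Delta,\Theta')$; (ii) $\typ{\Delta,\Theta',\Delta';\theta\Gamma}{M:\theta A}$; and (iii) most-generality, namely that any competing derivation $\typ{\Delta,\Theta',\Delta'';\theta\Gamma}{M:A''}$ with $\Delta'' = \ftv(A'') - (\Delta,\Theta')$ factors through $\theta A$ via some $\Delta,\Theta' \vdash \rsubst : \Delta' \Rightarrow_\poly \Delta''$ satisfying $\rsubst(\theta A) = A''$. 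Obligation (i) is a direct freshness calculation: principality gives $\Delta' = \ftv(A) - (\Delta,\Theta)$, and $\theta$ replaces $\Theta$-variables by types whose free variables lie in $\Theta'$, while leaving $\Delta$ alone and (by $\Delta' \fresh \Theta'$) $\Delta'$ untouched. Obligation (ii) is discharged by extending $\theta$ with the identity on $\Delta'$, yielding $\Delta \vdash \theta : \Theta,\Delta' \Rightarrow \Theta',\Delta'$, and applying \cref{lem:subst-gamma-and-A} to the witness $\typ{\Delta,\Theta,\Delta';\Gamma}{M:A}$ extracted from the principality hypothesis.

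For obligation (iii), I will route both derivations through type inference. First, \cref{thm:inference-completeness-mg} applied to the weakened identity $\Delta \vdash \idsubst : \Theta \Rightarrow \Theta,\Delta'$ together with the principality witness shows that $\Infer(\Delta,\Theta,\Gamma,M)$ succeeds; call its result $(\Theta_I,\theta_I,A_I)$ and set $\Delta_I := \ftv(A_I) - \Delta - \ftv(\theta_I)$, arranging by $\alpha$-renaming that $\Theta_I$ is fresh for $\Delta'$, $\Theta'$, and $\Delta''$. Invoking \cref{lem:inferred-types-vs-principal-types} on this inference run and the original principality then yields a bijective $\rsubst_1 : \Delta_I \Rightarrow_\mono \Delta'$ with $\rsubst_1(A_I) = \theta_I(A)$. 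A second appeal to \cref{thm:inference-completeness-mg}, now using the weakening $\Delta \vdash \theta : \Theta \Rightarrow \Theta',\Delta''$ together with the competing derivation, produces $\theta_2$ satisfying $\Delta \vdash \theta_2 : \Theta_I \Rightarrow \Theta',\Delta''$, $\theta = \theta_2 \circ \theta_I$, and $\theta_2(A_I) = A''$.

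I then define $\rsubst(b') := \theta_2(\rsubst_1^{-1}(b'))$ for $b' \in \Delta'$. Well-typedness $\Delta,\Theta' \vdash \rsubst : \Delta' \Rightarrow_\poly \Delta''$ follows because the decomposition $\theta = \theta_2 \circ \theta_I$ already pins down $\theta_2$ on $\ftv(\theta_I)$ to produce $\Theta'$-images, and since $\Delta_I$ is disjoint from $\ftv(\theta_I)$ by construction, the $\Delta''$-variables of $A'' = \theta_2(A_I)$ can only originate from $\theta_2(\Delta_I)$. For the required equation $\rsubst(\theta A) = A''$, I compute $\theta A = \theta_2(\theta_I(A)) = \theta_2(\rsubst_1(A_I))$ and perform a variable-by-variable check on $A_I$: $\Delta$-variables are fixed by all three substitutions; $\ftv(\theta_I)$-variables are mapped by $\theta_2$ into $\Theta'$, which $\rsubst$ leaves alone (its domain is $\Delta'$); and each $b \in \Delta_I$ is first renamed to some $b' \in \Delta'$ by $\rsubst_1$, left fixed by $\theta_2$ (using $\Delta' \fresh \Theta_I$), and finally mapped by $\rsubst$ back to $\theta_2(b)$, exactly recovering $\theta_2(A_I) = A''$.

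The main obstacle will be bookkeeping: keeping the three substitutions $\theta_I$, $\theta_2$, $\rsubst_1$ correctly separated on their overlapping domains, and verifying the freshness side-conditions (notably $\Delta' \fresh \Theta_I$ and $\Delta'' \fresh \Theta$) that justify treating the various substitutions as identities in the equational chain. All such freshness can be arranged by $\alpha$-renaming of fresh inference outputs and of the bound type variables within $A''$.
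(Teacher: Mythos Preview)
Your proposal is correct and follows essentially the same route as the paper: both discharge the typing obligation via \cref{lem:subst-gamma-and-A}, and for most-generality both run $\Infer(\Delta,\Theta,\Gamma,M)$, extract a bijection between the inferred generalisable variables and $\Delta'$ via \cref{lem:inferred-types-vs-principal-types}, apply \cref{thm:inference-completeness-mg} with $\theta$ weakened to $\Theta',\Delta''$ to factor the competitor through the inferred type, and then define $\rsubst$ as (in your notation) $\theta_2 \circ \rsubst_1^{-1}$ on $\Delta'$ --- which is exactly the paper's $\rsubst(a_i) := \theta''(b_i)$. Your first completeness call (with the identity) is redundant since the second one already establishes that inference succeeds, and your freshness claim that all of $\Theta_I$ is fresh for $\Theta'$ is stronger than needed (and not literally achievable, since $\Theta_I$ may retain $\Theta$-variables); what you actually use, and what the paper establishes, is $\Delta' \fresh \Theta_I$, which follows because $\Delta' \fresh \Theta$ and the remaining $\Theta_I$-variables are freshly generated.
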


\begin{proof}
By definition of $\meta{principal}$, we have
  $\typ{\Delta,\Delta',\Theta; \Gamma}{M : A}$ and
  $\Delta' = \ftv(A) - \Delta, \Theta~\labelNum{Delta'}$.

By \refNum{Theta'-vs-Delta'}, we can weaken \refNum{theta} to
  $\Delta, \Delta' \vdash \Theta \Rightarrow \Theta'$.
Together with the latter, we can then apply
\cref{lem:stability-wf-typ}  and obtain
  $\Delta, \Delta', \Theta' \vdash \theta A$.

Let $\Delta'' = \ftv(\theta A) - \Delta, \Theta'$.
By \refNum{Theta'-vs-Delta',theta,Delta'}, \cref{lem:subst-gen-inst} yields
$\Delta' = \Delta''$~\labelNum{Delta'-vs'-Delta''}.

Let $A_p$ and $\Delta_p$ such that $\Delta_p = \ftv(A_p) - \Delta, \Theta'$ and $\typ{\Delta,\Theta',\Delta_p; \theta \Gamma}{M : A_p}$~\labelNum{M-A-p}.
Our goal is to  show that there exists $\rsubst$ such that $\Delta,\Theta' \vdash \rsubst : \Delta'' \Rightarrow \Delta_p$ and $\rsubst(\theta A) = A_p$.


We  weaken \refNum{theta} to $\Delta \vdash \theta : \Theta \Rightarrow \Theta',\Delta_p$.
We can then apply  \cref{thm:inference-completeness-mg} to \refNum{M-A-p},
which states that $\Infer(\Delta, \Theta, \Gamma, M)$ returns $(\Theta'',\theta',A')~\labelNum{infer}$
and there exists $\theta''$ such that
\begin{align}
&\Delta \vdash \theta'' : \Theta'' \Rightarrow \Theta',\Delta_p \mathlabelNum{theta''} \\
&\theta = \theta'' \circ \theta' \mathlabelNum{theta-composition} \\
&\theta''(A') = A_p \mathlabelNum{A'-vs-A-p}
\end{align}
By definition of $\Infer$, all type variables in $\Theta''$ but not in $\Theta$ are fresh, which implies $\Delta' \fresh \Theta''$~\labelNum{Delta'-fresh-Theta''}.

By \cref{thm:inference-sound}, we have $\Delta \vdash  \theta' : \Theta \Rightarrow \Theta''$~\labelNum{theta'} and $\Delta,\Theta'' \vdash A'$~\labelNum{A'}.

We split $\Delta_p$ into a (possibly empty) part that is contained in $\Delta'$ and a remaining part that is not.
Concretely, let $\Delta'_p, \Delta''_p$ such that $\Delta_p \permutation (\Delta'_p, \Delta''_p)$ and $\Delta'_p \subseteq \Delta'$ and $\Delta''_p \fresh \Delta' $.
We weaken \refNum{theta'}, \refNum{theta''}, and \refNum{theta}, respectively:
\begin{align}
&\Delta, \Delta' \vdash \theta' : \Theta \Rightarrow \Theta'' \mathlabelNum{theta'-weakened} \\
&\Delta, \Delta' \vdash \theta'' : \Theta'' \Rightarrow \Theta',\Delta''_p \mathlabelNum{theta''-weakened} \\
&\Delta, \Delta' \vdash \theta : \Theta \Rightarrow \Theta',\Delta''_p \mathlabelNum{theta'-weakened-for-comp}
\end{align}

Let $\Delta''' \coloneqq \ftv(A') - \Delta - \ftv(\theta')$~\labelNum{Delta'''},
which implies $\Delta''' \subseteq \Theta''$~\labelNum{Delta'''-subseteq} (using \refNum{theta',A'}).
Further, let $\Theta_{\theta'} = \ftv(\theta') - \Delta$.

By
\refNum{%
  M-wf,%
  ctx,%
  Delta'-fresh-Theta'',%
  principal,%
  infer,%
  Delta'''%
},
\cref{lem:inferred-types-vs-principal-types} yields
existence of $\rsubst_b$ such that
$\Delta, \Theta_{\theta'} \vdash \rsubst_b : \Delta''' \Rightarrow_\mono \Delta'$~\labelNum{rsubst-b} and
$\rsubst_b(\Delta''') = \Delta'$ \labelNum{rsubst-b-Delta'-vs-Delta'''} and
$\rsubst_b(A') = \theta' A$ \labelNum{rsubst-b-A-vs-A'}.

Let $\Delta' = (a_1, \dotsc, a_n)$ and $\Delta''' = (b_1, \dotsc, b_n)$.
Let $\rsubst$ be defined such that
for all $1\le i \le n$, $\rsubst(a_i) = \theta''(b_i)$.
By \refNum{Delta'''-subseteq,theta'',Delta'-vs'-Delta''} this yields $\Delta, \Theta' \vdash \rsubst : \Delta'' \Rightarrow_\poly \Delta_p$.

Next, we show $\rsubst \theta''\rsubst_b(A') = \theta''(A')$~\labelNum{theta''-renaming-trick}:
To this end, we show that for each $a \in \ftv(A')$ we have $\rsubst \theta''\rsubst_b(a) = \theta''(a)$.
By the definition of $\Delta'''$ (cf. \refNum{Delta'''}) and \refNum{A'}, we have $\ftv(A') \subseteq \Delta, \Delta''', \Theta_{\theta'}$.

We consider three cases:
\begin{description}
\item[Case 1 $a = b_i \in \Delta'''$:]
We have $\rsubst_b(b_i) = a_i $ by \refNum{rsubst-b-Delta'-vs-Delta'''}.
By $a_i \in \Delta'$ and \refNum{theta''-weakened} we have $\theta''(a_i) = \theta''(\rsubst_b(b_i)) = a_i$.
By definition of $\rsubst$, we have $\rsubst(a_i) = \rsubst(\theta''(\rsubst_b(b_i))) = \theta''(b_i)$.
\item[Case 2 $a \in \Theta_{\theta'}$:]
We have $a \not\in \Delta'''$ and therefore $\rsubst_b(a) = a$ by \refNum{rsubst-b}.
By \refNum{theta'}, we have $\Theta_{\theta'} \subseteq \Theta''$.
Applying \cref{lem:composition-strengthening} to \refNum{theta,theta-composition} yields
 $\theta''(a) = \theta''(\rsubst_b(a)) = A $ for some $A$ with $\Delta, \Theta' \vdash A$.
By $\Delta'' = \Delta' \fresh \Delta,\Theta'$ we then have $\rsubst(A) = A$.
In total, this yields $\rsubst \theta''\rsubst_b(a) = \theta''(\rsubst_b(a)) = \theta''(a)$.
\item[Case 3 $a \in \Delta$:]
We have $\rsubst(a) = a$, $\rsubst_b(a) = a$ and $\theta''(a) = a$.
This immediately yields $\rsubst \theta''\rsubst_b(a) = \theta''(a) = a$.
\end{description}

Finally, we show $\rsubst(\theta A) = A_p$:
\begin{equation*}
\begin{array}{cll}
   &\rsubst \theta(A) \\
  = &\rsubst \theta''\theta'(A)  &\reason{by \refNum{theta-composition,theta'-weakened,theta''-weakened,theta'-weakened-for-comp}}  \\
  = &\rsubst \theta''\rsubst_b(A')  &\reason{by \refNum{rsubst-b-A-vs-A'}}  \\
  = &\theta''(A')  &\reason{by \refNum{theta''-renaming-trick}}\\
  = &A_p &\reason{by \refNum{A'-vs-A-p}}\\
\end{array}
\mathlabelNum{A-p-vs-A}
\end{equation*}
\end{proof}
}

\begin{lemma}
\label{lem:subst-gamma-and-A}
If $\termwf{\Delta}{M}$ and $\Delta \vdash \theta : \Theta \Rightarrow
\Theta'$ and $\typ{\Delta,\Theta;\Gamma}{M :A}$, then
  $\typ{\Delta,\Theta';\theta\Gamma}{M : \theta A}$.
\end{lemma}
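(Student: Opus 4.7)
The plan is to prove this lemma by induction on the structure of $M$, in each case applying inversion on the typing derivation (which is legitimate by the stratified treatment in \auxref{sec:well-founded-freezeml}) and then reconstructing an analogous derivation with $\theta$ applied. Throughout we rely on \cref{lem:stability-wf-typ,lem:stability-wf-env} to propagate well-formedness of contexts and types under substitution, so that the preconditions of the typing rules continue to hold.

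For the easy cases, the reasoning is almost mechanical. For \freezemlLab{Freeze} we observe that $x : A \in \Gamma$ implies $x : \theta(A) \in \theta\Gamma$. For \freezemlLab{Var}, if the inverted derivation uses $x : \forall \Delta'.H \in \Gamma$ with instantiation $\Delta,\Theta \vdash \rsubst : \Delta' \Rightarrow_\poly \cdot$, we first $\alpha$-rename $\Delta'$ to be fresh for $\ftv(\theta)$ and then form the instantiation $\theta \circ \rsubst$ via \cref{lem:subst-substitution}, noting that $\theta(\rsubst(H)) = (\theta \circ \rsubst)(H)$ since $\Delta' \fresh \ftv(\theta)$. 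The rules \freezemlLab{Lam}, \freezemlLab{Lam-Ascribe}, and \freezemlLab{App} follow immediately from the induction hypothesis applied to subterms, noting that in \freezemlLab{Lam} the monomorphic argument type $S$ remains monomorphic under $\theta$ because substitutions preserve kinds.

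The interesting case is \freezemlLab{Let}, which is also where the main obstacle lies: we must preserve the $\meta{principal}$ side-condition, and we cannot simply recurse on it. By inversion we have $(\Delta',\Delta'') = \mgend[(\Delta,\Theta)]{A'}{M'}$, $((\Delta,\Theta),\Delta'',M',A') \Updownarrow A$, a derivation of $\typ{\Delta,\Theta,\Delta'';\Gamma}{M' : A'}$, a derivation of $\typ{\Delta,\Theta;\Gamma,x:A}{N:B}$, and $\meta{principal}((\Delta,\Theta),\Gamma,M',\Delta'',A')$. Without loss of generality (by $\alpha$-renaming) assume $\Delta'' \fresh \Theta', \ftv(\theta)$. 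Then \cref{lem:subst-gen-inst} gives both $\mgend[(\Delta,\Theta')]{\theta(A')}{M'} = (\Delta',\Delta'')$ and $((\Delta,\Theta'),\Delta'',M',\theta(A')) \Updownarrow \theta(A)$. We extend $\theta$ to $\Delta,\Delta'' \vdash \theta : \Theta,\Delta'' \Rightarrow \Theta',\Delta''$ by weakening (\cref{lem:substitution-extend-delta}) and apply the induction hypothesis to $M'$ to get $\typ{\Delta,\Theta',\Delta'';\theta\Gamma}{M':\theta(A')}$, and to $N$ to get $\typ{\Delta,\Theta';\theta\Gamma,x:\theta A}{N:\theta B}$. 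Finally, we invoke \cref{lem:stability-principality-substitution} directly on the principality assumption to obtain $\meta{principal}((\Delta,\Theta'),\theta\Gamma,M',\Delta'',\theta(A'))$, and assemble these into the required \freezemlLab{Let} derivation.

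The \freezemlLab{Let-Ascribe} case is similar but simpler since there is no principality side-condition: we use $\msplit$'s reliance only on the syntactic form of $M$ and $A$ (both unchanged by the argument position of substitution), $\alpha$-rename $\Delta'$ to be fresh for $\ftv(\theta)$, weaken $\theta$, and recurse. The main conceptual obstacle across the whole proof is precisely the Let case's dependence on preserving principality, which is why this lemma must be proved mutually (via the outer induction on $M$) with \cref{lem:stability-principality-substitution}; the dependency graph in \cref{fig:proof-dep-graph} confirms that this mutual dependency is well-founded because \cref{lem:stability-principality-substitution} only ever invokes \cref{lem:subst-gamma-and-A} on the same term $M$, while the recursion on $M$ here strictly decreases at each invocation of the induction hypothesis.
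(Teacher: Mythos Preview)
Your proof is correct and follows essentially the same route as the paper's: induction on $M$, inversion in each case, with the \freezemlLab{Let} case handled by freshening $\Delta''$, invoking \cref{lem:subst-gen-inst} for $\mgen$ and $\Updownarrow$, and crucially appealing to \cref{lem:stability-principality-substitution} (on the strict subterm) for the principality side-condition. The only places where the paper is slightly more explicit are the \freezemlLab{Var} case (arguing that $\theta(H)$ remains a guarded type because the context well-formedness invariant forces all free type variables of $\Gamma$ to have kind $\mono$) and the ascribed cases (arguing $\theta(A) = A$ from $\Delta \vdash A : \poly$), but these are routine details you have implicitly assumed.
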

\begin{proof}
  By induction on structure of $M$.
  In each case we apply inversion on
    derivations of $\typ{\Delta,\Theta;\Gamma}{M :A}$ and $\termwf{\Delta}{M}$
    and start by showing the final steps in each derivation, then describe how to
    construct the needed conclusion.
    \begin{description}
    \item[Case $\freeze{x}$:]  In this case we have derivations of the form:
      \[
      \inferrule {x : A \in \Gamma} {\typ{\Delta,\Theta; \Gamma}{\freeze{x} :
          A}}
      \qquad
      \inferrule {\strut} {\termwf{\Delta}{\freeze{x}}}
      \]
      Then we have $x : \theta(A) \in \theta(\Gamma)$, and may conclude
      \[
      \inferrule {x : \theta(A) \in \theta(\Gamma)} {\typ{\Delta,\Theta';
          \theta(\Gamma)}{\freeze{x} : \theta(A)}}
      \]

    \item[Case $x$:]  In this case, we have derivations of the form:
      \[
      \inferrule
      {{x : \forall \Delta'.H \in \Gamma} \\
        \Delta,\Theta \vdash \rsubst : \Delta' \Rightarrow_\poly \cdot }
      {\typ{\Delta,\Theta; \Gamma}{x : \rsubst(H)}}
      \qquad
      \inferrule {\strut} {\termwf{\Delta}{x}}
      \]
      As before, we have $x : \theta(\forall \Delta'.H) \in \theta(\Gamma)$.
      Moreover, we can assume without loss of generality that the type variables
      in $\Delta'$ are fresh, so
      $\theta(\forall \Delta'. H) = \forall \Delta'.\theta(H)$. Since
      $\wfctx{\Delta,\Theta}{\Gamma}$, we know that
      $\forall a \in \ftv(A). (\Delta,\Theta)(a) = \mono$.  Hence, for each such
      $a$, the substituted type $\theta(a)$ is a monotype, which implies that
      $\theta(H)$ is also a guarded type.  Next, by
      Lemma~\ref{lem:subst-substitution} we have
      $\typ{\Delta,\Theta'}{\theta \circ \rsubst : \Delta' \Rightarrow_\poly
        \cdot}$. We may conclude:
      \[
      \inferrule
      {{x : \forall \Delta'.\theta(H) \in \theta(\Gamma)} \\
        \Delta,\Theta' \vdash \theta \circ \rsubst : \Delta' \Rightarrow_\poly \cdot }
      {\typ{\Delta,\Theta'; \theta(\Gamma)}{x : \theta(\rsubst(H))}}
      \]
      Note that in this case it is critical that we maintain the invariant
      (built into the context well-formedness judgement) that type variables in
      $\Gamma$ are always of kind $\mono$.  This precludes substituting a type
      variable $a = H$ with a $\forall$-type, thereby changing the outer
      quantifier structure of $\forall \Delta'.H$.
    \item[Case $\lambda x. M$:]  In this case we have derivations of the
      form:
      \[
      \inferrule {\typ{\Delta,\Theta; \Gamma, x : S}{M : B}}
      {\typ{\Delta,\Theta; \Gamma}{\lambda x.M : S \to B}}
      \qquad
      \inferrule {\termwf{\Delta}{M}} {\termwf{\Delta}{\lambda x.M}}
      \]
      By induction, we have that
      $\typ{\Delta,\Theta';\theta(\Gamma,a:S) }{ M : \theta{B}}$.  Moreover,
      clearly $\theta(\Gamma,a:S) = \theta(\Gamma),a:\theta(S)$.  Since $S$ is a
      monotype, and $\theta$ is a well-kinded substitution,
      and $\termwf{\Delta, \Theta}{\Gamma, x : S}$,
      all of the free type
      variables in $S$ are of kind $\mono$ and are replaced with monotypes.
      Hence $\theta(S)$ is also a monotype, so we may derive:
      \[
      \inferrule {\typ{\Delta,\Theta'; \theta(\Gamma), x : \theta(S)}{M :
          \theta(B)}} {\typ{\Delta,\Theta'; \theta(\Gamma)}{\lambda x.M :
          \theta(S) \to \theta(B)}}
      \]
      since $\theta(S \to B) = \theta(S) \to \theta(B)$.
    \item[Case $\lambda (x:A_0).M$:]
      \[
      \inferrule {\typ{\Delta,\Theta; \Gamma, x : A_0}{M : B_0}}
      {\typ{\Delta,\Theta; \Gamma}{\lambda (x : A_0).M : A_0 \to B_0}}
      \qquad
      \inferrule
      {\Delta \vdash A_0 : \star\\
        \termwf{\Delta}{M} } {\termwf{\Delta}{\lambda (x : A_0).M }}
      \]
      By induction, we have that
      $\typ{\Delta,\Theta';\theta(\Gamma,x:A_0)}{M: \theta(B_0)}$, and again
      $\theta(\Gamma,x:A_0) = \theta(\Gamma),x:\theta(A_0)$.  Moreover, since
      $\Delta \vdash A_0:\star$, we know that $\ftv(A_0) \subseteq \Delta$.
      Since the only variables substituted by $\theta$ are those in $\Theta$,
      which is disjoint from $\Delta$, we know that $\theta(A_0) = A_0$.  Thus,
      we can proceed as follows:
      \[
      \inferrule {\typ{\Delta,\Theta'; \theta(\Gamma), x : A_0}{M :
          \theta(B_0)}} {\typ{\Delta,\Theta'; \theta(\Gamma)}{\lambda (x :
          A_0).M : A_0 \to \theta(B_0)}}
      \]
      observing that
      $\theta(A_0\to B_0) = \theta(A_0) \to \theta(B_0) = A_0 \to \theta(B_0)$,
      as required.  This case illustrates part of the need for the
      $\termwf{\Delta}{M}$ judgement: to ensure that the free type variables in
      terms are always treated rigidly and never ``captured'' by substitutions
      during unification or type inference.
    \item[Case $M~N$:] In this case we proceed (refreshingly
      straightforwardly) by induction as follows.
      \[
      \inferrule
      {\typ{\Delta,\Theta; \Gamma}{M : A_0 \to B_0} \\
        \typ{\Delta,\Theta; \Gamma}{N : A_0} } {\typ{\Delta,\Theta;
          \Gamma}{M\,N : B_0}}
      \qquad
      \inferrule
      {\termwf{\Delta}{M} \\
        \termwf{\Delta}{N} } {\termwf{\Delta}{M\,N}}
      \]
      By induction, we obtain the necessary hypotheses for the desired
      derivation:
      \[
      \inferrule {\typ{\Delta,\Theta'; \theta(\Gamma)}{M : \theta(A_0) \to
          \theta(B_0)} \\
        \typ{\Delta,\Theta'; \theta(\Gamma)}{N : \theta(A_0)} }
      {\typ{\Delta,\Theta; \theta(\Gamma)}{M\,N :\theta( B_0)}}
      \]
      again observing that $\theta(A_0 \to B_0) = \theta(A_0) \to \theta(B_0)$.
    \item[Case $\Let\; x = M \;\In\; N$]:  In this case we have
      derivations of the form:
      \[
      \inferrule
      {(\Delta', \Delta'') = \mgend[(\Delta,\Theta)]{A'}{M} \\
        \Delta,\Theta, \Delta''; \Gamma \vdash M : A' \\
        ((\Delta,\Theta), \Delta'', M, A') \Updownarrow A_0 \\
        \typ{\Delta,\Theta; \Gamma, x : A_0}{N : B} \\
        \meta{principal}((\Delta,\Theta), \Gamma, M,\Delta'',A') }
      {\typ{\Delta,\Theta; \Gamma}{\Let \; x = M\; \In \; N : B}}
      \]
      \[
      \inferrule
      {\termwf{\Delta}{M } \\
        \termwf{\Delta}{N} } {\termwf{\Delta}{\Let \; x = M\; \In \; N }}
      \]
      We assume without loss of generality that $\Delta''$ is fresh with
      respect to $\Delta$, $\Theta$, and $\Theta'$.
      This is justified as we may otherwise apply a substitution $\theta_F$ to
      $\Delta,\Theta, \Delta''; \Gamma \vdash M : A'$ that replaces all
      variables in $\Delta''$ by pairwise fresh ones. By induction, this would
      yield a corresponding typing judgement for $M$ using those fresh
      variables.

      To apply the induction hypothesis to $M$, we need to extend $\theta$ to
      a substitution $\theta'$ satisfying
      $\typ{\Delta}{\theta': \Theta,\Delta'' \Rightarrow \Theta',\Delta''}$,
      which is the identity on all variables in $\Delta''$.
      Then by induction we have
      $\Delta,\Theta', \Delta''; \theta'(\Gamma) \vdash M : \theta'(A')$.
      Since $\theta'$ acts as the identity on $\Delta''$ its behaviour is the
      same as $\theta$ weakened to
      $\Delta, \Delta'' \vdash \theta : \Theta \Rightarrow \Theta'$, so we
      have
      $\Delta,\Theta', \Delta''; \theta(\Gamma) \vdash M : \theta(A')$.

      We also obtain by the induction hypothesis for $N$ that
      $\typ{\Delta,\Theta';\theta(\Gamma),x:\theta(A_0)}{ N : \theta(B)}$,
      since $\theta(\Gamma,x:A_0) = \theta(\Gamma),x:\theta(A_0)$.
      By Lemma~\ref{lem:subst-gen-inst}(1), we have that $(\Delta', \Delta'') =
      \mgend[(\Delta,\Theta')]{\theta(A')}{M}$
      and by Lemma~\ref{lem:subst-gen-inst}(2), we also know that
      $((\Delta,\Theta'), \Delta'', M, \theta(A')) \Updownarrow \theta(A_0)$.
      By applying \cref{lem:stability-principality-substitution} to
      $\meta{principal}((\Delta,\Theta), \Gamma, M,\Delta'',A')$
      we obtain
      $\meta{principal}((\Delta,\Theta'),\allowbreak
      \theta(\Gamma),\allowbreak M,\Delta'', \theta(A'))$.  We can conclude:
      \[
      \inferrule
      {(\Delta', \Delta'') = \mgend[(\Delta,\Theta')]{\theta(A')}{M} \\
        \Delta,\Theta', \Delta''; \theta(\Gamma) \vdash M : \theta(A') \\
        ((\Delta,\Theta'), \Delta'', M, \theta(A')) \Updownarrow \theta(A_0) \\
        \typ{\Delta,\Theta'; \theta(\Gamma), x : \theta(A_0)}{N : \theta(B)} \\
        \meta{principal}((\Delta,\Theta'), \theta(\Gamma), M,\Delta'', \theta(A')) }
      {\typ{\Delta,\Theta'; \theta(\Gamma)}{\Let \; x = M\; \In \; N : \theta(B)}}
      \]

    \item[Case $\Let\; (x : A_0) = M \;\In\; N$:]  In this case we have
      derivations of the form:
      \[
      \inferrule
      {(\Delta', A') = \msplit(A_0, M) \\
        \typ{\Delta,\Theta, \Delta'; \Gamma}{M : A'} \\
        A_0 = \forall \Delta'.A' \\
        \typ{\Delta,\Theta; \Gamma, x : A_0}{N : B} } {\typ{\Delta,\Theta;
          \Gamma}{\Let \; (x : A_0) = M\; \In \; N : B}}
      \]
      \[
      \inferrule
      {\Delta \vdash A_0 : \poly \\
        (\Delta', A') = \msplit(A_0, M) \\
        \termwf{\Delta, \Delta'}{M} \\
        \termwf{\Delta}{N} } {\termwf{\Delta}{\Let \; (x : A_0) = M\; \In \;
          N}}
      \]
      We have $A_0 = \forall \Delta'.A'$ and $\Delta' \fresh \Delta$.
      According to $\termwf{\Delta, \Delta'}{M}$, annotations in
      $M$ may use type variables from $\Delta, \Delta'$.
      By alpha-equivalence, we can assume $\Delta' \fresh \Theta$ and
      $\Delta' \fresh \Theta'$.
      Note that this may require freshening variables from $\Delta'$
      (but not $\Delta$) in $M$ as well.

      By induction (and rearranging contexts), we have that
      $\typ{\Delta,\Theta', \Delta'; \theta(\Gamma)}{M : \theta(A')}$ and
      $\typ{\Delta,\Theta'; \theta(\Gamma, x : A_0)}{N : \theta(B)}$.

      Moreover, since $\Delta \vdash A_0:\poly$, we know that
      $\theta(A_0) = A_0$ since $\theta$ only replaces variables in $\Theta$,
      which is disjoint from $\Delta$.  Furthermore,
      $(\Delta',A') = \msplit(A_0,M)$ implies that $A'$ is a subterm of $A_0$
      so $\theta(A') = A'$ also.  As a result, we can construct the following
      derivation:
      \[
      \inferrule
      {(\Delta', A') = \msplit(A_0, M) \\
        \typ{\Delta,\Theta', \Delta'; \theta(\Gamma)}{M : A'} \\
        A_0 = \forall \Delta'.A' \\
        \typ{\Delta,\Theta'; \theta(\Gamma), x : A_0}{N : \theta(B)} }
      {\typ{\Delta,\Theta'; \theta(\Gamma)}{\Let \; (x : A_0) = M\; \In \; N
          : \theta(B)}}
      \]
    \end{description}
\end{proof}

\begin{lemma}
\label{lem:shrink-context}
Let $\typ{\Delta; \Gamma}{M : A}$
and let $\Delta' \subseteq \Delta$ such that
and $\termwf{\Delta'}{M}$,
$\wfctx{\Delta'}{\Gamma}$, and $\Delta' \vdash \ftv(A)$.
Then
$\typ{\Delta'; \Gamma}{M : A}$ holds.
\end{lemma}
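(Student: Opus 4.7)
The plan is to reduce this strengthening lemma to a single application of Lemma~\ref{lem:subst-gamma-and-A} (stability of typing under substitution). The intuition is that although $\Delta$ may contain variables outside $\Delta'$, by assumption these variables do not appear in $M$, $\Gamma$, or $A$; they can therefore be eliminated from the typing derivation by substituting them with a fixed closed monotype, leaving the visible parts of the judgement unchanged.

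Concretely, I would first let $\Delta_R$ denote the variables appearing in $\Delta$ but not in $\Delta'$, and appeal to permutation of kind environments to write $\Delta$ in the form $\Delta', \Delta_R$ so that $\typ{\Delta', \Delta_R; \Gamma}{M : A}$ holds. I would then construct a type substitution $\theta$ mapping each variable in $\Delta_R$ to a fixed closed monotype such as $\Int$, treating $\Delta_R$ as a refined kind environment (with every variable of kind $\mono$) so that $\Delta' \vdash \theta : \Delta_R \Rightarrow \cdot$. Applying Lemma~\ref{lem:subst-gamma-and-A} with these inputs, together with the given $\termwf{\Delta'}{M}$, yields $\typ{\Delta'; \theta(\Gamma)}{M : \theta(A)}$. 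Since $\theta$ acts as the identity on everything outside $\Delta_R$, and the assumptions $\wfctx{\Delta'}{\Gamma}$ and $\Delta' \vdash \ftv(A)$ guarantee that $\Gamma$ and $A$ mention only variables from $\Delta'$, we have $\theta(\Gamma) = \Gamma$ and $\theta(A) = A$, which delivers the desired conclusion $\typ{\Delta'; \Gamma}{M : A}$.

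The main (and still quite minor) obstacle will be the bookkeeping around the permutation step, since kind environments are ordered sequences rather than sets: one must confirm that reordering $\Delta$ preserves all relevant judgements. This amounts to a routine exchange property that could either be discharged as a side lemma or absorbed into an alpha-renaming convention on kind environments. All of the substantive work has already been done in Lemma~\ref{lem:subst-gamma-and-A}; no fresh induction on the structure of $M$ is needed, and in particular the delicate case analysis involving the principality side-condition on \freezemlLab{Let} never has to be re-examined.
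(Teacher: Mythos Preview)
Your approach is correct, and in fact the paper explicitly acknowledges it: an author comment in the proof of this lemma notes that ``there exists a 4-line version of this proof, by simply applying \cref{lem:subst-gamma-and-A} directly to $M$,'' which is precisely what you do. The paper nonetheless takes a different route: it proceeds by induction on $M$, and in the \freezemlLab{Let} case it argues directly (using the principality side condition together with \cref{lem:bijective-renaming} and \cref{lem:extend-tv-context-typing}) that the extra variables $\Delta - \Delta'$ cannot occur in the principal type $A'$ of the bound term, then rebuilds the derivation under $\Delta'$, using \cref{lem:subst-gamma-and-A} only on the subterm $N$ in the non-guarded-value sub-case.

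The difference is purely about dependency hygiene in the simultaneous induction of \Cref{sec:correctness-type-inference-algo}. Your proof makes \cref{lem:shrink-context}$[M]$ depend on \cref{lem:subst-gamma-and-A}$[M]$ (same term), whereas the paper's longer proof depends on \cref{lem:subst-gamma-and-A} only at strict subterms. The author comment concedes that your direct dependency ``is okay in the current dependency graph'' (the only cycle it would close passes through \cref{lem:stability-principality-substitution} on a strict subterm, so well-foundedness is preserved), but chose the longer argument to keep the structure robust to future changes. Your version buys brevity and avoids re-examining the \freezemlLab{Let}/principality machinery; the paper's version buys a cleaner, more local dependency shape. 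Both are sound.
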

\begin{proof}

\fe{There exists a 4-line version of this proof, by simply applying
\cref{lem:subst-gamma-and-A} directly to $M$. This is okay in the current
dependency graph, but may cause troubles in the future. I've kept the shorter
proof as commented-out code here.}


By induction on $M$; we focus on the \freezemlLab{Let} case.
By inversion on the judgement $\typ{\Delta;\Gamma}{\Let\; x = M \;\In\; N : B}$,
we have:
\begin{equations}
(\Delta'_g, \Delta''_g) = \mgend{A'}{M} \\
(\Delta, \Delta''_g, M, A') \Updownarrow A \\
\Delta, \Delta''_g; \Gamma \vdash M : A' \\
\typ{\Delta; \Gamma, x : A}{N : B} \\
\meta{principal}(\Delta, \Gamma, M, \Delta''_g, A')
\end{equations}
By inversion on $\termwf{\Delta'}{\Let\; x = M \;\In\; N}$, we further have
$\termwf{\Delta'}{M}$ and $\termwf{\Delta'}{N}$.

We first show $(\Delta - \Delta') \fresh \ftv(A')$.
To this end, assume there exists $a$ s.t. $a \in \ftv(A')$ and $a \in (\Delta -
\Delta')$.
By $\Delta' \vdash \Gamma$, this implies $a \not\in \ftv(\Gamma)$.
Let $b$ be fresh and $\theta = [ a \mapsto b]$.
We apply \cref{lem:bijective-renaming}.\ref{lem-part:bijective-renaming:typ} to
$\Delta, \Delta''_g; \Gamma \vdash M : A'$, where $\Theta$ contains only $a$.
This yields $\typ{(\Delta \setminus a, \Delta''_g, b);\theta(\Gamma)}{M :
\theta{A'}}$.
By $a \not\in \ftv(\Gamma)$, this is equivalent to $\typ{(\Delta \setminus a,
\Delta''_g, b);\Gamma}{M : A'[b/a]}$.
According to \cref{lem:extend-tv-context-typing}, we can weaken this to
$\typ{(\Delta, \Delta''_g, b);\Gamma}{M : A'[b/a]}$.
However, by $a \in \Delta$ and $a \not\in \Delta''_g$ there exist no $\rsubst$
and $\Delta_?$ such that $\Delta \vdash \rsubst : \Delta''_g \Rightarrow
\Delta_? $ and $\rsubst(A') = A'[b/a]$.
This violates $\meta{principal}(\Delta, \Gamma, M, \Delta''_g, A')$.

Using $(\Delta - \Delta') \fresh \ftv(A')$, we obtain $\mgend{A'}{M} =
(\Delta'_g, \Delta''_g) = \mgend[\Delta']{A'}{M}$ and $\Delta', \Delta''_g
\vdash A'$.
This allows us to apply the induction hypothesis to $M$, yielding $\Delta,
\Delta''_g; \Gamma \vdash M : A'$.

We show $\meta{principal}(\Delta', \Gamma, M, \Delta''_g, A')$ as follows:
Let $\Delta''_p, A''_p$ such that $\Delta''_p = \ftv(A''_p) - \Delta'$ and
$\typ{\Delta', \Delta''_p;\Gamma}{M : A''_p}$.
Further, let $\Delta_1, \Delta_2, \Delta_3$ such that $\Delta_1, \Delta_2
\approx \Delta''_p$, and $\Delta_1 \subseteq \Delta$, and $\Delta_2 \fresh
\Delta$, and $\ftv(A''_p) - \Delta',\Delta_1 = \Delta_2$, and $\Delta', \Delta_1,
\Delta_3 \approx \Delta$.
We can therefore rewrite $\typ{\Delta', \Delta''_p;\Gamma}{M : A''_p}$ to
$\typ{\Delta', \Delta_1, \Delta_2;\Gamma}{M : A''_p}$.
We use \cref{lem:extend-tv-context-typing} to weaken the latter to
$\typ{\Delta', \Delta_1, \Delta_2, \Delta_3;\Gamma}{M : A''_p}$.
This in turn is equivalent to $\typ{\Delta, \Delta_2;\Gamma}{M : A''_p}$,
additionally recalling $\ftv(A''_p) - \Delta = \Delta_2$.
By $\meta{principal}(\Delta, \Gamma, M, \Delta''_g, A')$, we then have that
there exists $\rsubst_p$ s.t.\ $\Delta \vdash \rsubst_p : \Delta''_g \Rightarrow_\poly
\Delta_2$ and $\rsubst_p(A') = A''_p$.
We can re-arrange the former to $\Delta', \Delta_3 \vdash \rsubst_p : \Delta''_g
\Rightarrow_\poly \Delta_1, \Delta_2$.
We have $\Delta''_g \subseteq \ftv(A')$ but $\Delta_3 \fresh
\ftv(\rsubst_p(A'))$, which implies $\Delta_3 \fresh \ftv(\rsubst_p)$.
Hence, we have $\Delta' \vdash \Delta''_g \Rightarrow_\poly \Delta_1, \Delta_2$.
Thus, $\meta{principal}(\Delta', \Gamma, M, \Delta''_g, A')$ holds.

Next, we show that there exists $\tilde{A}$ such that
$(\Delta', \Delta''_g, M, A') \Updownarrow \tilde{A}$ and
$\typ{\Delta; \Gamma, x : \tilde{A}}{N : B}$.

We distinguish two cases:
\begin{itemize}
\item
If $M \in \dec{GVal}$, then $(\Delta, \Delta''_g, M,A') \Updownarrow A$, where
$A = \forall \Delta''_g. A'$.
We choose, $\tilde{A} = A$ and by $\Delta' \vdash A'$ immediately obtain
$\Delta' \vdash \tilde{A}$ and $(\Delta', \Delta''_g, M,A') \Updownarrow
\tilde{A}$.
By induction, we then have $\typ{\Delta; \Gamma, x : \tilde{A}}{N : B}$.
\item
If $M \not\in \dec{GVal}$, then $(\Delta, \Delta''_g, M,A') \Updownarrow A$,
where $A = \rsubst(A')$ for some $\rsubst$ with $\Delta \vdash \rsubst : \Delta''_g
\Rightarrow_\mono \cdot$.
Hence, $A$ may contain type variables from $\Delta - \Delta'$, which we define
as $\Delta_r$, and hence $\Delta' \vdash A$ may not hold.

To obtain a type well-defined under $\Delta'$ we define substitution
$\rsubst_G$, which maps all type variables in $\Delta_r$ to  some ground
type, e.g., $\Int$.
Formally, $\rsubst_G$ be defined such that
\[
  \rsubst_G(a) =\Int \qquad \text{for all } a \in \Delta_r,
\]
which implies $\Delta' \vdash \rsubst_G : \Delta_r \Rightarrow_\mono \cdot$.

We then define $\tilde{A}$ as $\rsubst_G(\rsubst(A')) = \rsubst_G(A)$ and have
$(\Delta, \Delta''_g, M, A') \Updownarrow \tilde{A}$, due to
$\Delta' \vdash (\rsubst_G \circ \rsubst) : \Delta''_g \Rightarrow_\mono \cdot$.

Further, we apply \cref{lem:subst-gamma-and-A} to
$\typ{\Delta', \Delta_r; \Gamma, x : A}{N : B}$ and $\rsubst_G$, yielding
$\typ{\Delta'; \rsubst_G(\Gamma), x : \rsubst_G(A)}{N : \rsubst_G(B)}$.

By $\Delta' \vdash \Gamma$ and $\Delta' \vdash B$ we have
$\rsubst_G(\Gamma) = \Gamma$ and $\rsubst_G(B) = B$.
Together, with $\rsubst_G(A) = \tilde{A}$ we have therefore shown
$\typ{\Delta'; \Gamma, x : \tilde{A}}{N : B}$.
\end{itemize}

We have now shown that we can derive the following:

\[
\inferrule*
    {(\Delta'_g, \Delta''_g) = \mgend{A'}{M} \\
     (\Delta', \Delta''_g, M, A') \Updownarrow \tilde{A} \\
     \Delta', \Delta''_g; \Gamma \vdash M : A' \\
     \typ{\Delta'; \Gamma, x : \tilde{A}}{N : B} \\\\
     \meta{principal}(\Delta', \Gamma, M, \Delta''_g, A')
    }
    {\typ{\Delta'; \Gamma}{\Let \; x = M\; \In \; N : B}}
\]
\end{proof}

\begin{lemma}
\label{lem:extend-tv-context-typing}
Let $\typ{\Delta;\Gamma}{M : A} $, and $\termwf{\Delta}{M}$, and $\Delta \fresh \Theta$.
Then $\typ{\Delta, \Theta;\Gamma}{M : A}$ holds.
\end{lemma}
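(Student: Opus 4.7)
The plan is to proceed by induction on the structure of $M$, inverting $\typ{\Delta;\Gamma}{M:A}$ and $\termwf{\Delta}{M}$ in each case. The variable cases (\freezemlLab{Freeze}, \freezemlLab{Var}) transfer directly, since $\Gamma$-lookup and the well-formedness of $\rsubst : \Delta' \Rightarrow_\poly \cdot$ are unaffected by fresh additions to the kind environment. The structural cases (\freezemlLab{Lam}, \freezemlLab{Lam-Ascribe}, \freezemlLab{App}) follow by induction on the subterms, using the fact that monotype kinding of the argument type in \freezemlLab{Lam} and any annotation $\Delta \vdash A_0 : \poly$ survive weakening by fresh variables. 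For \freezemlLab{Let-Ascribe}, the induction hypothesis lifts the typing of $M$ to the extended context $(\Delta, \Theta, \Delta')$ (renaming $\Delta'$ away from $\Theta$ if needed, justified by the $\termwf{\Delta, \Delta'}{M}$ premise) and that of $N$ to $(\Delta, \Theta)$, while the $\msplit$ premise is unchanged.

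The substantive case is \freezemlLab{Let}. Stability of $\mgen$ and of the $\Updownarrow$ judgement under extension by fresh $\Theta$ is immediate because $\Theta \fresh A'$, and the typings of $M$ and $N$ lift by induction. The real obligation is to re-establish $\meta{principal}((\Delta, \Theta), \Gamma, M, \Delta'', A')$ from the original $\meta{principal}(\Delta, \Gamma, M, \Delta'', A')$. Given any alternative typing $\typ{(\Delta, \Theta), \Delta_p; \Gamma}{M : A_p}$ with $\Delta_p = \ftv(A_p) - (\Delta, \Theta)$, I set $\Theta_u = \Theta \cap \ftv(A_p)$ and apply \cref{lem:shrink-context} to the strict subterm $M$ (justified by the mutual induction in \cref{fig:proof-dep-graph}) with the context $\Delta, \Theta_u, \Delta_p$, obtaining $\typ{\Delta, \Theta_u, \Delta_p; \Gamma}{M : A_p}$. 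Since $(\Theta_u, \Delta_p) = \ftv(A_p) - \Delta$, the original principality assumption supplies $\rsubst$ with $\Delta \vdash \rsubst : \Delta'' \Rightarrow_\poly (\Theta_u, \Delta_p)$ and $\rsubst(A') = A_p$. Because $\Theta_u \subseteq \Theta$, the codomain $\Delta, \Theta_u, \Delta_p$ of $\rsubst$ is contained in $\Delta, \Theta, \Delta_p$, so the same substitution is well-formed as $(\Delta, \Theta) \vdash \rsubst : \Delta'' \Rightarrow_\poly \Delta_p$, discharging the principality obligation.

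The main obstacle is exactly this principality step: an alternative typing in the extended environment may legitimately mention variables of $\Theta$ inside $A_p$, and these must be temporarily folded into the generalisable part to invoke the $\Delta$-level principality hypothesis, after which the resulting substitution is reinterpreted as targeting $\Delta_p$ under the extended context. The remaining work is mechanical once weakening of the auxiliary judgements $\termwf{-}{M}$, $\wfctx{-}{\Gamma}$, and monotype/polytype kinding is used to discharge the preconditions of \cref{lem:shrink-context}.
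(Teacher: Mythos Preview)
Your proposal is correct and matches the paper's proof essentially step for step: induction on $M$, with the only real work in the \freezemlLab{Let} case, where you split off the $\Theta$-variables actually occurring in the alternative type $A_p$ (your $\Theta_u$, the paper's $\Theta_s$), invoke \cref{lem:shrink-context} on the strict subterm $M$, apply the original $\Delta$-level principality hypothesis, and then reinterpret the resulting $\rsubst$ under the extended context $\Delta,\Theta$. One small point you gloss over: you assert ``$\Theta \fresh A'$'' to get stability of $\mgen$ and $\Updownarrow$, but this is not automatic since $\Delta'' = \ftv(A') - \Delta$ could overlap $\Theta$; the paper handles this by assuming w.l.o.g.\ that $\Delta'' \fresh \Theta$ via \cref{lem:bijective-renaming}, exactly the renaming manoeuvre you already invoke for \freezemlLab{Let-Ascribe}.
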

\begin{proof}
\fe{It seems that as for \cref{lem:shrink-context}, there exists a trivial proof
here using $\Delta \vdash \theta : \cdot \Rightarrow \Theta$ and
\cref{lem:subst-gamma-and-A}. As for \cref{lem:shrink-context}, this slightly complicates the
dependencies, as we would use \cref{lem:subst-gamma-and-A} on the original term $M$, not
its subterms.}

We perform induction on $M$ and focus on the case $\Let \; x = M \;\In\ N$.  By
inversion, we have the following:
\begin{equations}
(\Delta', \Delta'') = \mgend[\Delta]{A'}{M} \\
(\Delta, \Delta'', M, A') \Updownarrow A \\
\Delta, \Delta''; \Gamma \vdash M : A' \\
\typ{\Delta; \Gamma, x : A}{N : B} \\
\meta{principal}(\Delta, \Gamma, M, \Delta'', A')
\end{equations}

We assume w.l.o.g.\  $\Delta'' \fresh \ftv(\Theta)$
(this is justified as we may otherwise use \cref{lem:bijective-renaming} to obtain
a type for $M$ satisfying this).

By induction, we immediately have
$\Delta, \Theta, \Delta''; \Gamma \vdash M : A'$ and
$\typ{\Delta, \Theta; \Gamma, x : A}{N : B}$.
Further, we have $\ftv(\Theta) \fresh \ftv(A')$ and thus $\mgend[(\Delta,
\Theta)]{A'}{M} = \mgend{A'}{M} = (\Delta', \Delta'')$.
By $\Delta'' \fresh \ftv(\Theta)$, we also immediately have $((\Delta, \Theta),
\Delta'', M, A') \Updownarrow A$:
$\Theta$ is only relevant if $M \not\in \dec{GVal}$, in which case we can weaken
the involved instantiation $\rsubst$ from
$\Delta \vdash \rsubst : \Delta'' \Rightarrow_\mono \cdot$ to
$\Delta, \Theta \vdash \rsubst : \Delta'' \Rightarrow_\mono \cdot$.

It remains to show that $\meta{principal}((\Delta, \Theta), \Gamma, M, \Delta'',
A')$ holds.
Let $\Delta_p$ and $A_p$ such that $\typ{\Delta, \Theta, \Delta_p; \Gamma}{M :
A_p }$ and $\ftv(A_p) - \Delta, \Theta = \Delta_p$.
Let $\Theta_s \subseteq \Theta$ such that $\Delta_p, \Theta_s = \ftv(A_p) -
\Delta$, which implies $\Delta, \Theta_s, \Delta_p \vdash A_p$.

By \cref{lem:shrink-context}, we then have $\typ{\Delta, \Delta_p, \Theta_s;
\Gamma}{M : A_p }$.
By $\meta{principal}(\Delta, \Gamma, M, \Delta'', A')$ there exists $\rsubst$
such that $\Delta \vdash \rsubst : \Delta'' \Rightarrow \Delta_p, \Theta_s$
$\rsubst(A') = A_p$.
By $\Theta_s \subseteq \Theta$ and $\Theta \fresh \Delta''$ we can weaken this
to $\Delta, \Theta \vdash \rsubst : \Delta'' \Rightarrow \Delta_p$.
This gives us $\meta{principal}((\Delta, \Theta), \Gamma, M, \Delta'', A')$.

\end{proof}

\begin{lemma}
\label{lem:principal:rename-gen-vars} Let $\Delta' = (a_1, \dotsc, a_n)$ and
$\Delta'' = (b_1, \dotsc, b_n)$ for some $n \ge 0$.  Let $\Delta, \Theta \vdash
\rsubst : \Delta' \Rightarrow \Delta''$ such that $\rsubst(a_i) = b_i$ for all
$1 \le i \le n$.  Furthermore, let $\termwf{\Delta}{M}$ and $\meta{principal}(
(\Delta, \Theta), \Gamma, M, \Delta', A )$ and $\wfctx{\Delta, \Theta}{\Gamma}$.

Then $\meta{principal}( (\Delta, \Theta), \Gamma, M, \Delta'', \rsubst A )$
holds.
\end{lemma}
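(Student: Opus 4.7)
The plan is to unfold the definition of $\meta{principal}((\Delta, \Theta), \Gamma, M, \Delta'', \rsubst A)$ and verify its three constituents in turn, exploiting that $\rsubst$, mapping $n$ distinct rigid variables $a_i$ to $n$ distinct fresh rigid variables $b_i$, is a bijective renaming with an evident inverse $\rsubst^{-1} : \Delta'' \Rightarrow \Delta'$ defined by $b_i \mapsto a_i$. The well-formedness $\Delta, \Theta \vdash \rsubst : \Delta' \Rightarrow \Delta''$ supplies the freshness $\Delta'' \fresh \Delta, \Theta$, and the original principality supplies $\Delta' = \ftv(A) - (\Delta, \Theta)$ together with $\typ{\Delta, \Theta, \Delta'; \Gamma}{M : A}$.

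First I would show $\ftv(\rsubst A) - (\Delta, \Theta) = \Delta''$. From $\Delta' = \ftv(A) - (\Delta, \Theta)$ we have $\ftv(A) \subseteq \Delta, \Theta, \Delta'$ with every $a_i$ actually appearing in $A$; applying the bijective renaming $\rsubst$ therefore yields $\ftv(\rsubst A) \subseteq \Delta, \Theta, \Delta''$ with every $b_i$ appearing, and the freshness of $\Delta''$ gives the required equality. Second I would derive $\typ{\Delta, \Theta, \Delta''; \Gamma}{M : \rsubst A}$ from $\typ{\Delta, \Theta, \Delta'; \Gamma}{M : A}$ via Lemma~\ref{lem:bijective-renaming}(\ref{lem-part:bijective-renaming:typ}), interpreting $\Delta'$ and $\Delta''$ as monomorphic refined kind environments and $\rsubst$ as the corresponding bijective flexible substitution. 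The side-condition $\Delta' \fresh \Theta \fresh \Delta''$ holds by well-formedness of $(\Delta, \Theta, \Delta')$ and of $\rsubst$; because $\ftv(\Gamma) \subseteq \Delta, \Theta$, the renaming acts as the identity on $\Gamma$.

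Third, for the principality condition proper, let $\Delta_p, A_p$ satisfy $\Delta_p = \ftv(A_p) - (\Delta, \Theta)$ and $\typ{\Delta, \Theta, \Delta_p; \Gamma}{M : A_p}$. Applying the given principality of $(\Delta', A)$ yields an instantiation $\sigma$ with $\Delta, \Theta \vdash \sigma : \Delta' \Rightarrow_\poly \Delta_p$ and $\sigma(A) = A_p$. I would then take $\sigma' = \sigma \circ \rsubst^{-1}$; composability of instantiations (Lemma~\ref{lem:subst-substitution}) gives $\Delta, \Theta \vdash \sigma' : \Delta'' \Rightarrow_\poly \Delta_p$. Because the composition $\sigma' \circ \rsubst$ coincides with $\sigma$ on every type variable of $\ftv(A) \subseteq \Delta, \Theta, \Delta'$ (as identity on $\Delta, \Theta$ and as $\sigma$ on the $a_i$), we obtain $\sigma'(\rsubst A) = \sigma(A) = A_p$, discharging the principality requirement.

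The only mildly delicate step is the second one, where Lemma~\ref{lem:bijective-renaming} is invoked on a substitution that is morally rigid (it renames members of a $\Delta$-style kind environment) but is being cast as a flexible one. This is a matter of bookkeeping rather than a genuine obstacle, since the typing judgement does not distinguish rigid from flexible variables once both are absorbed into a single kind environment, and the identification $\Delta' \equiv (\many{a_i : \mono})$ is the one we already use implicitly throughout Section~\ref{sec:inference}.
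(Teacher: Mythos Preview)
Your proposal is correct and follows essentially the same approach as the paper: both establish the typing judgement $\typ{\Delta,\Theta,\Delta'';\Gamma}{M:\rsubst A}$ via Lemma~\ref{lem:bijective-renaming}(\ref{lem-part:bijective-renaming:typ}) and then discharge the principality quantifier by composing the witness $\sigma$ from the original principality with $\rsubst^{-1}$. You are in fact slightly more thorough than the paper in explicitly verifying the condition $\Delta'' = \ftv(\rsubst A) - (\Delta,\Theta)$, which the paper's proof leaves implicit. The only cosmetic difference is packaging: the paper extends $\rsubst$ to a substitution $\theta$ with $\Delta \vdash \theta : \Theta,\Delta' \Rightarrow \Theta,\Delta''$ (identity on $\Theta$), whereas you fold $\Theta$ into the rigid part and treat $\Delta',\Delta''$ alone as the flexible environments---both instantiations of the lemma are valid.
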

\begin{proof}

We first show that $\typ{\Delta,\Theta,\Delta'';\Gamma}{M : \rsubst A }$ holds.
By $\meta{principal}( (\Delta, \Theta), \Gamma, M, \Delta', A )$ we have
$\typ{\Delta,\Theta,\Delta';\Gamma}{M : A }$.  We extend $\rsubst$ to a
substitution $\theta$ with $\Delta \vdash \theta : \Theta, \Delta' \Rightarrow
\Theta, \Delta''$ by defining $\theta(a) = \rsubst(a)$ for all $a \in \Delta'$
and by defining $\theta$ as the identity on all $a \in \Theta$.

We apply \cref{lem:bijective-renaming}(\ref{lem-part:bijective-renaming:typ}),
yielding $\typ{\Delta,\Theta,\Delta'';\theta(\Gamma)}{M : \theta A }$.  We have
$\theta(\Gamma) = \Gamma$ as well as $\theta(A) = \rsubst(A)$ and obtain the
desired judgement.

Now, let $\Delta_p$ and $A_p$ such that $\ftv(A_p) - \Delta, \Theta = \Delta_p$
and $\typ{\Delta, \Theta, \Delta_p;\Gamma}{M : A_p}$.  By $\meta{principal}(
(\Delta, \Theta), \Gamma, \allowbreak M, \allowbreak \Delta', \allowbreak A )$
we have that there exists an instantiation $\rsubst_p$ s.t.\ $\Delta, \Theta
\vdash \rsubst_p : \Delta' \Rightarrow_\mono \Delta_p$ and $\rsubst_p(A) = A_p$.

We need to show that then there also exists an instantiation $\rsubst'_p$ with
$\Delta, \Theta \vdash \rsubst'_p : \Delta'' \Rightarrow_\mono \Delta_p$ and
$\rsubst'_p(\rsubst A) = A_p$.  We observe that this holds for $\rsubst'_p =
\rsubst_p \circ \rsubst^{-1}$, where $\rsubst^{-1}$ is the inverse of $\rsubst$.
\end{proof}

}

\fe{
The following lemma is currently only used in the let case of the proof of
\cref{thm:inference-completeness-mg}. It's very ad hoc; I was hoping to
replace it with something re-using more of the earlier properties
of principality.
However, I haven't found a good replacement.
}

{
\proofContext{bijection-on-principal-types-vars}
\begin{lemma}
\label{lem:bijection-on-principal-types-vars}
Let the following conditions hold:
\resetNum

\begin{align}
&\termwf{\Delta}{M} \mathlabelNum{M-wf} \\
&\theta = \theta'' \circ \theta' \mathlabelNum{composition} \\
&\Delta \vdash \theta : \Theta \Rightarrow \Theta' \mathlabelNum{theta-wf} \\
&\Delta \vdash \theta'' : \Theta'' \Rightarrow \Theta', \Delta'' \mathlabelNum{theta''-wf} \\
&\Delta' = \ftv(A) - \Delta - \ftv(\theta') \mathlabelNum{Delta-prime} \\
&\typ{\Delta, \Theta''; \theta'\Gamma}{ M : A} \mathlabelNum{typ-M}\\
&\dec{principal}((\Delta, \Theta'), \theta \Gamma, \Delta'', A') \mathlabelNum{principal} \\
&\theta''(A) = A' \mathlabelNum{A-vs-A'}\\
\end{align}

Then $\theta''(\Delta') = \Delta''$ holds.
\end{lemma}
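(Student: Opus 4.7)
The plan is to establish that $\theta''$ restricts to an order-preserving bijection from $\Delta'$ to $\Delta''$ by combining principality of $A'$ with a freshening argument on $\Delta'$. First, I observe that $\ftv(A) \subseteq \Delta \cup \Theta''$ follows from the typing derivation for $M$ at $A$, so the definition $\Delta' = \ftv(A) - \Delta - \ftv(\theta')$ gives $\Delta' \subseteq \Theta''$, ensuring $\theta''$ is defined on $\Delta'$. Applying \cref{lem:subst-gamma-and-A} with $\theta''$ to $\typ{\Delta, \Theta''; \theta'\Gamma}{M : A}$ yields $\typ{\Delta, \Theta', \Delta''; \theta\Gamma}{M : A'}$ via $\theta = \theta'' \circ \theta'$, matching the derivation guaranteed by the principality hypothesis.

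Next, to isolate the action of $\theta''$ on $\Delta'$ from its action on $\ftv(\theta')$, I freshen $\Delta'$: pick $\tilde\Delta'$ disjoint from $\Delta, \Theta', \Delta''$, let $\sigma$ be the bijection $\Delta' \leftrightarrow \tilde\Delta'$ extended by the identity elsewhere, and apply \cref{lem:bijective-renaming} to obtain a typing derivation at $\sigma(A)$ over a suitably renamed context. Composing with a variant of $\theta''$ that acts as the identity on $\tilde\Delta'$ produces a derivation $\typ{\Delta, \Theta', \tilde\Delta'; \theta\Gamma}{M : \sigma(A)}$, where $\tilde\Delta' = \ftv(\sigma(A)) - \Delta - \Theta'$ by construction. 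Invoking principality of $A'$ on this derivation supplies an instantiation $\rho$ with $\Delta, \Theta' \vdash \rho : \Delta'' \Rightarrow_\poly \tilde\Delta'$ and $\rho(A') = \sigma(A)$.

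The key step is combining the two equations $\theta''(A) = A'$ and $\rho(A') = \sigma(A)$ to obtain $(\rho \circ \theta'')(A) = \sigma(A)$. Since each $a \in \Delta'$ occurs free in $A$ and $\sigma(a)$ is a single fresh variable of $\tilde\Delta'$, a syntactic induction on the shape of $A$ forces $\theta''(a)$ to be a single variable $b_a \in \Delta''$ satisfying $\rho(b_a) = \sigma(a)$; this pins down $\theta''$ restricted to $\Delta'$ as a bijection onto $\Delta''$. Order preservation then follows because $\Delta'$ and $\Delta''$ are canonically ordered by first appearance of their elements in $A$ and $A'$ respectively, and $\theta''$ does not reorder positions within $A$. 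The main obstacle I anticipate is the bookkeeping required for the freshening step, specifically ensuring well-kindedness of the combined substitution after swapping $\Delta'$ for $\tilde\Delta'$ while preserving $\theta''$'s action on $\Theta'' \setminus \Delta'$; once this is handled, the syntactic bijection argument is routine.
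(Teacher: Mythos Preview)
Your approach is essentially the same as the paper's: freshen $\Delta'$, push a modified substitution through the typing of $M$ to land in $\theta\Gamma$, invoke principality to get an instantiation $\rho$, and read off the bijection from $\rho(\theta''(A)) = \sigma(A)$. Two remarks on where the paper is more explicit than your sketch. First, the ``variant of $\theta''$'' you need cannot simply agree with $\theta''$ on all of $\Theta''\setminus\Delta'$: on variables outside $\ftv(\theta')\cup\Delta'$ the map $\theta''$ may hit $\Delta''$, which would wreck the target context $\Theta',\tilde\Delta'$. The paper sidesteps your two-step renaming by directly defining a single substitution $\theta''_F$ that sends $\Delta'$ to fresh names, agrees with $\theta''$ on $\ftv(\theta')\setminus\Delta$, and sends the remaining (irrelevant) variables of $\Theta''$ to a fixed monotype; this is exactly the bookkeeping you flag as the obstacle, and it is the cleanest resolution. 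Second, your sketch establishes injectivity of $\theta''|_{\Delta'}$ into $\Delta''$ but does not argue surjectivity; the paper closes this with a short case analysis on $\ftv(A') = \ftv(\theta''(A))$, using that $\theta''$ on $\Delta$ and on $\ftv(\theta')\setminus\Delta$ cannot produce variables in $\Delta''$. Both points are routine once noticed, and your order-preservation argument matches the paper's.
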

\begin{proof}
By \refNum{principal}, we have $\Delta'' = \ftv(A') - \Theta' - \Delta$.
Further, \refNum{typ-M} yields $\Delta, \Theta''\vdash A $~\labelNum{A-wf}.

Let $\Delta' = (a'_1, \dotsc, a'_n)$ for some $n \ge 0$
and let $\Delta_F = (f_1, \dotsc, f_n)$ for pairwise different, fresh type variables $f_i$.

By \refNum{A-wf}, we have $\ftv(A) \subseteq \Delta, \Theta''$.
Let $\Theta_{\theta'}$ be defined as $\ftv(\theta') - \Delta$.
We then have
$\Theta_{\theta'} \subseteq \Theta''$~\labelNum{image-of-theta-prime-minus-Delta-is-subset-of-Theta''}
and $\Delta' \mathbin{\#} \Theta_{\theta'}$~\labelNum{Delta'-Theta-img-disj} and
 $\Delta' \subseteq \ftv(\Theta'')$~\labelNum{Delta'-subs-Theta''}.

By \refNum{composition,theta-wf,theta''-wf}
we have $\Delta, \Theta' \vdash \theta''(a) :K $ for all $(a : K) \in \Theta_{\theta'}$~\labelNum{theta''-restriction}.

Let $\theta''_F$ be defined such that
\begin{equation}
\theta''_F(a) =
\begin{cases}
  \theta''(a) &\text{if $a \in \Theta_{\theta'}$} \\
  f_i           &\text{if $a = a'_i \in \Delta'$} \\
  A_D           &\text{if $a \in \Theta'' - \Theta_{\theta'} - \Delta'$}
\end{cases}
\mathlabelNum{def-of-theta''-F}
\end{equation}
where $A_D$ is some arbitrary type with $\Delta, \Theta' \vdash A_D : \mono$
(e.g., $\Int$, cf.~\cref{fig:freezeml-syntax}).

By
  \refNum{image-of-theta-prime-minus-Delta-is-subset-of-Theta'',%
          Delta'-Theta-img-disj,%
          Delta'-subs-Theta''},
this definition is well-formed.
Together with~\refNum{theta''-restriction}
we then have $\Delta \vdash \theta''_F : \Theta'' \Rightarrow \Theta', \Delta_F$~\labelNum{theta''-F-wf} and
$\theta = \theta''_F \circ \theta'$~\labelNum{num:compl:let:tilde-theta-second-composiiton}.

By \refNum{A-wf} and \cref{lem:stability-wf-typ}, we then have
$\Delta, \Theta', \Delta_F \vdash \theta''_F A$ which implies $\ftv(\theta''_F A)  \subseteq \Delta_F,\Delta,\Theta'$.
In general, for every $a \in \ftv(A)$, $\theta''_F(a)$ is part of $\theta''_F(A)$.
In particular, for each $a'_i \in \Delta' \subseteq \ftv(A)$, $\theta''_F(a'_i) = f_i$ occurs in $\theta''_F(A)$.
Thus, $\ftv(\theta''_F A) - \Delta,\Theta' = \Delta_F$ holds~\labelNum{ftv-of-substed-A}.

By \refNum{typ-M,theta''-F-wf,M-wf}, \cref{lem:subst-gamma-and-A} yields
$\typ{\Delta, \Theta', \Theta_F; \theta''_F\theta'\Gamma}{M : \theta''_F (A)}$,
which by~\refNum{num:compl:let:tilde-theta-second-composiiton}
is equivalent to $\typ{\Delta, \Theta', \Delta_F; \theta\Gamma}{M : \theta''_F (A)}$~\labelNum{type-M-substed}.
By definition of $\meta{principal}$ as well as
  \refNum{principal,%
          type-M-substed,%
          ftv-of-substed-A}
there exists $\rsubst$ such that $\Delta, \Theta' \vdash \rsubst : \Delta'' \Rightarrow \Delta_F$%
  ~\labelNum{num:compl:let:rsubst-sqsubseteq-wf} and
$\rsubst(A') = \theta''_F(A)$.

By \refNum{A-vs-A'}, the latter is equivalent to $\rsubst(\theta''(A)) = \theta''_F(A)$%
  ~\labelNum{num:compl:let:sqsubseteq-equality}

\fe{I feel like the rest of this proof is rather verbose, in particular since I reason
about similar bijectivity properties much more briefly in the proof of
\cref{lem:inferred-types-vs-principal-types}
}
Let $a \in \Delta' \subseteq \ftv(A)$, which implies $a = a'_i$ for some $ 1 \le i \le n$.
By \refNum{num:compl:let:sqsubseteq-equality}, we have
\[
\begin{array}{rrcll}
 &\rsubst\theta''(a_i) &= &\theta''_F(a_i) \\
\text{equiv.} \qquad
 &\rsubst\theta''(a_i) &= &f_i &\reason{by \refNum{def-of-theta''-F}}
\end{array}
\]

We therefore have that for each such $a'_i$, $\theta''(a_i)$ maps to pairwise different type variables $b_i$.
By \refNum{num:compl:let:rsubst-sqsubseteq-wf} and $\Delta, \Theta', \Delta'' \mathbin{\#} \Delta_F$, we have $\rsubst(b_i) \neq b_i$ and therefore
$b_i \in \Delta''$.
We have therefore shown that $\theta''$ maps $\Delta'$ injectively into
$\Delta''$~\labelNum{num:compl:let:theta-two-primes-one-injective-on-Delta-three-primes}.

We now show that $\theta''$ is also surjective from $\Delta'$ into $\Delta''$, which means that
$\theta''(\Delta')$ is a permutation of $\Delta''$. 
To this end, assume that there exists $b \in \Delta''$ such that there exists no $a \in \Delta'$ with $\theta''(a) = b$.
By $b \in \Delta'' \subseteq \ftv(A')$ and \refNum{A-vs-A'} we have that there must exist $a \in \ftv(A)$
such that $b \in \ftv(\theta''(a))$.
By \refNum{num:compl:let:theta-two-primes-one-injective-on-Delta-three-primes}, $a \in \Delta'$ would immediately yield a contradiction.
By $\ftv(A) \subseteq \Theta_{\theta'},\Delta', \Delta$, we therefore consider the cases $a \in \Theta_{\theta'}$ and $a \in \Delta$.
If $a \in \Theta_{\theta'}$, according to \refNum{theta''-restriction}, we then have
 $\ftv(\theta''(a)) \subseteq \Delta, \Theta'$, which is disjoint from $\Delta''$.
If $a \in \Delta$, we have $\theta''(a) = a \not\in \Delta''$.
As all choices for $a$ yield contradictions, we have shown that $\theta''(\Delta')$ is a permutation of $\Delta''$

We now show that $\theta''(\Delta') = \Delta''$ holds (i.e., $\theta''$ preserves the order of type variables).
To this end, let $a \in \ftv(A) - \Delta'$, which implies $a \in \Delta, \Theta_{\theta'}$.
If $a \in \Delta$, then $\theta''(a) = a \in \Delta \fresh \Delta''$.
If $a \in \Theta_{\theta'}$ then by \refNum{theta''-restriction}
  we have $\ftv(\theta''(a)) \subseteq \Delta, \Theta' \mathop{\#} \Delta''$.
Therefore, together with \refNum{A-vs-A'} for all $a'_i, a'_j \in \Delta'$ with $1 \le i < j \le n$, we have
that the first occurrence of $\theta''(a'_i)$ in $A'$ is located before the first occurrence of $\theta''(a'_j)$ in $A'$.

\end{proof}
}

\subsection{Soundness of type inference}
\label{subsec:correctness-inference:soundness}

{
\begin{lemma}
\label{lem:theta-is-id-on-non-Gamma-tyvars}
If
  $\termwf{\Delta}{M}$ and
  $(\Theta', \theta, A) = \Infer(\Delta, \Theta, \Gamma, M)$
then for all
  $a \in (\Theta - \ftv(\Gamma))$
we have
$\theta(a) = a$ and $a \not\in \ftv(A)$.
\end{lemma}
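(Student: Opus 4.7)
The plan is to prove this by well-founded induction on the structure of $M$, performing a case analysis according to the clauses of $\Infer$ in Figure~\ref{fig:inference}. The core intuition is that $\Infer$ only manipulates flexible variables that are reachable from $\Gamma$ (through substitution) or freshly generated during inference; a variable $a \in \Theta \setminus \ftv(\Gamma)$ is therefore carried through unchanged by both the substitution and the output type. In each case, I would trace how $\theta$ and $A$ are built from the results of recursive subcalls (and possibly a unification step) and appeal to the induction hypothesis, relying on freshness of every newly introduced variable to stay disjoint from $a$.

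The base cases $\freeze{x}$ and $x$ are immediate: the substitution returned is the identity (weakened with a fresh kind environment in the $x$ case), and the type is built from $\Gamma(x)$ together with fresh instantiation variables, neither of which can mention $a$. For $\lambda x. M$, a fresh $a_f$ is generated and the IH is applied in the extended context $(\Gamma, x : a_f)$; since $a_f$ is fresh, $a$ still lies in $(\Theta, a_f{:}\mono) - \ftv(\Gamma, x : a_f)$, so the IH yields $\theta(a) = a$ and $a \notin \ftv(B)$. The annotated lambda case is similar but simpler, since the annotation is well-kinded in $\Delta$ alone by $\termwf{\Delta}{M}$. For application and the two let cases, I would chain the IH across successive recursive calls, checking at each step that the context passed to the next subcall (of the form $\theta_1(\Gamma)$) still excludes $a$, and that the final composition $\theta_2 \circ \theta_1$ and output type contain no occurrence of $a$.

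The main obstacle, and the reason the proof is not entirely mechanical, is that the statement must be strengthened in order to carry through the induction. Concretely, in the lambda case one needs $a \notin \ftv(S)$, where $S$ is the image of the fresh $a_f$ under the substitution returned by the recursive call; similarly, in the application and let cases one needs $a \notin \ftv(\theta_1 \Gamma)$ before invoking the second IH. The bare claim $\theta(a) = a$ does not deliver this, because it does not exclude $a$ from the codomain of $\theta$ on other variables. I would therefore strengthen the inductive hypothesis to also assert $a \notin \ftv(\theta)$ (i.e., $a$ does not appear anywhere in the image of $\theta$), which immediately entails $a \notin \ftv(\theta\Gamma)$ as well.

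With this strengthening, every case closes. The codomain of $\theta$ is built compositionally from (i) substitutions produced by $\unify$ on types that, by earlier parts of the IH, do not mention $a$, and (ii) types originating from $\Gamma$ and term annotations, neither of which mentions $a$ (the latter by $\termwf{\Delta}{M}$ and $a \notin \Delta$, which follows from $a \in \Theta$ and the disjointness convention on kind environments). The unification step is handled using Theorem~\ref{thm:unification-sound} together with the easy syntactic observation (essentially Lemma~\ref{lem:unify-surjective}) that $\unify$ produces substitutions whose free type variables are confined to those of its inputs. The let cases additionally use that $\meta{demote}$ and the generalisation operator $\mgen$ do not introduce fresh occurrences of $a$, so that the overall composed substitution and final type remain free of $a$.
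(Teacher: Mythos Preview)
Your approach---structural induction on $M$ following the clauses of $\Infer$---is exactly what the paper does, though the paper reduces the argument to a two-sentence sketch. You are also right that the bare statement does not obviously carry through the recursive calls (e.g., in the application case one needs $a \notin \ftv(\theta_1\Gamma)$ before invoking the hypothesis on $N$), and the paper glosses over this.

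However, the strengthening you propose is formally inconsistent with the lemma's own first conjunct. By the paper's definition, $\ftv(\theta)$ is the union of $\ftv(\theta(b))$ over all $b$ in the domain $\Theta$; since $a \in \Theta$ and $\theta(a) = a$, you always have $a \in \ftv(\theta)$. Concretely, already in the $\freeze{x}$ base case $\theta = \idsubst_\Theta$, whence $\ftv(\theta) = \ftv(\Theta) \ni a$, and your strengthened hypothesis is false. The repair is minor: replace ``$a \notin \ftv(\theta)$'' by ``$a \notin \ftv(\theta(b))$ for every $b \in \Theta$ with $b \neq a$'', or---since this is all you ever use---simply by ``$a \notin \ftv(\theta\Gamma)$''. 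With either formulation your case analysis goes through as described; in the application case, for variables $c \in \ftv(A')$ that are \emph{not} in $\ftv(\theta_1\Gamma)$ you recover $a \notin \ftv(\theta_2(c))$ from the original lemma applied to $c$ (which lies in $\Theta_1 - \ftv(\theta_1\Gamma)$), so the two pieces fit together.
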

\begin{proof}
Straightforward by induction on the structure of $M$, in each case checking
that a successful evaluation of type inference only instantiates free
variables present in $\Gamma$.
Furthermore, each type variable in $A$ is either fresh or results from
using a type in $\theta(\Gamma)$.
\end{proof}
}

\newcounter{thmbackup}
\setcounter{thmbackup}{\thethm}
\setcounterref{thm}{thm:inference-sound}
\addtocounter{thm}{-1}

\begin{thm}
If $\wfctx{\Delta, \Theta}{ \Gamma}$ and $\termwf{\Delta}{M}$ and
$\Infer(\Delta, \Theta, \Gamma, M) = (\Theta', \theta, A_0)$ then
$\typ{\Delta, \Theta'; \theta(\Gamma)}{M : A_0}$ and $\Delta
\vdash \theta : \Theta \Rightarrow \Theta'$.
\end{thm}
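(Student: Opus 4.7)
The plan is to prove the theorem by structural induction on $M$, following the case structure of $\Infer$. In each case I unfold the definition of $\Infer(\Delta,\Theta,\Gamma,M)$, apply the induction hypothesis to recursive calls on proper subterms, and then combine the pieces with the auxiliary lemmas whose decreasing dependencies are recorded in \cref{fig:proof-dep-graph}. For the substitution well-formedness part, most cases follow by composing the substitutions returned by sub-calls using \textsc{S-Compose} (\cref{fig:subst-prop}) together with soundness of unification (\cref{thm:unification-sound}). For the typing part, we use the derived rules in \cref{fig:freezeml-typing}, with care that any $\Gamma$ appearing on the right has already been updated by the intermediate substitutions, which is handled by appealing to \cref{lem:subst-gamma-and-A}.

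The easy cases are $\freeze{x}$ and $x$: in both the returned substitution is essentially the identity (weakened for fresh flexible variables in the variable case) and the conclusion follows directly from \freezemlLab{Freeze} or \freezemlLab{Var}. For $\lambda x.M$, the fresh $a:\mono$ is installed in $\Theta$ before the recursive call; after induction we peel off the binding $[a\mapsto S]$ from the returned substitution via \cref{lem:restriction-wellformed} to rebuild a $S\to B$ typing. For $\lambda (x:A).M$ the annotation is used verbatim and we rely on $\termwf{\Delta}{M}$ to ensure $A$ is well-kinded in $\Delta$. For application, the two inductive calls are threaded together with a final unification step; soundness of unification delivers both well-formedness of the final substitution and $\fsubst_3(\fsubst_2(A')) = \fsubst_3(A\to b)$, after which \freezemlLab{App} concludes. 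Throughout, we use \cref{lem:subst-gamma-and-A} to push intermediate substitutions into $\Gamma$ so that the induction hypothesis from the right-hand subterm applies to $\fsubst_1(\Gamma)$.

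The main obstacle is the \freezemlLab{Let} case, which requires coordinating generalisation, the principality side-condition, and (when $M$ is not a guarded value) the monomorphic demotion of ungeneralised flexible variables. After the recursive call on $M$ returns $(\Theta_1,\fsubst_1,A)$, I would first establish $\meta{principal}((\Delta,\Theta_1-\Delta''),\fsubst_1(\Gamma),M,\Delta'',A)$ using \cref{lem:inferred-types-are-principal} applied to the subderivation for $M$. Then I would analyse the two subcases: if $M \in \dec{GVal}$, then $\mgen$ yields $(\Delta'',\Delta'')$ and $\Updownarrow$ produces $\forall\Delta''.A$; the monomorphism invariant on $\Gamma$ is preserved because the $\Delta''$ variables do not appear in $\fsubst_1(\Gamma)$ (by \cref{lem:theta-is-id-on-non-Gamma-tyvars} the flexible variables disappearing during inference of $M$ are not free in $\Gamma$). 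If $M \notin \dec{GVal}$, the \meta{demote} step reclassifies the would-be generalised variables as $\mono$ before extending $\Gamma$, giving the monomorphic instantiation required by $\Updownarrow$ (and using \cref{lem:demote-sound} to check that $\idsubst$ remains a valid substitution from $\Theta_1$ to its demoted form). In both subcases we apply the induction hypothesis to $N$ in the appropriate extended context and assemble the \freezemlLab{Let} conclusion; the required substitution well-formedness follows by composition. The \freezemlLab{Let-Ascribe} case is analogous but simpler, using $\msplit$ instead of $\mgen$ and checking that the $\ftv(\fsubst_2)\fresh\Delta'$ assertion corresponds exactly to the scoping side condition needed to reuse the annotated $A$.

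A small subtlety worth flagging is that we must ensure the kind-environment discipline is maintained: every extension of $\Theta$ by fresh variables, every demotion, and every composition must preserve $\wfctx{\Delta,\Theta'}{\fsubst(\Gamma)}$, which we get from \cref{lem:stability-wf-env} together with the monomorphism invariant on binders introduced by \freezemlLab{Lam}. Apart from the Let case's careful bookkeeping, each step is a straightforward application of the lemmas in \cref{subsec:correctness-inference:principality} together with soundness of unification; no new ideas are needed beyond those already developed in the preceding appendices.
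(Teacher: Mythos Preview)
Your plan is correct and matches the paper's proof essentially step for step: induction on $M$, with the Let case handled via \cref{lem:inferred-types-are-principal} for principality, the GVal/non-GVal split, \cref{lem:theta-is-id-on-non-Gamma-tyvars} for the monomorphism invariant, and \cref{lem:subst-gamma-and-A} plus unification soundness throughout. The one place where the paper's argument is more intricate than your sketch suggests is that the \freezemlLab{Let} premises must be stated relative to the \emph{final} environment $\Theta_2$ (after inferring $N$), so the principality and $\mgen$ facts you obtain at $\Theta_1$ must be transported forward via \cref{lem:stability-principality-substitution} and \cref{lem:subst-gen-inst}; in the non-GVal subcase this additionally requires a small freshening construction (mapping $\Delta'''$ to fresh names, with \cref{lem:principal:rename-gen-vars}) so that the instantiation witnessing $\Updownarrow$ can be read off from $\theta_2$ restricted to the demoted variables.
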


\setcounter{thm}{\thethmbackup}

\begin{proof}
\resetNum
\proofContext{infer-soundness}
By induction on structure of $M$.
In each case, we have
  $\wfctx{\Delta, \Theta}{\Gamma}$~\labelNum{num:sound:gamma-well-formed},
  $\termwf{\Delta}{M}$~\labelNum{num:sound:term-well-formed},
  and
  $\Infer(\Delta, \Theta, \Gamma, M) =
    (\Theta', \theta, A_0)$.
For each case, we show:
\begin{enumerate}[label=\Roman*., ref=(\Roman*)]
\item
  $\typ
    {\Delta, \Theta';\theta{\Gamma}}
    {M : A_0}$ \label{proofobl:typed}
\item
  $\Delta \vdash \theta : \Theta \Rightarrow \Theta'$%
    ~\label{proofobl:theta-wf}
\end{enumerate}
We write $\ref{proofobl:typed}$ and $\ref{proofobl:theta-wf}$ to indicate
that we have shown the respective statement.

\begin{description}
\item[Case $\freeze{x}$:]
By definition of $\meta{infer}$, we have
  $A_0  = \Gamma(x)$, $\Theta' = \Theta$, and
  $\theta = \iota_{\Delta,\Theta}$,
which implies $\Delta \vdash \theta : \Theta \Rightarrow
\Theta'$~\ref{proofobl:theta-wf} and $\wfctx{\Delta, \Theta'}{\Gamma}$.
We can then derive:
\[
\inferrule*[right=\freezemlLab{Freeze}]
    {x : A_0 \in \Gamma}
    {\typ
      {\Delta, \Theta'; \Gamma}
      {\freeze{x} : A_0}~\ref{proofobl:typed}}
\]

\item[Case $x$:]
By definition of $\meta{infer}$, we have
  $(x : \forall \many{a}.H) \in \Gamma$ and
  $\many{b} \,\#\, \Delta, \Theta$ and
  $A_0 = H[\many{b}/\many{a}]$ as well as
  $\Theta' = \Theta, \many{b : \poly}$.
Due to $\alpha$-equivalence, we can assume $\many{a} \mathop{\#} \many{b}, \Delta, \Theta$.

Let $\rsubst = [\many{b}/\many{a}]$.
We have $\Delta, \Theta, \many{b} \vdash \delta(a) : \poly $ for all $a \in \many{a}$ and
therefore
  $\Delta, \Theta, \many{b} \vdash
    \rsubst : (\many{a} : \mono) \Rightarrow_\poly \cdot$.
By \refNum{num:sound:gamma-well-formed} and $\many{b} \fresh \Delta, \Theta$, we have
$\wfctx{\Delta,\Theta, \many{b : \poly}}{\Gamma}$ and derive the following:

\[
\inferrule*[right=\freezemlLab{Var}]
    {{x : \forall \many{a}.H \in \Gamma}\; \\
      \Delta, \Theta,\many{b} \vdash \rsubst
         : (\many{a:\mono}) \Rightarrow_\poly \cdot
    }
    {\typ{\Delta, \Theta, \many{b : \poly}}{x : \rsubst(H)}~\ref{proofobl:typed}}
\]

We weaken $\Delta \vdash \iota_{\Delta, \Theta} : \Theta \Rightarrow \Theta$
to
$\Delta \vdash
   \iota_{\Delta, \Theta} : \Theta \Rightarrow \Theta, \many{b : \poly}$%
  ~\ref{proofobl:theta-wf}.

\item[Case $\lambda x.M$:]
By definition of $\meta{infer}$, we have $a \, \# \,
\Delta,\Theta$, which implies $a \,\#\,
\ftv(\Gamma)$~\labelNum{num:sound:lam:disjoint-gamma}.  Let $\theta_1 = \theta[a
\rightarrow S]$~\labelNum{num:sound:lam:theta-1}.

Together with \refNum{num:sound:gamma-well-formed} we then have
$\wfctx{\Delta, \Theta, a : \mono}{\Gamma, x : a}$.  By induction, we further have
\begin{align}
& \typ{\Delta, \Theta_1; \theta_1(\Gamma, x : a)}{M : B} \notag\\
\text{equiv.} \quad & \typ{\Delta, \Theta_1; \theta{\Gamma}, x : S}{M : B} \;
      \locallabel{num:sound:lam:typed-M}
\reason{by \refNum{num:sound:lam:disjoint-gamma}, \refNum{num:sound:lam:theta-1}}
\end{align}
as well as $\Delta \vdash \theta_1 : (\Theta, a : \mono) \Rightarrow \Theta_1$,
which implies $\Delta \vdash \theta : \Theta \Rightarrow
\Theta_1$~\ref{proofobl:theta-wf}.

By \refNum{num:sound:lam:typed-M} we have $\wfctx{\Delta, \Theta_1}{\theta\Gamma}$,
which allows us to derive the following:

\[
\inferrule*[right=\freezemlLab{Lam}]
    {\typ
      {\Delta, \Theta_1; \theta\Gamma, x : S}
      {M : B}\;\;(\text{by } \refNum{num:sound:lam:typed-M})}
    {\typ
      {\Delta, \Theta_1; \theta\Gamma}
      {\lambda x.M : S \to B}~\ref{proofobl:typed}}
\]

\item[Case $\lambda (x: A).M$:]

By $\termwf{\Delta}{\lambda (x: A).M}$ we have $\Delta \vdash A$~\labelNum{num:sound:A-wf},
and in particular all free type variables of $A$ in the judgement $\Delta, \Theta
\vdash A$ are monomorphic.  Together with \refNum{num:sound:gamma-well-formed}
this yields $\wfctx{\Delta, \Theta}{\Gamma, x : A}$.
Induction then yields $\typ{\Delta, \Theta_1; \theta(\Gamma, x : A)}{M :
B}$~\labelNum{num:sound:lam-as:M-typed} and $\Delta \vdash \theta : \Theta \Rightarrow
\Theta_1$~\ref{proofobl:theta-wf}.

According to \refNum{num:sound:A-wf} and the latter we further have
 $\theta(A) = A$~\labelNum{num:sound:lam-as:theta-A}.
By \refNum{num:sound:lam-as:M-typed} we have
$\wfctx{\Delta, \Theta_1}{\theta\Gamma}$ and can derive the following:
\[
\inferrule*[right=\freezemlLab{Lam-Ascribe}]
    {\typ
      {\Delta, \Theta_1; \theta\Gamma, x : A}
      {M : B} \;\;(\text{by } \refNum{num:sound:lam-as:M-typed},
                              \refNum{num:sound:lam-as:theta-A})}
    {\typ
      {\Delta, \Theta_1; \theta\Gamma}
      {\lambda (x : A).M : A \to B}~\ref{proofobl:typed}}
\]

\item[Case $M \: N$:]
By definition of $\meta{infer}$, 
 we have:
\begin{alignat}{3} (\Theta_1, \theta_1, A') &= \Infer(\Delta,\Theta, \Gamma,
M)~\locallabel{num:app:infer-M}\\ (\Theta_2, \theta_2, A) &=
\Infer(\Delta,\Theta_1, \theta_1\Gamma, N)~\locallabel{num:app:infer-N}
\end{alignat}
By induction, \refNum{num:app:infer-M} yields $\typ{\Delta,
\Theta_1;\theta_1\Gamma}{M : A'} \;~\labelNum{num:app:type-M}$ and $\Delta
\vdash \theta_1 : \Theta \Rightarrow
\Theta_1~\labelNum{num:app:theta-one-wf}$.

By \refNum{num:app:type-M} we have $\wfctx{\Delta, \Theta_1}{\theta_1\Gamma}$.
Therefore, by induction, \refNum{num:app:infer-N} yields $\typ{\Delta,
\Theta_2;\theta_2\theta_1\Gamma}{N : A} \;~\labelNum{num:app:type-N}$ and
$\Delta \vdash \theta_2 : \Theta_1 \Rightarrow
\Theta_2~\labelNum{num:app:theta-two-wf}$.
By definition of $\meta{infer}$, we have:
\begin{gather} b \mathop{\#} \ftv(A') \;\; b \mathop{\#} \ftv(A) \;\; b \mathop{\#} \Theta
\;\locallabel{num:app:b-free}\\ (\Theta_3, \theta'_3) = \unify(\Delta,
(\Theta_2, b : \poly ), \theta_2 A', A \to b)
\;\locallabel
{num:app:unify-def}\\ \theta'_3 = \theta_3[b \to B]
\;\locallabel{num:app:theta-three-def}
\end{gather}

By \refNum{num:app:type-M} we have $\Delta, \Theta_1 \vdash A'$
and by
\refNum{num:app:theta-two-wf} further
$\wfctx{\Delta, \Theta_2}{\theta_2 A'}$. This implies $\wfctx{\Delta, \Theta_2, b : \poly}
{ \theta_2 A'}$ by ~\refNum{num:app:b-free}.
By \refNum{num:app:type-N}  we have $\wfctx{\Delta, \Theta_2
}{A}$ and therefore also $\wfctx{\Delta, \Theta_2, b : \poly}{A \to b}$.
Together, those properties allow us to apply
Theorem~\ref{thm:unification-sound}, which gives us:

\begin{alignat}{6}
  &&\theta'_3\theta_2(A') &= \; &&\theta'_3(A \to b) \notag\\
\text{implies}\quad
  &&\theta_3\theta_2(A') &= &&\theta_3(A) \to B
\;\locallabel{num:app:unify-equality} &&\reason{by \refNum{num:app:b-free,num:app:theta-three-def}}
\end{alignat}

and
\begin{alignat}{5}
  \Delta &\vdash \theta'_3 : (\Theta_2, b : \poly) \Rightarrow
  \Theta_3 && \notag\\
\text{implies } \quad
  \Delta &\vdash \theta_3 : \Theta_2
\Rightarrow \Theta_3
&&\reason{\text{by $\refNum{num:app:theta-three-def}$}}
\locallabel{num:app:theta-three-wf}
\end{alignat}

By~\refNum{num:app:theta-two-wf}, \refNum{num:app:theta-three-wf}, and
composition, we have $\Delta \vdash \theta_3 \circ \theta_2 : \Theta_1
\Rightarrow \Theta_3$.
By \refNum{num:app:type-M} and Lemma~\ref{lem:subst-gamma-and-A}, we then
have $\typ{\Delta, \Theta_3;\theta_3\theta_2\theta_1\Gamma}{M :
\theta_3\theta_2 A'}$~\labelNum{num:app:type-M-substed}.
Similarly, by \refNum{num:app:theta-three-wf}, \refNum{num:app:type-N}, and
Lemma~\ref{lem:subst-gamma-and-A}, we have $\typ{\Delta,
\Theta_3;\theta_3\theta_2\theta_1\Gamma}{N :
\theta_3A}$~\labelNum{num:app:type-N-substed}

By \refNum{num:app:theta-one-wf}, \refNum{num:app:theta-two-wf},
\refNum{num:app:theta-three-wf}, and Lemma~\ref{lem:stability-wf-env}, we
have $\wfctx{\Delta}{\theta_3\theta_2\theta_1\Gamma}$.
We can then derive:

\[
\inferrule*[right=\freezemlLab{App}]
 {
   \typ
     {\Delta, \Theta_3;\theta_3\theta_2\theta_1\Gamma}
     {M : \theta_3(A) \to B\;
        (\text{by } \refNum{num:app:type-M-substed},\refNum{num:app:unify-equality})} \\
   \typ
     {\Delta, \Theta_3; \theta_3\theta_2\theta_1\Gamma}
     {N : \theta_3A} \;
  (\text{by } \refNum{num:app:type-N-substed})}
  {\typ
     {\Delta, \Theta_3; \theta_3\theta_2\theta_1\Gamma}
     {M\,N : B}\;\ref{proofobl:typed}}
\]

Finally, we show $\Delta \vdash \theta_3 \circ \theta_2 \circ \theta_1 : \Theta
\Rightarrow \Theta_3$.
It follows from \refNum{num:app:theta-one-wf}, \refNum{num:app:theta-two-wf},
\refNum{num:app:theta-three-wf}, and composition~\ref{proofobl:theta-wf}.

\item[Case $\Let \: x  =  M \; \In \; N$:]

By definition of $\meta{infer}$, we have $(\Theta_1, \theta_1, A) =
\Infer(\Delta, \Theta, \Gamma, M)$~\labelNum{num:let:first-infer-call}.
By induction,
this implies
$\typ{\Delta, \Theta_1;\theta_1\Gamma}{M : A}$~\labelNum{num:let:type-M} and
$\Delta \vdash \theta_1 : \Theta \Rightarrow
\Theta_1$~\labelNum{theta-one-wf}.

By definition of $\meta{infer}$ we further have
\begin{equations}
(\Delta'',\Delta''') &= &\mgend[\Delta']{A}{M} \\
         &= &\mgend[(\Delta, (\ftv(\theta_1\Theta)-\Delta))]{A}{M} \\
&&\text{ where $\Delta''' =
               \ftv(A) - (\Delta, (\ftv(\theta_1)-\Delta))
                 \:=\: (\ftv(A) - \Delta) - \ftv(\theta_1)$~\labelNum{num:let:def-delta-two}}
\end{equations}

By applying \cref{lem:inferred-types-are-principal} to
\refNum{num:let:first-infer-call}, we obtain
$\meta{principal}(  (\Delta, \Theta_1 - \Delta'''), \theta_1\Gamma, \Delta''', A )$%
~\labelNum{num:let:A-is-principal}.

We have $\Delta''' \subseteq \Theta_1$ and can therefore rewrite \refNum{num:let:type-M} as
$\typ{\Delta, \Theta_1 - \Delta''', \Delta''';\theta_1\Gamma}{M :
A}$~\labelNum{num:let:type-M-Delta'''-isolated}.

Next, define $\Theta_1' = \dec{demote}(\mono,\Theta_1,\Delta''')$.
Again by
definition of $\dec{infer}$ we have
$(\Theta_2, \theta_2, B) = \Infer(\Delta, \Theta_1' - \Delta'',
(\theta_1(\Gamma), x : \forall \Delta''.A), N)$~\labelNum{num:let:second-infer-use}.

By definition of $\Delta'''$, we have $\Delta''' \fresh \ftv(\theta_1)$ and thus
$\Delta \vdash \theta_1 : \Theta \Rightarrow \Theta_1 -\Delta'''$%
~\labelNum{num:let:theta-one-restricted}.

\newcommand{\freshDeltaThree}{\Delta'''_G}
\newcommand{\freshDeltaTwo}{\Delta''_G}

We distinguish between $M$ being a generalisable value or not. In each case, we
show that there exist $\freshDeltaTwo, \freshDeltaThree, \theta'_2$ and $A'$
such that the following conditions are satisfied:

\begin{align}
&\theta'_2 \circ \theta_1 = \theta_2 \circ \theta_1
  \mathlabelNum{num:let:composition} \\
&\Delta \vdash  \theta'_2 \circ \theta_1 : \Theta \Rightarrow \Theta_2
 \mathlabelNum{num:let:composition-wf} \\
&(\freshDeltaTwo,\freshDeltaThree) = \mgend[(\Delta,\Theta_2)]{\theta'_2 A}{M}
 \mathlabelNum{num:let:gen-eq} \\
&\typ
  {\Delta, \Theta_2, \freshDeltaThree; \theta'_2\theta_1\Gamma}
  {M : \theta'_2 A}
 \mathlabelNum{num:let:type-M-substed} \\
&(\Delta,\Theta_2),\freshDeltaThree,M,\theta'_2 A) \Updownarrow A'
 \mathlabelNum{num:let:updown} \\
&\typ
         {\Delta, \Theta_2; (\theta'_2\theta_1\Gamma,
           x : A')}
         {N : B}
 \mathlabelNum{num:let:type-N} \\
&\meta{principal}( (\Delta, \Theta_2), \theta'_2 \theta_1 \Gamma, M,
      \freshDeltaThree, \theta'_2 A )
 \mathlabelNum{num:let:A-is-principal-substed}
\end{align}

\begin{description}
\item[Sub-Case $M \in \meta{GVal}$:]
By definition of $\meta{gen}$, we  have $\Delta'' = \Delta'''$.
We choose
$\freshDeltaTwo \coloneqq \Delta'''$ and
$\freshDeltaThree \coloneqq \Delta'''$.

In order to apply the induction hypothesis to \refNum{num:let:second-infer-use},
we need to show
$\wfctx
  {\Delta, \Theta_1' - \Delta''}
  {\theta_1(\Gamma), x : \forall \Delta'''.A}$.
First, by \refNum{num:sound:gamma-well-formed,num:let:theta-one-restricted} and
\cref{lem:stability-wf-env}, we have
$\wfctx{\Delta, \Theta_1 - \Delta''}{\theta_1(\Gamma)}$.

Second, by \refNum{num:let:type-M} we have
$\Delta, \Theta_1 \vdash A$ and thus
$\Delta, \Theta_1 - \Delta''' \vdash \forall \Delta'''.A$.
It remains to show that for all $a \in \ftv(A) -\Delta'''$ we have
$\Delta, \Theta_1 \vdash a : \mono$. For $a \in \Delta$, this follows
immediately. Otherwise, we have $a \in \Theta_1 - \Delta'''$ and
$a \in \ftv(\theta_1)$,
which implies that there exists $b \in \Theta$ such that $a \in \ftv(b)$.
If $b \in \ftv(\Gamma)$, then by $\wfctx{\Delta,\Theta}{\Gamma}$ we have
$\Delta, \Theta_1 \vdash \theta(b) : \mono$, which implies
$\Delta, \Theta_1 \vdash a : \mono$. Otherwise, if $b \not\in \ftv(\Gamma)$,
then by \cref{lem:theta-is-id-on-non-Gamma-tyvars} we have
$\theta(b) = b = a$ and $a \not\in \ftv(A)$, contradicting our earlier assumption.
By $\Theta_1 - \Delta'' = \Theta'_1 - \Delta''$ we then have
$\wfctx{\Delta, \Theta'_1 - \Delta''}{\theta_1(\Gamma), x : \forall \Delta''.A}$

In summary, we can apply the induction hypothesis by which we then have
$\typ
  {\Delta, \Theta_2, \theta_2(\theta_1 (\Gamma, x : \forall \Delta''.A)}
  {N : B}$~\labelNum{num:letG:type-N}
and $\Delta \vdash \theta_2 : \Theta_1 - \Delta''' \Rightarrow \Theta_2$%
~\labelNum{num:letG:theta'-two-wf}.
We choose  $\theta'_2 = \theta_2$ and $A' = \forall \Delta'''. \theta'_2 A$,
therefore  satisfying
  \refNum{
          num:let:updown,%
          num:let:composition}.

By \refNum{num:letG:theta'-two-wf,num:let:theta-one-restricted},
 condition \refNum{num:let:composition-wf} is also satisfied.

No type variable in $\Delta'''$ is freely part of the input to $\Infer$ that
resulted in \refNum{num:let:second-infer-use}.  As all newly created variables
are fresh, we then have
$\Delta''' \fresh \Theta_2$~\labelNum{num:letG:Delta''-vs-Theta-two}.

Due to our choice of $\theta'_2$ we have
$\theta'_2(  \theta_1(\Gamma) = \theta_2(\theta_1 \Gamma)$
and by \refNum{num:letG:Delta''-vs-Theta-two,num:letG:theta'-two-wf} also
$\theta_2(\forall \Delta '''. A) =  \forall \Delta'''.\theta'_2(A) $.
Therefore, \refNum{num:letG:type-N} is equivalent to \refNum{num:let:type-N}.

By applying \cref{lem:stability-principality-substitution} to
  \refNum{num:let:A-is-principal,%
          num:letG:theta'-two-wf,%
          num:letG:Delta''-vs-Theta-two} we
show that \refNum{num:let:A-is-principal-substed} is satisfied.

Recall the following relationships:
\begin{align*}
  \ftv(A) &\subseteq \Delta, \Theta_1 \\
  \ftv(\theta) &\subseteq \Delta, \Theta_1 \\
  \Delta''' = \ftv(A) - \Delta - \ftv(\theta) &\subseteq \Theta_1 \\
\end{align*}
Therefore, $\ftv(A) - \Delta - \ftv(\theta)$ (i.e., $\Delta'''$) is equal to
$\ftv(A) - \Delta,(\Theta_1 - \Delta''')$.
This results in $(\Delta'',\Delta''') = \mgend[(\Delta, \Theta_1 - \Delta''')]{A}{M}$%
~\labelNum{num:letG:gen-rewritten}.
Together with \refNum{num:letG:theta'-two-wf} and $\Delta''' \fresh \Theta_2$ we can then
apply \cref{lem:subst-gen-inst}(1) to \refNum{num:letG:gen-rewritten},
yielding satisfaction of \refNum{num:let:gen-eq}.

By applying \cref{lem:subst-gamma-and-A} to
\refNum{num:let:type-M-Delta'''-isolated,num:letG:theta'-two-wf},
we obtain~\refNum{num:let:type-M-substed}.

\item[Sub-Case $M \not\in \meta{GVal}$:]
By definition of $\meta{gen}$, we have $\Delta'' = \cdot$.
Let $\Delta'''$ have the shape $(a_1, \dotsc, a_n)$.
We choose
$\freshDeltaTwo \coloneqq \cdot$ and
$\freshDeltaThree \coloneqq (b_1, \dotsc, b_n)$ for $n$ pairwise different,
 fresh type variables $b_i$.

We show that the induction hypothesis is applicable to
\refNum{num:let:second-infer-use}.
To this end, we show
$\wfctx
   {\Delta, \Theta'_1 - \Delta''}
   { \theta_1\Gamma, x : \forall \Delta''. A }$.
We have $\wfctx{\Delta,\Theta_1}{\theta_1\Gamma}$ and
$\Delta, \Theta_1 \vdash A$ by \refNum{num:let:type-M}.
It remains to show that for all $a \in \ftv(A) $ we have
$(a : \mono) \in \Delta, \Theta'_1$.
If $a \in \Delta'''$, then by definition of $\Theta'_1$ we have
$(a : \mono) \in \Theta'_1$.
Otherwise, if $a \in \Theta_1 - \Delta'''$, we use the same reasoning as in
the case $M \in \dec{GVal}$.

By induction, we then have
$\typ{\Delta, \Theta_2;\theta_2(\theta_1(\Gamma), x : A)}{N : B}$%
~\labelNum{num:letNG:typ-N} and
$\Delta \vdash \theta_2 : \Theta'_1 \Rightarrow \Theta_2$.
By \cref{lem:stability-subst-under-promotion} the latter implies
$\Delta \vdash \theta_2 : \Theta_1 \Rightarrow \Theta_2$%
~\labelNum{num:letNG:theta-two}.

We define $\theta'_2$ such that
\[
\theta'_2(c) =
\begin{cases}
b_i & \text{if } c = a_i \in \Delta''' \\
\theta_2(c) &\text{if } c \in \Theta_1 - \Delta'''
\end{cases}
\]

By \refNum{num:letNG:theta-two} and the definition of $\Delta'''$ we then have
$\Delta \vdash \theta'_2 : \Theta_1 \Rightarrow \Theta_2, \freshDeltaThree$%
~\labelNum{num:letNG:theta'-two-wf}.
Observe that we have
$\theta'_2(a) = \theta_2(a)$ for all $a \in \ftv(\theta_1) - \Delta$
 and therefore
\refNum{num:let:composition} as well as \refNum{num:let:composition-wf} are
satisfied.

Furthermore, we define $\theta''_2$ such that
$\theta''_2(a) = \theta_2(a)$ for all $a \in \Theta_1 - \Delta'''$,
which implies
$\Delta \vdash \theta''_2 : \Theta_1 - \Delta''' \Rightarrow \Theta_2$%
  ~\labelNum{num:letNG:theta''-two-wf}
and $\theta''_2 \circ \theta_1 = \theta_2 \circ \theta_1$%
  ~\labelNum{num:letNG:theta''-two-composition}.

We define the instantiation $\rsubst$ such that $\rsubst(b_i) = \theta_2(a_i)$
for all $a_i \in \Delta'''$. By definition of $\Theta'_1$ and
\refNum{num:letNG:theta-two} we then have
$\Delta, \Theta_2 \vdash \rsubst(b_i) : \mono$ for all
$b_i \in \freshDeltaThree$. This implies
$\Delta \vdash \rsubst : \freshDeltaThree \Rightarrow_\mono \Theta_2$.

We define $A' \coloneqq \rsubst(\theta'_2(A))$, which is identical to
$\theta_2(A)$. Together with
$\theta_2(\theta_1\Gamma) = \theta'_2(\theta_1\Gamma)$, this choice
satisfies \refNum{num:let:updown} and makes \refNum{num:letNG:typ-N} equivalent
 to \refNum{num:let:type-N}.

We have $\ftv(\theta_2 A) \subseteq \Delta, \Theta_2$ and
$\Delta''' \subseteq \ftv(A)$. By $\theta'_2(\Delta''') = \freshDeltaThree$
we have $\freshDeltaThree \subseteq \ftv(\theta'_2 A)$. Together with
$\ftv( \theta'_2(a) ) = \ftv( \theta_2(a) ) \fresh \freshDeltaThree$
holding for all $a \in \ftv(A) - \Delta'''$, we then
have $\ftv( \theta'_2 A  ) - \Delta, \Theta_2 = \freshDeltaThree$.
Therefore, we have
$\mgend[(\Delta, \Theta_2)]{\theta'_2 A}{M} = (\cdot, \freshDeltaThree)$,
 satisfying \refNum{num:let:gen-eq}.

Let $\rsubst_F$ be defined such that $\rsubst(a_i) = b_i$ for all
$1 \le i \le n$, which implies
$\Delta, \Theta \vdash \rsubst_F : \Delta''' \Rightarrow_\mono \freshDeltaThree$
and $ \theta''_2 \rsubst_F (A) = \theta'_2(A)$~\labelNum{num:letNG:rsubstF-vs-theta'}
(by weakening $\theta''_2$ such that
$\Delta, \freshDeltaThree \vdash
  \theta_2'' : \Theta_1 - \Delta''' \Rightarrow \Theta_2$ )
Using \cref{lem:principal:rename-gen-vars}, we then get
$\meta{principal}( (\Delta, \Theta_1 - \Delta'''),  \theta_1 \Gamma, \allowbreak M,
   \freshDeltaThree, \allowbreak \rsubst_F A )$,

We apply \cref{lem:stability-principality-substitution} to this freshened
principality statement and \refNum{num:letNG:theta''-two-wf},
which gives us
$\meta{principal}( (\Delta, \Theta_2), \allowbreak \theta''_2 \theta_1 \Gamma,
    \allowbreak M, \allowbreak \Delta''', \allowbreak \theta''_2 \rsubst_F A )$.

Using \refNum{num:letNG:rsubstF-vs-theta'}, we restate this as
$\meta{principal}( (\Delta, \Theta_2), \theta''_2 \theta_1 \Gamma, \allowbreak M,
   \freshDeltaThree, \allowbreak \theta'_2 A )$,
which by
\refNum{%
        num:letNG:theta''-two-composition,%
        num:let:composition}
is equivalent to \refNum{num:let:A-is-principal-substed}.

By applying \cref{lem:subst-gamma-and-A} to \refNum{num:let:type-M,num:letNG:theta'-two-wf}, we obtain~\refNum{num:let:type-M-substed}.

\end{description}

We have shown that
\refNum{%
num:let:composition,%
num:let:composition-wf,%
num:let:gen-eq,%
num:let:type-M-substed,%
num:let:updown,%
num:let:type-N,%
num:let:A-is-principal-substed,%
}
hold in each case.
We can now derive the following:
\[
\inferrule*[Right=\freezemlLab{Let}]
    {
      (\freshDeltaTwo,\freshDeltaThree) = \mgend[(\Delta,\Theta_2)]{\theta'_2 A}{M}\;
        (\text{by }
          \refNum{num:let:gen-eq})\\\\
      \typ
        {\Delta, \Theta_2, \freshDeltaThree; \theta'_2\theta_1\Gamma}
        {M : \theta'_2 A} \;
        (\text{by } \refNum{num:let:type-M-substed})
\\\\
        ((\Delta,\Theta_2),\freshDeltaThree,M,\theta'_2 A) \Updownarrow A' \;(\text{by } \refNum{num:let:updown})
\\\\
       \typ
         {\Delta, \Theta_2; (\theta'_2\theta_1\Gamma,
           x : A')}
         {N : B} \;
         (\text{by }
           \refNum{num:let:type-N})
\\\\\
       \dec{principal}((\Delta,\Theta_2),\theta'_2\theta_1\Gamma,M,\freshDeltaThree,\theta'_2 A) \; (\text{by } \refNum{num:let:A-is-principal-substed})
    }
    {\typ
      {\Delta, \Theta_2; \theta'_2\theta_1\Gamma}
      {\Let \; x = M\; \In \; N : B}
    }
\]

By
\refNum{num:let:composition,num:let:composition-wf} we have therefore shown
\ref{proofobl:typed} and \ref{proofobl:theta-wf}.

\item[Case $\Let \; (x : A) = M \; \In \; N$]:
Let $A = \forall \Delta''.H$ for appropriate $\Delta''$ and $H$.
By alpha-equivalence, we assume $\Delta'' \fresh \Theta$.
According to \refNum{num:sound:term-well-formed}, we have
$\Delta \vdash A$~\labelNum{let-asc:A-wf}.

We distinguish between whether of not $M$ is a guarded value.
We show that the following conditions hold
for the choice of $A'$ and $\Delta'$ imposed by
$(\Delta', A') = \msplit(A, M)~\labelNum{let-asc:split}$ in each case.
\begin{gather}
\Delta, \Delta' \vdash A' \mathlabelNum{let-asc:A'-wf} \\
\ftv(A) \fresh \Delta' \mathlabelNum{let-asc:Delta'-vs-ftv-A} \\
\Delta' \fresh \Theta \mathlabelNum{let-asc:Delta'-vs-Theta}
\end{gather}

\begin{description}

\item[Sub-Case $M \in \dec{GVal}$:]
We have $\msplit(A, M) = (\Delta'',H)$ (i.e, $\Delta'= \Delta''$ and $A' = H$).

Together with \refNum{let-asc:A-wf}
we have $\Delta, \Delta' \vdash H$ (satisfying \refNum{let-asc:A'-wf}).
Assumption $\Delta'' \fresh \Theta$ satisfies \refNum{let-asc:Delta'-vs-Theta}.
By $A = \forall \Delta'.A'$ we further have $\ftv(A) \fresh \Delta'$.

\item[Sub-Case $M \not\in \dec{GVal}$:]
We have $\msplit(A, M) = (\cdot,A)$ (i.e, $\Delta'= \cdot$ and $A' = A$).
This immediately satisfies
\refNum{let-asc:Delta'-vs-Theta,let-asc:Delta'-vs-ftv-A}.
It further makes \refNum{let-asc:A-wf} equivalent to \refNum{let-asc:A'-wf}.

\end{description}

Moreover, by \refNum{num:sound:term-well-formed}, we  have
$\termwf{\Delta,\Delta'}{M}$~\labelNum{let-asc:termwf} using inversion.

We show that
  $\typ{\Delta, \Theta_1, \Delta'; \theta_1\Gamma}{M : A_1}$%
    ~\labelNum{num:sound:let-asc:type-M}
holds.
By~\refNum{num:sound:gamma-well-formed} and since $\Delta' \fresh \Theta$,
we have $\wfctx{\Delta,\Delta', \Theta}{\Gamma}$.
Together with \refNum{let-asc:termwf}, we then have $\typ{\Delta,
\Delta',\Theta_1;\theta_1' \Gamma}{M : A_1}$ and
$\Delta, \Delta'  \vdash \theta_1 : \Theta \Rightarrow \Theta_1$~\labelNum{let-asc:theta-one-wf}
by
induction.
Further, this indicates $\Delta' \fresh \Theta_1$.

By \refNum{num:sound:let-asc:type-M} we also have $\Delta, \Delta',
\Theta_1 \vdash A_1$.
Recall $\Delta, \Delta' \vdash A'$
 and therefore
$\Delta, \Delta', \Theta_1 \vdash A'$.
Thus, by \cref{thm:unification-sound}, we have $\theta'_2(A_1) =
\theta'_2(A')$~\labelNum{num:let-asc:unify-eq} and $\Delta, \Delta' \vdash
\theta'_2 : \Theta_1 \Rightarrow
\Theta_2$~\labelNum{num:let-asc:theta-two-prime-wf}.

According to the assertion, we have $\ftv(\theta_2'\circ \theta_1) \mathop{\#} \Delta'$%
  ~\labelNum{num:let-asc:theta-two-one-disjoint-delta-prime}.
By definition of $\Infer$, we have $\theta_2 = \theta'_2 \circ
\theta_1$~\labelNum{num:let-asc:def-theta-two}, yielding
$\Delta,\Delta' : \theta_2 : \Theta \Rightarrow \Theta_2$,
which further implies
$\Delta' \fresh \Theta_2 \labelNum{let-asc:Delta'-vs-Theta-two}$.
By \refNum{num:let-asc:theta-two-one-disjoint-delta-prime}, we can strengthen
$\theta_2$ s.t.
$\Delta \vdash \theta_2 : \Theta \Rightarrow
\Theta_2$~\labelNum{num:let-asc:theta-two-one-composed-wf}.

By \refNum{num:sound:let-asc:type-M,num:let-asc:theta-two-prime-wf,%
 let-asc:termwf}, and
\cref{lem:subst-gamma-and-A}, we have $\typ{\Delta, \Delta',
\Theta_2;\theta_2'\theta_1\Gamma}{M : \theta'_2 A_1}$.
By \refNum{num:let-asc:def-theta-two,%
  let-asc:theta-one-wf,%
  num:let-asc:theta-two-prime-wf}, this is equivalent to $\typ{\Delta, \Delta',
\Theta_2;\theta_2\Gamma}{M : \theta'_2
A_1}$~\labelNum{num:let-asc:type-M-substed}.

By \refNum{let-asc:A'-wf} and \refNum{num:let-asc:theta-two-prime-wf}
we have $\theta'_2(A') = A'$.
Together with \refNum{num:let-asc:unify-eq}, this makes
\refNum{num:let-asc:type-M-substed} equivalent to
$\typ{\Delta, \Delta',\Theta_2;\theta_2\Gamma}{M : A'}$%
~\labelNum{let-asc:typ-M-substed}.

By definition of $\meta{infer}$, we have $(\Theta_3, \theta_3, B) =
\Infer(\Delta, \Theta_2, (\theta_2\Gamma, x : A), N)$.
Due to \refNum{num:sound:term-well-formed}, we have $\termwf{\Delta}{N}$.
By \refNum{let-asc:A-wf} and $\Theta_2 \fresh \Delta$, we have $\wfctx{\Delta, \Theta_2}{x : A}$.
Together with \refNum{num:sound:gamma-well-formed,num:let-asc:theta-two-one-composed-wf}
we then have $\wfctx{\Delta, \Theta_2}{(\theta_2\Gamma, x : A)}$.
  Therefore,
by induction, we have
$\typ{\Delta, \Theta_3;\theta_3(\theta_2\Gamma, x : A)}{N :
B}$~\labelNum{num:let-asc:type-N} and $\Delta \vdash \theta_3: \Theta_2
\Rightarrow \Theta_3$~\labelNum{num:let-asc:theta-three}.

According to \refNum{let-asc:Delta'-vs-ftv-A,let-asc:Delta'-vs-Theta-two},
none of the variables in $\Delta'$ are freely part of the input to $\meta{infer}$,
yielding $\Delta' \fresh \Theta_3$.
Together with \refNum{let-asc:Delta'-vs-Theta-two}, we can then weaken
\refNum{num:let-asc:theta-three} to
$\Delta, \Delta' \vdash \theta_3: \Theta_2 \Rightarrow \Theta_3$.
By the latter,
\refNum{let-asc:typ-M-substed},
\refNum{let-asc:termwf}, and
\cref{lem:subst-gamma-and-A}, we have $\typ{\Delta, \Theta_3,
\Delta';\theta_3\theta_2\Gamma}{M : \theta_3 A'}$~\labelNum{num:let-asc:type-M-theta-three}.

Using a similar line of reasoning as before,
we have $\theta_3(A') = A'$~\labelNum{num:let-asc:theta-three-on-A-two} and $\theta_3(A) = A$~\labelNum{let-asc-theta-three-A}.



By \refNum{num:let-asc:theta-two-one-composed-wf},
\refNum{num:let-asc:theta-three}, and composition, we have $\Delta \vdash
\theta_3 \circ \theta_2 : \Theta \Rightarrow \Theta_3$.~\ref{proofobl:theta-wf}.

Together with \refNum{num:sound:gamma-well-formed}, we
obtain $\Delta, \Theta_3 \vdash \theta_3 \theta_2  \Gamma$ and can
derive the following:

\[
\inferrule*[Right=\freezemlLab{Let-Ascribe}]
    {(\Delta', A') = \msplit(A, M) \;
      (\text{by \refNum{let-asc:split}} ) \\\\
      A = \forall \Delta'.A' (\text{by \refNum{let-asc:split}} ) \\\\
     \typ{\Delta,\Theta_3, \Delta'; \theta_3\theta_2\Gamma}{M : A'} \;
       (\text{by
           \refNum{num:let-asc:type-M-theta-three,%
           num:let-asc:theta-three-on-A-two}})\\\\
     \typ
       {\Delta, \Theta_3; \theta_3\theta_2\Gamma, x : A}
       {N : B} \;
       (\text{by  \refNum{num:let-asc:type-N,let-asc-theta-three-A}})
    }
    {\typ
      {\Delta, \Theta_3; \theta_3\theta_2\Gamma}
      {\Let \; (x : A) = M\; \In \; N : B \; \ref{proofobl:typed} }
    }
\]

\end{description}

\end{proof}

\subsection{Completeness of type inference}
\label{subsec:correctness-inference:completeness-mg}

\setcounter{thmbackup}{\thethm}
\setcounterref{thm}{thm:inference-completeness-mg}
\addtocounter{thm}{-1}

\begin{thm}[Type inference is complete and principal]
Let $\termwf{\Delta}{M}$ and $\wfctx{\Delta, \Theta}{\Gamma}$.
If $\Delta \vdash \theta_0 : \Theta \Rightarrow \Theta'$ and $\typ{\Delta,
  \Theta'; \theta_0(\Gamma)}{M : A_0}$, then $\Infer(\Delta, \Theta, \Gamma, M)
\allowbreak = (\Theta'', \theta', A_R)$ where there exists $\theta''$
satisfying $\Delta \vdash \theta'' : \Theta'' \Rightarrow \Theta'$ such that
$\theta_0 = \theta'' \circ \theta'$
and $\theta''( A_R) = A_0$.
\end{thm}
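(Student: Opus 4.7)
The plan is to proceed by induction on the structure of $M$, following the standard recipe for completeness of algorithm W but threading the explicit kind environments $\Theta$, $\Theta'$, $\Theta''$ carefully throughout. In each case I would first invert the assumed derivation $\typ{\Delta,\Theta';\theta_0(\Gamma)}{M:A_0}$ using the inversion principle established in \auxref{sec:well-founded-freezeml}, then step through the corresponding clause of $\Infer$ to show it succeeds and to construct the witnessing $\theta''$.

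The easy cases are as follows. For $\freeze{x}$, the algorithm returns $(\Theta, \idsubst, \Gamma(x))$ and we simply take $\theta'' = \theta_0$. For $x$, the algorithm freshens the top-level quantifiers of $\Gamma(x) = \forall \many{a}.H$ to $\many{b}$; inversion on the typing derivation produces an instantiation $\rsubst : \many{a} \Rightarrow_\poly \cdot$ such that $\rsubst(H) = A_0$, and I extend $\theta_0$ on $\many{b}$ to match $\rsubst(\many{a})$ to obtain the required $\theta''$. For $\lambda x.M$, inversion yields $A_0 = S_0 \to B_0$ with $S_0$ a monotype, so I extend $\theta_0$ by $a \mapsto S_0$ (preserving monomorphism of $a$) and invoke the induction hypothesis on $M$ in the extended environment. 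The annotated lambda case is similar but simpler since the argument type is fixed. The application case uses the induction hypothesis on $M$ and $N$ in sequence to obtain intermediate substitutions, then appeals to \cref{thm:thmunifycomplete} (completeness of unification) against the witness substitution derived from inversion to build the final $\theta''$.

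The hardest case will be \freezemlLab{Let} (both annotated and unannotated), specifically because of the interaction between generalisation, the value restriction, and the principality constraint. The skeleton is: apply induction to $M$ to obtain an inferred type $A_1$ with witness $\theta_1''$ matching $A'$ (the type appearing in the inverted derivation); establish that the algorithm's generalisable variables $\Delta'''$ biject with the declarative generalisable variables $\Delta''$ using \cref{lem:bijection-on-principal-types-vars}, which is precisely the lemma that exploits the $\meta{principal}$ assumption from inversion; handle the value restriction by aligning $\meta{gen}$ and $\meta{demote}$ on both sides using \cref{lem:subst-gen-inst} and \cref{lem:stability-subst-under-promotion}; then apply induction to $N$ in the suitably extended environment and compose. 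For the annotated variant, the $\msplit$ operator determines the scope of type variables, and I would use \cref{thm:thmunifycomplete} to reconcile the inferred $A_1$ with the annotated $A'$, checking that the scope-escape assertion $\ftv(\fsubst_2) \fresh \Delta'$ holds because inversion forces any witness to satisfy this.

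The main obstacle, as in \citet{Leijen08}, is to argue that the generalisation step in the algorithm (over $\Delta'''$) is no less general than the generalisation performed by any derivation of $M$ given the value restriction and the $\meta{principal}$ precondition. This requires showing that the principal type inferred for $M$ is in bijective correspondence (up to ordering) with the generalisable variables of any $A'$ used in a derivation, which is exactly what \cref{lem:bijection-on-principal-types-vars} provides; however, constructing the extension of $\theta''$ from the inner recursive call to cover the fresh variables introduced by generalisation while maintaining $\theta_0 = \theta'' \circ \theta'$ requires careful case analysis on whether $M$ is a guarded value, mirroring the two sub-cases in the soundness proof. Throughout, I would rely on the dependency discipline captured in \cref{fig:proof-dep-graph}, invoking \cref{thm:inference-sound} and \cref{lem:inferred-types-are-principal} only on strict subterms of $M$ so that the mutual induction remains well-founded.
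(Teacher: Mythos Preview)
Your proposal is correct and follows essentially the same route as the paper's proof: induction on $M$, inversion on the typing derivation, the straightforward cases for variables and lambdas, completeness of unification for application and annotated let, and the crucial appeal to \cref{lem:bijection-on-principal-types-vars} together with a case split on $M \in \dec{GVal}$ for unannotated let. One small inaccuracy: \cref{lem:subst-gen-inst} is used in the soundness proof rather than here; in the completeness proof the alignment you describe is achieved directly by constructing the witness $\theta''_N$ by hand in each subcase (using \cref{lem:stability-subst-under-promotion} only in the non-value branch), but this does not affect the validity of your plan.
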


\setcounter{thm}{\thethmbackup}

\begin{proof}
\resetNum
\proofContext{compl-infer}

By induction on the structure of $M$.
In each case, we assume
  $\termwf{\Delta}{M}$~\labelNum{term-wf}, and
  $\wfctx{\Delta, \Theta}{\Gamma}$~\labelNum{gamma-wf}, and
  $\Delta \vdash \theta_0 : \Theta \Rightarrow \Theta'$~\labelNum{theta-well-formed}, and
  $\typ{\Delta,\Theta';\theta_0\Gamma}{M : A_0}$~\labelNum{typed},
  which implies
  $\Delta,\Theta' \vdash \theta_0\Gamma$~\labelNum{theta-gamma-well-formed}, and
  $\Delta,\Theta' \vdash A_0$~\labelNum{res-type-well-formed}.
For each case, we show:
\begin{enumerate}[label=\Roman*., ref=\Roman*]
\item \label{proofobl:compl:infer-def}
  $\Infer(\Delta,\Theta,\Gamma,M) = (\Theta'', \theta', A_R)$
\item \label{proofobl:compl:theta-two-primes-wf}
  $\Delta \vdash \theta'' : \Theta'' \Rightarrow \Theta' $
\item \label{proofobl:compl:theta-composition}
   $\theta_0 = \theta'' \circ \theta'$
\item \label{proofobl:compl:subsumption}
   $\theta''( A_R) = A_0$
\end{enumerate}
%
We reference the proof obligations above to indicate when we have shown them.

\begin{description}
\item[Case $\freeze{x}$:]

By \refNum{typed} and \freezemlLab{Freeze}, we have $(x : A_0) \in
\theta_0\Gamma$.
$\Infer$ succeeds, and we have $\Theta'' =
\Theta$, $\theta' = \idsubst_{\Delta,\Theta}$, and $A_R = \Gamma(x)$.
The latter implies $A_0 = \theta_0(A_R)$.

We have $\Delta \vdash \theta' : \Theta \Rightarrow \Theta$.
Let $\theta'' := \theta_0$.
By \refNum{theta-well-formed} we then have $\Delta \vdash \theta'' :
\Theta \Rightarrow \Theta'$~(\ref{proofobl:compl:theta-two-primes-wf}).
We observe $\theta_0 = \theta'' = \theta'' \circ \idsubst_{\Delta,\Theta} = \theta''
\circ \theta'$~(\ref{proofobl:compl:theta-composition}).

Finally, this yields $\theta''(A_R) = \theta_0(A_R) = A_0$~%
(\ref{proofobl:compl:subsumption}).

\item[Case $x$:]
The derivation for \refNum{typed}
must be of the following form:

\[
  \inferrule*[Lab=\freezemlLab{Var}]
    {{x : \forall \Delta'.H' \in \theta_0 \Gamma} \\
      \Delta, \Theta' \vdash \rsubst : \Delta' \Rightarrow_\poly \cdot
    }
    {\typ{\Delta, \Theta'; \theta_0 \Gamma}{x : \rsubst(H')}}
\]

Therefore, there exists $x : \forall \Delta''.H \in \Gamma$
such that $\forall \Delta'.H' = \theta_0(\forall \Delta''.H)$.
By alpha-equivalence, we assume that $\Delta''$ is fresh,
yielding
$\theta_0(\forall \Delta''.H) = \forall \Delta''. \theta_0 H$.
By \refNum{gamma-wf}, all free type variables in $H$ are monomorphic,
meaning that $\theta_0 H$ cannot have toplevel quantifiers.
Thus, the quantifier structure is preserved by $\theta_0$;
in particular $\Delta' = \Delta''$, $H' = \theta_0 (H)$.

Further, due to our freshness assumption about $\Delta'' = \Delta'$,
we have
$\Delta' \fresh \Delta,\Theta $~\labelNum{var:Delta-prime-Theta-disj}
and $\Delta' \fresh \Delta, \Theta'$.


In total, we have $A_0 =
\rsubst\theta_0 H$~\labelNum{var:def-overline-A} and
$\Gamma(x) = \forall \Delta'.H$~\labelNum{var:x-in-Gamma} and
$\theta_0\Gamma(x) = \theta_0(\forall \Delta' . H ) = \forall
\Delta'.\theta_0 H$~\labelNum{var:Gamma}.

Let $\Delta' = \many{a} = (a_1, \dotsc, a_n)$ with corresponding fresh $\many{b}
= (b_1, \dotsc, b_n)$ for some $n \ge 0$.
Then $\Infer$ succeeds with
  $\Theta'' = (\Theta, \many{b : \poly})$, and
  $\theta' =\idsubst_{\Delta, \Theta}$%
   ~\labelNum{var:def-theta-one-prime},
and
  $A_R =  H[\many{b}/\many{a}]$~\labelNum{var:A-prime-def}.
Due to $\Theta \subseteq \Theta''$ and the freshness of $\many{b}$, we have
$\Delta \vdash \theta' : \Theta \Rightarrow
\Theta''$~\labelNum{var:theta-one-prime-typ}.

We define $\theta''$ such that
\begin{equation}
\theta''(c) =
  \begin{cases}
    \theta_0(c) &\text{if } c \in \Theta \\
    \delta(a_i) &\text{if } c = b_i \text{ for some }b_i \in \many{b}
  \end{cases} \locallabel{var:def-theta-two-primes}
\end{equation}

By~%
  \refNum{theta-well-formed} and~%
  \refNum{var:def-theta-two-primes},
for all $(c : K) \in \Theta$ we have
$\Delta, \Theta' \vdash \theta_0(c) : K$~%
  \labelNum{var:theta-two-primes-on-Theta}.
By $\Delta, \Theta' \vdash \rsubst : \Delta' \Rightarrow_\poly \cdot$, we have
$\Delta, \Theta' \vdash \rsubst(a) : \poly$ for all
$a \in \Delta'$ and thus
$\Delta, \Theta' \vdash \theta''(b) : \poly$ for all $b \in \many{b}$.
Together, we then have $\Delta \vdash \theta'' : \Theta'' \Rightarrow
\Theta'$~(\ref{proofobl:compl:theta-two-primes-wf}).

By \refNum{var:def-theta-one-prime} and
\refNum{var:def-theta-two-primes}, we have $\theta''\theta'(c) =
\theta''(c) = \theta_0(c)$ for all $c \in
\Theta$~(\ref{proofobl:compl:theta-composition}).

It remains to show that $\theta''(H[\many{b}/\many{a}]) = A_0 = \rsubst(\theta_0(H))$.

By \refNum{var:x-in-Gamma,gamma-wf}, we have $\Delta, \Theta \vdash \forall \Delta'.H$
and further $\Delta, \Theta, \Delta' \vdash H$.

We show that for all $c \in \ftv(H) \subseteq \Delta, \Theta, \Delta'$, we have
$\theta''(c[\many{b}/\many{a}]) = \rsubst\theta_0 (c)$~\labelNum{var:ftv-eq}. We
distinguish three cases:

\begin{enumerate}
\item
Let $c = a_i \in \Delta'$.
We then have
\[
\begin{array}{rrcll}
  &a_i &= &\theta_0(a_i)
    \reason{by \refNum{theta-well-formed}}\\
\text{implies} \quad
  &\theta''(a_i[\many{b}/\many{a}]) &= &\rsubst(\theta_0 (a_i))
      &\reason{by \refNum{var:def-theta-two-primes}:
          $\theta''(b_i) = \rsubst(a_i)$} \\
\end{array}
\]

\item
Let $c \in \Theta$.
We then have
\[
\begin{array}{rrcll}
  &\theta_0(c) &= &\theta_0(c) \\
\text{implies} \quad
  &\theta_0(c[\many{b}/\many{a}]) &= &\theta_0(c)
     &\reason{by \refNum{var:Delta-prime-Theta-disj}:
       $c[\many{b}/\many{a}] = c$} \\
\text{implies} \quad
  &\theta''(c[\many{b}/\many{a}]) &= &\rsubst\theta_0 (c)
     &\reason{\text{by} \refNum{var:def-theta-two-primes} :
       $\theta''(c) = \rsubst(c)$ for all $c \in \Theta$}
\end{array}
\]

\item
Let $c \in \Delta$. Then all involved substitutions/instantiations return $c$ unchanged.
\end{enumerate}

By \refNum{var:A-prime-def} and \refNum{var:def-overline-A},
\refNum{var:ftv-eq} then yields $\theta''(A_R) = A_0$
(\ref{proofobl:compl:subsumption}).

\item[Case $\lambda x.M$:]
By \refNum{typed} and \freezemlLab{Lam}, we have $A_0 = S' \to B'$
for some $S', B'$ as well as $\typ{\Delta, \Theta'; \theta_0\Gamma, (x :
S')}{M : B'}$~\labelNum{lam:inversion}.
The latter implies $\Delta, \Theta' \vdash S' : \mono$~\labelNum{lam:S-wf}.

Let $a$ be the fresh variable as in the definition of $\Infer$; in particular $a
\mathop{\#} \Theta$~\labelNum{lam:a-and-theta-disjoint}.
Let $\theta_a$ be defined such that
$\theta_a(b) = \theta_0(b)$ for all $b \in
\Theta$~\labelNum{lam:theta-a-def} and $\theta_a(a) = S'$~\labelNum{lam:def-tilde-theta-a}.
By \refNum{theta-well-formed} and \refNum{lam:S-wf}, we have
$\Delta \vdash \theta_a : (\Theta, a : \mono) \Rightarrow \Theta'$%
  \labelNum{lam:theta-a-wf}.
This definition makes \refNum{lam:inversion} equivalent to
$\typ{\Delta, \Theta';\theta_a(\Gamma, x : a)}{M : B'}$.

By induction, we therefore have that
$\Infer(\Delta, (\Theta, a : \mono), (\Gamma, x : a), M)$
succeeds~\labelNum{lam:rec-infer-success}, returning $(\Theta_1,
\theta'_1, B )$ and there exists $\theta''_1$ s.t.\
\begin{gather}
\Delta \vdash \theta''_1 : \Theta_1 \Rightarrow
\Theta' \mathlabelNum{lam:rec-infer-substitution-wf}
  \\
\theta_a = \theta''_1 \circ
\theta'_1 \mathlabelNum{lam:rec-infer-substitution-composition}
 \\
\theta''_1(B) =
B' \mathlabelNum{lam:rec-infer-subsumption}
\end{gather}

By \cref{thm:inference-sound}, we have
$\Delta \vdash
\theta'_1 : (\Theta, a : \mono) \Rightarrow \Theta_1$~\labelNum{theta'-one-wf}%
\footnote{
Observe that we cannot deduce this from
\refNum{lam:rec-infer-substitution-composition,%
lam:rec-infer-substitution-wf}.
A counter-example would be the following:
$\Theta = (a : \mono)$,
$\Theta''= (b : \poly)$,
$\theta' = (c : \mono)$,
$\theta' = [a \mapsto b]$,
$\theta'' = [b \mapsto c]$.
We have $\vdash (\theta'' \circ \theta') : \Theta \Rightarrow \Theta'$ and
$\vdash \theta'' : \Theta'' \Rightarrow \Theta'$, but not
$\vdash \theta' : \Theta' \Rightarrow \Theta''$.
}.
By preservation of kinds under substitution, we have
$\Delta, \Theta_1 \vdash \theta'_1(a) : \mono$.
This implies that $\theta'_1(a)$ is a syntactic monotype.
Thus, $\theta'_1 = \theta[a \mapsto S]$~\labelNum{lam:theta-def} is
well-defined, yielding a substitution $\Delta \vdash \theta : \Theta \Rightarrow
\Theta_1$.
Hence, all steps of $\Infer$
succeed.

According to the return values of $\Infer$, we have $A_R = S \to B$, $\Theta'' =
\Theta_1$, and $\theta' = \theta$~\labelNum{lam:def-theta-prime}.

Let $\theta''$ be defined as
$\theta''_1$~\labelNum{lam:def-theta-two-primes}.
By \refNum{lam:rec-infer-substitution-wf}, this choice immediately
satisfies~(\ref{proofobl:compl:theta-two-primes-wf}).

We show (\ref{proofobl:compl:theta-composition}) as follows:
Let $b \in \Theta$. We then have
\[
\begin{array}{cll}
   &\theta_0(b) \\
=  &\theta_a(b)
      &\reason{by \refNum{lam:theta-a-def,%
                  lam:a-and-theta-disjoint}}\\
=  &\theta''_1 \theta'_1 (b)
      &\reason{by \refNum{
        lam:rec-infer-substitution-composition,%
        lam:theta-a-wf,lam:rec-infer-substitution-wf,%
        theta'-one-wf%
        }}\\
=  &\theta''_1 \theta (b)
      &\reason{by \refNum{lam:a-and-theta-disjoint},
                  \refNum{lam:theta-def}} \\
=  &\theta'' \theta'(b)
      &\reason{by \refNum{lam:def-theta-prime},
                  \refNum{lam:def-theta-two-primes}}
\end{array}
\]

By~\refNum{lam:theta-def}, we have $\theta'_1(a) = S$.
By~\refNum{lam:def-tilde-theta-a}, we have $\theta_a(a) = S'$.
By~\refNum{lam:rec-infer-substitution-composition} we  therefore have
$\theta''_1(S) = \theta_a(a) = S'$.
Together with~\refNum{lam:rec-infer-subsumption}, $A_0 = S' \to B'$, and
$A_R = S \to B$ we have shown~(~\ref{proofobl:compl:subsumption}).

\item[Case $\lambda (x: A). M$:]

This case is analogous to the previous one;
the only difference is as follows:

By \refNum{typed} and \freezemlLab{Lam-Ascribe}, we have $A_0 = A \to B'$
for some $ B'$ as well as $\typ{\Delta, \Theta'; \theta_0\Gamma, (x :
A)}{M : B'}$.
However, by \refNum{term-wf}, we have $\Delta \vdash A$
and therefore $\theta_0(A) = A$.

Hence, we can apply the induction hypothesis directly to the typing
judgement above, rather than having to construct $\theta_a$.

\item[Case $M \: N$:]
By \refNum{typed} and \freezemlLab{App}, we have
$\typ{\Delta, \Theta'; \theta_0\Gamma }{M : A_N \to A_0}$ and
$\typ{\Delta, \Theta'; \theta_0\Gamma }{N :
A_N}$~\labelNum{app:type-N} for some type $A_N$.
The former implies $\Delta, \Theta' \vdash A_0$~\labelNum{app:A-zero-wf}

By induction, $\Infer(\Delta, \Theta, \Gamma, M)$ succeeds, returning
$(\Theta_1, \theta_1, A')$ and there exists $\theta''_1$ such that the
following conditions hold:
\begin{align}
&\Delta \vdash \theta''_1 : \Theta_1 \Rightarrow
  \Theta' \mathlabelNum{app:rec-call-one-theta-two-primes-wf}\\
&\theta_0 = \theta''_1 \circ
  \theta_1 \mathlabelNum{app:rec-call-one-composition} \\
&\Delta, \Theta' \vdash \theta''_1(A') = A_N \to
A_0 \mathlabelNum{app:rec-call-one-subsumption}
\end{align}

By \refNum{app:rec-call-one-subsumption}, $A'$ must not have
toplevel quantifiers.
Let $B_N$ and $B_M$ such that $A' = B_N \to
B_M$~\labelNum{app:def-B-M-and-B-N}.
This yields $\theta''_1(B_N) =
A_N$~\labelNum{app:B-N-vs-A-N} and $\theta''_1(B_M) =
A_0$~\labelNum{app:B-M-vs-overline-A}.

By \cref{thm:inference-sound}, we have $\Delta \vdash \theta_1
: \Theta \Rightarrow \Theta_1$~\labelNum{app:theta-one-wf}
and $\typ{\Delta, \Theta_1;\theta_1(\Gamma)}{M : A'}$,
which implies $\Delta, \Theta_1 \vdash A'$.
By choosing $b$ as fresh, we have $b \mathop{\#} \Delta$, and $b \mathop{\#}
\Theta_1$, and $b \mathop{\#} \Theta_2$ and $b \mathop{\#}
\Theta'$~\labelNum{app:b-disjointness}

By \refNum{app:rec-call-one-composition}, we can rewrite
\refNum{app:type-N} as $\typ{\Delta, \Theta'; \theta''_1\theta_1
\Gamma }{N : A_N}$.
By induction (using \refNum{app:rec-call-one-theta-two-primes-wf}),
we then have that $\Infer(\Delta, \Theta_1, \theta_1\Gamma, N)$
succeeds, returning $(\Theta_2, \theta_2, A)$ and
there exists $\theta''_2$ such that
\begin{align}
&
\Delta \vdash \theta''_2 : \Theta_2 \Rightarrow
\Theta' \mathlabelNum{app:rec-call-two-theta-two-primes-wf}\\
&
\theta''_1 = \theta''_2 \circ
\theta_2 \mathlabelNum{app:rec-call-two-composition} \\
&
\Delta,\Theta' \vdash \theta''_2(A) =
A_N \mathlabelNum{app:rec-call-two-subsumption}
\end{align}

By \cref{thm:inference-sound}, $\Delta \vdash \theta_2 :
\Theta_1 \Rightarrow \Theta_2$~\labelNum{app:theta-two-wf} as well as
$\typ{\Delta, \Theta_2;\theta_2 \theta_1\Gamma}{N : A}$,
which implies $\Delta, \Theta_2 \vdash A$~\labelNum{app:A-wf}.

Let $\theta_b$ be defined such that
\begin{equation}
\theta_b(c) =
  \begin{cases}
    \theta''_2(c) &\text{if } c \in \Theta_2 \\
    \theta''_2\theta_2(B_M) &\text{if } c = b
  \end{cases} \mathlabelNum{app:def-theta-b}
\end{equation}

We have $\theta_b(b) = \theta''_2\theta_2(B_M) = \theta''_1(B_M) = A_0$%
  \labelNum{app:theta-b-on-b}.
By \refNum{app:rec-call-two-theta-two-primes-wf,app:A-zero-wf}
we thus have $\Delta \vdash \theta_b : (\Theta_2, b : \poly)
\Rightarrow \Theta'$.
Due to \refNum{app:A-wf}, we further have $\theta_b(A) = \theta''_2(A)$%
  ~\labelNum{app:theta-b-on-A}.

We show applicability of the completeness of  unification theorem:
\[
\begin{array}{lrcll}
&\multicolumn{3}{c}{\theta_b\theta_2(A')} \\
=\qquad
  &\theta_b\theta_2(B_N) &\to &\theta_b\theta_2(B_M)
  \qquad&\reason{by \refNum{app:def-B-M-and-B-N}} \\
=
  &\theta''_2\theta_2(B_N) &\to &\theta''_2\theta_2(B_M)
         &\reason{by \refNum{app:def-theta-b,app:b-disjointness}} \\
=
&\theta''_1(B_N) &\to &\theta''_2\theta_2(B_M)
      &\reason{by \refNum{app:rec-call-two-composition}}\\
=
&A_N &\to &\theta''_2\theta_2(B_M)
    &\reason{by \refNum{app:B-N-vs-A-N}} \\
=
  &\theta''_2(A) &\to &\theta''_2\theta_2(B_M)
     &\reason{by \refNum{app:rec-call-two-subsumption}} \\
=

     &\theta_b(A) &\to &\theta_b(b)
     &\reason{by \refNum{app:def-theta-b,app:theta-b-on-A}} \\
=
     &\theta_b(A &\to &b) \;
     &\reason{by \refNum{app:def-B-M-and-B-N}}
 \end{array}
\]

By
the equality above as well as
$\Delta, \Theta_2 \vdash \theta_2(A')$ and
$\Delta, \Theta_2, b : \poly \vdash (A \to b)$,
 Theorem~\ref{thm:thmunifycomplete} states that
$\unify(\Delta, (\Theta_2, b : \poly), \theta_2(H), A \to b)$ succeeds,
returning $(\Theta_3, \theta'_3)$, and there exists $\theta''_3$ such that
$\Delta \vdash \theta''_3 : \Theta_3 \Rightarrow
\Theta'$~\labelNum{app:theta-U-wf} and $\theta_b = \theta''_3 \circ
\theta'_3$~\labelNum{app:theta-U-wf-composition}.
The latter implies $\Delta \vdash \theta'_3 : (\Theta_2, b : \poly) \Rightarrow
\Theta_3$.
This makes defining $\theta'_3 = \theta_3[b \to
B]$~\labelNum{app:def-theta-three-from-theta-three-prime}
succeed, resulting in $\Delta \vdash \theta_3 : \Theta_2 \Rightarrow
\Theta_3$~\labelNum{app:theta-three-wf}.

Observe that $\theta''_2$ arises from $\theta_b$ in the same
way as $\theta_3$ arises from $\theta_3'$ by removing $b$
from its domain.
Therefore, \refNum{app:theta-U-wf-composition}
yields $\theta''_2 = \theta''_3 \circ \theta_3$%
~\labelNum{app:theta-U-wf-composition-simpl}.

By \refNum{app:theta-one-wf}, \refNum{app:theta-two-wf},
\refNum{app:theta-three-wf}, and composition, we have $\Delta \vdash
\theta_3 \circ \theta_2 \circ \theta_1 : \Theta \Rightarrow \Theta_3$.

We have shown that all steps of the algorithm
succeed and it returns $(\Theta'', \theta',
A_R) = (\Theta_3, \theta_3 \circ \theta_2 \circ \theta_1,
B)$~\labelNum{app:def-return-vals}.

Let $\theta''$ be defined as $\theta''_3$, satisfying
(\ref{proofobl:compl:theta-two-primes-wf}), by
\refNum{app:theta-U-wf}.

We show satisfaction of (\ref{proofobl:compl:theta-composition})
as follows:
\begin{alignat*}{3}
  &\theta_0 \\
=\;\; &\theta''_1 \circ \theta_1
    &\reason{by \refNum{app:rec-call-one-composition}} \\
=\;\; &(\theta''_2 \circ \theta_2) \circ \theta_1
    &\reason{by \refNum{app:rec-call-two-composition}} \\
=\;\; &((\theta''_3 \circ \theta_3) \circ \theta_2) \circ \theta_1
    &\reason{by \refNum{app:theta-U-wf-composition-simpl}} \\
=\;\; &\theta'' \circ \theta'
\end{alignat*}


We show~(\ref{proofobl:compl:subsumption}):
\begin{alignat*}{3}
  &\theta''(A_R) \\
=\;\; &\theta''_3(B) \qquad\quad
     &&\reason{by $\theta'' = \theta''_3$, $A_R = B$} \\
=\;\; &\theta''_3\theta_3'(b)
     &&\reason{by \refNum{app:def-theta-three-from-theta-three-prime}}\\
=\;\; &\theta_b(b)
     &&\reason{by \refNum{app:theta-U-wf-composition}}\\
=\;\; &A_0 &&\reason{by \refNum{app:theta-b-on-b}}
\end{alignat*}

\item[Case $\Let \; x = \; M \; \In \; N$:]
By \refNum{typed} and \freezemlLab{Let}, there exist $A'$, $A_x$, and $\Delta_G$ such that
\begin{align}
 &\Delta_G = \ftv(A') - (\Delta, \Theta') \mathlabelNum{def-Delta-G} \\
&\typ{\Delta, \Theta', \Delta_G;\theta_0\Gamma}{M : A'}%
  \mathlabelNum{let:type-M} \\
&((\Delta, \Theta'), \Delta_G,M,A') \Updownarrow A_x~\mathlabelNum{let:updownarrow} \\
&\typ{\Delta, \Theta' ;\theta_0\Gamma, x : A_x}{N : A_0}
  \mathlabelNum{let:type-N} \\
&\meta{principal}((\Delta, \Theta'), \theta_0\Gamma,\Delta_G,A')%
  \mathlabelNum{letA-prime-is-principal}
\end{align}

We assume without loss of generality that $\Delta_G$ is fresh,
in particular $\Delta_G \fresh \Theta$.
This is justified, as we may otherwise apply \cref{lem:bijective-renaming} to
\refNum{let:type-M} using a substitution that does the necessary freshening.
This would yield corresponding judgements for deriving $\typ{\Delta, \Theta';\theta_0 \Gamma}{\Let \; x = \; M \; \In \; N : A_0}$.

By~\refNum{theta-well-formed} and weakening,
we have $\Delta \vdash \theta_0 : \Theta \Rightarrow
\Theta', \Delta_G$.
Together with \refNum{let:type-M} we then have that $\Infer(\Delta,
\Theta, \Gamma, M)$ succeeds, returning $(\Theta_1, \theta_1, A)$, and there
exists $\theta''_1$ such that
\begin{align}
&\Delta \vdash \theta''_1 : \Theta_1 \Rightarrow (\Theta', \Delta_G)
  \mathlabelNum{let:theta-two-primes-one-wf} \\
&\theta_0 = \theta''_1 \circ \theta_1
  \mathlabelNum{let:tilde-theta-composition} \\
&\theta''_1(A) = A'
  \mathlabelNum{let:A-prime-is-instance-of-A}
\end{align}

By \refNum{term-wf,gamma-wf}, \cref{thm:inference-sound} yields
  $\Delta \vdash \theta_1 : \Theta \Rightarrow \Theta_1$~\labelNum{let:theta-one-wf}
and
  $\typ{\Delta, \Theta_1; \theta_1\Gamma}{M : A}$,
which implies $\Delta, \Theta_1 \vdash A$~\labelNum{let:A-wf}.

Note that $\Delta_G$ does not appear as part of the input to $\meta{infer}$,
and we therefore have $\Delta_G \fresh \Theta_1$.

Let $\Theta_{\theta_1} = \ftv(\theta_1) - \Delta$,
which implies $\Theta_{\theta_1} \subseteq \Theta_1$
and $\Delta''' \mathbin{\#} \Theta_{\theta_1}$ and $\Delta'' \mathbin{\#} \Theta_{\theta_1}$.
By \refNum{let:tilde-theta-composition,let:theta-two-primes-one-wf,theta-well-formed}
we have $\Delta, \Theta' \vdash \theta''_1(a) :K $ for all $(a : K) \in \Theta_{\theta_1}$~\labelNum{let:theta-two-primes-one-restriction}.


By \refNum{%
  let:tilde-theta-composition,%
  theta-well-formed,%
  let:theta-two-primes-one-wf,%
  def-Delta-G,%
  let:A-prime-is-instance-of-A,%
  let:A-wf,%
  letA-prime-is-principal%
  },
we can apply \cref{lem:bijection-on-principal-types-vars}, yielding $\theta_1''(\Delta''') = \Delta_G$~\labelNum{let:Delta-three-primes-vs-Delta-G}.

We have $\Delta'' \mathop{\#} \Theta_{\theta_1}$ and can therefore strengthen \refNum{let:theta-one-wf} to
$\Delta \vdash \theta_1 : \Theta \Rightarrow \Theta_1 - \Delta''$~\labelNum{let:theta-one-wf-strengthened}.

We distinguish two cases based on the shape of $M$.
In each case we show that
 there exists $\theta''_N$ such that
\begin{align}
 &\Delta \vdash \theta''_N :  (\Theta_1' - \Delta'') \Rightarrow \Theta'
   \mathlabelNum{let:theta-two-primes-N-wf} \\
 &\typ{\Delta, \Theta'; \theta''_N(\theta_1(\Gamma), x : \forall \Delta''.A)}{N : A_0}
    \mathlabelNum{let:type-N-substed} \\
 &\theta_0 = \theta''_N \circ \theta_1
  \mathlabelNum{let:tilde-theta-composition-theta-two-primes-N}
\end{align}

\begin{description}
\item[Subcase 1, $M \in \dec{GVal}$:]

We have $\Delta'' = \Delta'''$.
By \refNum{let:updownarrow}, we have that $A_x = \forall \Delta_G.A'$ holds.

According to $\Delta'' = \Delta'''$ and $\Theta'_1 = \meta{demote}(\mono, \Theta_1, \Delta''')$ we have that
  $\Theta'_1 - \Delta'' = \Theta_1 - \Delta''$.



Let $\theta''_N$ be defined as follows for all $c \in \Theta_1 - \Delta'' = \Theta'_1 - \Delta''$:
\[
\theta''_N(c) =
\begin{cases}
  \theta''_1(c) &\text{if $c \in \Theta_{\theta_1}$} \\
  A_D         &\text{if $c \in \Theta_1 - \Delta'' - \Theta_{\theta_1}$}
\end{cases}
\]
Where $A_D$ is some arbitrary type with $\Delta, \Theta' \vdash A_D : \mono$ (e.g., $\Int$).
By $\Theta_{\theta_1} \subseteq \Theta_1$, this definition is well-formed.

By $\Delta'' = \Delta''' = \ftv(A) - \Delta - \Theta_{\theta_1}$ we have
$\theta''_N(c) = \theta''(c) $ for all $c \in \ftv(A) - \Delta''$~\labelNum{let:theta-two-primes-N-on-ftv-A}.

Together with \refNum{let:theta-two-primes-one-restriction} and $\Delta, \Theta' \vdash A_D : \mono$, we then have
$\Delta, \Theta' \vdash \theta''_N(c) : K $ for all $(c : K) \in \Theta_1 - \Delta''$
and therefore $\Delta \vdash \theta''_N : \Theta'_1 - \Delta'' \Rightarrow \Theta'$.

By \refNum{let:theta-one-wf-strengthened,let:tilde-theta-composition} and $\theta''_N(c) = \theta''_1(c)$ for all $c \in \Theta_{\theta_1}$ we
also have $\theta_0 = \theta''_N \circ \theta_1$.

We have
\[
\begin{array}{cll}
= &\theta''_N(\forall \Delta''.A) \\
= &\theta''_N(\forall \Delta_G. A [\Delta_G/\Delta'']) \\
= &\forall \Delta_G. \theta''_N(A [\Delta_G/\Delta''])
    &\reason{by $\ftv(\theta''_N) \subseteq \Delta, \Theta'$ and $\Delta, \Theta' \mathop{\#} \Delta_G \fresh \Theta_1$} \\
= &\forall \Delta_G. \theta''_1(A) &\reason{by $\Delta'' = \Delta'''$ and \refNum{let:Delta-three-primes-vs-Delta-G} and \refNum{let:theta-two-primes-N-on-ftv-A} } \\
=  &A_x &\reason{by $A_x = \forall \Delta_G.A'$ and \refNum{let:A-prime-is-instance-of-A}}\\
\end{array}
\]

Thus, \refNum{let:type-N} is equivalent to
$\typ{\Delta, \Theta'; \theta''_N((\theta_1\Gamma), x : \forall \Delta''.A)}{N : A_0}$.

\item[Subcase 2, $M \not\in \dec{GVal}$:]
We have $\Delta'' = \cdot$.
By \refNum{let:updownarrow}, we have $A_x = \rsubst(A')$ for some $\rsubst$ with
$\Delta, \Theta' \vdash \rsubst : \Delta_G \Rightarrow_\mono \cdot$~\labelNum{let:rsubst-wf}.

Let $\theta''_N$ be defined as follows for all $c \in \Theta_1 - \Delta'' = \Theta_1$:
\begin{equation}
\theta''_N(c) =
\begin{cases}
  \theta''_1(c) &\text{if $c \in \Theta_{\theta_1}$} \\
  A_D &\text{if $c \in \Theta_1 - \Delta''' - \Theta_{\theta_1}$} \\
  \rsubst(\theta''_1(c)) &\text{if $c \in \Delta'''$} \\
\end{cases}
  \mathlabelNum{let:def-theta-two-primes-N-non-value-case}
\end{equation}
Here, $A_D$ is defined as before.

By $\Delta'' \mathbin{\#} \Theta_{\theta_1}$ and $\Delta''' \subseteq \Theta_1$ and $\Theta_{\theta_1} \subseteq \Theta_1$,
the three cases are non-overlapping and exhaustive for $\Theta_1$.

Using \refNum{let:theta-two-primes-one-restriction}, we have that $\Delta, \Theta' \vdash \theta''_N(c) : K $ for all $(c : K) \in \Theta_{\theta_1}$.
Note that by $\Delta''' \fresh \Theta_{\theta_1}$ we have $\Theta_1(c) = \Theta'_1(c)$ for all $c \in \Theta_{\theta_1}$.

By \refNum{let:rsubst-wf}, we have $\Delta, \Theta' \vdash \rsubst(c) : \mono $ for all $c \in \Delta_G$ and therefore $\Delta, \Theta' \vdash \theta''_N(c')~:~\mono$ for all $(c' : K) \in \Delta'''$.

Together with $\Delta, \Theta' \vdash A_D : \mono$, we then have $\Delta \vdash \theta''_N : \Theta'_1 \Rightarrow  \Theta'$.
By \cref{lem:stability-subst-under-promotion}, we also have $\Delta \vdash \theta''_N : \Theta_1 \Rightarrow  \Theta'$.
We have $\theta''_N(c) = \theta''(c)$ for all $c \in \Theta_{\theta_1}$ and together with \refNum{let:theta-one-wf}, \refNum{let:tilde-theta-composition}, and $\Delta'' = \cdot$ we
then have $\theta_0 = \theta''_N \circ \theta_1$.


We have
\[
\begin{array}{cll}
  &\theta''_N(\forall \Delta''.A) \\
= &\theta''_N(A) &\reason{by $\Delta'' = \cdot$}\\
= &\theta''_1(A)[\rsubst(\Delta_G)/\Delta_G]
 &\reason{by \refNum{let:Delta-three-primes-vs-Delta-G,%
                     let:def-theta-two-primes-N-non-value-case}} \\
= &A'[\rsubst(\Delta_G)/\Delta_G] &\reason{by \refNum{let:A-prime-is-instance-of-A}} \\
= &\rsubst(A') \\
= &A_x
\end{array}
\]

\end{description}

We have shown that in each case,
 \refNum{let:theta-two-primes-N-wf},
 \refNum{let:type-N-substed}, and
 \refNum{let:tilde-theta-composition-theta-two-primes-N}
hold.
Using the same reasoning as in the case for unannotated $\Let$ in the proof of
\cref{thm:inference-sound}, we obtain $\wfctx{\Delta, \Theta_1 - \Delta''}{\theta_1 : \Gamma}$.

Thus, by induction,
we have that $\Infer(\Delta, \Theta'_1 - \Delta'', \theta_1\Gamma, N)$ succeeds,
returning $(\Theta_2, \theta_2, B)$,
and there exists $\theta''_2$ such that
\begin{align}
&\Delta \vdash \theta''_2 : \Theta_2 \Rightarrow \Theta'
  \mathlabelNum{let:theta-two-primes-two-wf} \\
&\theta''_N = \theta''_2 \circ \theta_2 \mathlabelNum{let:theta''-N-composition} \\
&\theta''_2(B) = A_0
  \mathlabelNum{let:B-vs-tilde-A}
\end{align}

By the return values of $\Infer$, we have $\Theta' := \Theta_2$, and $\theta' := \theta_2 \circ \theta_1$
and $A_R := B$.

Let $\theta'' = \theta''_2$.
By \refNum{let:theta-two-primes-two-wf}, this choice immediately satisfies~(\ref{proofobl:compl:theta-two-primes-wf}).

We have
\[
\begin{array}{cll}
  &\theta_0 \\
= &\theta''_N \circ \theta_1 &\reason{by \refNum{let:tilde-theta-composition-theta-two-primes-N}} \\
= &\theta''_2 \circ \theta_2 \circ \theta_1 &\reason{by \refNum{let:theta''-N-composition}}
\end{array}
\]
and therefore $\theta_0 = \theta'' \circ \theta'$~(\ref{proofobl:compl:theta-composition}).

We show satisfaction of~(\ref{proofobl:compl:subsumption}) as follows:
\[
\begin{array}{cll}
  &A_0 \\
= & \theta''_2(B) &\reason{by \refNum{let:B-vs-tilde-A}} \\
= &\theta''(B) &\reason{by $\theta'' := \theta''_2$}
\end{array}
\]

\item[Case $\Let \: (x : A) = \: M \: \In \: N$:]

By \refNum{typed} and \freezemlLab{Let-Ascribe}, there exist $\Delta_G$ and $A_M$ such that we have
\begin{align}
&\Delta_G, A_M = \msplit(A, M) \\
&\typ{\Delta, \Theta', \Delta_G;\theta_0\Gamma}{M :
  A_M} \mathlabelNum{let-asc:type-M} \\
&A = \forall \Delta_G.A_M \mathlabelNum{let-asc:A-vs-A-prime} \\
& \typ{\Delta, \Theta'; \theta_0(\Gamma), (x : A)}{N : A_0} \mathlabelNum{let-asc:type-N}
\end{align}
By alpha-equivalence, we assume $\Delta_G \fresh \Theta$.

Note that by definition of $\Infer$ and $\meta{split}$, we have $\Delta_G = \Delta'$
and $A' = A_M$~\labelNum{let-asc:A-prime-vs-A-two}.
By \refNum{let-asc:type-M}, we have $\Delta' \mathop{\#}
\Theta'$~\labelNum{let-asc:Delta-prime-Theta-prime-disjoint}.
We weaken \refNum{theta-well-formed} to
$\Delta, \Delta' \vdash \theta_0 :
\Theta \Rightarrow \Theta'$.

By inversion on \refNum{term-wf}, we have $\termwf{\Delta, \Delta'}{M}$ and $\termwf{\Delta}{N}$ and
and $\Delta \vdash A$~\labelNum{let-asc:plain-A-wf}, which implies $\Delta, \Delta' \vdash A'$~\labelNum{let-asc:A-wf}.

Together with~\refNum{let-asc:type-M} we then have the following by
induction:
$\Infer((\Delta, \Delta'), \Theta, \Gamma, M)$ succeeds, returning $(\Theta_1,
\theta_1, A_1)$ and there exists $\theta''_1$ such that
\begin{align}
&\Delta, \Delta' \vdash \theta''_1 : \Theta_1 \Rightarrow
  \Theta'\mathlabelNum{let-asc:theta-one-two-primes-wf} \\
&\theta_0 = \theta''_1 \circ
  \theta_1 \mathlabelNum{let-asc:theta-tilde-composition} \\
&\theta''_1 (A_1) = A_M \mathlabelNum{let-asc:A-one-vs-A-prime}
\end{align}



\Cref{thm:inference-sound} yields
$\Delta, \Delta' \vdash \theta_1 : \Theta \Rightarrow \Theta_1$~\labelNum{let-asc:theta-one-wf} and
$\typ{\Delta, \Delta', \Theta_1; \theta_1(\Gamma)}{M : A_1}$,
which implies $\Delta, \Delta', \Theta_1 \vdash A_1$~\labelNum{let-asc:A-one-wf}.

We then have
\[
\begin{array}{cll}
   &\theta''_1 (A_1) \\
=  &A_M
      &\reason{by \refNum{let-asc:A-one-vs-A-prime}}\\
=  &A'
      &\reason{by \refNum{let-asc:A-prime-vs-A-two}} \\
=  &\theta''_1 (A')
     &\reason{by \refNum{let-asc:theta-one-two-primes-wf,let-asc:A-wf}} \\
\end{array}
\]
In addition to above equality and \refNum{let-asc:theta-one-two-primes-wf}
as well as \refNum{let-asc:A-one-wf}, we have
$\Delta, \Delta', \Theta_1 \vdash A'$ by weakening \refNum {let-asc:A-wf}.
Hence, \cref{thm:thmunifycomplete} yields the following:
$\unify((\Delta,\Delta'), \Theta_1, A', A_1)$ succeeds, returning $(\Theta_2,
\theta'_2)$, and there exists $\theta''_2$ such that
\begin{align}
&\Delta, \Delta' \vdash \theta''_2 : \Theta_2 \Rightarrow
  \Theta' \mathlabelNum{let-asc:theta-two-primes-two-wf} \\
&\theta''_1 = \theta''_2 \circ
  \theta_2' \mathlabelNum{let-asc:theta-two-primes-one-comp}
\end{align}
%
By \cref{thm:unification-sound}, we have
$\Delta, \Delta' \vdash \theta'_2 : \Theta_1 \Rightarrow
\Theta_2$.
Together with \refNum{let-asc:theta-one-wf} and composition,
we then have $\Delta, \Delta' \vdash \theta_2 : \Theta \Rightarrow \Theta_2$%
  ~\labelNum{let-asc:theta-two-wf}.


By~\refNum{let-asc:theta-tilde-composition}
and~\refNum{let-asc:theta-two-primes-one-comp}, we have
$\theta_0 = \theta''_2 \circ \theta'_2 \circ
\theta_1 = \theta''_2 \circ \theta_2$~\labelNum{let-asc:tilde-theta-comnposition-two}.
We show $\ftv(\theta_2) \subseteq \Delta, \Theta_2$:
Otherwise, if $a \in \Theta$ and $b \in \Delta'$ such that $b \in \ftv(\theta_2(a))$, then
by \refNum{let-asc:theta-two-primes-two-wf}, $\theta''_2(b) = b$ and
$b \in \ftv(\theta''_2(\theta_2(a))) = \ftv(\theta_0(a))$, violating
\refNum{theta-well-formed}.

Therefore, the assertion $\ftv(\theta_2) \fresh \Delta'$ succeeds,
allowing us to strengthen \refNum{let-asc:theta-two-wf} to
$\Delta \vdash \theta_2 : \Theta \Rightarrow
\Theta_2$~\labelNum{let-asc:theta-two-wf-stronger}.

By \refNum{let-asc:plain-A-wf} we have $\ftv(A) \subseteq \Delta$,
and together with \refNum{let-asc:theta-two-primes-two-wf} this yields $\theta''_2(A) =
A$~\labelNum{let-asc:theta-two-primes-two-applied-to-A}.


We have
\begin{alignat}{3}
 &&\typ{\Delta, \Theta' ;\theta_0\Gamma, x : A&}{N : A_0}
      &&\qquad\reason{by \refNum{let-asc:type-N}} \notag \\
\text{implies} \qquad
  &&\typ{\Delta, \Theta' ;\theta''_2\theta_2\Gamma, x : A&}{N : A_0}
      &&\qquad\reason{by \refNum{let-asc:tilde-theta-comnposition-two,%
                  let-asc:theta-two-wf,let-asc:theta-two-primes-two-wf}} \notag\\
\text{implies} \qquad
 &&\typ{\Delta, \Theta' ;\theta''_2(\theta_2(\Gamma), x : A)&}{N : A_0}
     &&\qquad\reason{by \refNum{let-asc:theta-two-primes-two-applied-to-A}}
     \mathlabelNum{let-asc:typ-N-substed}
\end{alignat}

By \refNum{gamma-wf} and \refNum{let-asc:theta-two-wf-stronger},
we have $\wfctx{\Delta, \Theta_2}{\theta_2 (\Gamma)}$.
Together with \refNum{let-asc:plain-A-wf}, we then have
$\wfctx{\Delta, \Theta_2}{\theta_2 (\Gamma), x : A}$.

Hence, induction on \refNum{let-asc:typ-N-substed,%
let-asc:theta-two-wf-stronger} shows that $\Infer(\Delta,
\Theta_2 , (\theta_2 \Gamma, x : A), N)$ succeeds, returning
$(\Theta_3,\theta_3,B)$ and there exists $\theta''_3$ such that
\begin{align}
&\Delta \vdash \theta''_3 : \Theta_3 \Rightarrow
  \Theta' \mathlabelNum{let-asc:theta-two-primes-three-wf} \\
&\theta''_2 = \theta''_3 \circ
  \theta_3 \mathlabelNum{let-asc:theta-two-primes-two-vs-composition} \\
&\theta''_3(B) =
  A_0 \mathlabelNum{let-asc:theta-two-primes-three-on-B}
\end{align}


We have shown that all steps of the algorithm
succeed.
According to the return values of $\Infer$, we have $\Theta'' = \Theta_3$,
$\theta' = \theta_3 \circ \theta_2$, and $A_R = B$.
Let $\theta'' = \theta''_3$.
By \refNum{let-asc:theta-two-primes-three-wf}, this choice immediately
satisfies~(\ref{proofobl:compl:theta-two-primes-wf}).

We show (\ref{proofobl:compl:theta-composition}):
\[
\begin{array}{cll}
  &\theta_0 \\
= &\theta''_2 \circ \theta_2
    &\reason{by \refNum{let-asc:tilde-theta-comnposition-two}} \\
= &\theta''_3 \circ \theta_3 \circ \theta_2
    &\reason{by \refNum{let-asc:theta-two-primes-two-vs-composition}} \\
= &\theta'' \circ \theta'
    &\reason{by $\theta' := \theta_3 \circ \theta_2$, $\theta'' := \theta''_3$ }
\end{array}
\]

By $\theta'' = \theta''_3$,
\refNum{let-asc:theta-two-primes-three-on-B} yields
(\ref{proofobl:compl:subsumption}).




\end{description}

\end{proof}

\fi

\end{document}

Observations

  * ML hits a sweetspot (no type annotations; sound and complete inference of
  principal types)... but it is impossible to remain in that sweetspot when
  adding a variety of useful features (including first-class
  polymorphism). Milner's coincidence (Lindley + McBride, Haskell 2013).

  * Depending on the presentation, ML typing rules are either not syntax
  directed or not canonical. Either we have a separate instantiation rule (not
  syntax directed) or we build in instantiation to the variable rule (not
  canonical as it does two things: variable lookup and instantiation). (Similar
  issues for generalisation.)

  * Generalisation and instantiation are not really canonical anyway: they are
  monolithic features realised respectively by saturating type abstraction and
  saturating type application. The underlying canonical features (as in System
  F) are type abstraction and type application. Though perhaps it looks a little
  strange, one can present a version of ML in which we perform a single step of
  generalisation or instantiation at a time. The former has to pick a canonical
  order - say abstracting over the first free type variable in the term.

  * Syntax-directed ML has two non-canonical rules: the variable rule builds in
  instantiation; the let-rule builds in generalisation.

  * Insight - the fundamental reason why ML is so fragile is because it has no
  canonical syntax-directed presentation. Natural extensions disturb the
  delicate balance arising from not being canonical.

  * Implicit instantiation is where the biggest problems come in when trying to
  extend ML with first-class polymorphism as in some situations there is no
  canonical choice as to whether to instantiate or not.

  * Implicit polymorphism is convenient for writing concise programs.

  * Explicit polymorphism is convenient for typed IRs.

  * System F is syntax directed and canonical.

  * We aim to design a source language that allows us to write all System F
  programs whilst enjoying the conciseness of implicit polymorphism in common
  cases by making a minimal extension to ML.

  * Constraints:

     1) the types must be those of System F (order of quantifiers is important;
     no non-standard features such as quantifier bounds, boxing, or type
     schemes)

     2) the language must admit syntax-directed typing rules

     3) the language must be a conservative extension of ML

     4) the language must admit a macro translation of System F (we allow some
     flexibility here by allowing the translation to assume that every System F
     subterm is annotated with its type)

     5) it must support a sound and complete type inference algorithm that
     infers principal types

  * Most prior systems satisfy 2), 3), and 5) but not 1).

  * Technically GI doesn't satisfy 3) because it eschews generalisation, but
  this is probably not an essential choice.

  * Many prior systems don't satisfy 4) (e.g. GI, HMF, and MLF), but perhaps QML
  and Poly-ML do?

  * HMF and GI almost satisfy 1), except that the quantifiers are unordered. (It
  probably wouldn't be hard to adapt HMF or GI to use ordered quantifiers,
  though.)

  * Poly-ML does not satisfy 1) because of boxing.

  * MLF doesn't satisfy 1) because of bounds.

  * QML doesn't satisfy 1) because type schemes and polymorphism are distinct.

  * The design of FreezeML is driven by considering the canonical features of
  System F that only have non-canonical counterparts in (syntax-directed
  presentations of) ML. The most critical of these is basic variable lookup.

  * FreezeML adds two canonical System F features to ML

     + plain ``frozen'' variables ~x (as in System F)

     + type-annotated lambda abstractions \x:A.M (as in System F)

  * FreezeML also refines the typing rule for let by

     a) only allowing let-bindings to take principal types

     b) allowing type annotations on let-bindings

     c) (with the value restriction) enforcing a monomorphism instantiation
     restriction for syntactic non-values

  * We might argue that there is plenty of evidence that let-bindings whether
  annotated or not are convenient for high-level programming. They are also
  quite standard in practice.

  * Type-annotated let-bindings in conjunction with variable instantiation
  provide the means to express explicit type abstraction and type application.

  * Unannotated lambda and let are not needed in order to embed System F into
  FreezeML, but they are useful for writing concise programs and essential for
  satisfying 3).

  * Our current translation of System F (with the value restriction) is not
  quite compositional, but we can easily define a compositional translation if
  we allow a mild generalisation of syntactic values to include terms of the
  form:

    let x = V in W

  (This translation isn't as efficient as our current one, but it is
    compositional.)

    [[/\a.V^B]] = let (x : forall a.B) = V[[V]] in ~x

    V[[x]]       = x
    V[[I A]]     = V[[I]]
    V[[\x^A.M]]  = [[\x^A.M]] = \(x:A).[[M]]
    V[[/\a.V^B]] = [[/\a.V^B]] = let (x : forall a.B) = V[[V]] in ~x

  On reflection there's no need to bother with a separate value translation
  (that just allows for a more optimised translation). All we need to do is to
  replace the existing translation of type abstraction with

    [[/\a.V^B]] = let (x : forall a.B) = [[V]] in ~x

  and everything will work out just fine. Correctness relies on the
  straightforward observation that the translation of a value always yields a
  value.

%
%
%
%
%
%
%
%
%
%
%

Space allocations

Introduction                    (2.25) ->  (2.00)
Overview                        (1.25) ->  (1.25)
FreezeML                        (5.25) ->  (3.50)
Relating System F and FreezeML  (1.50) ->  (1.00)
Type Inference                  (2.75) ->  (2.50)
Implementation                  (0.00) ->  (0.50)
Related Work                    (1.50) ->  (1.00)
Conclusion                      (0.25) ->  (0.25)

Total                          (14.75) -> (12.00)

